\begin{document}

\begin{frontmatter}

\title{   Modelling Concurrency with Comtraces \\
and Generalized Comtraces}

\author{Ryszard Janicki\corref{c1}}
\ead{janicki@mcmaster.ca}
\address{Department of Computing and Software,
McMaster University, Hamilton, ON, L8S 4K1 Canada}
\cortext[c1]{Corresponding author}

\author{Dai Tri Man L\^{e}}
\ead{ledt@cs.toronto.edu}
\address{Department of Computer Science,
University of Toronto,
Toronto, ON, M5S 3G4 Canada}

\begin{abstract}
{\em Comtraces} (\emph{com}bined \emph{traces}) are extensions of Mazurkiewicz traces that can model the ``not later than'' relationship. In this paper, we first introduce the novel notion  of {\em generalized comtraces}, extensions of comtraces that can additionally model the ``non-simultaneously'' relationship. Then we study some basic algebraic properties and canonical reprentations of comtraces and generalized comtraces. Finally we analyze the relationship between generalized comtraces and generalized stratified order structures. The major technical contribution of this paper is a proof showing that generalized comtraces can be represented by generalized stratified order structures.
\end{abstract}

\begin{keyword}
generalized trace theory \sep trace monoid \sep step sequence \sep stratified partial order \sep stratified order structure \sep canonical representation
\end{keyword}

\end{frontmatter}

\newcommand{\END}{\qed}
\newcommand{\LT}{\mathcal{L}}
\newcommand{\eqa}{\thickapprox}
\newcommand{\eqb}{\equiv}
\newcommand{\PS}[1]{\wp(#1)}
\newcommand{\PSB}[1]{\wp^{\setminus \set{\emptyset}}(#1)}

\newcommand{\h}[1]{\overline{#1}}
\newcommand{\set}[1]{\{#1\}}
\newcommand{\bset}[1]{\bigl\{#1\bigr\}}
\newcommand{\Bset}[1]{\Bigl\{#1\Bigr\}}
\newcommand{\Al}{\biguplus}
\newcommand{\RC}{\div_R}
\newcommand{\LC}{\div_L}
\newcommand{\wei}{\mathit{weight}}
\newcommand{\len}{\mathit{length}}
\newcommand{\LG}[1]{\langle #1 \rangle}
\newcommand{\E}{\mathbb{S}}
\newcommand{\EC}{\widehat{\mathcal{S}}}
\newcommand{\PO}{\prec}
\newcommand{\st}{\mathsf{st}}
\newcommand{\com}{<\!\!>}
\newcommand{\iffdf}{\stackrel{\textit{\scriptsize{df}}}{\iff}\ }
\newcommand{\df}{\triangleq}
\newcommand{\calf}[1]{\mathcal{#1}}
\newcommand{\sq}{\sqsubset}
\newcommand{\ccl}{\;\bowtie}
\newcommand{\todo}[1]{ \textcolor{red}{TODO: #1}}
\newcommand{\tcomment}[1]{\text{\hspace*{2mm}$\langle$~\parbox[t]{\textwidth}{ #1 $\rangle$}}}
\newcommand{\ttcomment}[1]{\text{$\langle$~#1~$\rangle$}}
\newcommand{\sym}[1]{{#1}^{\mathsf{sym}\;}}
\newcommand{\mins}{\mathsf{mins}}
\newcommand{\quotient}[2]{{#1}/\!{#2}}

\newcommand{\lhdf}{\lhd^\frown}
\newcommand{\flhd}{\frown_\lhd}
\newcommand{\lex}[1]{\,{#1}^{\textit{lex}}\,}
\newcommand{\stor}[1]{\,{#1}^{\textit{st}}\,}
\newcommand{\reco}[1]{ {#1}^{\,{\textsf{C}}} }
\newcommand{\si}[1]{(#1)^\Cap}
\newcommand{\CT}[1]{\mathfrak{C}(#1)}
\newcommand{\GCT}[1]{\mathfrak{gC}(#1)}

\newcommand{\It}[1]{\mathit{#1}}
\newcommand{\defref}[1]{Definition~\ref{def:#1}}
\newcommand{\theoref}[1]{Theorem~\ref{theo:#1}}
\newcommand{\propref}[1]{Proposition \ref{prop:#1}}
\newcommand{\lemref}[1]{Lemma~\ref{lem:#1}}
\newcommand{\exref}[1]{Example~\ref{ex:#1}}
\newcommand{\reref}[1]{Remark~\ref{re:#1}}
\newcommand{\eref}[1]{\eqref{eq:#1}}
\newcommand{\figref}[1]{Figure~\ref{fig:#1}}

\newcommand{\secref}[1]{Section~\ref{sec:#1}}
\newcommand{\eqnref}[1]{Eq.~(\ref{eq:#1})}
\newcommand{\TCT}{\stackrel{\mathsf{t}\leftrightsquigarrow \mathsf{c}}{\equiv}}

\newcommand{\kwadrat}{\hfill $\Box$}
\newcommand{\mEND}{\vspace{-5mm}\kwadrat}

\newcommand{\EOD}{\hfill {\scriptsize $\blacksquare$}}
\newcommand{\mEOD}{\vspace{-5mm}\EOD}


\newtheorem{theorem}{Theorem}
\newtheorem{lemma}{Lemma}
\newtheorem{corollary}{Corollary}
\newtheorem{conjecture}{Conjecture}
\newtheorem{proposition}{Proposition}

\newdefinition{definition}{Definition }
\newdefinition{example}{Example }
\newdefinition{remark}{Remark }
\newdefinition{algo}{Algorithm }
\newproof{proof}{Proof}

\numberwithin{equation}{section}

\tableofcontents

\section{Introduction}
Mazurkiewicz traces, or just traces\footnote{The word ``trace'' has many different meanings in Computer Science and Software Engineering. In this paper, we reserve the word ``trace'' for \emph{Mazurkiewicz trace}, which is different from ``traces'' used in Hoare's CSP \cite{Ho}.}, are quotient monoids over sequences (or words) \cite{Foa,Ma1,Di}. The theory of traces has been utilized to tackle problems from    diverse areas including combinatorics, graph theory, algebra, logic and especially concurrency theory \cite{Di}.

As a language representation of finite partial orders, traces can sufficiently model ``true concurrency" in various aspects of concurrency theory. However, some aspects of concurrency cannot be adequately modelled by partial orders (cf. \cite{J0,J4}), and thus cannot be modelled by traces. For example, neither traces nor partial orders can model the ``not later than" relationship \cite{J4}. If an event $a$ is performed ``not later than" an event $b$, then this ``not later than" relationship can be modelled by the following set of two step sequences ${\bf x}=\{\{a\}\{b\},\{a,b\}\}$; where \textit{step} $\{a,b\}$ denotes the simultaneous execution of $a$ and $b$ and the step sequence $\{a\}\{b\}$ denotes the execution of $a$ followed by $b$. But the set ${\bf x}$ cannot be represented by any trace (or equivalently any partial order), even if the generators, i.e. elements of the trace alphabet, are sets and the underlying monoid is the monoid of step sequences (as in \cite{Vog}).

To overcome these limitations, Janicki and Koutny proposed the {\em comtrace} ({\em com}bined {\em trace}) notion \cite{JK95}. First the set of all possible steps that generates step sequences are identified by a relation $sim$, which is called {\em simultaneity}. Second a congruence relation is determined by a relation $ser$, which is called {\em serializability} and is in general \emph{not} symmetric. Then a comtrace is defined as a finite set of congruent step sequences. Comtraces were invented to provide a formal linguistic counterpart of {\em stratified order structures} (\emph{so-structures}), analogously to how traces relate to partial orders.  

A so-structure \cite{GP,JK0,JK95,JK} is a triple $(X,\prec,\sqsubset)$, where $\prec$ and $\sqsubset$ are binary relations on the set $X$. So-structures were invented to model both the ``earlier than" (the relation $\prec$) and the ``not later than" (the relation $\sqsubset$) relationships, under the assumption that all system runs are modelled by stratified partial orders, i.e., step sequences. They have been successfully applied to model inhibitor and priority systems, asynchronous races, synthesis problems, etc. (see for example \cite{JK95,PK,JL08,JL08b,KK,KK2}).

The paper \cite{JK95} contains a major result showing that every comtrace uniquely determines a labeled so-structure, and then use comtraces to provide a semantics of Petri nets with inhibitor arcs. However, so far comtraces are used   less often than so-structures, even though in many cases they appear to be more natural than so-structures. Perhaps this is due to the lack of a sufficiently developed quotient monoid theory for comtraces similar to that of traces.

However, neither comtraces nor so-structures are enough to model the ``non-simultaneously" relationship, which could be defined by the set of step sequences $\{\{a\}\{b\},\{b\}\{a\}\}$ with the additional assumption that the step $\{a,b\}$ is not allowed.
In fact,
both comtraces and so-structures can adequately model concurrent histories only when paradigm $\pi_3$ of \cite{J4,JK} is satisfied. Intuitively, paradigm $\pi_3$  formalizes the class of concurrent histories satisfying the condition that  if both $\{a\}\{b\}$ and $\{b\}\{a\}$ belong to the concurrent history, then so does $\{a,b\}$ (i.e., these three step sequences $\{a\}\{b\}$, $\{b\}\{a\}$ and $\{a,b\}$ are all equivalent observations).

 To model the general case that includes the ``non-simultaneously" relationship, we need the concept of {\em generalized stratified order structures} (\emph{gso-structures}), which were introduced and analyzed by Guo and Janicki in \cite{GJ,J0}. A gso-structure is a triple $\left(X,\com,\sqsubset\right)$, where $\com$ and $\sqsubset$ are binary relations on $X$ modelling the ``non-simultaneously" and ``not later than" relationships respectively, under the assumption that all system runs are modelled by stratified partial orders.

To provide the reader with a high level view of  the main motivation and intuition behind the use of so-structures as well as the need of gso-structures, we will consider a motivating example (adapted from \cite{J0}).

\subsection{A motivating example}
We will illustrate our basic concepts and constructions by analyzing four simple concurrent programs. Three of these programs will involve the concepts of simultaneous executions, which is essential to our model. We would like to point out that the theory presented in this paper is especially a   for models where simultaneity is well justified, for example for the models with a discrete time.  

All four programs in this example are written using a mixture of {\em cobegin, coend} and a version of {\em concurrent  guarded commands}.
\newpage
\begin{example} \mbox{}\\

\begin{small}{\tt
P1: begin int x,y;\\
\indent \hspace{7mm}\indent    a: begin x:=0; y:=0 end;\\
\indent \hspace{7mm}\indent       cobegin b: x:=x+1, c: y:=y+1 coend\\
\indent \hspace{8mm}        end P1.\\

P2: begin int x,y;\\
\indent \hspace{7mm}\indent    a: begin x:=0; y:=0 end;\\
\indent \hspace{7mm}\indent    cobegin  b: x=0 $\rightarrow$ y:=y+1, c: x:=x+1   coend\\
\indent \hspace{8mm}        end P2. \\

P3: begin int x,y;\\
\indent \hspace{7mm}\indent    a: begin x:=0; y:=0 end;\\
\indent \hspace{7mm}\indent    cobegin  b: y=0 $\rightarrow$ x:=x+1, c: x=0 $\rightarrow$ y:=y+1  coend\\
\indent \hspace{8mm}        end P3.\\

P4: begin int x;\\
\indent \hspace{7mm}\indent    a: x:=0;\\
\indent \hspace{7mm}\indent       cobegin b: x:=x+1, c: x:=x+2 coend\\
\indent \hspace{8mm}        end P4.
} \end{small}\\

Each program is a different composition of three events (actions) called $a$, $b$, and $c$ ($a_i$, $b_i$, $c_i$, $i=1,\ldots ,4$, to be exact, but a restriction to $a$, $b$, $c$ does not change the validity of the analysis below, while simplifying the notation). Transition systems modelling these programs are shown in \figref{motiv1}. \EOD
\label{ex:motiveprog}
\end{example}

\begin{figure}[!h]
\begin{minipage}[t]{0.25\textwidth}
\[\SelectTips{cm}{}
\xymatrix@-1pc{
		&*++[o][F-]{\bullet}\ar[d]^{a}&\\
		&*+++[o][F-]{}\ar[dl]_{b}\ar[dr]^{c}\ar[dd]^{\set{b,c}} &\\
*+++[o][F-]{}\ar[dr]_{c} &	&*+++[o][F-]{}\ar[dl]^{b}\\
		&*+++[o][F=]{}	&
}\]
\caption*{$A_{1}$}\end{minipage}
\begin{minipage}[t]{0.25\textwidth}
\[\SelectTips{cm}{}
\xymatrix@-1pc{
		&*++[o][F-]{\bullet}\ar[d]^{a}\\
		&*+++[o][F-]{}\ar[dl]_{b}\ar[dd]^{\set{b,c}} &\\
*+++[o][F-]{}\ar[dr]_{c} &	\\
		&*+++[o][F=]{}	
}\]
\caption*{$A_{2}$}\end{minipage}
\begin{minipage}[t]{0.2\textwidth}\[{
\SelectTips{cm}{}
\xymatrix@-1pc{
	&	*++[o][F-]{\bullet}\ar[d]^{a}\\
	&	*+++[o][F-]{}\ar[ddd]^{\set{b,c}} &\\
	&	 	\\ &\\
	&	*+++[o][F=]{}	
}
}\]
\caption*{$A_{3}$}\end{minipage}
\begin{minipage}[t]{0.25\textwidth}
\[\SelectTips{cm}{}
\xymatrix@-1pc{
		&*++[o][F-]{\bullet}\ar[d]^{a}&\\
		&*+++[o][F-]{}\ar[dl]_{b}\ar[dr]^{c} &\\
*+++[o][F-]{}\ar[dr]_{c} &	&*+++[o][F-]{}\ar[dl]^{b}\\
		&*+++[o][F=]{}	&
}\]
\caption*{$A_{4}$}\end{minipage}
\vspace{0.5cm}

\begin{footnotesize}
\begin{minipage}[t]{0.25\textwidth}
\centering
$\PO_{1}=\set{(a,b),(a,c)}$\\
$\sq_{1}=\set{(a,b),(a,c)}$\\
$\com_{1}=\sq_{1}\cup\sq_{1}^{-1}$\\
$obs(P_1)\asymp obs(A_{1})$\\
$\asymp\{\PO_{1}\}\asymp\{\PO_{1},\sq_1\}$\\
$\asymp\{\com_1,\sq_1\}$
\end{minipage}
\begin{minipage}[t]{0.25\textwidth}
\centering
$\PO_{2}=\set{(a,b),(a,c)}$\\
$\sq_{2}=\set{(a,b),(a,c),(b,c)}$\\
$\com_{2}=\PO_{2}\cup\PO_{2}^{-1}$\\
$obs(P_2)\asymp obs(A_{2})$\\
$\asymp\{\PO_{2},\sq_2\}$\\
$\asymp\{\com_2,\sq_2\}$
\end{minipage}
\begin{minipage}[t]{0.2\textwidth}
\centering
$\PO_{3}=\set{(a,b),(a,c)}$\\
$\sq_{3}=\set{(a,b),(a,c),$\\$(b,c),(c,b)}$\\
$\com_{3}=\PO_{3}\cup\PO_{3}^{-1}$\\
$obs(P_3)\asymp obs(A_{3})$\\
$\asymp\{\PO_{3},\sq_3\}$\\
$\asymp\{\com_3,\sq_3\}$
\end{minipage}
\begin{minipage}[t]{0.25\textwidth}
\centering
$\PO_{4}=\set{(a,b),(a,c)}$\\
$\sq_{4}=\set{(a,b),(a,c)}$\\
$\com_{4}=\set{(a,b),(b,a),$\\$(a,c),(c,a),(b,c),(c,b)}$\\
$obs(P_4)\asymp obs(A_{4})$\\
$\asymp\{\com_4,\sq_4\}$
\end{minipage}
\end{footnotesize}
\caption{\label{fig:motiv1} Examples of \emph{causality}, \emph{weak causality}, and \emph{commutativity}. Each program $P_{i}$ can be modelled by a labeled transition system (automaton) $A_i$. The step $\set{a,b}$ denotes the \emph{simultaneous} execution of $a$ and $b$.}
\end{figure}

Let $obs(P_i)$ denote the set of all program runs involving the actions $a,b,c$  that can be observed. Assume that simultaneous executions can be observed. In this simple case all runs (or observations) can be modelled by {\em step sequences} . Let us denote $o_1=\set{a}\set{b}\set{c}$, $o_2=\set{a}\set{c}\set{b}$, $o_3=\set{a}\set{b,c}$. Each $o_i$ can be equivalently seen as a stratified partial order $o_i=(\{a,b,c\},\stackrel{o_i}{\rightarrow})$ where:

\begin{center}{\small
${
\SelectTips{cm}{10}
\divide\dgARROWLENGTH by2
\begin{diagram}
\node[2]{b}\arrow{se,t}{o_1}\\
\node{a}\arrow{ne,t}{o_1}\arrow[2]{e,b}{o_1}\node[2]{c}
\end{diagram}}$ \hspace{1.5cm}
${
\SelectTips{cm}{10}
\divide\dgARROWLENGTH by2
\begin{diagram}
\node[2]{c}\arrow{se,t}{o_2}\\
\node{a}\arrow{ne,t}{o_2}\arrow[2]{e,b}{o_2}\node[2]{b}
\end{diagram}}$ \hspace{1.5cm}
${
\SelectTips{cm}{10}
\divide\dgARROWLENGTH by2
\begin{diagram}
\node[2]{b}\\
\node{a}\arrow{ne,t}{o_3}\arrow{se,b}{o_3}\\
\node[2]{c}
\end{diagram}}$}
\end{center}

We can now write $obs(P_1)=\set{o_1,o_2,o_3}$, $obs(P_2)=\set{o_1,o_3}$, $obs(P_3)=\set{ o_3}$, $obs(P_4)=\set{ o_1, o_2}$. Note that for every $i=1,\ldots ,4$, all runs from the  set $obs(P_i)$ yield exactly the same outcome. Hence, each $obs(P_i)$ is called the {\em concurrent history} of $P_{i}$.

An abstract model of such an outcome is called a {\em concurrent
behavior}, and now we will discuss how causality, weak causality and commutativity relations are used to construct concurrent behavior.

\subsubsection{Program $P_{1}$}
In the set $obs(P_1)$, for each run,
$a$ always precedes both $b$ and $c$, and there is no {\em causal}
relationship between $b$ and $c$. This {\em causality} relation,
$\PO$, is the partial order defined as $\PO = \{(a,b),(a,c)\}$. In general
$\PO$ is defined by: $x\PO y$ iff for each run $o$ we have $x\stackrel{o}{\rightarrow} y$.
 Hence for $P_1$, $\PO$ is the intersection of $o_1$, $o_2$ and $o_3$, and $\set{o_1,o_2,o_3}$ is the set of all stratified extensions of the relation $\PO$.

Thus, in this case, the causality relation $\PO$ models the
concurrent behavior corresponding to the set of (equivalent) runs
$obs(P_1)$. We will say that $obs(P_1)$ and $\PO$ are {\em tantamount}\footnote{Following \cite{J0}, we are using the word ``tantamount" instead of ``equivalent" as the latter usually implies that the entities are of the same type, as ``equivalent automata", ``equivalent expressions", etc. Tantamount entities can be of different types.}
 and write
$obs(P_1)\asymp \set{\PO}$ or $obs(P_1)\asymp (\set{a,b,c},\PO)$. Having
$obs(P_1)$ one may construct $\PO$ (as an intersection of all orders from $obs(P_1)$), and then
reconstruct $obs(P_1)$ (as the set of all stratified extensions of $\PO$). This is a classical case
 of the ``true" concurrency approach, where concurrent behavior is modelled by a causality relation. \\

Before considering the remaining cases, note that the causality relation $\PO$ is
exactly the same in all four cases, i.e., $\PO_i \; = \{(a,b),(a,c)\}$, for $i=1,\ldots ,4$,
so we may omit the index $i$.

\subsubsection{Programs $P_{2}$ and $P_{3}$}
To deal with $obs(P_2)$ and $obs(P_3)$, $\PO$ is insufficient because $o_{2}\notin obs(P_2)$ and $o_{1},o_{2}\notin obs(P_2)$.
Thus, we need a weak causality relation $\sqsubset$  defined in this context as $x\sqsubset y$ iff for each run $o$ we have $\neg(y \stackrel{o}{\rightarrow}x)$ ($x$ {\em is never executed after} $y$).
For our four cases we have
$\sq_2=\{(a,b),(a,c),(b,c)\}$, $\sq_1=\sq_4=\PO$, and
$\sq_3=\{(a,b),(a,c),(b,c),(c,b)\}$.
Notice again that for $i=2,3$, the pair of relations $\{\PO,\sq_i\}$ and the set
$obs(P_i)$ are {\em tantamount} as each is definable from the other. (The set
$obs(P_i)$ can be defined as the greatest set $\mathit{PO}$ of partial orders
built from $a$, $b$ and $c$ satisfying
$x\PO y \Rightarrow \forall o \in \mathit{PO}.\;x\stackrel{o}{\rightarrow}y$ and
$x\sq_i y \Rightarrow \forall o \in \mathit{PO}.\; \neg(y\stackrel{o}{\rightarrow}x)$.)

Hence again in these cases ($i=2,3$) $obs(P_i)$ and $\{\PO,\sq_i\}$ are {\em tantamount}, $obs(P_i)\asymp\{\PO,\sq_i\}$, and so the pair $\{\PO,\sq_i\}$, $i=2,3$, models the concurrent behavior described by $obs(P_i)$. Note that $\sq_i$ alone is not sufficient, since (for instance) $obs(P_2)$ and $obs(P_2)\cup\{\{a,b,c\}\}$ define the same relation $\sq$.

\subsubsection{Program $P_{4}$}
The causality relation $\PO$ does not model the concurrent behavior of $P_{4}$ correctly\footnote{ Unless we assume that simultaneity is not allowed, or not observed, in which case $obs(P_1)=obs(P_4)=\set{o_1,o_2}$, $obs(P_2)= \{o_1\}$, $obs(P_3)=\emptyset$.} since $o_3$ does not belong to $obs(P_4)$. The commutativity relation $\com$ is defined in this context as $x\com y$ iff for each run $o$ either
$x\stackrel{o}{\rightarrow}y$ or $y\stackrel{o}{\rightarrow}x$. For
the set $obs(P_4)$, the relation $\com_4$ looks like $\com_4 =
\{(a,b),(b,a),(a,c),(c,a),(b,c),(c,b)\}$. The pair of relations
$\{\com_4,\PO\}$ and the set $obs(P_4)$ are {\em tantamount} as each is definable from the other. (The set $obs(P_4)$ is
the greatest set $\mathit{PO}$ of partial orders built from $a$, $b$ and $c$
satisfying $x\com_4 y \Rightarrow \forall o \in \mathit{PO}.\;
x\stackrel{o}{\rightarrow}y \vee y\stackrel{o}{\rightarrow}x$ and $
x\PO y \Rightarrow\forall o \in \mathit{PO}.\;x\stackrel{o}{\rightarrow}y.$)
In other words, $obs(P_4)$ and $\{\com_4,\PO\}$ are tantamount, so we may say that in this case the
relations $\{\com_4,\PO\}$ model the concurrent behavior described by
$obs(P_4)$.

Note that $\com_1\;=\; \PO \cup \PO^{-1}$ and the pair
  $\{\com_1,\PO\}$ also model the concurrent behavior
described by $obs(P_1)$.

\subsubsection{Summary of Analysis of $P_1, P_2, P_3$ and $P_{4}$}
For each $P_i$ the state transition model $A_{i}$ and their respective concurrent histories and concurrent behaviors are summarized in \figref{motiv1}. Thus, we can make the following observations:
\begin{enumerate}
\item $obs(P_1)$ can be modelled by the relation $\PO$ alone, and $obs(P_1)\asymp\{\PO\}$.
\item  $obs(P_i)$, for $i=1,2,3$ can also be modelled by the appropriate pairs of relations
    $\{\PO,\sq_i\}$, and $obs(P_i)\asymp\{\PO,\sq_i\}$.

\item all sets of observations $obs(P_i)$, for $i=1,2,3,4$ are modelled by the appropriate pairs of relations
    $\{\com_i,\sq_i\}$, and $obs(P_i)\asymp\{\com_i,\sq_i\}$.
\end{enumerate}

Note that the relation $\PO$ is not independent from the relations $\com$, $\sq$, since it can be proven (see \cite{J4}) that $\PO\: =\: \com\: \cap\: \sq$. Intuitively, since $\com$ and $\sq$ are the abstraction of the ``earlier than or later than'' and ``not later than'' relations, it follows that their intersection is the abstraction of the ``earlier than'' relation.

\subsubsection{Intuition for comtraces and generalized comtraces}
We may also try to model the concurrent behaviors of the programs $P_1$, $P_2$, $P_3$ and $P_4$ only in {\em terms of algebra of step sequences}. To do this we need to introduce an equivalence relation on step sequences such that the sets $obs(P_i)$, for $i=1,\ldots,4$, interpreted as {\em sets of step sequences} and not partial orders, are appropriate equivalence classes. A particular instance of this equivalence relation should depend on the structure of a particular program, or its labeled transition system representation.

It turns out that in such an approach the program $P_4$ needs to be treated differently than $P_1, P_2$ and $P_3$.
In order to avoid ambiguity, we will write $obs_{\rm step}(P_i)$ to denote the same set of system runs as $obs(P_i)$, but with runs now modelled by step sequences instead of partial orders.

For all four cases we need two relations $sim_i$ and $ser_i$, $i=1,\ldots,4$ on the set $\{a,b,c\}$. The relations $sim_i$, called {\em simultaneity}, are symmetric and indicate which actions can be executed simultaneously, i.e. in one step. It is easy to see that $sim_1=sim_2=sim_3 = \{(b,c),(c,b)\} $, but $sim_4=\emptyset$. The relations $ser_i$, called {\em serializability}, may not be symmetric, must satisfy $ser_i\subseteq sim_i$, and indicate how steps can equivalently be executed in some sequence. In principle if $(\alpha,\beta)\in ser$ then the step $\{\alpha,\beta\}$ is equivalent to the sequence $\{\alpha\}\{\beta\}$. For our four cases we have
$ser_1=sim_1 = \{(b,c),(c,b)\}$, $ser_2=\{(b,c)\}$, $ser_3=ser_4=\emptyset$.

Let $A,B,C$ be steps such that $A=B\cup C$ and $B\cap C=\emptyset$. For example $A=\{b,c\}$, $B=\{b\}$ and $C=\{c\}$.
We will say that the step $A$ and the step sequence $BC$ are {\em equivalent}, $A \eqa_i BC$, if $B\times C \subseteq sim_i$.
For example we have $\{b,c\}\eqa_i \{b\}\{c\}$ for $i=1,2$ and
$\{b,c\}\eqa_i \{c\}\{b\}$ for $i=1$. The relations $\eqa_3$ and $\eqa_4$ are empty.

Let $\equiv_i$ be the smallest equivalence relation on the whole set of events containing $ \eqa_i $, and for each step sequence $ A_1\ldots A_k $, let $[A_1\ldots A_k]_{\equiv_i}$ denote the equivalence class of $\equiv_i $ containing the step sequence $ A_1\ldots A_k $.

For our four cases, we have:
\begin{enumerate}
\item$ [\{a\}\{b\}\{c\}]_{\equiv_1}=\{\{a\}\{b\}\{c\},\{a\}\{c\}\{b\},\{a\}\{b,c\}\} = obs_{\rm step}(P_1)\asymp  obs(P_1)  $
\item$ [\{a\}\{b\}\{c\}]_{\equiv_2}=\{\{a\}\{b\}\{c\},\{a\}\{b,c\}\} = obs_{\rm step}(P_2)\asymp  obs(P_2)  $
\item$ [\{a\}\{b\}\{c\}]_{\equiv_3}=\{ \{a\}\{b,c\}\} = obs_{\rm step}(P_3) \asymp  obs(P_3)  $
\item$ [\{a\}\{b\}\{c\}]_{\equiv_4}=\{\{a\}\{b\}\{c\}\} \not= obs_{\rm step}(P_4) $

\end{enumerate}
Strictly speaking the statement $obs_{\rm step}(P_i)=obs(P_i)$ is false, but obviously $obs_{\rm step}(P_i) \asymp  obs(P_i) $, for $i=1,...,4$.

For $i=1,\ldots,3$, equivalence classes of each relation $\equiv_i$ are generated by relations $sim_i$ and $ser_i$. These equivalence classes  are called {\em comtraces} (introduced in \cite{JK95} as a generalization of Mazurkiewicz traces) and can be used to model concurrent histories of the systems or programs like $P_1,P_2$ and $P_3$.

In order to model the concurrent history of $P_4$ with equivalent step sequences, we need a third relation $inl_4$ on the set of events $\{a,b,c\}$ that is symmetric and satisfies $inl_4\cap sim_4=\emptyset$. The relation $inl_4$ is called {\em interleaving}, and if $(x,y)\in inl$ then events $x$ and $y$ cannot be executed simultaneously, but the execution of $x$ followed $y$ and the execution of $y$ followed by $x$ are equivalent. For program $P_4$ we have $inl_4 =\{(b,c),(c,b)\}$.

We can now define a relation $ \eqa'_4$ on step sequences of length two, as
$BC \eqa'_4 CB$ if $B\times C\subseteq inl$, which for this simple case gives $ \eqa'_4 = \bset{ (\{b\}\{c\},\{c\}\{b\}) ,
(\{c\}\{b\},\{b\}\{c\})}$. Let $\equiv_4$ be the smallest equivalence relation on the whole set of events containing $ \eqa_4 $ and $ \eqa'_4 $. Then we have
\[[\{a\}\{b\}\{c\}]_{\equiv_4}=\{\{a\}\{b\}\{c\},\{a\}\{c\}\{b\}\} = obs_{step}(P_4)\asymp  obs(P_4).\]

Equivalence classes of relations like $\equiv_4$, generated by the relations like $sim_4$, $ser_4$ {\em and} $inl_4$ are called {\em generalized comtraces} ({\em g-comtraces}, introduced in \cite{JL}) and they can be used to model concurrent histories of the systems or programs like $P_4$.

\subsection{Summary of contributions}
This paper is an expansion and revision of our results from \cite{JL,Le08}.  We propose a formal-language counterpart of gso-structures, called {\em generalized comtraces} (\emph{g-comtraces}). We will revisit and expand the algebraic theory of comtraces, especially various types of canonical forms and the formal relationship between traces and comtraces. We analyze in detail the properties of g-comtraces, their canonical representations, and most importantly the formal relationship between g-comtraces and gso-structures.

\subsection{Organization}
 The content of the paper is organized as follows. In the next section, we review some basic concepts of order theory and monoid theory.
\secref{eq} recalls the concept of Mazurkiewicz traces and discusses its relationship to finite partial orders. \secref{relsurvey} surveys some basic background on the relational structures model of concurrency \cite{GP,JK0,JK95,JK,GJ,J0}. 

Comtraces are defined
and their relationship to traces is discussed in \secref{comtraces},  and g-comtraces
are introduced in
\secref{gcomtraces}.

Various basic algebraic properties of both comtrace and g-comtrace congruences are discussed in \secref{algebraic}.
\secref{can} is devoted to canonical representations of traces, comtraces and g-comtraces. In \secref{com2strat} we recall some results on the so-structures defined by comtraces. The gso-structures generated by g-comtraces are defined and analyzed in \secref{gcom2strat}. Concluding remarks are made in \secref{conclusion}. We also include two appendices containing some long and  technical proofs of results from \secref{gcom2strat}.

\section{Orders, Monoids, Sequences and Step Sequences \label{sec:background}}
In this section, we recall some standard notations, definitions and results which are used extensively in this paper.

\subsection{Relations, orders and equivalences}

The \emph{powerset} of a set $X$ will be denoted by $\PS{X}$. The set of all {\em non-empty} subsets of $X$  will be denoted by $\PSB{X}$. In other words, 
$\PSB{X}\df\PS{X}\setminus\set{\emptyset}.$  

Let $f:A\rightarrow B$ be a function, then for every set $C\subseteq A$, we write $f[C]$ to denote the image of the set $C$ under $f$, i.e., $f[C] \df \set{f(x) \mid x\in C}$.

We let $\emph{id}_X$ denote the \emph{identity relation} on a set $X$. We write $R\circ S$ to denote the \emph{composition} of relations $R$ and $S$. We also write $R^{+}$ and $R^{*}$ to denote the \emph{(irreflexive) transitive closure} and \emph{reflexive transitive closure} of $R$ respectively.

A binary relation $R\subseteq X\times X$ is an \emph{equivalence relation} relation on $X$ iff it is reflexive, symmetric and transitive. If $R$ is an equivalence relation, we write $[x]_R$ to denote the equivalence class of $x$ with respect to $R$, and the set of all equivalence classes in $X$  is denoted as $X/R$ and called the \emph{quotient set} of $X$ by $R$.  We drop the subscript and write $[x]$ to denote the equivalence class of $x$ when $R$ is clear from the context. 

A binary relation $\prec \;\subseteq X \times X$ is a
{\em partial order} iff $R$ is {\em irreflexive} and {\em transitive}.
The pair $(X,\prec)$ in this case is called a \emph{partially ordered set} (\emph{poset}). The pair $(X,\prec)$ is called a \emph{finite poset} if $X$ is finite. For convenience, we define:
\begin{align*}
\simeq_\prec  &\df  \bset{(a,b)\in X\times X\mid a\not\prec b \;\wedge\; b\not\prec a}& \text{\emph{(incomparable)}}\\
\frown_\prec  &\df \bset{(a,b)\in X\times X\mid  a \simeq_\prec b \;\wedge\; a\neq b} &
\text{\emph{(distinctly incomparable)}}\\
\prec^\frown  &\df \bset{(a,b)\in X\times X\mid a \prec b \;\vee\; a\frown_\prec b} & \text{\emph{(not greater)}}
\end{align*}

A poset $(X,\prec)$ is {\em total} iff $\frown_\prec$ is empty; and {\em stratified} iff $\simeq_\prec$ is an equivalence relation. Evidently every total order is stratified.

Let $\prec_{1}$ and $\prec_{2}$ be partial orders on a set $X$. Then $\prec_2$ is an {\em extension} of $\prec_1$ if $\prec_1 \subseteq \prec_2$. The relation $\prec_2$ is a \emph{total extension} (\emph{stratified extension}) of $\prec_1$ if $\prec_2$ is total (stratified) and $\prec_1 \subseteq \prec_2$. \\

For a poset $(X,\prec)$, we define
\begin{align*}
\textit{Total}_{X}(\prec)&\df\set{\lhd\subseteq X\times X\mid \lhd\text{  is a total extension of} \prec}.
\end{align*}

\begin{theorem}[Szpilrajn \cite{Szp}]
For every poset $(X,\prec)$, $\prec = \bigcap_{\lhd\in\textit{Total}_{X}(\prec)}\lhd.$
\END
\end{theorem}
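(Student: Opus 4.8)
The plan is to prove the two inclusions of the claimed equality separately. The inclusion $\prec \subseteq \bigcap_{\lhd \in \textit{Total}_{X}(\prec)}\lhd$ is immediate: every $\lhd \in \textit{Total}_{X}(\prec)$ satisfies $\prec \subseteq \lhd$ by the very definition of a total extension. All of the content is in the reverse inclusion, for which it suffices to show: for every pair $(a,b) \notin \prec$ there is a single total extension $\lhd$ of $\prec$ with $(a,b) \notin \lhd$, so that $(a,b)$ is excluded from the intersection.

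First I would isolate a one-step extension lemma: if $a \frown_\prec b$, then $\prec' \df (\prec \cup \set{(a,b)})^{+}$ is again a strict partial order on $X$, it extends $\prec$, and $a \prec' b$. Transitivity is automatic since $\prec'$ is a transitive closure, so the only real point is irreflexivity. Any pair in $\prec'$ that is not already in $\prec$ must use the new edge $(a,b)$, hence has the form $(x,y)$ with $x \preceq a$ and $b \preceq y$ for the reflexive closure $\preceq$ of $\prec$; were such a pair of the form $(x,x)$, we would obtain $b \preceq x \preceq a$, i.e. $b \prec a$ or $b = a$, each contradicting $a \frown_\prec b$.

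Second, I would invoke Zorn's Lemma on the family of all strict partial orders on $X$ extending $\prec$, ordered by inclusion: the union of a chain of partial orders is a partial order (irreflexivity is inherited, and transitivity follows because any two related pairs already lie in a common member of the chain), so maximal extensions exist. A maximal extension $\lhd$ must be total, for otherwise $a \frown_\lhd b$ for some $a, b$, and applying the one-step lemma to $(X, \lhd)$ would produce a strictly larger partial order extending $\prec$, contradicting maximality. (For the finite posets that dominate the rest of the paper one could replace this by an easy induction on the number of incomparable pairs, but the statement is for arbitrary posets.)

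Finally, to complete the reverse inclusion, take $(a,b) \notin \prec$. If $a = b$, any total extension works, since total orders are irreflexive. If $a \neq b$, then either $b \prec a$ or $a \frown_\prec b$. In the first case, pick any total extension $\lhd$ of $\prec$ (one exists by the Zorn argument); then $b \lhd a$, and since transitivity and irreflexivity forbid a total order from containing both $(a,b)$ and $(b,a)$, we get $(a,b) \notin \lhd$. In the second case, first use the one-step lemma to extend $\prec$ to a partial order $\prec'$ with $b \prec' a$, then extend $\prec'$ to a total order $\lhd$ (which is also a total extension of $\prec$); again $b \lhd a$ forces $(a,b) \notin \lhd$. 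I expect the main obstacle to be the Zorn's-Lemma step together with the careful bookkeeping in the irreflexivity check of the one-step lemma; once those are secured, the theorem follows from the case analysis just described.
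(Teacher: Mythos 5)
Your proof is correct and complete. Note, however, that the paper offers no proof of this statement at all: it is quoted as a classical result with a citation to Szpilrajn's 1930 paper, so there is no in-paper argument to compare yours against. What you have written is the standard proof --- the one-step extension lemma (adjoining a single pair $(a,b)$ to an incomparable pair and taking the transitive closure, with the irreflexivity check you give), followed by Zorn's Lemma to obtain a maximal, hence total, extension, and the three-way case analysis on a pair $(a,b)\notin\;\prec$. All steps are sound; in particular your observation that a chain witnessing $(x,x)\in\prec'$ would force $b\preceq x\preceq a$ and hence contradict $a\frown_\prec b$ is exactly the right point, and your remark that for the finite posets used elsewhere in the paper one could replace Zorn by induction on the number of incomparable pairs is accurate.
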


Szpilrajn's theorem states that every partial order can be uniquely reconstructed by taking the intersection of all of its total extensions.

\subsection{Monoids and equational monoids}
A triple $(X,\ast,\mathbbm{1})$, where $X$ is a set, $\ast$ is a total binary operation on $X$, and $\mathbbm{1}\in X$,
is called a {\em monoid}, if  $(a\ast b)\ast c = a\ast (b\ast c)$ and $a\ast \mathbbm{1} = \mathbbm{1}\ast a = a$,
for all $a,b,c \in X$. 

A equivalence relation $\sim \; \subseteq X\times X$ is a {\em congruence} in the monoid $(X,\ast,\mathbbm{1})$ if for all elements $a_1,a_2,b_1,b_2$ of $X$,
$a_1\sim b_1 \wedge a_2\sim b_2 \Rightarrow (a_1\ast a_2) \sim (b_1\ast b_2)$.

The triple $(X/\!\sim,\circledast,[\mathbbm{1}])$, where
$[a] \circledast [b] = [a\ast b]$, is called the {\em quotient monoid} of
$(X,\ast,1$) under the congruence $\sim$.
The mapping $\phi:X\rightarrow X/\!\sim$ defined as $\phi(a) = [a]$ is called the
{\em natural homomorphism} generated by the congruence $\sim$.
We usually omit the symbols $\ast$ and $\circledast$.

\begin{definition}[Equation monoid]
Given a monoid $M=(X,\ast,\mathbbm{1})$ and a finite set of {\em equations} $\It{EQ}=\{\;x_i=y_i\mid i=1,\ldots ,n\;\}$, define $\equiv_{\It{EQ}}$  to be the least congruence on $M$ satisfying 
\[x_i = y_i \implies x_i\equiv_{\It{EQ}} y_i\] for every equation $x_i = y_i \in \It{EQ}$. We call the relation $\equiv_{\It{EQ}}$  the congruence defined by the set of equation $\It{EQ}$, or  $\It{EQ}$-congruence.
The quotient monoid $M_{\equiv_{\It{EQ}}} \;=\; (X/\!\!\equiv_{\It{EQ}}, \circledast, [\mathbbm{1}])$, where $[x]\circledast[y] = [x \ast y]$, is called an {\em equational monoid}. \EOD
\end{definition}

The following folklore result shows that the relation $\equiv_{\It{EQ}}$ can also be \emph{uniquely} defined in an explicit way.
\begin{proposition}[cf. \cite{Le08}]
Given a monoid $M=(X,\ast,\mathbbm{1})$ and a set of equations $\It{EQ}$, define the relation $\approx \;\subseteq X \times X$ as:
\[x \approx y \iffdf \exists\; x_1, x_2 \in X.\; \exists\; (u=w) \in \It{EQ}.\; x=x_1\!\ast\! u \!\ast\! x_2 \wedge y=x_1\!\ast\! w\!\ast\! x_2,\]
then the $\It{EQ}$-congruence  $\equiv$  is $(\approx \cup \approx^{-1})^*$, the symmetric irreflexive transitive closure of $\eqa$.\qed
\label{prop:p1}
\end{proposition}

We will see later in this paper that monoids of traces, comtraces and generalized comtraces are all special cases of equational monoids.

\subsection{Sequences, step sequences and partial orders}
\label{s:sts-so}
By an {\em alphabet} we shall understand any finite set. For an alphabet $\Sigma$, let $\Sigma^*$ denote the set of all \emph{finite} sequences of elements (words) of $\Sigma$, $\lambda$ denotes the empty sequence, and any subset of $\Sigma^*$ is called a {\em language}. In the scope of this paper, we only deal with \emph{finite} sequences.  Let the operator $\_\cdot\_$ denote the sequence concatenation (usually omitted). Since the sequence concatenation operator is associative and $\lambda$ is neutral, the triple $(\Sigma^*,\cdot,\lambda)$  is a {\em monoid} (of sequences). \\

Consider an alphabet $\E \subseteq \PSB{X}$ for some alphabet $\Sigma$.
The elements of $\E$ are called {\em steps} and
the elements of $\E^*$ are called {\em step sequences}. For example if
$\E = \{ \{a,b,c\}, \{a,b\}, \{a\}, \{c\} \}$ then
$\{a,b\}\{c\}\{a,b,c\} \in \E^*$ is a step sequence.
The triple $(\E^*,\cdot,\lambda)$, is a {\em monoid} (of step sequences), since the step sequence concatenation is associative and $\lambda$ is neutral.\\

We will now show the formal relationship between step sequences and stratified orders. Let $t=A_1\ldots A_k$ be a step sequence in $\E^{*}$. We define  $|t|_a$, the number of occurrences of an event $a$ in $t$, as $|t|_a \df \bigl\lvert\bset{A_i\mid 1\le i \le k \wedge a\in A_i}\bigr\rvert$, where $|X|$ denotes the cardinality of the set $X$. 
\begin{itemize}
\item We can uniquely construct its \emph{enumerated step sequence} $\h{t}$ as
\begin{align*}
\h{t}\df \h{A_1}\ldots \h{A_k}\text{, where } \h{A_i} \df \Bset{e^{(|A_1\ldots A_{i-1}|_e+1)}\;\bigl\lvert\; e\in A_i\bigr.}.
\end{align*}
We call such $\alpha=e^{(i)}\in \h{A_i} $ an \emph{event occurrence} of $e$.  E.g., if $t= \set{a,b}\set{b,c}\set{c,a}\set{a}$,
then  $\overline{t}=\bset{a^{(1)},b^{(1)}}\bset{b^{(2)},c^{(1)}}\bset{a^{(2)},c^{(2)}}\bset{a^{(3)}}$ is its enumerated step sequence. 

\item Let $\Sigma_t=\bigcup_{i=1}^k \h{A_i}$ denote the set of all event occurrences in all steps of $t$. For example, when
$t= \set{a,b}\set{b,c}\set{c,a}\set{a}$,  we have $\Sigma_t=\bset{ a^{(1)},a^{(2)},a^{(3)}, b^{(1)}, b^{(2)},
c^{(1)},c^{(2)} }.$

\item Define $l:\Sigma_{t}\rightarrow \Sigma$ to be the function that returns the label of an even ocurrrence. In other words, for each event occurrence $\alpha=e^{(i)}$, $l(\alpha)$ returns the \emph{label} $e$ of $\alpha$. From an enumerated step sequence $\h{t} = \h{A_1}\ldots \h{A_k}$, we can uniquely recover its step sequence as $t =  l[\,\h{A_1}\,]\ldots l[\,\h{A_k}\,].$

\item For each $\alpha \in \Sigma_t$, let $pos_t(\alpha)$ denote the index number of the step where $\alpha$ occurs, i.e., if $\alpha\in \h{A_j}$ then $pos_t(\alpha)=j$. For our example, $pos_t(a^{(2)})=3$,  $pos_t(b^{(2)})=2$, etc.\\
\end{itemize}

Given a step sequence $u$, we define two relations $\lhd_u,\simeq_u\subseteq\Sigma_u\times\Sigma_{u}$ as:
\begin{align*}
\alpha \lhd_u \beta \iffdf pos_u(\alpha)<pos_u(\beta)
\text{\mbox{\hspace{5mm}} and \mbox{\hspace{5mm}} }
\alpha\simeq_u \beta \iffdf pos_u(\alpha)=pos_u(\beta).
\end{align*}
Since $\lhd_u^\frown$ is the union of $\lhd_u$ and $\frown_u$, we have 
\[\alpha \lhd_u^\frown \beta \iff (\alpha\not=\beta\wedge pos_u(\alpha)\le pos_u(\beta)).\]
The two propositions below are folklore results (see \cite{Le08} for detailed proofs), which are fundamental for understanding why stratified partial orders and step sequences are two interchangeable concepts. The first proposition shows that $\lhd_u$ is indeed a stratified order.

\begin{proposition}
Given a step sequence $u$, the relation $\simeq_u$ is an equivalence  relation and $\lhd_u$ is a stratified order. \qed
\label{corSSOS}
\end{proposition}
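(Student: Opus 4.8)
The plan is to unpack the definition of $\simeq_u$ and verify the three equivalence-relation axioms directly, using the fact that $pos_u$ is a well-defined function $\Sigma_u \to \{1,\ldots,n\}$. Recall that $\alpha \simeq_u \beta$ holds iff $pos_u(\alpha) = pos_u(\beta)$; since equality of natural numbers is itself an equivalence relation, the result should follow almost mechanically by transporting reflexivity, symmetry and transitivity of $=$ on $\{1,\ldots,n\}$ back along $pos_u$.

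First I would note that $pos_u$ is total on $\Sigma_u$: by construction $\Sigma_u = \bigcup_{i=1}^n \h{B_i}$, and the enumerated steps $\h{B_1},\ldots,\h{B_n}$ are pairwise disjoint (each event occurrence $e^{(j)}$ carries the running count $j = |B_1\ldots B_{i-1}|_e + 1$, so an occurrence in $\h{B_i}$ cannot reappear in any $\h{B_{i'}}$ with $i' \neq i$). Hence for every $\alpha \in \Sigma_u$ there is exactly one index $i$ with $\alpha \in \h{B_i}$, and $pos_u(\alpha) = i$ is well-defined. Then: reflexivity is immediate since $pos_u(\alpha) = pos_u(\alpha)$; symmetry follows because $pos_u(\alpha) = pos_u(\beta)$ implies $pos_u(\beta) = pos_u(\alpha)$; and transitivity follows because $pos_u(\alpha) = pos_u(\beta)$ and $pos_u(\beta) = pos_u(\gamma)$ give $pos_u(\alpha) = pos_u(\gamma)$.

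There is essentially no obstacle here — the only point requiring a word of care is the well-definedness of $pos_u$, i.e. that the steps $\h{B_i}$ genuinely partition $\Sigma_u$ rather than merely cover it; everything else is a triviality inherited from the equality relation on indices. I would therefore keep the proof to a few lines, emphasizing the disjointness of the enumerated steps as the substantive (if elementary) ingredient, and then simply state that reflexivity, symmetry and transitivity of $\simeq_u$ are inherited from those of $=$ on $\{1,\ldots,n\}$ via the map $pos_u$.
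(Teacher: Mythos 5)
Your proof is correct and follows essentially the same route as the paper: both arguments rest on the observation that the enumerated steps $\h{B_1},\ldots,\h{B_n}$ partition $\Sigma_u$, so that $pos_u$ is a well-defined function and $\simeq_u$ is just the kernel of that function (the paper phrases this as ``the equivalence relation induced by the partition,'' you phrase it as pulling back equality on $\{1,\ldots,n\}$ along $pos_u$, which is the same thing). Your extra remark about disjointness of the enumerated steps is a reasonable explicit justification of a point the paper leaves implicit.
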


We will call  $\lhd_u$ the stratified order {\em generated by the step sequence} $u$. Conversely, let $\lhd$ be a stratified order on a set $\Sigma$. Then the second proposition says:
\begin{proposition}
If $\lhd$ is a stratified order on a set $\Sigma$ and $A,B$ are two distinct equivalence classes of $\simeq_{\lhd}$, then either
 $A\times B \subseteq \lhd$ or $B\times A \subseteq \lhd$. \qed
\label{OrderOnEC}
\end{proposition}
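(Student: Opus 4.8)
The plan is to work directly from the definitions of $\frown_\lhd$ and $\simeq_\lhd$, using only that $\simeq_\lhd$ is an equivalence relation (which holds because $\lhd$ is stratified) together with the transitivity and irreflexivity of $\lhd$; in particular I would avoid the step-sequence representation $\Omega_\lhd$, since it is this very proposition that justifies its existence. The first step is to record a trichotomy: for any two \emph{distinct} $x,y\in\Sigma$, at least one of $x\lhd y$, $y\lhd x$, $x\simeq_\lhd y$ holds. This is immediate by unfolding the definitions: if neither $x\lhd y$ nor $y\lhd x$, then $x\frown_\lhd y$ (as $x\ne y$), hence $x\simeq_\lhd y$.

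Next I would note that since $A\ne B$ are distinct equivalence classes, any $x\in A$ and $y\in B$ satisfy $x\ne y$ and $\neg(x\simeq_\lhd y)$: indeed $A=[x]$ and $B=[y]$, so $x\simeq_\lhd y$ would give $A=B$. Combining this with the trichotomy, every pair $(x,y)\in A\times B$ satisfies $x\lhd y$ or $y\lhd x$. Now fix any $a\in A$ and $b\in B$ (both nonempty, being equivalence classes) and split on which alternative holds. Consider the case $a\lhd b$; I claim $A\times B\subseteq\lhd$. Take an arbitrary $(a',b')\in A\times B$. Applying the trichotomy to $(a,b')$: were $b'\lhd a$, transitivity would give $b'\lhd a\lhd b$, i.e. $b'\lhd b$, which is impossible because $b,b'$ lie in the common class $B$ --- if $b=b'$ this violates irreflexivity, and if $b\ne b'$ then $b\frown_\lhd b'$ forces $\neg(b'\lhd b)$. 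Hence $a\lhd b'$. Applying the trichotomy to $(a',b')$: were $b'\lhd a'$, then $a\lhd b'\lhd a'$ gives $a\lhd a'$, impossible by the same argument applied to the class $A$. Hence $a'\lhd b'$, proving $A\times B\subseteq\lhd$. The case $b\lhd a$ is entirely symmetric (swap the roles of $A$ and $B$) and yields $B\times A\subseteq\lhd$.

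The only slightly delicate point --- the ``main obstacle'', such as it is --- is handling the degenerate sub-cases in which two selected elements of the same class happen to coincide, so that $\frown_\lhd$ cannot be invoked and one must fall back on irreflexivity of $\lhd$; apart from that the argument is a routine chase through transitivity. One can additionally observe that the two alternatives are mutually exclusive (again by irreflexivity and transitivity), although the statement only asserts the disjunction.
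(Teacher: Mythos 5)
Your proof is correct and follows essentially the same strategy as the paper's: fix witnesses $a\in A$, $b\in B$, split on whether $a\lhd b$ or $b\lhd a$, and for an arbitrary pair derive a contradiction from transitivity together with the fact that two elements of one equivalence class cannot be $\lhd$-related. The only difference is organizational --- you chain through the intermediate conclusion $a\lhd b'$ and absorb the degenerate coincidence cases into a single observation (irreflexivity or $\frown_\lhd$), whereas the paper splits explicitly into the subcases $a=c$, $b=d$, and both distinct; your packaging is slightly tidier but the argument is the same.
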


In other words, Proposition \ref{OrderOnEC} implies that if  we define a binary relation $\widehat{\lhd}$ on the quotient set $\quotient{\Sigma}{\simeq_{\lhd}}$ as  \[A\; \widehat{\lhd}\; B \iffdf A\times B \subseteq \lhd,\]
then $\widehat{\lhd}$ totally orders $\quotient{\Sigma}{\simeq_{\lhd}}$ into a sequence of equivalence classes $\Omega_\lhd=B_1\ldots B_k$ ($k\ge 0$). We will call the sequence $\Omega_\lhd$ as the step sequence \emph{representing} $\lhd$.  \\

Since sequences are a special case of step sequences and total orders are a special case of stratified orders, the above results can be applied to sequences and finite total orders as well. Hence, for each sequence $x\in\Sigma^*$, we let $\lhd_x$ denote the {\em total order generated} by $x$, and for every total order $\lhd$, we let $\Omega_\lhd$ denote the {\em sequence generating} $\lhd$. Furthermore, $\Sigma_x$ will denote the alphabet of the sequence $x$.

\section{Traces vs. Partial Orders \label{sec:eq}}
Traces or partially commutative monoids \cite{Foa,Di,Ma1,Ma2} are {\em equational monoids over sequences}.
In the previous section we have shown how sequences correspond to finite total orders and how step sequences correspond to finite stratified orders. In this section we discuss the relationship between traces and finite partial orders.

The theory of traces has been utilized to tackle problems from    diverse areas including combinatorics, graph theory, algebra, logic and, especially (due to the relationship to partial orders) concurrency theory \cite{Di,Ma1,Ma2}.

Since traces constitute a {\em sequence representation of partial orders}, they can effectively model ``true concurrency" in various aspects of concurrency theory using simple and intuitive means. We will now recall the definition of a \emph{trace monoid}.

\begin{definition}[\cite{Di,Ma2}]
Let $M=(E^*,\ast,\lambda)$ be a {\em monoid} generated by finite $E$, and let the relation $ind \subseteq E\times E$ be an
irreflexive and symmetric relation (called {\em independency} or {\em commutation}), and
$\It{EQ} \df \{ ab=ba \mid (a,b)\in ind \}.$ Let $\equiv_{ind}$, called \emph{trace congruence}, be the congruence defined by $\It{EQ}$. Then the equational monoid $M_{\equiv_{ind}} = \bigl( E^*/\!\!\equiv_{ind}, \circledast, [\lambda]\bigr)$ is a  monoid of {\em traces} (or a {\em free partially commutative monoid}). The pair $(E,ind)$ is called a {\em trace alphabet}.
\EOD
\end{definition}

We will omit the subscript $ind$ from trace congruence and write $\equiv$ if it causes no ambiguity.

\begin{example}
Let $E=\{a,b,c\}$,
$ind=\{(b,c),(c,b)\}$, i.e., $\It{EQ}=\{\;bc=cb\;\}$\footnote{Strictly speaking $\It{EQ}=\{\;bc=cb,cb=bc\;\}$ but standardly we consider the equations $bc=cb$ and $cb=bc$ as identical.}. 
For example, $abcbca\equiv accbba$ (since $abcbca\approx acbbca \approx acbcba \approx accbba$). Also we have $\mathbf{t}_1=[abcbca]=\\ \{abcbca,abccba,acbbca,acbcba, abbcca,accbba\}$,  $\mathbf{t}_2 =[abc]=\{abc,acb\}$ and $\mathbf{t}_3=[bca]=\{bca,cba\}$
 are traces. Note that  $\mathbf{t}_1 = \mathbf{t}_2\; \circledast\; \mathbf{t}_3$ since $[abcbca]=[abc]\circledast[bca]$. \EOD
\label{ex:traces1}
\end{example}

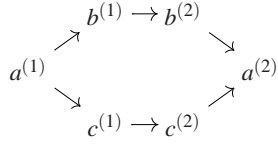
\begin{figure}[t]
  \begin{center}
	\begin{small}$
\SelectTips{cm}{10}
\xymatrix@C=1em@R=0.7em{
		&b^{(1)} \ar[r]  &b^{(2)} \ar[dr] & \\
a^{(1)} \ar[ur]\ar[dr]  &		 &		&a^{(2)}\\
		&c^{(1)} \ar[r]  &c^{(2)} \ar[ur] &
}$\end{small} 

  \end{center}
  \caption{Partial order generated by the trace $[abcbca]$}
\label{fig:trace2po}
\end{figure}

Each trace can be interpreted as a finite partial order. Let $\mathbf{t}=\{x_1,\ldots ,x_k\}$ be a trace, and let $\lhd_{x_i}$ denotes the total order induced by the sequence $x_i$, $i=1,\ldots ,k$.
Note that $\Sigma_{x_i} = \Sigma_{x_j}$ for all $i,j=1,\ldots ,n$, so we can define $\Sigma_t = \Sigma_{x_i}$, $i=1,\ldots ,n$. For example, the set of event occurrences of the trace $\mathbf{t}_1$ from Example \ref{ex:traces1} is
$\Sigma_{\mathbf{t}_1} =\bset{a^{(1)},b^{(1)},c^{(1)},a^{(2)},b^{(2)},c^{(2)}}$.
Each $\lhd_i$ is a total order on $\Sigma_\mathbf{t}$.
The partial order  generated   by $\mathbf{t}$ can then be defined as $\prec_\mathbf{t} = \bigcap_{i=1}^k \lhd_{x_i}$. 
In fact, the set $\set{\lhd_{x_1},\ldots,\lhd_{x_k}}$ consists of all total extensions of $\PO_{\mathbf{t}}$ (see \cite{Ma1,Ma2}). Thus, the trace $\mathbf{t}_1=[abcbca]$ from Example \ref{ex:traces1} can be interpreted as the partial order $\prec_{\mathbf{t}_1}$ depicted in \figref{trace2po} (arcs inferred from transitivity are omitted for simplicity).

\begin{remark}
Given a sequence $s$, to construct the partial order $\PO_{[s]}$ generated by $[s]$, we \emph{do not} need to build up to exponentially many elements of $[s]$. We can simply construct the direct acyclic graph $(\Sigma_{[s]},\PO_s)$, where $x^{(i)} \PO_s y^{(j)}$ iff $x^{(i)}$ occurs before $y^{(j)}$ on the sequence $s$ and $(x,y)\not\in ind$. The relation $\PO_s$ is usually \emph{not} the same as the partial order $\PO_{[s]}$. However, after applying the \emph{transitive closure} operator, we have $\PO_{[s]} = \PO_{s}^+$ (cf. \cite{Di}). We will later see how this idea is generalized to the construction of so-structures and gso-structures from their ``trace'' representations. Note that to do so, it is inevitable that we have to generalize the \emph{transitive closure} operator to these order structures. \EOD
\label{rem:trace}
\end{remark}

From the concurrency point of view, the trace quotient monoid representation has a fundamental advantage over its labeled poset representation when studying the formal linguistic aspects of concurrent behaviors, e.g., Ochma\'nski's characterization of recognizable trace language \cite{Och} and Zielonka's theory of asynchronous automata \cite{Zie}.  For more details on traces and their various properties, the reader is referred to the monograph \cite{Di}.
The reader is also referred to \cite{BK91} for interesting discussions on the trade-offs: traces vs. labeled partial order models that allow auto-concurrency, e.g., pomsets.

\section{Relational Structures Model of Concurrency \label{sec:relsurvey}}
Even though partial orders are one of the main tools for modelling ``true concurrency,'' they have some limitations. While they can sufficiently model the ``earlier than" relationship, they can  model neither the ``not later than" relationship nor the ``non-simultaneously" relationship. It was shown in \cite{J4} that any reasonable concurrent behavior can be modelled by an appropriate {\em pair of relations}. This leads to the theory of {\em relational  structures models of concurrency} \cite{JK,GJ,J0} (see \cite{J0} for a detailed bibliography and history).

In this section, we review the theory of {\em stratified order structures} of \cite{JK} and {\em generalized stratified order structures} of \cite{GJ,J0}.
The former can model both the ``earlier than" and the ``not later than" relationships, but not the ``non-simultaneously" relationship. The latter can model all three relationships.

While traces provide sequence representations of causal partial orders, their extensions, comtraces and generalized comtraces discussed in the following sections, are {\em step sequence} representations of  stratified order structures and generalized stratified order structures respectively.

Since the theory of relational order structures is far less known than the theory of causal partial orders, we will not only give appropriate definitions but also introduce some intuition and motivation behind those definitions using simple examples.\\

We start with the concept of an {\em observation}:
\begin{quote}
An {\em observation} (also called a \emph{run} or an \emph{instance of concurrent behavior}) is an abstract model of the execution of a concurrent system.
\end{quote}

It was argued in \cite{J4} that {\em an observation must be a total, stratified or interval order} (interval orders are not used in this paper). Totally ordered observations can be represented by sequences while stratified observations can be represented by step sequences.\\

The next concept is a {\em concurrent behavior}:
\begin{quote}
A {\em concurrent behavior} ({\em concurrent history}) is a set of equivalent observations.
\end{quote}

When totally ordered observations are sufficient to define whole concurrent behaviors, then the concurrent behaviors can entirely be described by causal partial orders. However if  concurrent behaviors consist of more sophisticated sets of stratified observations, e.g., to model the ``not later than" relationship or the ``non-simultaneously'' relationship,  then we need  relational structures \cite{J4}.

\subsection{Stratified order structure}
By a \emph{relational structure}, we mean a triple $T=(X,R_1,R_2)$, where $X$ is a set and $R_1$, $R_2$ are binary relations on $X$. A relational structure $T'=(X',R'_1,R'_2)$ is an \emph{extension} of $T$, denoted as $T\subseteq T'$, iff $X=X'$, $R_1\subseteq R_1'$ and $R_2\subseteq R_2'$.

\begin{definition}[stratified order structure \cite{JK}]
A \emph{stratified order structure} (\emph{so-structure}) is a relational structure $S=(X,\prec,\sqsubset),$ such that for all $a,b,c\in X$, the following hold:
\begin{align*}
\text{S1:\hspace{5mm}}& a\not\sqsubset a & \text{S3:\hspace{5mm}}& a\sqsubset b \sqsubset c \;\wedge\; a \not= c \implies a\sqsubset c\\
\text{S2:\hspace{5mm}}& a \prec b \implies a \sqsubset b & \text{S4:\hspace{5mm}}&
a\sqsubset b \prec c \;\vee\; a\prec b \sqsubset c \implies a\prec c
\end{align*}
When $X$ is finite, $S$ is called a \emph{finite so-structure}. \EOD
\label{def:sos}
\end{definition}

Note that the axioms S1--S4 imply that $(X,\PO)$ is a poset and $a\prec b\Rightarrow b\not\sqsubset a.$  The relation $\prec$ is called \textit{causality} and represents the ``earlier than" relationship, and the relation $\sqsubset$ is called \textit{weak causality} and represents the ``not later than" relationship. The axioms S1--S4 model the mutual relationship between ``earlier than" and ``not later than" relations, {\em provided that the system runs are modelled by stratified orders}.

The concept of so-structures were independently introduced in \cite{GP} and \cite{JK0} (the axioms are slightly different from S1--S4, although equivalent). Their comprehensive theory has been presented in \cite{JK}. They have been successfully applied to model inhibitor and priority systems, asynchronous races, synthesis problems, etc. (see for example \cite{JK95,PK,JL08,JL08b,JLN,KK,KK2}). The name follows from the following result.

\begin{proposition}[\cite{J4}]
For every stratified order $\lhd$ on $X$, the triple $S_\lhd=(X,\lhd,\lhd^\frown)$ is a so-structure.\END
\label{p:soss}
\end{proposition}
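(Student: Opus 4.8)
The plan is to verify the four so-structure axioms S1--S4 directly for $S_\lhd=(X,\lhd,\lhd^\frown)$, exploiting the fact that $\lhd$ is a stratified order, which by definition means $\simeq_\lhd$ is an equivalence relation. Recall that $a\lhd^\frown b \iff a\lhd b \vee a\frown_\lhd b$, and that $a\frown_\lhd b$ means $\neg(a\lhd b)\wedge\neg(b\lhd a)\wedge a\neq b$; equivalently, using the stratified structure and Proposition~\ref{OrderOnEC}, one can think of $\lhd^\frown$ as ``$a\neq b$ and $b$ is not strictly earlier than $a$'', i.e.\ $a\lhd^\frown b \iff a\neq b \wedge \neg(b\lhd a)$. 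Establishing this reformulation of $\lhd^\frown$ first will make the remaining checks almost mechanical.

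First, for S1 ($a\not\sqsubset a$): since $\sqsubset\,=\,\lhd^\frown$ and $\lhd^\frown$ is irreflexive by construction (both $\lhd$ and $\frown_\lhd$ are irreflexive, the latter because of the $a\neq b$ clause), this is immediate. For S2 ($a\prec b \Rightarrow a\sqsubset b$): if $a\lhd b$ then trivially $a\lhd^\frown b$ by the first disjunct, so $\prec\,\subseteq\,\sqsubset$ holds at once. The substantive axioms are S3 and S4.

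For S3 ($a\sqsubset b\sqsubset c \wedge a\neq c \Rightarrow a\sqsubset c$): assume $a\lhd^\frown b$, $b\lhd^\frown c$, $a\neq c$, and suppose for contradiction $\neg(a\lhd^\frown c)$, i.e.\ $c\lhd a$. Using the reformulation, $a\lhd^\frown b$ gives $\neg(b\lhd a)$ and $b\lhd^\frown c$ gives $\neg(c\lhd b)$. Now $c\lhd a$; if also $a\lhd b$ we get $c\lhd b$ by transitivity of $\lhd$, contradicting $\neg(c\lhd b)$; so $\neg(a\lhd b)$, hence with $\neg(b\lhd a)$ and $a\neq b$ we have $a\simeq_\lhd b$ (here ``$a=b$ or $a\frown_\lhd b$''). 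Symmetrically, from $c\lhd a$ and $\neg(c\lhd b)$ one derives $b\simeq_\lhd c$. Then transitivity of $\simeq_\lhd$ (this is exactly where stratifiedness is used) gives $a\simeq_\lhd c$, so either $a=c$ (excluded) or $a\frown_\lhd c$, contradicting $c\lhd a$. For S4 ($a\sqsubset b\prec c \vee a\prec b\sqsubset c \Rightarrow a\prec c$): take the case $a\lhd^\frown b$ and $b\lhd c$ (the other case is dual). Suppose $\neg(a\lhd c)$. Since $a\lhd^\frown b$ means $\neg(b\lhd a)$; if $c\lhd a$ then with $b\lhd c$ we get $b\lhd a$, contradiction, so $\neg(c\lhd a)$ as well, hence $a\simeq_\lhd c$, i.e.\ $a=c$ or $a\frown_\lhd c$. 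If $a=c$: then $b\lhd c = a$ contradicts $\neg(b\lhd a)$. If $a\frown_\lhd c$: combine $a\simeq_\lhd c$ with... more directly, from $a\lhd^\frown b$, $b\lhd c$, and the reformulation, one shows $a\lhd b$ or $a\simeq_\lhd b$; in the latter case $a\simeq_\lhd b$ and $b\lhd c$ force (by Proposition~\ref{OrderOnEC}, since $a,b$ lie in one equivalence class and $b\lhd c$) that $a\lhd c$, contradiction; in the former case $a\lhd b\lhd c$ gives $a\lhd c$ directly.

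I expect the main obstacle to be bookkeeping the negations cleanly: the relation $\lhd^\frown$ mixes a strict order with its incomparability relation, so the cleanest route is to commit early to the equivalent characterization $a\lhd^\frown b \iff a\neq b\wedge\neg(b\lhd a)$ and to the fact (immediate from stratifiedness, or from Proposition~\ref{OrderOnEC}) that $\neg(a\lhd b)\wedge\neg(b\lhd a) \Rightarrow a\simeq_\lhd b$, and then every case reduces to a short chain of transitivity applications on $\lhd$ together with one use of transitivity of $\simeq_\lhd$. No step is deep; the only genuinely essential hypothesis is that $\lhd$ is stratified (equivalently $\simeq_\lhd$ transitive), which is precisely what rescues S3 and S4.
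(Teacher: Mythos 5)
Your proof is correct. The paper itself gives no argument for this proposition --- it is imported from \cite{J4} with only a citation --- so there is nothing to compare against; but your direct verification of S1--S4 is sound: the reformulation $a\lhd^\frown b \iff a\neq b \wedge \neg(b\lhd a)$ is valid for a strict partial order, S1 and S2 are immediate, S3 correctly isolates the one essential use of stratifiedness (transitivity of $\simeq_\lhd$), and S4 is correctly discharged via transitivity of $\lhd$ in one subcase and Proposition~\ref{OrderOnEC} (whole equivalence classes of $\simeq_\lhd$ are comparable under $\lhd$) in the other. The only cosmetic slip is in S3, where the derivation of $b\simeq_\lhd c$ actually invokes $\neg(b\lhd a)$ rather than $\neg(c\lhd b)$ as stated, but the symmetric argument goes through as you intend.
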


\begin{definition}[stratified extension of so-structure \cite{JK}]
A {\em stratified} order $\lhd$ on $X$ is a {\em stratified extension} of a so-structure $S=(X,\prec,\sqsubset)$ if for all $\alpha,\beta\in X$,
\begin{align*}
\alpha \prec \beta \implies \alpha\lhd \beta \text{\hspace*{1cm} and \hspace*{1cm}}
\alpha\sqsubset \beta \implies \alpha\lhd^\frown \beta
\end{align*}
The set of all stratified extensions of $S$  is denoted as  $ext(S)$. \EOD
\label{def:extsos}
\end{definition}

According to Szpilrajn's theorem, every poset can be reconstructed by taking the intersection of all of its total extensions. A similar result holds for so-structures and stratified extensions.

\begin{theorem}[{\cite[Theorem 2.9]{JK}}]
Let $S=(X,\PO,\sq)$ be a so-structure. Then
\begin{center}
${ S=\Bigl(X,\bigcap_{\lhd\;\in\; ext(S)}\lhd,\bigcap_{\lhd\;\in\; ext(S)}\lhd^\frown\Bigr)}.$          \end{center}
\mEND
\label{theo:SzpStrat}
\end{theorem}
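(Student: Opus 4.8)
The plan is to prove the two set-equalities componentwise, mirroring the proof of the Szpilrajn corollary for stratified orders but keeping track of both relations. Write $\PO' = \bigcap_{\lhd\in ext(S)}\lhd$ and $\sq' = \bigcap_{\lhd\in ext(S)}\lhd^\frown$, and let $S' = (X,\PO',\sq')$. The goal is $S = S'$, i.e.\ $\PO = \PO'$ and $\sq = \sq'$.

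The inclusions $\PO \subseteq \PO'$ and $\sq \subseteq \sq'$ are immediate from the definition of a stratified extension (\defref{extsos}): every $\lhd\in ext(S)$ satisfies $\PO\subseteq\lhd$ and $\sq\subseteq\lhd^\frown$, so $\PO$ is contained in the intersection of the $\lhd$'s and $\sq$ in the intersection of the $\lhd^\frown$'s. (This direction does, of course, tacitly use that $ext(S)\neq\emptyset$, but that is exactly what \theoref{SzpStrat}, as cited, supplies.) The substance is the reverse inclusions, and the natural way to get them is by contraposition: I would show that whenever $(a,b)\notin\;\PO$ there is some stratified extension $\lhd$ of $S$ with $(a,b)\notin\lhd$, so $(a,b)\notin\PO'$; and whenever $(a,b)\notin\;\sq$ there is some $\lhd\in ext(S)$ with $(a,b)\notin\lhd^\frown$, i.e.\ with $a\neq b$ and $b\lhd a$, so $(a,b)\notin\sq'$.

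Thus the whole proof reduces to two ``extension existence'' lemmas: (i) if $\neg(a\PO b)$ then some $\lhd\in ext(S)$ has $\neg(a\lhd b)$; (ii) if $\neg(a\sq b)$ then some $\lhd\in ext(S)$ has $b\lhd a$ and $a\neq b$. These are precisely the so-structure analogues of the fact that in a poset any incomparable pair can be ordered either way by a total extension. I would expect the paper to have an earlier or companion result of exactly this shape (the ``so-structure version of Szpilrajn'' from \cite{JK}); if such a lemma is available I would simply invoke it, and then the corollary is a two-line bookkeeping argument as above. If it must be proved from scratch, the route is: given the forbidden pair, first adjoin the desired ordering to $\PO$ or $\sq$, then close up under the axioms S1--S4 (a ``$\ast$-closure'' operation for so-structures, analogous to transitive closure for posets, cf.\ \reref{trace}), check that the result is still a so-structure and still an extension of $S$, and finally refine it to a stratified extension. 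Handling case (ii) also needs the remark recorded after \defref{sos}, namely $a\PO b \Rightarrow b\not\sq a$, to guarantee that flipping a non-$\sq$ pair does not clash with causality.

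The main obstacle is exactly this extension-existence step: unlike the plain poset case, one must simultaneously respect \emph{two} relations and the four coupling axioms, so the closure operation is more delicate and one has to verify that closing up does not accidentally force the pair we were trying to keep out (this is where $a\PO b\Rightarrow b\not\sq a$ and the interplay of S2/S4 are used). Everything after that — the contrapositive reformulation and the componentwise comparison of intersections — is routine. I would also note for safety that the $\sq'$-component is genuinely two-sided: $\sq \subseteq \sq'$ uses $\sq\subseteq\lhd^\frown$, while $\sq'\subseteq\sq$ uses the existence of an extension that reverses a non-$\sq$ pair; neither direction collapses to the $\PO$-argument, so both extension lemmas are really needed.
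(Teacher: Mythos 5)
The first thing to note is that the paper does not prove this theorem at all: it is imported verbatim as Theorem~2.9 of \cite{JK}, and the statement ends with a box rather than a proof. So there is no in-paper argument to compare yours against line by line; what can be assessed is whether your proposal would stand on its own as a proof.

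As it stands it would not, because all of the mathematical content of the theorem is concentrated in exactly the two ``extension existence'' lemmas that you defer. The inclusions $\prec\;\subseteq\bigcap_{\lhd\in ext(S)}\lhd$ and $\sqsubset\;\subseteq\bigcap_{\lhd\in ext(S)}\lhd^\frown$ are indeed immediate from \defref{extsos} (and hold even vacuously if $ext(S)=\emptyset$), and the contrapositive reformulation of the reverse inclusions is routine bookkeeping, just as you say. But (i) the claim that $ext(S)\neq\emptyset$ --- which is what the \emph{reverse} inclusions actually need, not the forward ones --- and (ii) the claim that every pair outside $\prec$ (resp.\ outside $\sqsubset$) is omitted (resp.\ reversed) by some stratified extension, are precisely Theorem~2.9 of \cite{JK} in disguise; your parenthetical remark that nonemptiness of $ext(S)$ ``is exactly what the theorem, as cited, supplies'' makes the circularity explicit. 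The fallback closure sketch also stops short of the hard step: after adjoining the forbidden ordering and closing under S1--S4 one still has only a so-structure, and producing a \emph{stratified order} extending it (your ``finally refine it to a stratified extension'') is again the existence half of the very theorem being proved. A self-contained argument would have to construct such a stratified order explicitly --- e.g.\ by peeling off successive $\sqsubset$-minimal blocks, or by induction on $|X|$ in the finite case --- and check that the forbidden pair survives the construction; none of that is present. In short, your proposal is a correct and cleanly organized \emph{reduction} of the theorem to its essential lemmas, but not a proof of them.
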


The set $ext(S)$ also has the following internal property that will be useful in various proofs.

\begin{theorem}[\cite{J4}]
Let $S=(X,\PO,\sq)$ be a so-structure. Then for every $a,b\in X$,
\begin{center}
$(\exists \lhd\in ext(S).\; a\lhd b) \wedge  (\exists \lhd\in ext(S).\; b\lhd a) \implies \exists \lhd\in ext(S).\; a\frown_\lhd b.$
\end{center}
\mEND
\label{pi3}
\end{theorem}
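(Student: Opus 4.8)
The plan is to prove the contrapositive-style strengthening directly: assuming that both orderings $a \lhd b$ and $b \lhd a$ are realizable by stratified extensions of $S$, I will exhibit a stratified extension in which $a$ and $b$ are placed in the same equivalence class, i.e. $a \frown_\lhd b$. First I would fix $\lhd_1 \in ext(S)$ with $a \lhd_1 b$ and $\lhd_2 \in ext(S)$ with $b \lhd_2 a$. The natural candidate to build is an extension $\lhd_3$ of the so-structure $S$ in which the events are ``merged'' at the position of $a$ and $b$. Concretely, I would take the stratified order $\lhd_1$, look at its step sequence representation $\Omega_{\lhd_1} = B_1 \ldots B_k$ via Proposition~\ref{OrderOnEC}, and try to coalesce the block containing $a$ with the block containing $b$ (and all blocks strictly between them) into a single step, obtaining a new step sequence and hence a new stratified order $\lhd_3$ with $a \frown_{\lhd_3} b$.

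The key steps, in order, are: (1) verify that in $\lhd_1$ we have $a \lhd_1 b$ with no event $\sqsubset$-forced to lie strictly after $b$ but $\sqsubset$-before $a$ — more precisely, use axioms S1--S4 together with the defining conditions of $ext(S)$ (Definition~\ref{def:extsos}) to show that collapsing the interval of blocks of $\Omega_{\lhd_1}$ from $a$'s block through $b$'s block does not violate any $\prec$ or $\sqsubset$ constraint of $S$; (2) symmetrically, one may instead need to work from $\lhd_2$, or to interleave information from both $\lhd_1$ and $\lhd_2$, because a single collapse in $\lhd_1$ could create a cycle: some $c$ with $a \prec c$ forces $c$ after the merged block, while $c \prec b$ (impossible if $a,b$ go in one block unless handled) — here the existence of $\lhd_2$ with $b \lhd_2 a$ is exactly what rules out such obstructing $c$, since $a \prec c \prec b$ would force $a \lhd_2 b$ by S2 and transitivity, contradicting $b \lhd_2 a$; (3) check that the resulting relation $\lhd_3$ is genuinely stratified (its $\simeq$ is an equivalence — automatic from the step-sequence construction via Proposition~\ref{corSSOS}) and genuinely extends $S$, i.e. $\alpha \prec \beta \Rightarrow \alpha \lhd_3 \beta$ and $\alpha \sqsubset \beta \Rightarrow \alpha \lhd_3^\frown \beta$; (4) conclude $\lhd_3 \in ext(S)$ with $a \frown_{\lhd_3} b$.

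The main obstacle I anticipate is step (2): showing that the ``obstructing elements'' — those $c$ that are forced by $\prec$ or $\sqsubset$ to sit on one side of $a$ but the other side of $b$ — simply cannot exist given that \emph{both} $\lhd_1$ and $\lhd_2$ are stratified extensions. The cleanest route is probably a case analysis on how such a hypothetical $c$ relates to $a$ and $b$ under $\prec$ and $\sqsubset$, repeatedly invoking S2 ($\prec \subseteq \sqsubset$), S4 (mixed $\prec$/$\sqsubset$ transitivity giving $\prec$), and the observation that $a \prec c$ together with $c \sqsubset b$ yields $a \prec b$ by S4, whence $b \not\sqsubset a$ (a consequence of S1--S4 noted after Definition~\ref{def:sos}), contradicting $b \lhd_2 a \Rightarrow b \sqsubset a$ via Definition~\ref{def:extsos}. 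Once all such $c$ are excluded, the collapse in step (1) is unobstructed and the verification in step (3) is routine bookkeeping with the block-collapsing construction. An alternative, possibly slicker, formulation avoids explicit step sequences: define $\lhd_3$ directly as the stratified order whose blocks are those of $\lhd_1$ outside the $[a,b]$-interval, with that whole interval fused, and verify the axioms abstractly; I would try the explicit step-sequence version first since Propositions~\ref{corSSOS} and~\ref{OrderOnEC} make stratifiedness free.
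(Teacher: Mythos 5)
First, a remark on the comparison itself: the paper offers no proof of Theorem~\ref{pi3} — it is imported from \cite{J4} and stated with an empty proof box — so your attempt can only be judged on its own merits. Judged that way, the central construction does not work, and the gap is not the one you anticipate in step (2). You propose to take $\Omega_{\lhd_1}=B_1\ldots B_k$ with $a\in B_i$, $b\in B_j$, $i<j$, and fuse $B_i\cup\dots\cup B_j$ into a single step. The resulting $\lhd_3$ does automatically preserve $\sqsubset\;\subseteq\;\lhd_3^\frown$ (collapsing an interval of positions is monotone, so no weak inequality of positions is reversed), but it can destroy $\prec\;\subseteq\;\lhd_3$: any pair $c\prec d$ with both $c$ and $d$ strictly inside the fused interval becomes concurrent in $\lhd_3$. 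Your obstruction analysis only considers single elements $c$ that are $\prec$- or $\sqsubset$-related to $a$ or to $b$; those are indeed excluded, since the existence of $\lhd_1$ and $\lhd_2$ forces $\neg(a\sqsubset b)$, $\neg(b\sqsubset a)$ and hence $\neg(a\prec b)$, $\neg(b\prec a)$, and S1--S4 then rule out such $c$ exactly as you sketch. But the fatal obstruction is a pair $c\prec d$ that has nothing to do with $a$ or $b$, and the existence of $\lhd_2$ says nothing about it. Concretely: $X=\{a,b,c,d\}$, $\prec\;=\;\sqsubset\;=\{(c,d)\}$ is a so-structure; $\lhd_1$ given by $\{a\}\{c\}\{d\}\{b\}$ and $\lhd_2$ given by $\{b\}\{c\}\{d\}\{a\}$ are both in $ext(S)$ with $a\lhd_1 b$ and $b\lhd_2 a$; yet the interval collapse (performed in either $\lhd_1$ or $\lhd_2$) yields the single step $\{a,b,c,d\}$, which is not in $ext(S)$ because $c\prec d$ is no longer respected. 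The conclusion of the theorem still holds there ($\{a,b\}\{c\}\{d\}$ works), but not via your construction, and your fallback of "working from $\lhd_2$ instead" or fusing abstractly inherits the same defect.

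A workable route keeps the facts you correctly extract — $a$ and $b$ are incomparable under both $\prec$ and $\sqsubset$ — but abandons block surgery on a given extension. Form $S'=\bigl(X,\prec,\sqsubset\cup\{(a,b),(b,a)\}\bigr)^\lozenge$. Using Theorem~\ref{t:SO-CL}(3), one checks that the induced causality of $S'$ is still irreflexive: a cycle witnessing reflexivity must contain a $\prec$-edge and at least one of the new weak-causality edges, and splitting the cycle at the new edges produces a path in the original $S$ from $b$ to $a$ (or $a$ to $a$, or $b$ to $b$) containing a $\prec$-edge, which by the closure properties of the so-structure $S$ yields $b\prec a$, $a\prec a$ or $b\prec b$ — all excluded. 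Hence $S'$ is a so-structure extending $S$, every $\lhd\in ext(S')$ lies in $ext(S)$, and in any such $\lhd$ the mutual weak causality $a\sqsubset' b$ and $b\sqsubset' a$ forces $a\frown_\lhd b$. This pushes the combinatorial difficulty into the already-established $\lozenge$-closure machinery, which is where it belongs; the interval-fusion idea cannot be patched merely by excluding elements related to $a$ and $b$.
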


 The classification of concurrent behaviors provided in \cite{J4} says that a concurrent behavior conforms to the paradigm\footnote{A {\em paradigm} is a supposition or statement about the structure of a concurrent behavior (concurrent history) involving a treatment of {\em simultaneity}. See \cite{J0,J4} for more details.}
$\pi_3$ if it has the same property as stated in Theorem \ref{pi3} for $ext(S)$.
In other words, Theorem \ref{pi3} states that the set $ext(S)$ conforms to the paradigm $\pi_3$.

\subsection{Generalized stratified order structure}
The stratified order structures can adequately model concurrent histories only when the paradigm $\pi_3$ is satisfied. For the general case, we need {\em gso-structures} introduced in \cite{GJ} also under the assumption that the system runs are defined as stratified orders.

\begin{definition}[generalized stratified order structure \cite{GJ,J0}]
\begin{sloppypar}
A \emph{generalized stratified order structure} ({\em gso-structure}) is a relational structure $G=(X,\com,\sq)$ such that  $\sqsubset$ is irreflexive, $\com$ is symmetric and irreflexive, and the triple $S_{G}=(X,\prec_{G},\sqsubset)$, where $\prec_{G}\;=\;\com\cap\sqsubset$, is a so-structure, called the \textit{so-structure induced by $G$}.  When $X$ is finite, $G$ is called a finite gso-structure.
 \EOD
\end{sloppypar}
\label{def:gsos}
\end{definition}

The relation $\com$ is called \textit{commutativity} and represents the ``non-simultaneously" relationship, while the relation $\sqsubset$ is called \textit{weak causality} and represents the ``not later than" relationship.\\

For a binary relation $R$ on $X$, we let $\sym{R}\df R\cup R^{-1}$ denote the \emph{symmetric closure} of $R$.

\begin{definition}[stratified extension of gso-structure \cite{GJ,J0}]
A stratified order $\lhd$ on $X$ is a {\em stratified extension} of a gso-structure $G=(X,\com,\sq)$ if
for all $\alpha,\beta\in X$,
\begin{align*}
\alpha \com \beta \implies \alpha\sym{\lhd} \beta \text{\hspace*{1cm} and \hspace*{1cm}}
\alpha\sq \beta \implies \alpha\lhd^\frown \beta
\end{align*}
The set of all stratified extensions of $G$ is denoted as $ext(G)$. \EOD
\label{def:gsosext}
\end{definition}

Every gso-structure can also be uniquely reconstructed from its stratified extensions. The generalization of Szpilrajn's theorem for gso-structures can be stated as following.

\begin{theorem}[\cite{GJ,J0}]
Let $G=(X,\com,\sq)$ be a gso-structure. Then
\begin{center}
${ G=\Bigl(X,\bigcap_{\lhd\;\in\; ext(G)}\sym{\lhd},\bigcap_{\lhd\;\in\; ext(G)}\lhd^\frown\Bigr)}.$\end{center}
\mEND
\label{theo:SzpGStrat}
\end{theorem}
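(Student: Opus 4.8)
The plan is to prove the two set identities $\com=\bigcap_{\lhd\in ext(G)}\sym{\lhd}$ and $\sq=\bigcap_{\lhd\in ext(G)}\lhd^\frown$ separately. In each, the inclusion ``$\subseteq$'' is immediate from \defref{gsosext}: by definition every $\lhd\in ext(G)$ satisfies $\com\subseteq\sym{\lhd}$ and $\sq\subseteq\lhd^\frown$. So the content lies in the reverse inclusions, which unwind to two existence statements: for distinct $a,b\in X$, (i) if $\neg(a\sq b)$ then some $\lhd\in ext(G)$ has $b\lhd a$, and (ii) if $\neg(a\com b)$ then some $\lhd\in ext(G)$ has $a\simeq_\lhd b$ (equivalently, $a$ and $b$ lie in a common step of $\lhd$). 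The diagonal cases $a=b$ are trivial since $\lhd$ and $\lhd^\frown$ are irreflexive, provided $ext(G)\neq\emptyset$ --- which the constructions below also produce.

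The key device is to descend to the so-structure $S_G=(X,\prec_G,\sq)$, which is a so-structure by \defref{gsos}, and to combine \theoref{SzpStrat} and Theorem~\ref{pi3} (both for $S_G$) with a \emph{refinement} step that turns a stratified extension of $S_G$ into one of $G$. Call distinct $x,y$ \emph{synchronous} if $x\sq y$ and $y\sq x$; using axiom S3 one checks that ``synchronous or equal'' is an equivalence relation on $X$, and since $\prec_G$ is a partial order no synchronous pair can belong to $\com$ (it would lie in $\prec_G$ in both directions). Given $\lhd_0\in ext(S_G)$, refine each step $B$ of $\lhd_0$ by partitioning $B$ into the nonempty intersections $B\cap C$ with the synchronous classes $C$ and ordering these pieces by any linear extension of the strict order they inherit from $\sq$ (strictness follows from the so-structure axioms). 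The resulting stratified order $\lhd\supseteq\lhd_0$ still satisfies $\prec_G\subseteq\lhd$ and $\sq\subseteq\lhd^\frown$ --- forward $\sq$-edges across steps are kept, and every $\sq$-edge inside a step already lies in $\lhd^\frown$ --- while each $\com$-edge, which cannot sit inside one synchronous class, is split across comparable pieces, so $\lhd\in ext(G)$. This settles (i): by \theoref{SzpStrat} for $S_G$, $\neg(a\sq b)$ yields $\lhd_0\in ext(S_G)$ with $b\lhd_0 a$, and its refinement $\lhd\in ext(G)$ still has $b\lhd a$, hence $\neg(a\lhd^\frown b)$.

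For (ii), note that $\neg(a\com b)$ forces $\neg(a\prec_G b)$ and $\neg(b\prec_G a)$, since $\prec_G=\com\cap\sq$. By \theoref{SzpStrat} for $S_G$ there are then extensions of $S_G$ not ordering $a$ before $b$, and others not ordering $b$ before $a$; hence either some $\lhd_0\in ext(S_G)$ already has $a\simeq_{\lhd_0}b$, or Theorem~\ref{pi3} supplies one with $a\frown_{\lhd_0}b$ --- so in any case $a$ and $b$ share a step $B$ of some $\lhd_0\in ext(S_G)$. It remains to refine this $\lhd_0$ to a member of $ext(G)$ that keeps $a$ and $b$ in one step: one merges, inside $B$, all synchronous classes lying on a $\sq$-chain between the class of $a$ and that of $b$ into a single piece, and orders the remaining classes around it as before. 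The point that demands real work is that this merged piece still carries no $\com$-edge, so that the refined order indeed lands in $ext(G)$; this is where the full gso-structure axiomatisation --- the closure properties of $\com$ relative to synchronous and $\sq$-related elements, which go beyond merely asserting that $S_G$ is a so-structure --- must be invoked. Equivalently, one can package (ii) as follows: adjoin $(a,b)$ and $(b,a)$ to $\sq$, close under S3 to obtain $\sq^{ab}$, put $G^{ab}=(X,\com,\sq^{ab})$, verify that $G^{ab}$ is again a gso-structure with $ext(G^{ab})\subseteq ext(G)$ and with $a\simeq_\lhd b$ for every $\lhd\in ext(G^{ab})$, and apply the refinement construction to $G^{ab}$ to conclude $ext(G^{ab})\neq\emptyset$.

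Putting (i), (ii) and the two trivial ``$\subseteq$'' inclusions together yields both identities, hence the theorem. Nearly everything is a mechanical unwinding of the definitions together with the so-structure results already in hand (\theoref{SzpStrat} and Theorem~\ref{pi3}); the single delicate point, which I expect to be the main obstacle, is verifying that the synchronising modification used in step (ii) --- the refinement of $\lhd_0$ that keeps $a$ and $b$ together, equivalently the well-formedness of $G^{ab}$ --- actually produces a legitimate gso-structure and a legitimate stratified extension.
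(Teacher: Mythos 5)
First, a point of reference: the paper gives \emph{no} proof of Theorem~\ref{theo:SzpGStrat} --- it is imported from \cite{GJ,J0} as a black box --- so there is no in-paper argument to compare your proposal against; it has to stand on its own merits.

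The half of your argument devoted to $\sq$ is sound. ``Synchronous or equal'' is indeed an equivalence by S3; a synchronous pair cannot lie in $\com$ because $\prec_G=\com\cap\sq$ is a strict partial order and hence asymmetric; and the refinement of $\lhd_0\in ext(S_G)$ obtained by splitting each step along synchronous classes and linearly extending the (acyclic) order the pieces inherit from $\sq$ does land in $ext(G)$ while preserving every strict comparability of $\lhd_0$. Combined with \theoref{SzpStrat} applied to $S_G$, this yields $\sq=\bigcap_{\lhd\in ext(G)}\lhd^\frown$ and $ext(G)\neq\emptyset$.

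The gap is exactly where you place it, but it is not merely ``delicate'' --- it cannot be closed from \defref{gsos} as stated. Take $X=\{a,b,x,y\}$, $\sq\;=\{(a,x),(x,b),(a,y),(y,b),(a,b)\}$ and $\com\;=\{(x,y),(y,x)\}$. Then $\prec_G=\com\cap\sq\;=\emptyset$, the triple $(X,\emptyset,\sq)$ satisfies S1--S4, so $G=(X,\com,\sq)$ is a gso-structure according to \defref{gsos}, and $ext(G)$ is nonempty (it contains the total order $a\lhd x\lhd y\lhd b$). Yet no $\lhd\in ext(G)$ can place $a$ and $b$ in a common step: $a\sq x\sq b$ and $a\sq y\sq b$ would force $x$ and $y$ into that same step, contradicting $x\com y$. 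Hence $(a,b)$ lies in $\bigcap_{\lhd\in ext(G)}\sym{\lhd}$ but not in $\com$; your existence statement (ii) is false for this $G$, and your fallback structure $G^{ab}$ is not a gso-structure (closing $\sq\cup\{(a,b),(b,a)\}$ under S3 makes all four elements pairwise synchronous, so $\prec_{G^{ab}}$ would contain both $(x,y)$ and $(y,x)$). The conclusion is that the $\com$-freeness of your merged piece is not a consequence of the axioms this paper states: any correct proof of the theorem must either rely on a stronger axiomatisation of gso-structures than \defref{gsos} --- one already closed under the kind of interpolation that the paper builds by hand into $\prec_s$ via the third disjunct of \eref{ss2gsos.3} --- or restrict attention to gso-structures arising from sets of stratified orders. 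As written, your proposal cannot be completed, and the obstacle is a defect of the stated definition rather than a missing trick.
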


The gso-structures {\em do not} have an equivalent of Theorem \ref{pi3}. As a counter-example consider $G=(\{a,b,c\},\com_4,\sq_4)$ where $\com_4$ and $\sq_4$ are those from Figure \ref{fig:motiv1}. Hence $ext(G)=obs(P_4)=\{o_1,o_2\}$, where $o_1=\{a\}\{b\}\{c\}$ and $o_2=\{a\}\{c\}\{b\}$. For this gso-structure we have $b\stackrel{o_1}{\rightarrow}c$ and $c\stackrel{o_2}{\rightarrow}b$, but neither $o_1$ nor $o_2$ contains the step $\{b,c\}$, so Theorem \ref{pi3} does not hold. The lack of  an equivalent of Theorem \ref{pi3} makes proving properties about gso-structures more difficult, but they can model the most general concurrent behaviors provided that observations are modelled by stratified orders \cite{J0}.

\section{Comtraces \label{sec:comtraces}}

The standard definition of a free monoid $(E^*,\ast,\lambda)$ assumes that the elements of $E$ have no
internal structure (or their internal structure does not affect any monoidal properties), and they
are often called `letters', `symbols', `names', etc.
When we assume the elements of $E$ have some internal structure, for instance that they are sets,
this internal structure may be used when defining the set of equations $\It{EQ}$. This idea is exploited in the concept of a {\em comtrace}.\\

{\em Comtraces} ({\em com}bined {\em traces}), introduced in \cite{JK95} as an extension of
traces to distinguish between ``earlier than" and ``not later than" phenomena,
are equational monoids of step sequence monoids.
The equations $\It{EQ}$ are in this case defined
implicitly via two relations: {\em simultaneity} and {\em serializability}.

\begin{definition}[comtrace alphabet \cite{JK95}] Let $E$ be a finite set (of events) and let $ser \subseteq sim \subset E\times E$ be two relations called \emph{serializability} and \emph{simultaneity} respectively and the relation $sim$ is irreflexive and symmetric. Then the triple $(E,sim,ser)$ is called a \emph{comtrace alphabet}. \EOD
\label{def:comalpha}
\end{definition}

Intuitively, if $(a,b)\in sim$ then $a$ and $b$ can occur simultaneously
(or be a part of a {\em synchronous} occurrence in the sense of \cite{JL08}),
while $(a,b)\in ser$ means that $a$ and $b$ may occur simultaneously
and also $a$ may occur before $b$ (i.e., both executions are equivalent). We define $\E$,
the set of all (potential) {\em steps},
 as the set of all cliques of
the graph $(E,sim)$, i.e.,
\[\E \df \bset{ A \mid A\neq\emptyset \;\wedge\; \forall a,b\in A.\; (a=b \vee (a,b)\in sim)}.\]

\begin{definition}[Comtrace congruence \cite{JK95}]
Let $\theta=(E,sim,ser)$ be a comtrace alphabet and let $\equiv_{ser}$, called \emph{comtrace congruence}, be the $\It{EQ}$-congruence defined by the set of equations
\[\It{EQ}\df\{ A=BC \mid A=B\cup C\in \E \;\wedge\; B\times C \subseteq ser \}.\]
Then the equational monoid $(\E^*/\!\!\equiv_{ser},\circledast,[\lambda])$ is called a monoid of {\em comtraces} over $\theta$. \EOD
\label{def:commonoid}
\end{definition}

Since $ser$ is irreflexive, for each $(A=BC)\in \It{EQ}$ we have $B\cap C=\emptyset$. By \propref{p1}, the comtrace congruence relation can also be defined explicitly in non-equational form as follows.
\begin{proposition}
Let $\theta=(E,sim,ser)$ be a comtrace alphabet and let $\E^*$ be the set of all step sequences defined on $\theta$. Let $\eqa_{ser}\;\subseteq\; \E^*\times\E^*$ be the relation comprising all pairs $(t,u)$ of step sequences such that $t=wAz$ and $u=wBCz$, where $w,z\in\E^*$ and $A$, $B$, $C$ are steps satisfying $B\cup C \;= \;A$ and $B\times C\;\subseteq\; ser$. Then  $\eqb_{ser}= \left(\eqa_{ser}\cup\eqa^{-1}_{ser}\right)^*$.\END

\label{prop:comequi}
\end{proposition}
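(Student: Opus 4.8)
The plan is to apply Proposition~\ref{prop:p1} directly, since the comtrace congruence $\equiv_{ser}$ is by definition (Definition~\ref{def:commonoid}) the $\It{EQ}$-congruence on the step sequence monoid $(\E^*,\circledast,[\lambda])$ for the particular set of equations $\It{EQ}=\{A=BC \mid A=B\cup C\in\E \wedge B\times C\subseteq ser\}$. By Proposition~\ref{prop:p1}, this congruence equals $(\approx\cup\approx^{-1})^*$, where $x\approx y$ iff $x=x_1\ast u\ast x_2$ and $y=x_1\ast w\ast x_2$ for some $(u=w)\in\It{EQ}$. So the whole task reduces to checking that this $\approx$ coincides with the relation $\eqa_{ser}$ as described in the statement.

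First I would unfold what $\approx$ means for our specific $\It{EQ}$. An equation $(u=w)\in\It{EQ}$ is of the form $(A=BC)$ with $A,B,C$ steps, $B\cup C=A\in\E$, and $B\times C\subseteq ser$. Here $A$ is a one-element step sequence (a step), while $BC$ is the two-step step sequence $B\ast C$. Thus $x\approx y$ unfolds to: there exist $x_1,x_2\in\E^*$ and such $A,B,C$ with either $x=x_1\ast A\ast x_2$, $y=x_1\ast B\ast C\ast x_2$, or symmetrically (the $\approx^{-1}$ direction) $x=x_1\ast B\ast C\ast x_2$, $y=x_1\ast A\ast x_2$. Writing $w:=x_1$, $z:=x_2$, this is exactly the condition defining $\eqa_{ser}$ together with $\eqa_{ser}^{-1}$: a pair $(t,u)$ with $t=wAz$, $u=wBCz$, $B\cup C=A$, $B\times C\subseteq ser$. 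So $\eqa_{ser}=\approx$ and $\eqa_{ser}^{-1}=\approx^{-1}$, hence $\eqa_{ser}\cup\eqa_{ser}^{-1}=\approx\cup\approx^{-1}$, and taking reflexive-transitive closure of both sides gives $\eqb_{ser}=(\eqa_{ser}\cup\eqa_{ser}^{-1})^*=(\approx\cup\approx^{-1})^*=\equiv_{ser}$, which is the claim.

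There is essentially no hard step here; the only thing requiring a word of care is the bookkeeping identification between ``one equation of $\It{EQ}$ used in one position of a step sequence'' and ``the two halves $w$ and $z$ of the step sequence surrounding the rewritten block.'' One should note explicitly that because $\ser$ is irreflexive, $B\cap C=\emptyset$, so the steps $B$ and $C$ appearing in $\eqa_{ser}$ are genuinely nonempty and disjoint, matching the requirement that $B,C\in\E$; this was already observed right before the proposition and need not be re-derived. I would therefore write the proof as a one-paragraph invocation of Proposition~\ref{prop:p1}, spelling out the unfolding of $\approx$ for the comtrace $\It{EQ}$ and concluding $\eqa_{ser}$ is precisely the relation $\approx$ (up to the trivial renaming of the context variables), so that $\eqb_{ser}=(\eqa_{ser}\cup\eqa_{ser}^{-1})^*$ follows immediately.
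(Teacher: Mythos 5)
Your proposal is correct and is exactly the argument the paper intends: the paper gives no separate proof of Proposition~\ref{prop:comequi}, stating it as an immediate consequence of Proposition~\ref{prop:p1} applied to the comtrace equation set, which is precisely the unfolding you carry out. The identification of $\eqa_{ser}$ with the relation $\approx$ of Proposition~\ref{prop:p1} is the whole content, and your bookkeeping of the context $w,z$ and the remark on $B\cap C=\emptyset$ are accurate.
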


We will omit the subscript $ser$ from comtrace congruence and $\eqa_{ser}$, and only write $\equiv$ and $\eqa$ if it causes no ambiguity.
\begin{example}
Let $E=\set{a,b,c}$ where $a$, $b$ and $c$ are three atomic operations, where
\begin{align*}
a:\ \ \  y&:= x+y & b:\ \ \  x&:= y+2 & c:\ \ \  y&:= y+1
\end{align*}
Assume simultaneous reading is allowed, but simultaneous writing is not allowed. Then the events $b$ and $c$ can be performed simultaneously, and the execution of the step $\set{b,c}$ gives the same outcome as executing $b$ followed by $c$. The events $a$ and $b$ can also be performed simultaneously, but the outcome of executing the step $\set{a,b}$ is not the same as executing $a$ followed by $b$, or $b$ followed by $a$. Note that although executing the steps $\set{a,b}$ and $\set{b,c}$ is allowed, we cannot execute the step $\set{a,c}$ since that would require writing on the same variable $y$. 

Let $E=\set{a,b,c}$ be the set of events. Then we can define the comtrace alphabet $\theta=(E,sim,ser)$, where $sim = \bset{(a,b),(b,a),(b,c),(c,b)}$ and $ser=\set{(b,c)}$. Thus the set of all possible steps is \[\E_{\theta}=\bset{\{a\},\{b\},\{c\},\set{a,b},\set{b,c}}.\]
We observe that the set $\textbf{t}=[\{a\}\set{a,b}\{b,c\}] =\bset{ \{a\}\set{a,b}\{b,c\},\{a\}\set{a,b}\{b\}\{c\}}$ is a comtrace.
But the step sequence $\{a\}\set{a,b}\{c\}\{b\}$ is not an element of $\textbf{t}$ because $(c,b)\not\in ser$. \EOD
\label{ex:comtrace1}
\end{example}

Even though traces are quotient monoids over sequences and comtraces are
quotient monoids over step sequences (and the fact that steps are sets is used in the
definition of quotient congruence), traces can be regarded as a special case
of comtraces. In principle, each trace commutativity equation $ab=ba$ corresponds to two
comtrace equations $\{a,b\}=\{a\}\{b\}$ and $\{a,b\}=\{b\}\{a\}$. This relationship
can formally be formulated as follows.

Let $(E,ind)$ and $(E,sim,ser)$ be trace and comtrace alphabets respectively.
For each sequence $x=a_1\ldots a_n \in E^*$, we define
$x^{\{\}} = \{a_1\}\ldots \{a_n\}$ to be its corresponding step sequence, which in this case consists of only singleton steps.

\begin{lemma}
\begin{enumerate}
\item Assume $ser=sim$. Then for each comtrace $\mathbf{t}\in \E^*/{\!\equiv_{ser}}$ there exists
a step sequence $x= \set{a_1}\ldots\set{a_k} \in \E^*$ such that
$\mathbf{t}=[x]_{\equiv_{ser}}$.
\item If $ser=sim=ind$, then for each $x,y \in E^*$, we have
$x\equiv_{ind} y \iff x^{\{\}}\equiv_{ser} y^{\{\}}$.
\end{enumerate}
\label{le:com2maz}
\end{lemma}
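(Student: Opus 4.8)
The plan is to treat the two parts separately, both times unwinding the definitions of the relevant congruences and using Proposition~\ref{prop:comequi} (and its trace analogue) to reason at the level of the explicit one-step rewriting relations $\eqa$ and $\approx$ rather than the abstract least congruences.

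For part~(1), assume $ser=sim$. Given a comtrace $\mathbf{t}\in\E^*/{\equiv_{ser}}$, pick any representative $t=A_1\ldots A_k\in\E^*$. The key observation is that for any step $A=\{b_1,\ldots,b_m\}\in\E$, all pairs $(b_i,b_j)$ with $i\neq j$ lie in $sim=ser$, so in particular, splitting off one element at a time, $A=\{b_1\}\{b_2,\ldots,b_m\}\eqa \{b_1\}\{b_2\}\{b_3,\ldots,b_m\}\eqa\cdots$, and hence by induction on $|A|$ we get $A\eqa^\ast \{b_1\}\ldots\{b_m\}$; applying this inside each step $A_i$ of $t$ shows $t\eqb_{ser} t'$ for some step sequence $t'=\{a_1\}\ldots\{a_n\}$ of singletons. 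Then $\mathbf{t}=[t]_{\equiv_{ser}}=[t']_{\equiv_{ser}}$ as required. I expect this part to be essentially routine; the only thing to be careful about is that every intermediate set $\{b_j,\ldots,b_m\}$ is itself a clique of $(E,sim)$, hence a legitimate step, which is immediate since it is a subset of $A\in\E$.

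For part~(2), assume $ser=sim=ind$. I would prove the two implications of the biconditional separately, and in each case it suffices (by Proposition~\ref{prop:p1} and Proposition~\ref{prop:comequi}, since both congruences are transitive–symmetric closures of their one-step relations) to handle a \emph{single} rewriting step and then iterate.

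$(\Rightarrow)$ Suppose $x\approx y$ via a trace step, so $x=x_1\,ab\,x_2$ and $y=x_1\,ba\,x_2$ with $(a,b)\in ind$. Then under $(\cdot)^{\{\}}$ we get $x^{\{\}}=x_1^{\{\}}\{a\}\{b\}\,x_2^{\{\}}$ and $y^{\{\}}=x_1^{\{\}}\{b\}\{a\}\,x_2^{\{\}}$. Since $(a,b)\in ind=sim$, the set $\{a,b\}$ is a step, and since $(a,b)\in ind=ser$ and also $(b,a)\in ser$ (as $ind$ is symmetric), we have $\{a\}\{b\}\eqa \{a,b\}$ and $\{a,b\}\eqa\{b\}\{a\}$, hence $\{a\}\{b\}\eqb_{ser}\{b\}\{a\}$, and therefore $x^{\{\}}\eqb_{ser}y^{\{\}}$. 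Iterating over the chain $x\approx^m y$ gives $x^{\{\}}\equiv_{ser}y^{\{\}}$.

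$(\Leftarrow)$ This is the direction I expect to be the main obstacle, because a single $\eqa_{ser}$ step on $x^{\{\}}$ need not preserve the property of being a sequence of singletons — an intermediate step sequence may genuinely contain a doubleton $\{a,b\}$. The plan is: by part~(1) (with $ser=sim$) the congruence class $[x^{\{\}}]_{\equiv_{ser}}$ does contain singleton-only representatives, and $y^{\{\}}$ is one of them; I then argue that passing between any two singleton-only step sequences in the same comtrace class can be realized by a sequence of rewrites each of which, when we "merge–then–split" a doubleton, amounts exactly to swapping two adjacent singletons $\{a\}\{b\}\leftrightarrow\{b\}\{a\}$ with $(a,b)\in ser=sim=ind$. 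Concretely, I would show that if $u,v$ are singleton step sequences with $u\eqa_{ser}w\eqa_{ser}v$ through an intermediate $w$ containing exactly one doubleton (a "merge" followed by a "split"), then $u$ and $v$ differ by one adjacent transposition of an $ind$-pair, so $l(u)\equiv_{ind}l(v)$ where $l$ strips the braces; a general chain in $\E^\ast$ between $u$ and $v$ can be normalized so that every "merge" is immediately followed by the corresponding "split", reducing to this case. Peeling off the braces via $l$ then yields $x\equiv_{ind}y$. The delicate point to get right is the normalization argument showing we may assume merges and splits are paired up — here I would invoke the diamond/confluence-style reasoning already implicit in trace theory, or alternatively reduce directly using the characterization of $\equiv_{ind}$ by the "projection/Levi" lemmas for traces; either way this is where the real work lies, and I would flag it as the one step deserving a careful argument rather than a one-liner.
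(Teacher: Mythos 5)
Your part (1) and the $(\Rightarrow)$ half of part (2) follow essentially the same route as the paper: the paper splits each step $A_i=\{a^i_1,\ldots,a^i_{n_i}\}$ into $[\{a^i_1\}]\circledast\cdots\circledast[\{a^i_{n_i}\}]$ using $ser=sim$, and for $(\Rightarrow)$ it simulates one trace swap $wabz\approx_{ind}wbaz$ by the merge--split pair $w^{\{\}}\{a\}\{b\}z^{\{\}}\approx_{ser}w^{\{\}}\{a,b\}z^{\{\}}\approx_{ser}^{-1}w^{\{\}}\{b\}\{a\}z^{\{\}}$, exactly as you do. Both of these are fine.

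Where you diverge is the $(\Leftarrow)$ direction, and your caution is warranted: the paper's entire proof of part (2) is the single-step equivalence $x\approx_{ind}y\iff x^{\{\}}\,(\approx_{ser}\circ\approx_{ser}^{-1})\,y^{\{\}}$ prefaced by ``it suffices to show'', which silently assumes that any $\equiv_{ser}$-chain between two singleton step sequences can be rearranged so that every merge is immediately undone by the matching split. That is precisely the normalization you flag as the real work, and the paper does not supply it. But your proposal does not supply it either --- you name two candidate strategies (chain normalization, or a projection argument) without carrying either out, so as written your proof is incomplete at exactly the point where the paper is terse. The projection route closes the gap in a few lines and is the one I would recommend: if $x^{\{\}}\equiv_{ser}y^{\{\}}$ then event counts agree by Proposition~\ref{prop:p3}(2), and by Proposition~\ref{prop:p3}(6), for any $a\neq b$ with $(a,b)\notin ind$ we get $\pi_{\{a,b\}}(x^{\{\}})\equiv_{ser}\pi_{\{a,b\}}(y^{\{\}})$; since $ser=sim=ind$, neither $(a,b)$ nor $(b,a)$ lies in $ser$ and $\{a,b\}\notin\E$, so no $\approx_{ser}$ step applies to a sequence of $\{a\}$'s and $\{b\}$'s and the class of $\pi_{\{a,b\}}(x^{\{\}})$ is a singleton, forcing $\pi_{\{a,b\}}(x)=\pi_{\{a,b\}}(y)$. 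Equal letter counts plus equal projections onto all dependent pairs is the classical characterization of Mazurkiewicz trace equivalence, so $x\equiv_{ind}y$. (This uses Proposition~\ref{prop:p3}, which appears later in the paper but is proved independently of this lemma, so there is no circularity.) Alternatively, and avoiding any normalization of chains, the map sending a step sequence $A_1\ldots A_k$ to the trace $[\mathrm{lin}(A_1)]\circledast\cdots\circledast[\mathrm{lin}(A_k)]$, where $\mathrm{lin}(A)$ is any linearization of the step $A$ (well defined by Corollary~\ref{c:fc} since every step is an $ind$-clique), is invariant under every $\approx_{ser}$ rewrite when $ser=ind$; evaluating it at $x^{\{\}}$ and $y^{\{\}}$ gives $[x]_{\equiv_{ind}}=[y]_{\equiv_{ind}}$ directly.
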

\begin{proof} (1) follows from the fact that if $ser=sim$, then for each $A=\{a_1,...,a_k\}\in \E$, we have $A \equiv_{ser} 
\set{a_1}\ldots\set{a_k}$. (2) is a simple consequence of the definition of $x^{\{\}}$. \qed 
\end{proof}

Let $\mathbf{t}$ be a trace over $(E,ind)$ and let $\mathbf{v}$ be a comtrace over $(E,sim,ser)$. We say that $\mathbf{t}$ and $\mathbf{v}$ are {\em tantamount} if
$sim=ser=ind$ and there is $x\in E^*$ such that $\mathbf{t}=[x]_{\equiv_{ind}}$ and $\mathbf{v}=[x^{\{\}}]_{\equiv_{ser}}$. If a trace $\mathbf{t}$ and a comtrace $\mathbf{v}$ are equivalent we will write $\mathbf{t} \TCT \mathbf{v}$. Note that Lemma \ref{le:com2maz} guarantees that this definition is valid.

\begin{proposition}
Let ${\bf t},{\bf r}$ be traces and ${\bf v},{\bf w}$ be comtraces. Then
\begin{enumerate}
\item
${\bf t} \TCT {\bf v} \;\wedge\;  {\bf t} \TCT {\bf w} \implies {\bf v}={\bf w}$.
\item
${\bf t} \TCT {\bf v} \;\wedge\;  {\bf r} \TCT {\bf v} \implies {\bf t}={\bf r}$. 
\end{enumerate}
\end{proposition}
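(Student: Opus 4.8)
The plan is to prove both statements directly from the definition of the relation $\TCT$ and from \lemref{le:com2maz}(2). Recall that $\mathbf{t}\TCT\mathbf{v}$ means, by definition, that $sim=ser=ind$ (so the trace and comtrace alphabets are ``the same'') and there exists a sequence $x\in E^*$ with $\mathbf{t}=[x]_{\equiv_{ind}}$ and $\mathbf{v}=[x^{\{\}}]_{\equiv_{ser}}$. The key observation is that the map $x\mapsto x^{\{\}}$ is a bijection between $E^*$ and the set of step sequences over singletons, and by \lemref{le:com2maz}(2) it induces, when $sim=ser=ind$, an exact correspondence between $\equiv_{ind}$-classes and those $\equiv_{ser}$-classes that contain a sequence of singletons: $x\equiv_{ind}y \iff x^{\{\}}\equiv_{ser}y^{\{\}}$.

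For part (1), suppose $\mathbf{t}\TCT\mathbf{v}$ and $\mathbf{t}\TCT\mathbf{w}$. Then there are sequences $x,y\in E^*$ with $\mathbf{t}=[x]_{\equiv_{ind}}=[y]_{\equiv_{ind}}$, $\mathbf{v}=[x^{\{\}}]_{\equiv_{ser}}$ and $\mathbf{w}=[y^{\{\}}]_{\equiv_{ser}}$. From $[x]_{\equiv_{ind}}=[y]_{\equiv_{ind}}$ we get $x\equiv_{ind}y$, hence by \lemref{le:com2maz}(2) (applicable since $sim=ser=ind$) we obtain $x^{\{\}}\equiv_{ser}y^{\{\}}$, so $[x^{\{\}}]_{\equiv_{ser}}=[y^{\{\}}]_{\equiv_{ser}}$, i.e. $\mathbf{v}=\mathbf{w}$.

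For part (2), suppose $\mathbf{t}\TCT\mathbf{v}$ and $\mathbf{r}\TCT\mathbf{v}$. Then there are $x,y\in E^*$ with $\mathbf{t}=[x]_{\equiv_{ind}}$, $\mathbf{r}=[y]_{\equiv_{ind}}$, and $\mathbf{v}=[x^{\{\}}]_{\equiv_{ser}}=[y^{\{\}}]_{\equiv_{ser}}$. From the latter equality $x^{\{\}}\equiv_{ser}y^{\{\}}$, and again by \lemref{le:com2maz}(2) this gives $x\equiv_{ind}y$, hence $\mathbf{t}=[x]_{\equiv_{ind}}=[y]_{\equiv_{ind}}=\mathbf{r}$.

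I do not expect any real obstacle here: the entire content is packaged into \lemref{le:com2maz}(2), and both items are a one-line application of the ``iff'' it provides (in one direction for part (1), in the other for part (2)), together with the elementary fact that $[a]_R=[b]_R \iff a\,R\,b$ for an equivalence relation $R$. The only point requiring a moment's care is to note that the hypothesis $\mathbf{t}\TCT\mathbf{v}$ already carries the side condition $sim=ser=ind$, so \lemref{le:com2maz}(2) genuinely applies; and that the representative sequences $x,y$ witnessing the two instances of $\TCT$ need not be literally equal, which is exactly why we must pass through the congruences rather than argue by uniqueness of representatives.
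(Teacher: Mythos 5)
Your proof is correct, and it follows exactly the route the paper intends: the paper states this proposition without proof, treating it as an immediate consequence of Lemma~\ref{le:com2maz}(2), which is precisely the ``iff'' you invoke (one direction for part (1), the other for part (2)). Your remark that the witnessing sequences $x,y$ need not coincide, so one must pass through the congruences, is the right point of care and is handled properly.
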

\begin{proof}
\begin{enumerate}
\item 
${\bf t} \TCT {\bf v}$ means that there is $x\in E^*$ such that
$\mathbf{t}= [x]_{\equiv_{ind}}$ and $\mathbf{v}=[x^{\{\}}]_{\equiv_{ser}}$, and 
${\bf t} \TCT {\bf w}$ means that there is $y\in E^*$ such that
$\mathbf{t}= [y]_{\equiv_{ind}}$ and $\mathbf{w}=[y^{\{\}}]_{\equiv_{ser}}$. Since $\mathbf{t}= [x]_{\equiv_{ind}} = [y]_{\equiv_{ind}}$ then $x\equiv_{ind} y$ and by Lemma \ref{le:com2maz}(2), 
$x^{\{\}}\equiv_{ser} y^{\{\}}$, i.e. ${\bf v} = {\bf w}$. 

\item Similarly as (1). \END
\end{enumerate}
\end{proof}

Equivalent traces and comtraces generate identical partial orders. However, we will postpone the discussion of this issue to Section~ \ref{sec:com2strat}. Hence {\em traces can be regarded as a special case of comtraces}.\\

Note that comtrace  might be a useful notion to formalize the concept of
{\em synchrony} (cf. \cite{JL08}).
In principle,  events $a_1,\ldots ,a_k$ are {\em synchronous} if they can be
executed in one step $\{a_1,\ldots ,a_k\}$ but this execution cannot be modelled by
any sequence of proper subsets of $\{a_1,\ldots ,a_k\}$.
Note that in general `synchrony' is not necessarily `simultaneity' as it does not
include the concept of time \cite{JL}. It appears, however, that the mathematics
 to deal with synchrony are   close to that to deal with simultaneity.

\begin{definition}[independency and synchrony]
Let $(E,sim,ser)$ be a given comtrace alphabet. We define the relations $ind$, $syn$ and the set
$\E_{syn}$ as follows:
\begin{itemize}
\item $ind \subseteq E\times E$, called {\em independency}, and defined as $ind = ser \cap ser^{-1}$,
\item $syn \subseteq E\times E$, called {\em synchrony}, and defined as:\\
\mbox{\hspace{3.5cm}}$(a,b)\in syn \iffdf (a,b)\in sim \setminus \sym{ser},$
\item $\E_{syn} \subseteq \E$, called {\em synchronous steps}, and defined as:\\
\mbox{\hspace{3cm}}$A \in \E_{syn} \iffdf A\neq\emptyset \wedge (\forall a,b \in A. \;\;(a,b)\in syn).$ \EOD
\end{itemize}
\label{def:synsteps}
\end{definition}

If $(a,b)\in ind$ then $a$ and $b$ are {\em independent}, i.e., executing them either
simultaneously, or $a$ followed by $b$, or $b$ followed by $a$, will yield exactly the same result.
If $(a,b)\in syn$ then $a$ and $b$ are {\em synchronous}, which means they might be
executed in one step, either $\{a,b\}$ or as a part of bigger step, but such an execution of $\{a,b\}$
is not equivalent to either $a$ followed by $b$, or $b$ followed by $a$. In principle,
the relation $syn$ is a counterpart of `synchrony' (cf. \cite{JL08}).
If $A \in \E_{syn}$, then the set of events $A$ can be executed as one step, but it
{\em cannot} be simulated by any sequence of its subsets.

\begin{example}
Assume we have $E=\set{a,b,c,d,e}$,
$sim=\set{(a,b),(b,a),(a,c),(c,a),(a,d),(d,a)}$, and
$ser= \set{(a,b),(b,a),(a,c) }$. Hence, $\E =\set{\{a,b\},\{a,c\},\{a,d\},\{a\},\{b\},\{c\},\{e\}}$, and
\begin{align*}
ind &= \set{(a,b),(b,a)} & syn &= \{(a,d),(d,a)\} & \E_{syn} &=\{\{a,d\}\}
\end{align*}

Since $\{a,d\}\in \E_{syn}$, the step $\set{a,d}$ {\em cannot} be split into smaller steps. For example the comtraces ${\bf x}_1=[ \set{a,b}\{c\}\{a\} ]$,
${\bf x}_2=[ \{e\}\{a,d\}\set{a,c} ]$, and
${\bf x}_3=[ \set{a,b}\{c\}\{a\}\{e\}\{a,d\}\set{a,c} ]$
 are respectively the following sets of step sequences:
\begin{align*}
{\bf x}_1&=\bset{ \set{a,b}\{c\}\{a\}, \set{a}\set{b}\{c\}\{a\},\set{b}\set{a}\set{c}\set{a}, \set{b}\set{a,c}\set{a}}\\
{\bf x}_2&=\bset{  \{e\}\{a,d\}\set{a,c} ,  \{e\}\{a,d\}\set{a}\set{c} }\\
{\bf x}_3&= \left\{
\begin{array}{c}
\set{a,b}\{c\}\{a\}\{e\}\{a,d\}\set{a,c} , \{a\}\{b\}\{c\}\{a\}\{e\}\{a,d\}\set{a,c},\\
\{b\}\{a\}\{c\}\{a\}\{e\}\{a,d\}\set{a,c} , \{b\}\{a,c\}\{a\}\{e\}\{a,d\}\set{a,c},\\
\set{a,b}\{c\}\{a\}\{e\}\{a,d\}\set{a}\set{c},\{a\}\{b\}\{c\}\{a\}\{e\}\{a,d\}\set{a}\set{c} ,\\
\{b\}\{a\}\{c\}\{a\}\{e\}\{a,d\}\set{a}\set{c} , \{b\}\{a,c\}\{a\}\{e\}\{a,d\}\set{a}\set{c} 
\end{array}
\right\}
\end{align*}
We also have ${\bf x}_3 = {\bf x}_1\circledast {\bf x}_2$. Note that since $(c,a)\notin ser$, $\set{a,c}\; \equiv_{ser} \;\set{a}\set{c}\; \not\equiv_{ser} \;\set{c}\set{a}$.  \EOD
 \label{e4}
\end{example}

We can easily extend the concepts of comtraces to the level of languages, with potential applications similar to traces.  For any step sequence language $L$, we define a comtrace language $[L]_{\Theta}$
(or just $[L]$) to be the set $\set{[u] \mid u \in L}$.  The languages of comtraces  provide a bridge between operational and structural semantics.
In other words, if a step sequence language $L$ describes an operational semantics of a given concurrent system, we
only need to derive the comtrace alphabet $(E,sim,ser)$ from the system, and the comtrace language $[L]$ defines the structural semantics of the system.

\begin{example} Consider the following simple concurrent system $\mathsf{Priority}$, which comprises two sequential
subsystems such that
\begin{itemize}
\item the first subsystem can cyclically engage in event $a$ followed by event $b$,
\item the second subsystem can cyclically engage in event $b$ or in event $c$,
\item the two systems synchronize by means of handshake communication,
\item there is a priority constraint stating that if it is possible to execute event $b$, then
$c$ must not be executed.
\end{itemize}

This example has often been analyzed in the literature
(cf. \cite{JK1}), usually under the interpretation
that $a=\textrm{`Error Message'}$, $b=\textrm{`Stop And Restart'}$, and $c=\textrm{`Some Action'}$.
It can be formally specified in various notations including
\emph{Priority} and \emph{Inhibitor Nets} (cf. \cite{JK0,JK}). Its operational semantics
(easily found in any model) can be defined by the following step sequence language
\[L_{\mathsf{Priority}}\df\emph{Pref}\bigl((\set{c}^*\cup\set{a}\set{b}\cup\set{a,c}\set{b})^*\bigr),\]
where $\mathit{Pref}(L)\df\bigcup_{w\in L}\set{u\in L\mid \exists v.\ uv=w}$ denotes the \emph{prefix closure} of  $L$.

The rules for deriving the comtrace alphabet $(E,sim,ser)$ depend on the model, and for $\mathsf{Priority}$, the set of possible steps is $\E = \bset{\set{a},\set{b},\set{c},\set{a,c}}$, and $ser=\set{(c,a)}$ and $sim=\set{(a,c),(c,a)}$.
Then, $[L_{\mathsf{Priority}}]$ defines the structural comtrace semantics of $\mathsf{Priority}$. For instance, the comtrace 
$[\set{a,c}\set{b}] = \bset{ \set{c}\set{a}\set{b},\set{a,c}\set{b}}$ is in the language $[L_{\mathsf{Priority}}]$. \EOD
\end{example}

\section{Generalized Comtraces \label{sec:gcomtraces}}

There are reasonable concurrent behaviors that cannot be modelled by any comtrace. Let us analyze the following example.

\begin{example}
Let $E=\set{a,b,c}$ where $a$, $b$ and $c$ are three atomic operations defined as follows
(we assume simultaneous reading is allowed):
\begin{align*}
a:\ \ \ x&:= x+1 & b:\ \ \ x&:= x+2 & c:\ \ \ y&:= y+1
\end{align*}
It is reasonable to consider them all as `concurrent' as any order of their executions
yields exactly the same results (see \cite{J4,JK} for more motivation and formal considerations as well as the program $P_4$ of Example \ref{ex:motiveprog}). Assume that simultaneous reading  is allowed, but simultaneous writing  is not. Then
while simultaneous execution of $\{a,c\}$ and $\{b,c\}$ are allowed,
we cannot execute $\{a,b\}$, since we simultaneous writing on the same variable $x$ is not allowed.

The set of all equivalent executions (or runs) involving one occurrence of the operations $a$, $b$ and $c$, and modelling the above case,
\begin{align*}
 {\bf x}= \left\{\begin{array}{c}
\{a\}\{b\}\{c\},\{a\}\{c\}\{b\},\{b\}\{a\}\{c\},\{b\}\{c\}\{a\},\{c\}\{a\}\{b\}, \\
\{c\}\{b\}\{a\},\{a,c\}\{b\}, \{b,c\}\{a\}, \{b\}\{a,c\}, \{a\}\{b,c\}\end{array}\right\},
\end{align*}
is a valid concurrent history \cite{J4,JK}. However $x$ is {\em not} a comtrace. The problem is that we have $\{a\}\{b\}\eqb\{b\}\{a\}$ but $\{a,b\}$ {\em is not} a valid step, so comtrace cannot represent this situation. \EOD
 \label{ex:gcom}
\end{example}

 In this section, we
will introduce the {\em generalized comtrace} notion ({\em g-comtrace}), an extension of comtrace, which is also defined over step sequences. The g-comtraces will be able to model ``non-simultaneously" relationship similar to the one from Example \ref{ex:gcom}.

\begin{definition}[generalized comtrace alphabet]
\begin{sloppypar}
Let $E$ be a finite set (of events). Let $ser$, $sim$ and $\It{inl}$  be three relations on $E$
called {\em serializability}, {\em simultaneity} and {\em interleaving} respectively satisfying the following conditions:
\begin{itemize}
 \item $sim$ and $inl$ are irreflexive and symmetric,
 \item $ser \subseteq sim$, and
 \item $sim \cap inl =\emptyset.$
\end{itemize}
Then the triple $(E,sim,ser,inl)$ is called a {\em g-comtrace alphabet}.\EOD
\end{sloppypar}
\end{definition}

The interpretation of  the relations $sim$ and $ser$ is as in \defref{comalpha}, and $(a,b)\in inl$ means
$a$ and $b$ cannot occur simultaneously, but their occurrence in any order is equivalent.
As for comtraces, we define the set $\E$ of all (potential) {\em steps} 
 as the set of all cliques of
the graph $(E,sim)$.

\begin{definition}[generalized comtrace congruence]
Let $\Theta=(E,sim,ser,inl)$ be a g-comtrace alphabet and let $\equiv_{\set{ser,inl}}$, called \emph{g-comtrace congruence}, be the $\It{EQ}$-congruence defined by the set of equations $\It{EQ}=\It{EQ}_1\cup \It{EQ}_2$, where
\begin{align*}
\It{EQ}_1&\df\{ A=BC \mid A=B\cup C\in \E \;\wedge\; B\times C \subseteq ser \},\\
\It{EQ}_2&\df\{ BA=AB \mid A\in \E \;\wedge\; B\in \E \;\wedge\; A\times B \subseteq inl \}.
\end{align*}
The equational monoid $\bigl(\E^*/\!\!\equiv_{\set{ser,inl}},\circledast,[\lambda]\bigr)$ is called a \emph{monoid of g-comtraces}  over $\Theta$. \EOD
\label{def:gcommonoid}
\end{definition}

Since $ser$ and $inl$ are irreflexive, $(A=BC)\in \It{EQ}_1$ implies $B\cap C=\emptyset$, and
$(AB=BA)\in \It{EQ}_2$ implies $ A\cap B=\emptyset$. Since $inl\cap sim=\emptyset$, we also have that if $(AB=BA)\in \It{EQ}_2$, then $A\cup B\notin \E$. \\

By \propref{p1}, the g-comtrace congruence relations can also be defined explicitly in non-equational form as follows.
\begin{definition}
Let $\Theta=(E,sim,ser,inl)$ be a g-comtrace alphabet and let $\E^*$ be the set of all step sequences defined on $\Theta$.
\begin{itemize}
 \item Let $\eqa_1\;\subseteq\; \E^*\times\E^*$ be the relation comprising all pairs $(t,u)$ of step sequences such that $t=wAz$ and $u=wBCz$ where $w,z\in\E^*$ and $A$, $B$, $C$ are steps satisfying
$B\cup C \;= \;A$ and  $B\times C\;\subseteq\; ser$.
 \item Let $\eqa_2\;\subseteq\; \E^*\times\E^*$ be the relation comprising all pairs $(t,u)$ of step sequences such that $t=wABz$ and $u=wBAz$ where $w,z\in\E^*$ and $A$, $B$ are steps satisfying $A\times B\;\subseteq\; inl$. 
\end{itemize}
We define $ \eqa_{\set{ser,inl}} $ as $ \eqa_{\set{ser,inl}} \df \eqa_1\cup \eqa_2$. \EOD
\label{def:eqa}
\end{definition}

\begin{proposition}
For each g-comtrace alphabet $\Theta=(E,sim,ser,inl)$
$$\eqb_{\set{ser,inl}}= \Bigl(\eqa_{\set{ser,inl}}\cup\eqa_{\set{ser,inl}}^{-1}\Bigr)^*.$$  
\label{prop:gcomequi}
\end{proposition}
\begin{proof} Follows from \propref{p1}. \qed
\end{proof}

The name ``generalized comtraces'' comes from the fact that when $inl=\emptyset$, \defref{gcommonoid} coincides with \defref{commonoid} of a comtrace monoid. We will omit the subscript $\set{ser,inl}$ from $\eqb_{\set{ser,inl}}$ and $\eqa_{\set{ser,inl}}$, and write $\equiv$ and $\eqa$ when causing no ambiguity.

\begin{example}
The set ${\bf x}$ from \exref{gcom} is a g-comtrace, where we have
 $E=\{a,b,c\}$, $ser=sim=\{(a,c),(c,a),(b,c),(c,b)\}$, $inl=\{(a,b),(b,a)\}$,
and $\E=\{\{a,c\},\{b,c\},\{a\},\{b\},\{c\}\}$. \EOD
\label{e6}
\end{example}

It is worth noting that there is an {\em important difference} between the equation $ab=ba$ for traces,
and the equation $\{a\}\{b\}=\{b\}\{a\}$ for g-comtrace monoids.
For traces, the equation $ab=ba$, when translated into step sequences, corresponds
to two equations $\{a,b\}=\{a\}\{b\}$ and $\{a,b\}=\{b\}\{a\}$, which implies $\{a\}\{b\}\equiv\set{a,b}\equiv\{b\}\{a\}$.
For g-comtrace monoids, the equation $\{a\}\{b\}=\{b\}\{a\}$
implies that $\{a,b\}$ {\em is not a step}, i.e., neither the equation $\{a,b\}=\{a\}\{b\}$ nor
the equation $\{a,b\}=\{b\}\{a\}$ belongs to the set of equations. In other words, for traces the equation $ab=ba$
means `independency', i.e., executing $a$ and $b$ in any order or simultaneously will yield the same consequence.
For g-comtrace monoids, the equation $\{a\}\{b\}=\{b\}\{a\}$
means that execution of $a$ and $b$ in any order yields the same result, but executing of $a$ and $b$ in any order is \emph{not} equivalent to executing them simultaneously.

\section{Algebraic Properties of Comtrace and Generalized Comtrace Congruences \label{sec:algebraic}}
Algebraic properties of trace congruence operations such as  \emph{left/right cancellation} and \emph{projection} are well understood. They are intuitive and simple tools with many applications \cite{Ma2}. In this section we will generalize these cancellation and projection properties to comtrace and g-comtrace. The basic obstacle is switching from sequences to step sequences.

\subsection{Properties of comtrace congruence}
Let us consider a comtrace alphabet $\theta=(E,sim,ser)$ where we reserve $\E$ to denote the set of all possible steps of $\theta$ throughout this section.\\

For each step sequence or enumerated step sequence $x = X_1\ldots X_k$, we define the \emph{step sequence weight} of $x$ as $\wei(x)\df\Sigma_{i=1}^k|X_i|$. We also define $\Al(x) \df \bigcup_{i=1}^k X_i$.\\

Due to the commutativity of the independency relation for traces, the \textit{mirror rule}, which says if two sequences are congruent, then their \textit{reverses} are also congruent, holds for \emph{trace congruence} \cite{Di}. Hence, in trace theory, we only need a \emph{right cancellation} operation to produce congruent \emph{subsequences} from congruent sequences, since the \emph{left cancellation} comes from the right cancellation of the reverses.

However, the \textit{mirror rule}  does not hold for comtrace congruence since the relation $ser$ is usually not commutative. \exref{comtrace1} works as a counter example since $\set{a}\set{b,c} \eqb \set{a}\set{b}\set{c}$ but $\set{b,c}\set{a} \not\eqb \set{c}\set{b}\set{a}$. Thus, we define separate left and right cancellation operators for comtraces.\\

Let $a \in E$, $A\in \E$ and $w\in\E^*$.  The operator $\RC$, \textit{step sequence right
cancellation}, is defined  as follows: \medskip\\
\mbox{\hspace{2.5cm}}$\lambda \RC a \df \lambda,$
\mbox{\hspace{1cm}}
$wA \RC a \df
\begin{cases}
(w \RC a)A &\text{ if } a \not\in A\\
w &\text{ if } A = \set{a}\\
w(A\setminus\set{a}) &\text{ otherwise.}
\end{cases}$\\

Symmetrically, a \textit{step sequence left cancellation} operator $\LC$ is defined as follows:
\begin{align*}
&\lambda \LC a \df \lambda, &Aw \LC a \df
\begin{cases}
A(w \LC a) & \text{ if }a \not\in A\\
w & \text{ if }A = \set{a}\\
(A\setminus\set{a})w & \text{otherwise.}
\end{cases}
\end{align*}

Finally, for each $D \subseteq E$, we define the function $\pi_D:\E^* \rightarrow \E^*$,
{\em step sequence projection} onto $D$, as follows:
\begin{align*}
&\pi_D(\lambda) \df \lambda, &\pi_D(wA) \df
\begin{cases}
\pi_D(w) & \text{  if }A \cap D = \emptyset\\
\pi_D(w)(A \cap D ) &\text{  otherwise.}
\end{cases}
\end{align*}

The algebraic properties of comtraces are   similar to those of traces \cite{Ma2}.

\begin{proposition}\mbox{}\\
\vspace{-0.5cm}
\begin{enumerate}
\item $u\eqb v \implies \wei(u) = \wei(v)$. \hfill {\em (step sequence weight equality)}
\item $u\eqb v \implies |u|_a = |v|_a$.  \hfill \emph{(event-preserving)}
\item $u \eqb v \implies u\RC a \eqb v\RC a$. \hfill {\em (right cancellation)}
\item $u \eqb v \implies u\LC a \eqb v\LC a$.  \hfill {\em (left cancellation)}
\item $u\eqb v \Longleftrightarrow \forall s,t\in \E^*.\;sut \eqb svt$. \hfill {\em (step subsequence cancellation)}
\item $u \eqb v \implies \pi_D(u) \eqb \pi_D(v)$. \hfill {\em (projection rule)}
\end{enumerate}
\label{prop:p3}
\end{proposition}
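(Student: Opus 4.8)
The whole proposition is proved by first establishing the claims for the one-step relation $\eqa$ and then lifting to $\eqb = (\eqa\cup\eqa^{-1})^*$ by induction on the length of the chain of $\eqa$-steps (using \propref{comequi}); since every property below is transitive and symmetric in $u,v$ once known for $\eqa$, this lift is routine. So fix $u\eqa v$, say $u = wAz$ and $v = wBCz$ with $B\cup C = A$, $B\cap C = \emptyset$ and $B\times C\subseteq ser$. I will treat the six items essentially in the listed order, since the later ones lean on the earlier ones.

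For (1), $\wei(wAz) = \wei(w)+|A|+\wei(z)$ and $\wei(wBCz) = \wei(w)+|B|+|C|+\wei(z)$, and $|A| = |B|+|C|$ because $B,C$ partition $A$; so $\wei$ is an $\eqa$-invariant, hence an $\eqb$-invariant. For (2), $|wAz|_a = |w|_a + [a\in A] + |z|_a$ and likewise for $wBCz$, and $[a\in A] = [a\in B]+[a\in C]$ again because $B,C$ partition $A$; this gives event-preservation. Items (3) and (4) are the first place where a small case analysis is unavoidable: I would prove the single-step version $u\eqa v \implies u\RC a \eqb v\RC a$ by locating where the cancelled occurrence of $a$ sits relative to the rewritten block. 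If the relevant occurrence of $a$ lies in $z$ (equivalently, does not lie in $A = B\cup C$), then $\RC a$ commutes with the rewrite and we still get an $\eqa$-step. If it lies in $A$, then it lies in exactly one of $B$, $C$; one checks that cancelling $a$ from $A$ on the left side and from the corresponding one of $B$ or $C$ on the right side again yields a valid instance of $\eqa$ (the residual step $A\setminus\{a\}$ still decomposes as the two residual sets, one of which may become empty and disappear, still a legal $ser$-equation or a trivial identity). The left-cancellation claim (4) is the mirror computation with $\LC$; note one cannot simply invoke a mirror rule here, since, as the text stresses, $ser$ is not symmetric, so (4) must be redone by hand rather than deduced from (3).

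Item (5) is the technical heart. The direction ($\Leftarrow$) is immediate by taking $s = t = \lambda$. For ($\Rightarrow$), congruence in the quotient monoid (Definition of equational monoid / \propref{p1}) gives $u\eqb v \implies sut\eqb svt$ directly: $\eqb$ is a congruence, so it is preserved under left and right concatenation. So (5) is actually a restatement of the congruence property and needs only a one-line argument. For (6), the projection rule, I would again prove the single-step statement $u\eqa v \implies \pi_D(u)\eqb \pi_D(v)$ and lift. With $u=wAz$, $v=wBCz$, we have $\pi_D(u) = \pi_D(w)\,(A\cap D)\,\pi_D(z)$ (the middle factor omitted if $A\cap D=\emptyset$) and $\pi_D(v) = \pi_D(w)\,(B\cap D)\,(C\cap D)\,\pi_D(z)$ (with empty factors omitted). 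Now $(B\cap D)\cup(C\cap D) = A\cap D$, the two are disjoint, and $(B\cap D)\times(C\cap D)\subseteq B\times C\subseteq ser$; so if both $B\cap D$ and $C\cap D$ are nonempty this is a legal $\eqa$-step, and if one of them is empty the two sides are literally equal. Hence $\pi_D(u)\eqb\pi_D(v)$.

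\textbf{Main obstacle.} The only place demanding genuine care is (3)/(4): one must be disciplined about the bookkeeping of which occurrence of $a$ is being cancelled (the cancellation operators act on the \emph{last}, resp.\ \emph{first}, occurrence), verify that after cancellation the rewritten block is still either a valid $ser$-equation or a trivial identity (in particular handling the degenerate cases $A=\{a\}$, or $B=\{a\}$, or $C=\{a\}$, where a step collapses to $\lambda$), and confirm that when the cancelled $a$ is \emph{not} in the rewritten block the $\eqa$-step survives unchanged. Everything else is essentially substitution into definitions together with the observation that $B,C$ partition $A$.
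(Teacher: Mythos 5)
Your proposal matches the paper's proof essentially step for step: reduce everything (except (5)) to a single $\eqa$-step $xAy \eqa xBCy$, compute weights and occurrence counts from the fact that $B$ and $C$ partition $A$, case-split on where the cancelled occurrence of $a$ sits relative to the rewritten block for (3)/(4), and observe for (6) that projection sends the defining equation $A=BC$ to another (possibly trivial) instance of it. The only divergence is in the ($\Leftarrow$) direction of (5), where you instantiate at $s=t=\lambda$ while the paper strips the context by applying $\RC t$ and $\LC s$; your route is the more economical of the two and avoids the dependency on (3) and (4).
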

\begin{proof} The proofs use the same techniques as in \cite{Ma2}. We would like recall only the following key observation that simplifies the proof of this proposition: since $\eqb$ is the symmetric transitive closure of $\eqa$, it suffices to show that $u\approx v$ implies the right hand side of (1)--(6). The rest follows naturally from the definition of comtrace $\approx$ and the congruence $\eqb$. \qed
\end{proof}

Note that $(w\RC a)\RC b = (w\RC b)\RC a$, so we define
\begin{align*}
w\RC \{a_1,\ldots,a_k\}&\df\Bigl(\ldots\bigl((w\RC a_1)\RC a_2\bigr)\ldots\Bigr)\RC a_k, \text{ and}\\
w\RC A_1\ldots A_k &\df \Bigl(\ldots\bigl((w\RC A_1)\RC A_2\bigr)\ldots\Bigr)\RC A_k
\end{align*}

We define dually for $\LC$.
Hence Proposition \ref{prop:p3} (4) and (5) can be generalized as follows.

\begin{corollary} For all $u,v,x \in \E^*$, we have
\begin{enumerate}
\item $u\eqb v \implies u\RC x \eqb v\RC x$.
\item $u\eqb v \implies u\LC x \eqb v\LC x$. \END
\end{enumerate}
\label{c2}
\end{corollary}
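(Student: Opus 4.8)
The plan is to derive Corollary~\ref{c2} directly from Proposition~\ref{prop:p3}~(3) and (4) by induction on the weight $\wei(x)$ of the step sequence $x$, exploiting the fact that a single right (or left) cancellation by a letter preserves comtrace congruence.

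First I would fix the case of a single step. I claim that if $u\eqb v$ then $u\RC A\eqb v\RC A$ for every step $A\in\E$. Writing $A=\set{a_1,\ldots,a_k}$, this is just $k$ successive applications of Proposition~\ref{prop:p3}~(3): from $u\eqb v$ we get $u\RC a_1\eqb v\RC a_1$, then $(u\RC a_1)\RC a_2\eqb(v\RC a_1)\RC a_2$, and so on. Here one uses the remark immediately preceding the corollary, namely that $(w\RC a)\RC b=(w\RC b)\RC a$, which guarantees that the iterated cancellation $w\RC\set{a_1,\ldots,a_k}$ is well-defined independently of the ordering of the letters; so the notation $u\RC A$ makes sense and the chain of congruences above establishes the claim. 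The left-cancellation version is obtained the same way from Proposition~\ref{prop:p3}~(4).

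Next I would handle a general step sequence $x=A_1\ldots A_m$ by induction on $m$. The base case $m=0$ is trivial since $u\RC\lambda=u$. For the inductive step, suppose $u\eqb v$. By the single-step claim, $u\RC A_1\eqb v\RC A_1$; then by the definition $w\RC A_1\ldots A_m=(w\RC A_1)\RC A_2\ldots A_m$ and the induction hypothesis applied to the shorter sequence $A_2\ldots A_m$, we conclude $u\RC A_1\ldots A_m\eqb v\RC A_1\ldots A_m$. The argument for $\LC$ is symmetric, using the left-cancellation single-step claim and the dual definition of iterated left cancellation.

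This proof is essentially routine bookkeeping: all the real content is already packaged in Proposition~\ref{prop:p3}. The only point that needs a little care—and the place where one could slip—is making sure the \emph{order} in which letters and steps are cancelled does not matter, i.e.\ that the abbreviations $w\RC\set{a_1,\ldots,a_k}$ and $w\RC A_1\ldots A_k$ are genuinely well-defined; but this is exactly what the commutation identity $(w\RC a)\RC b=(w\RC b)\RC a$ stated just before the corollary provides, so there is no real obstacle. One should also note in passing that, unlike for traces, there is no mirror rule here, so the $\LC$ statement genuinely must be proved separately rather than deduced from the $\RC$ statement by reversal—but since Proposition~\ref{prop:p3}~(4) was proved independently, this costs nothing.
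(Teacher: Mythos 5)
Your proposal is correct and follows exactly the route the paper intends: the corollary is stated without proof precisely because, with $w\RC\set{a_1,\ldots,a_k}$ and $w\RC A_1\ldots A_k$ defined as iterated single-letter cancellations (made unambiguous by the commutation identity $(w\RC a)\RC b=(w\RC b)\RC a$), the claim is just a finite chain of applications of Proposition~\ref{prop:p3}(3), and dually of Proposition~\ref{prop:p3}(4) for $\LC$. Your explicit induction and your remark that the $\LC$ case must be handled separately (no mirror rule) are both consistent with the paper's treatment.
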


\subsection{Properties of generalized comtrace congruence}
Using the same proof technique as in \propref{p3}, we can show that g-comtrace congruence has  the same algebraic properties as comtrace congruence.

\begin{proposition} Let $\E$ be the set of all steps over a g-comtrace alphabet
 $(E,sim,ser,inl)$ and  $u,v\in\E^*$. Then
\begin{enumerate}
\item $u\eqb v \implies \wei(u) = \wei(v)$. \hfill {\em (step sequence weight equality)}
\item $u\eqb v \implies |u|_a = |v|_a$.  \hfill \emph{(event-preserving)}
\item $u \eqb v \implies u\RC a \eqb v\RC a$. \hfill {\em (right cancellation)}
\item $u \eqb v \implies u\LC a \eqb v\LC a$.  \hfill {\em (left cancellation)}
\item $u\eqb v \Longleftrightarrow \forall s,t\in \E^*.\;sut \eqb svt$. \hfill {\em (step subsequence cancellation)}
\item $u \eqb v \implies \pi_D(u) \eqb \pi_D(v)$. \hfill {\em (projection rule)}\\\qed
\end{enumerate}

\label{prop:gcprop1}
\end{proposition}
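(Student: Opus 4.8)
The plan is to mirror the proof of Proposition~\ref{prop:p3}, exploiting the explicit characterization of $\eqb_{\set{ser,inl}}$ from \propref{gcomequi}: since $\eqb \;=\; (\eqa \cup \eqa^{-1})^*$ and each clause (1)--(4) and (6) asserts an implication whose conclusion is an equality or a congruence (hence symmetric and transitive), it suffices to prove each of (1)--(4) and (6) under the single-step hypothesis $u \eqa v$. Now $u \eqa v$ splits into two cases: either $u \eqa_1 v$, meaning $u = xAy$, $v = xBCy$ with $A = B\cup C$, $B\cap C = \emptyset$, $B\times C \subseteq ser$; or $u \eqa_2 v$, meaning $u = xABy$, $v = xBAy$ with $A\times B \subseteq inl$ (and, as noted after \defref{gcommonoid}, $A\cap B = \emptyset$ and $A\cup B \notin \E$). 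The $\eqa_1$ case is word-for-word identical to the corresponding case in the proof of Proposition~\ref{prop:p3}, so the only genuinely new work is the $\eqa_2$ case.

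For the $\eqa_2$ case the arguments are all short. For (1): $\wei(AB) = |A| + |B| = |B| + |A| = \wei(BA)$, so $\wei(u) = \wei(x) + \wei(AB) + \wei(y) = \wei(x) + \wei(BA) + \wei(y) = \wei(v)$. For (2): $|AB|_a = |A|_a + |B|_a = |BA|_a$ (using $A\cap B = \emptyset$, so in fact this count is $0$ or $1$), hence $|u|_a = |v|_a$. For (3), the right cancellation $u \RC a$, I would case on where $a$ sits: if $a \in \Al(y)$, cancel inside $y$ and the blocks $A,B$ are untouched, so $u\RC a = x A B (y\RC a) \eqa_2 x B A (y\RC a) = v\RC a$; if $a \notin \Al(y)$ but $a \in B$ (the rightmost of the two swapped blocks in $u$), then $u\RC a = x A (B\setminus\set{a}) y$ and $v\RC a = x (B\setminus\set{a}) A y$ when $a\in B$ — wait, one must be careful about which block is last; handle symmetrically $a\in A$ vs.\ $a\in B$, noting $(B\setminus\set{a})\times A \subseteq inl$ and $A\times(B\setminus\set{a})\subseteq inl$ because $inl$ is inherited by subsets; if $a\notin \Al(ABy)$, cancel inside $x$. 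In each subcase the two sides are still related by $\eqa_2$ (or are literally equal), hence $\eqb$. Item (4) is dual, using $\LC$. For (6): $\pi_D(A)\times \pi_D(B) \subseteq inl$ since $inl$ restricts to subsets, so $\pi_D(u) = \pi_D(x)\pi_D(A)\pi_D(B)\pi_D(y)$ and $\pi_D(v) = \pi_D(x)\pi_D(B)\pi_D(A)\pi_D(y)$ are related by $\eqa_2$ (the blocks $\pi_D(A)$ or $\pi_D(B)$ could be empty, in which case the two sides are literally equal). Finally (5) does not reduce to the single-step case but follows exactly as before: ($\Rightarrow$) is immediate from $\eqb$ being a congruence, and ($\Leftarrow$) uses $sut \eqb svt \implies (sut \RC t)\LC s = u \eqb v = (svt\RC t)\LC s$, invoking the analogue of Corollary~\ref{c2} (right/left cancellation by a step sequence, obtained by iterating (3) and (4)).

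The only point requiring slight care — and the one I would flag as the main obstacle — is the bookkeeping in the cancellation cases (3) and (4): when $a$ belongs to one of the two swapped blocks, one must check that deleting $a$ leaves a configuration still of the form $x'A'B'y'$ with $A'\times B' \subseteq inl$, and in particular that the possibility of a block becoming empty (when $A = \set{a}$ or $B = \set{a}$) is handled — in that degenerate subcase the swap disappears and the two sides become identical, so the conclusion is trivial. Everything else is routine and parallels Proposition~\ref{prop:p3} verbatim. Since $inl$ and $ser$ are both closed under taking subrelations of the product (i.e., $A'\subseteq A$, $B'\subseteq B$, $A\times B\subseteq inl \implies A'\times B'\subseteq inl$), no new closure hypotheses are needed.
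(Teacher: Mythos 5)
Your proposal is correct and follows essentially the same route as the paper: reduce to the single-step relation $\eqa$, observe that the $\eqa_1$ case is already covered by Proposition~\ref{prop:p3}, and handle the new $\eqa_2$ case (swapped $inl$-blocks) directly, with (5) argued via the congruence property and iterated cancellation exactly as before. The paper merely states that the $\eqa_2$ case ``can be proved similarly''; your write-up supplies the case analysis (including the degenerate subcase where a block becomes a singleton and the swap disappears) that the paper omits.
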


\begin{corollary} For all step sequences $u, v, x$ over a g-comtrace alphabet $(E,sim,ser,inl)$,
\begin{enumerate}
\item $u\eqb v \implies u\RC x \eqb v\RC x$.
\item $u\eqb v \implies u\LC x \eqb v\LC x$. \END
\end{enumerate}
\end{corollary}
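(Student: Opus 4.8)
The plan is to obtain the corollary by iterating the single-event cancellation rules of \propref{gcprop1}, in exactly the same way Corollary~\ref{c2} was obtained from \propref{p3}. Throughout, I rely on the fact---already recorded in the comtrace case and purely syntactic in nature, hence valid over any g-comtrace alphabet---that single-event right cancellations commute, $(w\RC a)\RC b = (w\RC b)\RC a$, so that $w\RC A$ for a step $A=\set{a_1,\ldots,a_m}$ and $w\RC x$ for a step sequence $x=A_1\ldots A_k$ are well defined, and in particular $w\RC(xA) = (w\RC x)\RC A$ for any $w,x\in\E^*$ and any step $A$.

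For part (1), I would first establish the auxiliary claim that $u\eqb v \implies u\RC A\eqb v\RC A$ for a single step $A=\set{a_1,\ldots,a_m}$: since $u\RC A = \bigl(\ldots\bigl((u\RC a_1)\RC a_2\bigr)\ldots\bigr)\RC a_m$ and likewise for $v$, this follows by $m$ successive applications of \propref{gcprop1}(3). Then I would prove $u\eqb v \implies u\RC x\eqb v\RC x$ by induction on the length $k$ of $x=A_1\ldots A_k$. The base case $k=0$ is $x=\lambda$, where $u\RC\lambda = u\eqb v = v\RC\lambda$. For the inductive step, write $x = x'A_k$; by the induction hypothesis $u\RC x'\eqb v\RC x'$, and then by the auxiliary claim applied to the step $A_k$ we get $(u\RC x')\RC A_k\eqb (v\RC x')\RC A_k$, i.e.\ $u\RC x\eqb v\RC x$. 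Part (2) is entirely dual: replace $\RC$ by $\LC$ and \propref{gcprop1}(3) by \propref{gcprop1}(4) throughout.

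There is essentially no genuine obstacle here: the statement is a routine iteration of the single-step results of \propref{gcprop1}, and the only point requiring (minor) care is the bookkeeping that $\RC x$ really unfolds into a composition of single-event cancellations whose order is immaterial---something already taken for granted in the definition of $\RC$ on steps and step sequences, and holding independently of which congruence is in play. If one prefers to avoid even invoking commutativity of the single-event cancellations, one can instead argue directly by induction on $\wei(x)$, peeling off one event of $x$ at a time and applying \propref{gcprop1}(3) (resp.\ (4)); the argument has the same shape.
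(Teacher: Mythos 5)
Your proposal is correct and matches what the paper intends: the corollary is stated without proof precisely because it is the routine iteration of \propref{gcprop1}(3) and (4) over the single-event cancellations that, by definition, make up $\RC x$ and $\LC x$, exactly as Corollary~\ref{c2} follows from \propref{p3}. Your remark that the commutativity $(w\RC a)\RC b = (w\RC b)\RC a$ is purely syntactic and independent of the congruence is the right observation to make the iteration well defined.
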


The following proposition ensures that if any relation from the set
$\set{\le,\ge,<,>,=,\not=}$ holds for the positions of two event occurrences  after applying cancellation or projection operations on a g-comtrace $[\h{u}]$, then it also holds for the whole $[\h{u}]$.

\begin{proposition}\begin{sloppypar}
Let $\h{u}$ be an enumerated step sequence over a g-comtrace alphabet $(E,sim,ser,inl)$ and $\alpha,\beta,\gamma\in\Sigma_u$ such that $\gamma\notin \set{\alpha,\beta}$. Let  $\calf{R}\in \set{\le,\ge,<,>,=,\not=}$. Then
\end{sloppypar}

\begin{enumerate}
 \item If $\forall \h{v}\in [\h{u}\LC\gamma].\; pos_{\h{v}}(\alpha)\; \calf{R}\; pos_{\h{v}}(\beta)$, then $\forall \h{w}\in [\h{u}].\; pos_{\h{w}}(\alpha)\; \calf{R}\; pos_{\h{w}}(\beta)$.
 \item If $\forall \h{v}\in [\h{u}\RC\gamma].\; pos_{\h{v}}(\alpha)\; \calf{R}\; pos_{\h{v}}(\beta)$, then $\forall \h{w}\in [\h{u}].\; pos_{\h{w}}(\alpha)\; \calf{R}\; pos_{\h{w}}(\beta)$.
 \item If $S\subseteq\Sigma_u$ such that $\set{\alpha,\beta}\subseteq S$, then 
 \[\Bigl(\forall \h{v}\in [\pi_S(\h{u})].\; pos_{\h{v}}(\alpha)\; \calf{R}\; pos_{\h{v}}(\beta)\Bigr)\implies \Bigl(\forall \h{w}\in [\h{u}].\; pos_{\h{w}}(\alpha)\; \calf{R}\; pos_{\h{w}}(\beta)\Bigr).\]
\end{enumerate}
\label{prop:invsubs}
\end{proposition}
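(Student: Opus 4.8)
The plan is to reduce all three parts to a single local fact about how the position function changes under the operations $\LC\gamma$, $\RC\gamma$ and $\pi_S$, and then to combine it with the fact --- established in \propref{gcprop1}(6) for projection and in the corollary following \propref{gcprop1} for cancellation (read in their enumerated form) --- that these operations are well defined on g-comtraces, so that $\h w\LC\gamma\in[\h u\LC\gamma]$, $\h w\RC\gamma\in[\h u\RC\gamma]$ and $\pi_S(\h w)\in[\pi_S(\h u)]$ whenever $\h w\in[\h u]$.

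First I would isolate the following local lemma. Let $\h w$ be any enumerated step sequence, let $\alpha,\beta,\gamma\in\Sigma_w$ with $\gamma\notin\set{\alpha,\beta}$, let $S$ satisfy $\set{\alpha,\beta}\subseteq S\subseteq\Sigma_w$, and let $\calf{R}\in\set{\le,\ge,<,>,=,\not=}$. Then
\[ pos_{\h w}(\alpha)\;\calf{R}\;pos_{\h w}(\beta) \;\iff\; pos_{\h w\LC\gamma}(\alpha)\;\calf{R}\;pos_{\h w\LC\gamma}(\beta), \]
and the same equivalence holds with $\h w\RC\gamma$ and with $\pi_S(\h w)$ in place of $\h w\LC\gamma$. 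The point is that each of these operations deletes some occurrences and then merely deletes the steps that have thereby become empty; the surviving steps keep their relative order, so the operation induces a \emph{strictly monotone} reindexing $g$ of the surviving step indices, and the position of a surviving occurrence after the operation is the $g$-image of its position before. For $\LC\gamma$ and $\RC\gamma$ the only deleted occurrence is $\gamma$, so the only step that can disappear is the one containing $\gamma$, and only when that step is the singleton $\set{\gamma}$; since $\alpha\neq\gamma\neq\beta$, the steps of $\alpha$ and of $\beta$ are distinct from that step and hence survive. For $\pi_S$ the surviving steps are exactly those still meeting $S$, and the steps of $\alpha$ and $\beta$ survive because $\alpha,\beta\in S$. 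Thus in every case $pos_{\h w}(\alpha)$ and $pos_{\h w}(\beta)$ lie in the domain of the strictly monotone map $g$, and a strictly monotone map preserves and reflects each of $<$, $>$, $=$, hence also $\le$, $\ge$ and $\neq$; this gives the displayed equivalence.

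Given the local lemma, each part follows in one line. For part (1), take an arbitrary $\h w\in[\h u]$; then $\h w\LC\gamma\in[\h u\LC\gamma]$, so the hypothesis yields $pos_{\h w\LC\gamma}(\alpha)\;\calf{R}\;pos_{\h w\LC\gamma}(\beta)$, and the local lemma gives $pos_{\h w}(\alpha)\;\calf{R}\;pos_{\h w}(\beta)$. Parts (2) and (3) are proved identically, using $\RC\gamma$ and $\pi_S$ respectively.

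I expect the only real work to be the bookkeeping inside the local lemma: writing $g$ down explicitly (the identity when the step of the deleted occurrence survives; the map $i\mapsto i-[\,i>r\,]$ when the singleton step at position $r$ disappears; an order isomorphism onto an initial segment in the projection case), checking strict monotonicity, and confirming that $pos_{\h w}(\alpha)$ and $pos_{\h w}(\beta)$ remain in the domain of $g$ --- which is precisely where the hypotheses $\gamma\notin\set{\alpha,\beta}$ and $\set{\alpha,\beta}\subseteq S$ enter. It is also worth recording at the outset that $\h u\LC\gamma$, $\h u\RC\gamma$ and $\pi_S(\h u)$ depend only on the g-comtrace $[\h u]$ and not on the chosen representative, since this is what makes the reduction above legitimate.
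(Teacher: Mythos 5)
Your proposal is correct and follows essentially the same route as the paper: both arguments combine the fact that $\LC\gamma$, $\RC\gamma$ and $\pi_S$ map $[\h{u}]$ into $[\h{u}\LC\gamma]$, $[\h{u}\RC\gamma]$ and $[\pi_S(\h{u})]$ respectively (via the cancellation and projection properties of the congruence) with the observation that these operations preserve the positional relation between $\alpha$ and $\beta$ when $\gamma\notin\set{\alpha,\beta}$, resp.\ $\set{\alpha,\beta}\subseteq S$. The paper phrases this contrapositively and leaves the monotone-reindexing step implicit, whereas you prove it directly and spell that step out; the content is the same.
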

\begin{proof}
\begin{enumerate}
\item Assume that
\begin{equation}
\forall \h{v}\in [\h{u}\LC\gamma].\; pos_{\h{v}}(\alpha)\; \calf{R}\; pos_{\h{v}}(\beta)
\label{eq:invsubs.1}
\end{equation}
Suppose for a contradiction that $\exists \h{w}\in [\h{u}].\; \neg(pos_{\h{w}}(\alpha)\; \calf{R}\; pos_{\h{w}}(\beta))$. Since $\gamma\notin \set{\alpha,\beta}$, we have $\neg(pos_{\h{w}\LC\gamma}(\alpha)\; \calf{R}\; pos_{\h{w}\LC\gamma}(\beta))$. But $\h{w}\in [\h{u}]$ implies $\h{w}\LC\gamma\eqb \h{u}\LC\gamma$. Hence, $\h{w}\LC\gamma\in[\h{u}\LC\gamma]$ and $\neg(pos_{\h{w}\LC\gamma}(\alpha)\; \calf{R}\; pos_{\h{w}\LC\gamma}(\beta))$, contradicting \eref{invsubs.1}.
\item Dually to part (1).
\item Assume that
\begin{equation}
\forall \h{v}\in [\pi_S(\h{u})].\; pos_{\h{v}}(\alpha)\; \calf{R}\; pos_{\h{v}}(\beta)
\label{eq:invsubs.2}
\end{equation}
Suppose for a contradiction that $\exists \h{w}\in [\h{u}].\; \neg(pos_{\h{w}}(\alpha)\; \calf{R}\; pos_{\h{w}}(\beta))$. Since $\set{\alpha,\beta}\subseteq S$, we have $\neg(pos_{\pi_S(\h{w})}(\alpha)\; \calf{R}\; pos_{\pi_S(\h{w})}(\beta))$. But $\h{w}\in [\h{v}]$ implies $\pi_S(\h{w})\eqb \pi_S(\h{u})$. Hence, $\pi_S(\h{w})\in[\pi_S(\h{u})]$ and $\neg(pos_{\pi_S(\h{w})}(\alpha)\; \calf{R}\; pos_{\pi_S(\h{w})}(\beta))$, contradicting  \eref{invsubs.2}.\END
\end{enumerate}
\end{proof}

Clearly the above results also hold for comtraces as they are just g-comtraces with $inl=\emptyset$.

\section{Maximally Concurrent and Canonical Representations \label{sec:can}}
In this section, we show that traces, comtraces and g-comtraces all have some special representations, that intuitively correspond to \emph{maximally concurrent execution of concurrent histories}, i.e., ``executing as much as possible in parallel.'' This kind of semantics is formally defined and analyzed for example in \cite{DJKL}. However such representations are truly unique only for comtraces. For traces and g-comtraces, unique (or {\em canonical}) representations are obtained by adding some arbitrary total ordering on their alphabets.

In this section we will start with the general case of g-comtraces and then consider comtraces and traces as a special case.

\subsection{Representations of generalized comtraces}
Let $\Theta=(E,sim,ser,inl)$ be a g-comtrace alphabet and $\E$ be
the set of all steps over $\Theta$. We will start with the most ``natural" definition which is the straightforward application of the approach used in \cite{DJKL} for an alternative version of traces called ``vector firing sequences" (see \cite{JL92,Shi}).

\begin{definition}[greedy maximally concurrent form]
A step sequence $u = A_1\ldots A_k \in \E^*$ is in
{\em greedy maximally concurrent form} ({\em GMC-form}) if and only if
for each $i=1,\ldots ,k$:
$$\bigl(B_iy_i \equiv A_i\ldots A_k\bigr) \implies |B_i|\leq |A_i|,$$
where for all $i=1,\ldots ,k$, $A_i, B_i \in \E$, and $y_i\in \E^*$. \EOD
\label{GMC}
\end{definition}

\begin{proposition}
For each g-comtrace ${\bf u}$ over $\Theta$ there is  a step sequence $u\in \E^*$ in GMC-form such that ${\bf u}=[u]$.
\label{ExGMC}
\end{proposition}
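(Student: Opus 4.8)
The plan is to prove Proposition~\ref{ExGMC} by a greedy construction: at each stage I peel off a maximally-sized first step from the remaining g-comtrace, and iterate. Concretely, given a g-comtrace $\mathbf{u}$, if $\mathbf{u} = [\lambda]$ we are done; otherwise pick any representative $w$ with $\mathbf{u} = [w]$, and among \emph{all} step sequences $v$ with $[v] = [w]$ and all ways of writing such a $v$ as $v = By$ with $B \in \E$, $y \in \E^*$, choose one where $|B|$ is maximal. This maximum exists and is bounded: by Proposition~\ref{prop:gcprop1}(1), $\wei$ is a congruence invariant, so $|B| \le \wei(w)$ for any such decomposition, hence the set of achievable values of $|B|$ is a nonempty finite set of natural numbers and has a largest element. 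Set $A_1 := B$ and recurse on the g-comtrace $[y]$, which has strictly smaller weight than $\mathbf{u}$ (again by the weight invariant, since $\wei(A_1 y) = |A_1| + \wei(y)$ and $|A_1| \ge 1$ because steps are nonempty). The recursion therefore terminates, producing $u = A_1 A_2 \ldots A_k$ with $[u] = \mathbf{u}$.

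It remains to check that the $u$ so produced is actually in GMC-form in the sense of Definition~\ref{GMC}, i.e.\ that the \emph{local} greedy choice at stage $i$ witnesses the required inequality $(B_i y_i \equiv A_i \ldots A_k) \implies |B_i| \le |A_i|$. The key observation is that the tail $A_i \ldots A_k$ is, by construction, a representative of the g-comtrace we recursed into at stage $i$, so $[A_i \ldots A_k]$ is exactly the g-comtrace from which $A_i$ was chosen as a maximal-cardinality first step. Thus if $B_i y_i \equiv A_i \ldots A_k$ for some $B_i \in \E$, $y_i \in \E^*$, then $B_i$ is one of the candidate first steps considered when $A_i$ was selected, so $|B_i| \le |A_i|$ by maximality. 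This is essentially bookkeeping once the invariant ``$A_i\ldots A_k$ is a representative of the stage-$i$ g-comtrace'' is stated correctly as part of the induction hypothesis.

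I would organize this as an induction on $\wei(\mathbf{u}) := \wei(w)$ for any representative $w$ (well-defined by Proposition~\ref{prop:gcprop1}(1)). Base case $\wei(\mathbf{u}) = 0$: then $\mathbf{u} = [\lambda]$ and $u = \lambda$ vacuously satisfies Definition~\ref{GMC}. Inductive step: choose $A_1$ maximal as above, apply the induction hypothesis to $[y]$ (legitimate since $\wei([y]) = \wei(\mathbf{u}) - |A_1| < \wei(\mathbf{u})$) to get $y' = A_2\ldots A_k$ in GMC-form with $[y'] = [y]$, and set $u = A_1 y'$. Then $[u] = [A_1 y'] = [A_1 y] = [w] = \mathbf{u}$, the conditions for $i \ge 2$ hold by the induction hypothesis applied to $y'$, and the condition for $i = 1$ holds by the maximality of $|A_1|$ as argued above.

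The main obstacle is not any deep computation but rather pinning down precisely why ``peeling off one maximal step and recursing'' yields a sequence satisfying the clause of Definition~\ref{GMC} \emph{for every index simultaneously} — that is, making sure the local maximality at each stage is exactly the global GMC condition at that stage. This works because the GMC condition at index $i$ only refers to the tail $A_i \ldots A_k$, which is precisely the object we recursed on; the congruence invariance of $\wei$ (Proposition~\ref{prop:gcprop1}(1)) both guarantees the maxima exist and drives the termination of the recursion. One should also be slightly careful that the choice of $A_1$ may be non-canonical (several steps of the same maximal size may be available), but the statement only asserts existence of \emph{a} GMC-form representative, so any choice works.
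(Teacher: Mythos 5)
Your proposal is correct and follows essentially the same route as the paper: a greedy construction that at each stage selects a first step of maximal cardinality among all representatives of the remaining g-comtrace, recurses on the tail, and uses the congruence-invariance of $\wei$ (the paper's Proposition~\ref{prop:gcprop1}(1)) both to bound the candidate steps and to guarantee termination. The paper phrases this as a while-loop using the left-cancellation operator $\LC$ where you phrase it as an induction on weight, but the argument is the same.
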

\begin{proof}
Let $u=A_1\ldots A_k$, where the steps $A_1,\ldots ,A_k$ are generated by the following simple greedy algorithm:
\begin{algorithmic}[1]
\STATE  Initialize $i\leftarrow 0$ and $u_0\leftarrow u$
\WHILE{$u_i\neq\lambda$}
\STATE $i \leftarrow i+1$
\STATE Find $A_i$ such that 
there exists $y$ such that $A_iy\equiv u_{i-1}$ and 
for each $Bz\equiv A_iy\equiv u_{i-1}$, $|B|\leq |A_i|$
\STATE $u_i \leftarrow u_{i-1} \LC A_i$
\ENDWHILE
\STATE $k\leftarrow i-1$.
\end{algorithmic}

Since $\wei(u_{i+1})<\wei(u_i)$ the above algorithm always terminates. Clearly $u = A_1\ldots A_k$ is in GMC-form and $u\in {\bf u}$. \END
\end{proof}

The algorithm from the proof of Proposition \ref{ExGMC} used to generate $A_1$ ,\ldots , $A_k$  justifies the prefix ``greedy" in Definition \ref{GMC}. However the GMC representation of g-comtraces is seldom unique and often not ``maximally concurrent". Consider the following two examples.
\begin{example}
Let $E=\{a,b,c\}$, $sim =\{(a,c),(c,a)\}$, $ser=sim$ and $inl=\{(a,b),(b,a)\}$ and ${\bf u}=[ \{a\}\{b\}\{c\} ] =\{\{a\}\{b\}\{c\},\{b\}\{a\}\{c\},
\{b\}\{a,c\}\}$. Note that 
both $\{a\}\{b\}\{c\} $ and $\{b\}\{a,c\}$ are
in GMC-form, but only $\{b\}\{a,c\}$ can intuitively be interpreted as maximally concurrent. \EOD
\label{GMC1}
\end{example}

\begin{example}
\mbox{}\\
\begin{minipage}{8cm}
Let $E=\{a,b,c,d,e\}$, and $sim=ser$, $inl$ be as in the picture on the right, and let ${\bf u}=[ \{a\}\{b,c,d,e\} ]$. One can easily verify by inspection that $\{a\}\{b,c,d,e\}$ is the shortest element of
${\bf u}$ and the only element of ${\bf u}$ in GMC-form is
$\{b,e,d\}\{a\}\{c\}$. The step sequence $\{b,e,d\}\{a\}\{c\}$ is longer and intuitively less maximally concurrent than the step sequence$\{a\}\{b,c,d,e\}$.\EOD
\end{minipage}\hfill
\begin{minipage}{4.5cm}
\vspace{-5mm}
\begin{picture}(5,5)
\setlength{\unitlength}{4mm}

\put(0,0){\circle*{0.2}}
\put(1,2){\circle*{0.2}}
\put(2.5,-2){\circle*{0.2}}
\put(4,2){\circle*{0.2}}
\put(5,0){\circle*{0.2}}
\drawline(1,2)(4,2)(5,0)(2.5,-2)(4,2)
\drawline(5,0)(1,2)(2.5,-2)
\dottedline[.]{0.3}(2.5,-2)(0,0)(1,2)
\dottedline[.]{0.3}(0,0)(5,0)

\put(-0.2,0){\makebox(0,0)[br]{$a$}}
\put(1.2,2.2){\makebox(0,0)[br]{$b$}}
\put(4.2,2.2){\makebox(0,0)[br]{$c$}}
\put(5.2,0){\makebox(0,0)[bl]{$d$}}

\put(2.7,-2.1){\makebox(0,0)[tl]{$e$}}

\put(7,1){\makebox(0,0)[bl]{$sim $:}}
\drawline(9,1.3)(11,1.3)

\put(7,0){\makebox(0,0)[bl]{$inl $:}}
\dottedline[.]{0.3}(9,0.3)(11,0.3)

\end{picture}
\end{minipage}
\label{GMC2}
\end{example}

Hence for g-comtraces the greedy maximal concurrency notion is not necessarily the global maximal concurrency notion, so we will try another approach.\\

Let $x=A_1\ldots A_k$ be a step sequence. We define $\len(A_1\ldots A_k)\df k$.
We also say that $A_i$ is {\em maximally concurrent in} $x$ if $B_iy_i\equiv A_i\ldots A_k \implies |B_i|\leq |A_i|$.
Note that $A_k$ {\em is always maximally concurrent in} $x$, which makes the following definition correct.

For every step sequence $x=A_1\ldots A_k$, let $mc(x)$ be the smallest $i$ such that $A_i$ is maximally concurrent in $x$.

\begin{definition}
A step sequence $u=A_1\ldots A_k$ is {\em maximally concurrent} ({\em MC-}) iff
\begin{enumerate}
\item $v\equiv u \;\implies\;  \len(u) \leq \len(v)$,
\item for all $i=1,\ldots ,k$ and for all $w$,
\begin{itemize}
 \item[] $\bigl(u_i=A_i\ldots A_k\equiv w\;\wedge\;\len(u_i) = \len(w)\bigr)\implies mc(u_i)\leq mc(w).$\EOD
\end{itemize}
\end{enumerate}
\label{MC}
\end{definition}

\begin{theorem}
For every g-comtrace ${\bf u}$, there exists a step sequence $u\in {\bf u}$ such that $u$ is maximally concurrent.
\label{t:mc}
\end{theorem}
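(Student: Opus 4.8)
The plan is to prove Theorem~\ref{t:mc} by exhibiting an explicit construction of a maximally concurrent representative, proceeding step by step from the left and at each stage choosing the ``best'' first step according to a two-level priority: first minimize the length of the remaining suffix, then among all suffixes of minimal length choose one whose first step is as large as possible and occurs as early as possible. The key observation making this work is that Definition~\ref{MC} is built around the \emph{suffixes} $u_i = A_i\ldots A_k$, so the natural approach is a recursion on step sequences of decreasing weight, exactly as in the proof of Proposition~\ref{ExGMC}, but with a more refined selection rule than the greedy one.

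First I would fix the g-comtrace $\mathbf{u}$ and argue that among all $v\in\mathbf{u}$ there is a nonempty finite set of those of minimal length, call it $L$; this is legitimate because Proposition~\ref{prop:gcprop1}(1) gives $\wei(v)=\wei(u)$ for all $v\in\mathbf{u}$, so lengths are bounded below and $L\neq\emptyset$. Next, among all $v\in L$ I would look at the first step and the quantity $mc(v)$; since $A_k$ is always maximally concurrent (as noted before the definition), $mc(v)$ is well defined, and I would pick $v^\ast\in L$ minimizing $mc(v)$ — here one should first minimize $mc$, and ties can be broken arbitrarily for now since we will recurse. Write $v^\ast = A_1 y$ with $A_1\in\E$ and $y\in\E^\ast$. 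Then I would apply the construction recursively to the g-comtrace $[y] = [v^\ast \LC A_1]$, which has strictly smaller weight, obtaining by induction a maximally concurrent $y^\ast\in[y]$, and set $u \df A_1 y^\ast$. The induction is on $\wei(\mathbf{u})$, with the base case $\mathbf{u}=[\lambda]$ trivial.

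The verification then has two parts corresponding to the two clauses of Definition~\ref{MC}. For clause~(1): since $v^\ast\in L$ has minimal length in $\mathbf{u}$ and $\mbox{\em length}(u) = 1 + \mbox{\em length}(y^\ast)$, I need $\mbox{\em length}(y^\ast)$ to be minimal in $[y]$, which is exactly the induction hypothesis for $[y]$; combined with $\mbox{\em length}(v^\ast)=1+\mbox{\em length}(y)$ being minimal in $\mathbf{u}$ and the fact that any $v\in L$ has the form $B_1 z$ with $[z]=[v\LC B_1]$ and $\mbox{\em length}(z)\ge \mbox{\em length}(y^\ast)$, this gives $\mbox{\em length}(u)\le\mbox{\em length}(v)$ for all $v\in\mathbf{u}$. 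For clause~(2), the suffixes of $u$ are $u_1 = u = A_1 y^\ast$ and $u_i = (y^\ast)_{i-1}$ for $i\ge 2$; for $i\ge 2$ the required inequality is precisely clause~(2) of the induction hypothesis applied to $y^\ast$, so the only genuinely new obligation is the case $i=1$, namely that for every $w\equiv u$ with $\mbox{\em length}(w)=\mbox{\em length}(u)$ we have $mc(u)\le mc(w)$. This is where the choice of $v^\ast$ as minimizing $mc$ over $L$ enters, together with the fact that $u\equiv v^\ast$ (both lie in $\mathbf{u}$) and $\mbox{\em length}(u)=\mbox{\em length}(v^\ast)=$ the minimal length, so $u,v^\ast,w$ are all length-minimal elements of $\mathbf{u}$, hence all lie in $L$; minimality of $mc(v^\ast)$ over $L$ gives $mc(v^\ast)\le mc(w)$, and one must finally check $mc(u)=mc(v^\ast)$.

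I expect the main obstacle to be exactly that last equality $mc(u) = mc(v^\ast)$ — i.e., showing that the recursively built $u$ has the \emph{same} ``position of first maximally concurrent step'' as the optimally chosen $v^\ast$. The subtlety is that $u$ and $v^\ast$ agree in their first step $A_1$ but their suffixes $y^\ast$ and $y$ need only be congruent, not equal, so one must argue that replacing $y$ by the canonical $y^\ast$ cannot move the first maximally-concurrent position: concretely, $A_j$ is maximally concurrent in a step sequence iff no congruent suffix $B z$ has $|B|>|A_j|$, a property that depends only on the congruence class of the suffix $A_j\ldots A_k$, and $y^\ast\equiv y$ as well as all their corresponding suffixes being congruent (using Corollary~\ref{c2} / the cancellation properties). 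So $mc$ is an invariant of the sequence of congruence classes of suffixes, and since $u$ and $v^\ast$ induce the same such sequence of classes, $mc(u)=mc(v^\ast)$. Packaging this invariance cleanly — perhaps as a separate short lemma stating ``if $x'\equiv x$ and the $i$-th suffixes of $x$ and $x'$ are congruent for all $i$ then $mc(x')=mc(x)$'' — is the technical heart of the argument; everything else is the straightforward induction on weight sketched above.
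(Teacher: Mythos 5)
Your construction is the same as the paper's: the published proof builds $u=A_1\ldots A_k$ by iteratively choosing, at each stage, a length-minimal and then $mc$-minimal representative of the remaining suffix class and peeling off its first step, which is exactly the iterative form of your recursion on weight. The paper then simply asserts that the result is maximally concurrent, so your attempt to verify the two clauses of Definition~\ref{MC} goes beyond what is written there; the verification of clause (1) and of clause (2) for $i\ge 2$ is fine (though the detour through an arbitrary $v=B_1z$ is unnecessary --- $\mathit{length}(u)=1+\mathit{length}(y^*)\le 1+\mathit{length}(y)=\mathit{length}(v^*)$ already gives length-minimality of $u$).

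The one step that would fail as stated is your closing lemma, that $mc$ is an invariant of the sequence of congruence classes of suffixes and that $u$ and $v^*$ induce the same such sequence. They do not: $u=A_1y^*$ and $v^*=A_1y$ agree on the classes of the suffixes at positions $1$ and $2$, but from position $3$ onward the suffixes are obtained by deleting the first steps of $y^*$ and of $y$ respectively, and these are in general \emph{different} steps, so the cancellation properties (Corollary~\ref{c2}) do not make the resulting remainders congruent. Fortunately you need only $mc(u)\le mc(v^*)$, not equality, and this follows directly from the induction hypothesis. Whether the first step is maximally concurrent depends only on $A_1$ and on the class $[A_1y^*]=[A_1y]$, so either $mc(u)=mc(v^*)=1$, or both exceed $1$, in which case $mc(u)=1+mc(y^*)$ and $mc(v^*)=1+mc(y)$. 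Since $\mathit{length}(y^*)=\mathit{length}(y)$ (one inequality from clause (1) of the induction hypothesis for $y^*$, the other from length-minimality of $v^*$ in $\mathbf{u}$ applied to $u\in\mathbf{u}$) and $y\equiv y^*$, clause (2) of the induction hypothesis applied to $y^*$ with $i=1$ and $w=y$ gives $mc(y^*)\le mc(y)$, hence $mc(u)\le mc(v^*)\le mc(w)$ for every length-minimal $w\equiv u$, as required. With that substitution your argument is complete and coincides with the paper's.
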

\begin{proof}
Let $u_1\in {\bf u}$ be a step sequence such that for each $v$, $v\equiv u_1 \implies \len(u_1)\leq \len(v)$, and
$(v\equiv u_1 \;\wedge\;
\len(u_1)= \len(v)) \implies mc(u_1)\leq mc(v)$. Obviously such $u_1$ exists for every g-comtrace ${\bf u}$. Assume that $u_1=A_1w_1$ and $\len(u_1)=k$. Let $u_2$ be a step sequence satisfying
$u_2\equiv w_1$, $u_2\equiv v \implies \len(u_2)\leq \len(v)$, and
$(v\equiv u_2 \;\wedge\;
\len(u_2)= \len(v)) \implies mc(u_2)\leq mc(v)$. Assume that $u_2=A_2w_3$. We repeat this process $k-1$ times. Note that $u_k=A_k\in \E$. The step sequence $u=A_1\ldots A_k$ is maximally concurrent and $u\in {\bf u}$.
\END
\end{proof}

For the case of Example \ref{GMC1} the step sequence $\{b\}\{a,c\}$ is maximally concurrent and for the case of Example \ref{GMC2} the step sequence $\{a\}\{b,c,d,e\}$ is maximally concurrent. There may be more than one maximally concurrent step sequences in a g-comtrace. For example if $E=\{a,b\}$, $sim=ser=\emptyset$, $inl=\{(a,b),(b,a)\}$, then the g-comtrace $t=[\{a\}\{b\}]=\{\{a\}\{b\},\{b\}\{a\}\}$ and both
$\{a\}\{b\}$ and $\{b\}\{a\}$ are maximally concurrent.\\

Having a canonical (unique) representation is often useful in proving properties about g-comtraces since it allows us to \emph{uniquely identify} a g-comtrace. Furthermore, to be really useful in proofs, a canonical representation should be easy to construct and manipulate. For g-comtraces, it turns out that a natural way to get a canonical representation is: fix a total order on the alphabet, extend it to a lexicographical ordering on step sequences, and then simply choose the lexicographically least element.

\begin{definition}[lexicographical ordering]
Assume that we have a {\em total order} $<_E$ on $E$.
\begin{enumerate}
\item
We  define a \emph{step order} $\stor{<}$ on $\E$ as follows:
\begin{itemize}
 \item[] $A\stor{<}B \iffdf |A|>|B|\;\vee\;\bigl(|A|=|B|\;\wedge\; A\not= B
			\;\wedge\;  min_{<_E}(A\setminus B)<_E min_{<_E}(B\setminus A)\bigr),$
\end{itemize}
where $min_{<_E}(X)$ denotes the least element of the set $X\subseteq E$ w.r.t. $<_E$.
\item
Let $A_1\ldots A_n$ and $B_1\ldots B_m$ be two sequences in $\E^*$. We define a \emph{lexicographical order} $\lex{<}$ on step sequences in a natural way as the lexicographical order induced by $\stor{<}$,  i.e.,
\begin{itemize}
 \item[] $A_1\ldots A_n \lex{<} B_1\ldots B_m \iffdf \exists k>0\,\forall i<k .\bigl( A_i=B_i
			\wedge \left(A_k\stor{<} B_k\vee n<k\le m\right)\bigr).$\EOD
\end{itemize}
\end{enumerate}
\label{def:lexord}
\end{definition}

Directly from the above definition, it follows that $\stor{<}$ totally orders the set of possible steps $\E$ and $\lex{<}$ totally orders the set of possible step sequences $\E^*$.

\begin{example}
Assume that $a <_E b <_E c <_E d <_E e$. Then we have $\{a,b,c,e\} \stor{<} \{b,c,d\}$ since $\{a,b,c,e\}\setminus\{b,c,d\}=\{a\}$,
$\{b,c,d\}\setminus\{a,b,c,e\}=\{d\}$, and $a <_E d$. And
$\{a,c\}\{b,c\}\{d\}\{d,c\}$ $ \lex{<} \{a,c\}\{b\}\{c,d,e\}$ since
$|\{b,c\}|>|\{b\}|$.
\EOD
\end{example}

\begin{definition}[g-canonical step sequence]
A step sequence $x \in \E^*$ is {\em g-canonical} if for every step sequence  $y\in \E^*$, we have $\bigl(x\equiv y \wedge x\not=y\bigr) \implies x\lex{<}y$.\EOD
\label{g-can}
\end{definition}

In other words, $x$ is g-canonical if it is the least element in the g-comtrace $[x]$ with respect to the lexicographical ordering $\lex{<}$.

\begin{corollary}\begin{enumerate}
\item Each g-canonical step-sequence is in GMC-form.
\item For every step sequence $x \in \E^*$, there exists a unique g-canonical sequence $u\equiv x$. \END
\end{enumerate}
\label{col:g-can}
\end{corollary}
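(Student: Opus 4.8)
\textbf{Proof proposal for Corollary~\ref{col:g-can}.}
The plan is to prove the two parts separately, deriving both from results already established. For part~(2), existence and uniqueness of a g-canonical representative in each g-comtrace: fix a step sequence $x \in \E^*$ and consider its g-comtrace $[x]$. By Proposition~\ref{ExGMC} this class is nonempty (it contains $x$ itself, and also a GMC-form element), and since $\eqb$ preserves weight (Proposition~\ref{prop:gcprop1}(1)), every element of $[x]$ has the same weight as $x$; hence for each length $\ell$ there are only finitely many step sequences of that weight, so $[x]$ is a finite set. By Corollary~\ref{col:lex}(2), $\lex{<}$ is a total order on $\E^*$, and a total order on a finite nonempty set has a unique least element. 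Let $u$ be the $\lex{<}$-least element of $[x]$. Then $u \eqb x$, and for any $y \in \E^*$ with $u \eqb y$ and $u \neq y$ we have $y \in [u] = [x]$ and $y \neq u$, so $u \lex{<} y$ by minimality; thus $u$ is g-canonical. Uniqueness: if $u'$ is another g-canonical element of $[x]$, then applying the defining property of $u$ (as g-canonical) to $y = u'$ gives $u \lex{<} u'$ unless $u = u'$, and symmetrically $u' \lex{<} u$ unless $u' = u$; since $\lex{<}$ is a strict order, both strict inequalities cannot hold, so $u = u'$.

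For part~(1), that each g-canonical step sequence is in GMC-form: let $x = A_1 \ldots A_k$ be g-canonical and suppose, for contradiction, that it is not in GMC-form. Then there is an index $i$ and a factorization $B_i y_i \equiv A_i \ldots A_k$ with $|B_i| > |A_i|$. The idea is to splice this more-concurrent continuation back into $x$: set $x' = A_1 \ldots A_{i-1} B_i y_i$. By Proposition~\ref{prop:gcprop1}(5) (step subsequence cancellation, applied with the common prefix $A_1 \ldots A_{i-1}$ and empty suffix), $x' \eqb x$, and $x' \neq x$ since their $i$-th steps have different cardinalities. Because $|B_i| > |A_i| = |A_i|$, comparing $x$ and $x'$ step by step: they agree on $A_1, \ldots, A_{i-1}$, and at position $i$ we have $|B_i| > |A_i|$, so $B_i \stor{<} A_i$ by the first disjunct in the definition of $\stor{<}$; hence $x' \lex{<} x$. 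But g-canonicity of $x$ forces $x \lex{<} x'$ (since $x \neq x'$), contradicting that $\lex{<}$ is a strict order. Therefore $x$ is in GMC-form.

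The only subtlety — and the one place to be careful — is the length bookkeeping in part~(1): one must match the indexing of $\lex{<}$ to the fact that $x$ and $x'$ share the prefix $A_1 \ldots A_{i-1}$ and differ first at position $i$, and confirm that position $i$ is a genuine step in both (i.e., $B_i \neq \emptyset$, which holds since $|B_i| > |A_i| \geq 1$). Everything else reduces to the already-proved congruence properties (weight preservation and subsequence cancellation from Proposition~\ref{prop:gcprop1}) and to Corollary~\ref{col:lex} giving totality of the two orders; no new combinatorial work is needed. \END
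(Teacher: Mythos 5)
Your proof is correct and is exactly the argument the paper leaves implicit (the corollary is stated there without proof as an immediate consequence of Definition~\ref{g-can} and Corollary~\ref{col:lex}): finiteness of each congruence class via weight preservation plus totality of $\lex{<}$ gives a unique least element for part~(2), and the splicing argument $x' = A_1\ldots A_{i-1}B_iy_i \lex{<} x$ with $x'\eqb x$ contradicts g-canonicity for part~(1). Both steps check out, including the direction of the step order ($|B_i|>|A_i|$ giving $B_i\stor{<}A_i$) and the use of the congruence property to justify $x'\eqb x$.
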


All of the concepts and results discussed so far in this section hold also for general equational monoids derived from the step sequence monoid (like those considered in \cite{JL}). We will now show that for both comtraces and traces, the  GMC-form, MC-form and g-canonical form correspond to the canonical form discussed in \cite{Foa,DJKL,JK95,JL}.

\subsection{Canonical representations of comtraces}
First note that comtraces are just g-comtraces with an empty relation$inl$, so all definitions for g-comtraces also hold for comtraces.

Let $\theta=(E,sim,ser)$ be a comtrace alphabet (i.e. $inl=\emptyset$)  and $\E$ be
the set of all steps over $\theta$.
In principle, $(a,b)\in ser$ means that the sequence $\{a\}\{b\}$ can be replaced by the set $\{a,b\}$ (and vice versa). We start with the definition of a relation between steps that allows such replacement.

\begin{definition}[forward dependency]
Let $\mathbb{FD} \subseteq \E \times \E$ be a relation comprising all pairs of steps $(A,B)$ such there exists a step $C\in \E$ such that $$ C\subseteq B \;\wedge\; A\times C \subseteq ser \;\wedge\;
C\times (B\setminus C)\subseteq ser.$$
The relation $\mathbb{FD}$ is called {\em forward dependency} on steps. \EOD
\label{def:FD}
\end{definition}

Note that in this definition $C\in \E$ implies $C\neq\emptyset$, but $C=B$ is allowed. The next result explains the name ``forward dependency'' of $\mathbb{FD}$. If $(A,B)\in \mathbb{FD}$, then some elements from $B$ can be moved to $A$ and the outcome will still be equivalent to $AB$.

\begin{lemma}
$(A,B)\in \mathbb{FD} \iff \bigl(\exists C\in \PSB{B}.\; (A\cup C)(B\setminus C) \equiv AB\bigr) \vee\; A\cup B\equiv AB$.
\label{FD}
\end{lemma}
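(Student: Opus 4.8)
\textbf{Proof plan for Lemma~\ref{FD}.} The plan is to prove both implications by unwinding the definitions of $\mathbb{FD}$ and of comtrace congruence, using the explicit characterization of $\equiv_{ser}$ from \propref{comequi} (i.e., that $\equiv$ is generated by the one-step relation $\eqa$, where $t\eqa u$ means $t=wAz$, $u=wBCz$ with $B\cup C=A$, $B\times C\subseteq ser$). The key observation to keep in mind throughout is that $ser$ is irreflexive, so in any equation $A=BC$ the sets $B$ and $C$ are disjoint; this is what lets us freely split and merge steps.

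\emph{($\Rightarrow$)} Suppose $(A,B)\in\mathbb{FD}$, witnessed by $C\in\E$ with $C\subseteq B$, $A\times C\subseteq ser$ and $C\times(B\setminus C)\subseteq ser$. First I would dispose of the case $C=B$: then $A\times B\subseteq ser$, so $A\cap B=\emptyset$ and $A\cup B\in\E$ (every pair in $A\cup B$ lies in $sim$ since $ser\subseteq sim$, and pairs inside $A$ or inside $B$ already lie in $sim$); hence $A\cup B\eqa (A)(B)=AB$, giving the right disjunct $A\cup B\equiv AB$. Now assume $C\subsetneq B$, so $B\setminus C\neq\emptyset$, i.e. $C\in\PSB{B}$ with $B\setminus C\neq\emptyset$. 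From $C\times(B\setminus C)\subseteq ser$ and $C\cup(B\setminus C)=B\in\E$ we get $B\eqa C(B\setminus C)$, hence $AB\equiv A\,C\,(B\setminus C)$. From $A\times C\subseteq ser$ we get $A\cap C=\emptyset$ and $A\cup C\in\E$ (same clique argument as above), hence $(A\cup C)\eqa A\,C$, so $A\,C\,(B\setminus C)\equiv (A\cup C)(B\setminus C)$. Chaining these two equivalences yields $(A\cup C)(B\setminus C)\equiv AB$ with $C\in\PSB{B}$, which is the left disjunct.

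\emph{($\Leftarrow$)} Here I expect the main obstacle, because I must go from a \emph{congruence} statement (a chain of $\eqa$-steps of arbitrary length) back to the purely local, one-rewrite condition defining $\mathbb{FD}$. The point is that both $AB$ and the target $(A\cup C)(B\setminus C)$ (resp. $A\cup B$) are step sequences of length $2$ (resp. $1$), and by \propref{p3}(1) every step sequence congruent to them has weight $|A|+|B|$; so in fact I want to argue that the congruence can be realized by a \emph{single} $\eqa$-step. For the disjunct $A\cup B\equiv AB$: any step sequence congruent to the length-one sequence $A\cup B$ has either length one (only $A\cup B$ itself, since a congruence on step sequences cannot shorten below... actually length can only stay $\geq$; here $AB$ has length two) — more carefully, I would invoke that $\eqa$ changes length by at most one and is ``monotone'' in the sense that from $A\cup B$ the only way to reach a length-two sequence is one split $A\cup B\eqa B_1 B_2$ with $B_1\cup B_2=A\cup B$, $B_1\times B_2\subseteq ser$; matching $B_1B_2=AB$ forces $B_1=A$, $B_2=B$, so $A\times B\subseteq ser$, and taking $C=B$ we are done. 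For the disjunct $(A\cup C)(B\setminus C)\equiv AB$ with $\emptyset\neq C\subseteq B$: here I would use \propref{p3}(2) (event-preservation) to see that the first step of any step sequence in the comtrace $[AB]$ is a subset of $A\cup B$ whose intersection with ``later'' occurrences is controlled, and that the two length-two representatives $(A\cup C)(B\setminus C)$ and $AB$ being congruent means one is obtained from the other by moving a nonempty set $C$ of events from the second step to the first (this is exactly a single application of $\eqa^{-1}$ then $\eqa$, or can be extracted from analysing the minimal rewriting chain between two length-two representatives of the same comtrace). Unwinding what that single move requires — namely $A\times C\subseteq ser$ (to merge $C$ into $A$) and $C\times(B\setminus C)\subseteq ser$ (to have split $C$ off the front of $B$) — gives precisely the witness for $(A,B)\in\mathbb{FD}$.

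The one technical lemma I would want to isolate and prove first (or cite, if it appears among the omitted algebraic results) is: \emph{if $x\equiv y$ and $x,y$ each have length $\leq 2$, then $y$ is obtained from $x$ by at most one application of $\eqa\cup\eqa^{-1}$.} This ``short step sequences have short derivations'' fact is what turns the global congruence hypothesis into the local forward-dependency data; everything else is routine bookkeeping with disjointness of $B,C$ and the clique condition $ser\subseteq sim$.
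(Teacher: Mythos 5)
Your forward direction is exactly the paper's argument. The gap is in the backward direction. The technical lemma you propose to isolate --- \emph{if $x\equiv y$ and $x,y$ each have length $\le 2$, then $y$ is obtained from $x$ by at most one application of $\eqa\cup\eqa^{-1}$} --- is false. Take $E=\{a,b,c\}$, $sim=\{(a,b),(b,a),(b,c),(c,b)\}$, $ser=\{(a,b),(b,c),(c,b)\}$. Then $\{a\}\{b,c\}\equiv\{a\}\{b\}\{c\}\equiv\{a,b\}\{c\}$, so the two length-two sequences $\{a\}\{b,c\}$ and $\{a,b\}\{c\}$ are congruent, but the only $\eqa\cup\eqa^{-1}$ move available from $\{a,b\}\{c\}$ produces the length-three sequence $\{a\}\{b\}\{c\}$ (the merge $\{a,b\}\{c\}\to\{a,b,c\}$ is blocked since $(a,c)\notin ser$). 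So any derivation between them has length at least two and passes through a longer intermediate; worse, for pairs such as $\{a,b\}\{c,d\}$ versus $\{a,c\}\{b,d\}$ a derivation may need four or more moves. Hence ``short sequences have short derivations'' cannot be the bridge from the congruence hypothesis to the local $ser$ conditions, and the fallback you mention (analysing a minimal rewriting chain between two length-two representatives) is precisely the combinatorial normalization problem the lemma is trying to avoid; you have not carried it out, and it is not routine. Your treatment of the disjunct $A\cup B\equiv AB$ has the same weakness: ruling out multi-step derivations from $A\cup B$ to $AB$ (split into three pieces, re-merge differently, \ldots) already requires the very fact you are trying to prove.

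The paper's proof of ($\Leftarrow$) sidesteps derivation analysis entirely by using the projection rule, \propref{p3}(6). From $(A\cup C)(B\setminus C)\equiv AB$ one projects onto each two-element set: for $a\in A$, $c\in C$ one gets $\{a,c\}=\pi_{\{a,c\}}\bigl((A\cup C)(B\setminus C)\bigr)\equiv\pi_{\{a,c\}}(AB)=\{a\}\{c\}$, and $\{a,c\}\equiv\{a\}\{c\}$ forces $(a,c)\in ser$ because the congruence class of a two-event step is completely transparent; similarly $\{c\}\{b\}\equiv\{b,c\}$ for $c\in C$, $b\in B\setminus C$ gives $(c,b)\in ser$. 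This yields $A\times C\subseteq ser$ and $C\times(B\setminus C)\subseteq ser$ directly, with no claim about the shape of the derivation. If you want to salvage your plan, you would in effect have to prove the projection rule first --- at which point you should just use it as the paper does.
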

\begin{proof}
($\Rightarrow$) If $C=B$ then $A\cup B\; \approx\; AB$ which implies
$A\cup B \equiv AB$. If $C\subset B$ and $C\neq\emptyset$ then we have
$(A\cup C)(B\setminus C)\approx AC(B\setminus C)\approx AB$, i.e. $(A\cup C)(B\setminus C) \equiv AB$.

($\Leftarrow$) Assume $A\cup B \equiv AB$. This means $A\cup B \in \E$ and consequently $A\cap B=\emptyset$, $A\times B\subseteq ser$. Let $a\in A$, $b\in B$.
By Proposition \ref{prop:p3}(6), $\{a,b\}=\pi_{\{a,b\}}(A\cup B)\equiv \pi_{\{a,b\}}(AB)=\{a\}\{b\}$. But $\{a,b\}\equiv\{a\}\{b\}$ means $(a,b)\in ser$.
Therefore $A\times B \subseteq ser$, i.e. $(A,B)\in \mathbb{FD}$.\\
Assume $C\subset B$, $C\neq\emptyset$ and $(A\cup C)(B\setminus C)\equiv AB$.
This implies $A\cup C \in \E$ and $A\cap C=\emptyset$. Let $a\in A$ and $c\in C$. By Proposition \ref{prop:p3}(6), $\{a,c\}=\pi_{\{a,c\}}(A\cup C)(B\setminus C)\equiv \pi_{\{a,c\}}(AB)=\{a\}\{c\}$. But $\{a,c\}\equiv\{a\}\{c\}$ means $(a,c)\in ser$. Hence $A\times C \subseteq ser$. Let $b\in B\setminus C$ and $c\in C$.
By Proposition \ref{prop:p3}(6), $\{c\}\{b\}=\pi_{\{b,c\}}(A\cup C)(B\setminus C)\equiv \pi_{\{b,c\}}(AB)=\{b,c\}$. Thus
$\{c\}\{b\}\equiv \{b,c\}$, which means $(c,b)\in ser$, i.e. $C\times(B\setminus C)\subseteq ser$. Hence $(A,B)\in \mathbb{FD}$.\END
\end{proof}

We will now recall the definition of a canonical step sequence for comtraces.

\begin{definition}[comtrace canonical step sequence \cite{JK95}]
\begin{sloppypar}
A step sequence $u=A_1\ldots A_k$ is {\em canonical} if we have
$(A_i,A_{i+1})\notin \mathbb{FD}$ for all $i$, $1\le i < k$.\EOD
\end{sloppypar}
\label{Can}
\end{definition}

The next results show that a canonical step sequence for comtraces is in fact ``greedy".

\begin{lemma}
For each non-empty canonical step sequence $u=A_1\ldots A_k$, we have $$A_1= \bset{a
\mid \exists w\in[u].\ w = C_1\ldots C_m \wedge a \in C_1}.$$
\label{l2}
\end{lemma}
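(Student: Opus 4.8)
The claim is that for a canonical step sequence $u=A_1\ldots A_k$, the first step $A_1$ collects exactly those events that can appear in the first step of \emph{some} step sequence equivalent to $u$. Let me denote $F = \bigl\{a \mid \exists w\in[u].\ w=C_1\ldots C_m \wedge a\in C_1\bigr\}$; the goal is $A_1 = F$.

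\textbf{Plan.} The inclusion $A_1\subseteq F$ is immediate since $u$ itself witnesses membership: every $a\in A_1$ lies in the first step $A_1$ of the step sequence $u\in[u]$. So the work is in showing $F\subseteq A_1$. I would take an arbitrary $a\in F$, so there is $w=C_1\ldots C_m\equiv u$ with $a\in C_1$, and argue $a\in A_1$.

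\textbf{Key steps for $F\subseteq A_1$.} First, by Proposition~\ref{prop:p3}(2) (event-preserving), $a$ occurs exactly once in $u$, say $a\in A_j$; I must show $j=1$. Suppose for contradiction $j\ge 2$. The intuition is that canonicity of $u$ forbids moving $a$ leftward past the preceding steps, yet $w$ has already moved (an occurrence of) $a$ all the way to the front, contradiction. To make this precise I would use the position-invariance machinery: consider any $b\in A_{j-1}$. In $w$, the occurrence of $a$ is in step $C_1$ and the occurrence of $b$ is in some step $C_\ell$ with $\ell\ge 1$, so $pos_w(\text{occ of }a)\le pos_w(\text{occ of }b)$, i.e. the occurrence of $a$ is not later than that of $b$ in $w$. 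By Proposition~\ref{prop:invsubs} (applied with $\calf R$ being $\le$, after projecting or cancelling down to the two relevant occurrences, noting $\set{\alpha,\beta}$ fixed), the relation ``$a$'s occurrence is at position $\le$ $b$'s occurrence'' would then have to hold in \emph{every} element of $[u]$ — but in $u$ itself, $a\in A_j$ and $b\in A_{j-1}$, so $pos_u(\text{occ of }a)=j>j-1=pos_u(\text{occ of }b)$, a contradiction. Hence no such $b$ can exist unless $A_{j-1}$ is empty, which is impossible, so $j=1$.

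Actually, to use Proposition~\ref{prop:invsubs} cleanly I need the event occurrences rather than bare events, but since each of $a,b$ occurs exactly once in (any member of) $[u]$ this is harmless — the enumerated names of these single occurrences are stable across $[u]$. An alternative, perhaps cleaner route: directly show that if $A_1\ldots A_k$ is canonical and $a\in A_j$ with $j\ge2$, then no equivalent step sequence can have $a$ in its first step, by using right/left cancellation (Proposition~\ref{prop:p3}(3)–(4)) to reduce to analysing the two-step prefix situation and invoking Lemma~\ref{FD}: roughly, if $a$ could be pulled into the first step of some equivalent sequence, then iterating we'd get $(A_1,A_2\cup\ldots)$-type forward dependencies contradicting $(A_1,A_2)\notin\mathbb{FD}$. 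I expect the position-invariance argument via Proposition~\ref{prop:invsubs} to be the slicker of the two.

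\textbf{Main obstacle.} The delicate point is the bookkeeping of event \emph{occurrences} versus events: Proposition~\ref{prop:invsubs} is stated for $\alpha,\beta\in\Sigma_u$ (enumerated occurrences), so I must first observe that $a$ and every $b\in A_{j-1}$ occur uniquely throughout $[u]$ (by event-preservation), identify their canonical occurrence names, and only then apply the position-invariance lemma — and I must double-check the side condition that the third parameter $\gamma$ (or the projection set $S$) can be chosen so that $\set{\alpha,\beta}$ is retained. Everything else is a short contradiction argument.
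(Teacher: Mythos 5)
There is a genuine gap in your preferred route. Proposition~\ref{prop:invsubs} transfers a positional relation that holds for \emph{all} elements of a cancelled or projected comtrace to all elements of the original comtrace; it cannot transfer a relation observed in a \emph{single} witness $w\in[u]$ to every element of $[u]$. In your argument you only establish $pos_w(\alpha)\le pos_w(\beta)$ for the one step sequence $w$ that puts $a$ in its first step, and then invoke \ref{prop:invsubs} to conclude the same inequality for all of $[u]$ — that inference is exactly backwards, and the pattern is false in general (with $ser=sim=\set{(a,b),(b,a)}$ the comtrace $[\set{a}\set{b}]$ contains both $\set{a}\set{b}$ and $\set{b}\set{a}$, so a positional $\le$ holding in one member need not hold in another). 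The telltale sign is that your main contradiction argument never actually uses the hypothesis that $u$ is canonical, i.e.\ that $(A_{j-1},A_j)\notin\mathbb{FD}$; without that hypothesis the conclusion is simply false, so no argument omitting it can succeed. A secondary slip: Proposition~\ref{prop:p3}(2) gives $|u|_a=|v|_a$, not that $a$ occurs exactly once; events may repeat, and what you would need is that the occurrence of $a$ sitting in $C_1$ is the first occurrence $a^{(1)}$, which is the same occurrence as the one in $A_j$ (using that $a\notin A_i$ for $i<j$).

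Your deprioritized ``alternative route'' is in fact the paper's proof, and it is the one that works: take $a\in A_j\setminus A_1$ with $j$ minimal, use left and right cancellation (Corollary~\ref{c2}) to cut both $u$ and the witness $v=Bx$ down to the two-step window $u'=A_{j-1}A_j$ and a corresponding $v'=B'x'$ with $a\in B'$ and $u'\eqb v'$; canonicity of $A_{j-1}A_j$ then yields either some $c\in A_{j-1}$ with $(c,a)\notin ser$ or some $b\in A_j$ with $(a,b)\notin ser$ (otherwise $C=\set{a}$ would witness $(A_{j-1},A_j)\in\mathbb{FD}$), and projecting onto $\set{a,c}$ or $\set{a,b,d}$ for $d\in A_{j-1}$ produces, via the projection rule \ref{prop:p3}(6), two step sequences that are visibly inequivalent — the contradiction. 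If you carry that sketch out in full (the case analysis over the possible shapes of the projections is the bulk of the work), you recover the paper's argument; as written, the proposal's main line does not close.
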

\begin{proof}
Let $A= \set{a \mid \exists w\in[u].\ w = C_1\ldots C_m \wedge a \in C_1}$.
Since $u\in [u]$, $A_1 \subseteq A$.
We need to prove that $A\subseteq A_1$. Definitely $A=A_1$ if $k=1$, so assume $k>1$.
Suppose that $a\in A\setminus A_1$, $a\in A_j$, $1<j\leq k$ and $a\notin A_i$ for $i<j$.
Since $a\in A$, there is $v=Bx \in [u]$ such that $a\in B$.
Note that $A_{j-1}A_j$ is also {\em canonical} and $u'=A_{j-1}A_j = (u\RC (A_{j+1}\ldots A_k))\LC (A_1\ldots A_{j-2})$.
Let $v'= (v\RC (A_{j+1}\ldots A_k))\LC (A_1\ldots A_{j-2})$. We have $v'=B'x'$ where $a\in B'$.
By Corollary \ref{c2}, $u'\eqb v'$.
Since $u'=A_{j-1}A_j$ is canonical then $\exists c\in A_{j-1}.\; (c,a)\notin ser$ or
$\exists b\in A_j.\; (a,b)\notin ser$.
\begin{itemize}
 \item For the former case: $\pi_{\{a,c\}}(u')=\{c\}\{a\}$ (if $c\notin A_j$)
or $\pi_{\{a,c\}}(u')=\{c\}\{a,c\}$ (if $c\in A_j$). If $\pi_{\{a,c\}}(u')=\{c\}\{a\}$ then
$\pi_{\{a,c\}}(v')$ equals either $\{a,c\}$ (if $c\in B'$) or $\{a\}\{c\}$ (if $c\notin B'$), i.e.,
 in both cases $\pi_{\{a,c\}}(u')\not\eqb \pi_{\{a,c\}}(v')$, contradicting \propref{p3}(6).
If $\pi_{\{a,c\}}(u')=\{c\}\{a,c\}$ then
$\pi_{\{a,c\}}(v')$ equals either $\{a,c\}\{c\}$ (if $c\in B'$) or $\{a\}\{c\}\{c\}$ (if $c\notin B'$).
However in both cases $\pi_{\{a,c\}}(u')\not\eqb \pi_{\{a,c\}}(v')$, contradicting \propref{p3}(6).
For the latter case, let $d\in A_{j-1}$. Then $\pi_{\{a,b,d\}}(u')=\{d\}\{a,b\}$ (if $d\notin A_j$),
or $\pi_{\{a,b,d\}}(u')=\{d\}\{a,b,d\}$ (if $d\in A_j$). If $\pi_{\{a,b,d\}}(u')=\{d\}\{a,b\}$ then
$\pi_{\{a,b,d\}}(v')$ is one of the following $\{a,b,d\}$, $\{a,b\}\{d\}$, $\{a,d\}\{b\}$,
$\{a\}\{b\}\{d\}$ or $\{a\}\{d\}\{b\}$, and in either case  $\pi_{\{a,b,d\}}(u')\not\eqb \pi_{\{a,b,d\}}(v')$,
again contradicting \propref{p3}(6).
 \item If $\pi_{\{a,b,d\}}(u')=\{d\}\{a,b,d\}$, then we know
$\pi_{\{a,b,d\}}(v')$ is one of the following $\{a,b,d\}\{d\}$, $\{a,b\}\{d\}\{d\}$,
$\{a,d\}\{b,d\}$,
$\{a,d\}\{b\}\{d\}$, $\{a,d\}\{d\}\{b\}$,
$\{a\}\{b\}\{d\}\{d\}$, $\{a\}\{d\}\{b\}\{d\}$, or $\{a\}\{d\}\{d\}\{b\}$. However
in any of these cases we have $\pi_{\{a,b,d\}}(u')\not\eqb \pi_{\{a,b,d\}}(v')$,
contradicting \propref{p3}(6) as well.\END
\end{itemize}
\end{proof}

We will now show that for comtraces the canonical form from Definition \ref{Can} and GMC-form are equivalent, and that each comtrace has a unique canonical representation.

\begin{theorem}
A step sequence $u$ is in GMC-form if and only if it is canonical.
\label{GMC-can}
\end{theorem}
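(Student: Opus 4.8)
The plan is to prove the two implications separately, using Lemma~\ref{l2} as the crucial tool that connects the ``greedy'' structure of canonical step sequences to the maximality of the first step in GMC-form.

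First I would establish the easier direction: every canonical step sequence $u = A_1 \ldots A_k$ is in GMC-form. By Definition~\ref{GMC}, I must show that whenever $B_i y_i \equiv A_i \ldots A_k$ for some step $B_i$ and step sequence $y_i$, then $|B_i| \le |A_i|$. The key observation is that if $u$ is canonical, then so is every suffix $A_i \ldots A_k$, since the condition $(A_j, A_{j+1}) \notin \mathbb{FD}$ for all $j$ is preserved under taking suffixes. Applying Lemma~\ref{l2} to this canonical suffix, the first step $A_i$ equals exactly the set $\{a \mid \exists w \in [A_i \ldots A_k].\ w = C_1 \ldots C_m \wedge a \in C_1\}$. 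Since $B_i y_i \equiv A_i \ldots A_k$, every element of $B_i$ appears in the first step of some element of $[A_i \ldots A_k]$, hence $B_i \subseteq A_i$, and in particular $|B_i| \le |A_i|$. This gives the GMC condition.

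For the converse, I would argue the contrapositive: if $u = A_1 \ldots A_k$ is not canonical, then it is not in GMC-form. If $u$ is not canonical, there is some index $i$ with $(A_i, A_{i+1}) \in \mathbb{FD}$. By Lemma~\ref{FD}, either $A_i \cup A_{i+1} \equiv A_i A_{i+1}$, or there is a nonempty $C \subsetneq A_{i+1}$ with $(A_i \cup C)(A_{i+1} \setminus C) \equiv A_i A_{i+1}$. In the first case, $A_i \cup A_{i+1}$ is a strictly larger step than $A_i$ (since $A_{i+1} \neq \emptyset$) and $(A_i \cup A_{i+1}) A_{i+2} \ldots A_k \equiv A_i A_{i+1} \ldots A_k$, witnessing a step $B_i = A_i \cup A_{i+1}$ with $|B_i| > |A_i|$, so the GMC condition fails at position $i$. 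In the second case, $A_i \cup C$ is a step with $|A_i \cup C| = |A_i| + |C| > |A_i|$ (the union is disjoint since $A_i \cup C \in \E$ forces $A_i \cap C = \emptyset$), and $(A_i \cup C)(A_{i+1}\setminus C) A_{i+2}\ldots A_k \equiv A_i A_{i+1} \ldots A_k$, so again the GMC condition fails at position $i$. Either way $u$ is not in GMC-form.

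I expect the main obstacle to be handling the bookkeeping in the converse direction carefully, in particular making sure that the witness step sequence $B_i y_i$ is genuinely a valid step sequence over $\E$ (i.e., that $A_i \cup C$ and $A_i \cup A_{i+1}$ really are cliques of $(E, sim)$) — but this is automatic, since $A_i \cup C \subseteq A_i \cup A_{i+1}$ and the latter being $\equiv$-related to $A_i A_{i+1}$ via an equation of $\It{EQ}_1$ already forces it to be an element of $\E$, and similarly $A_i \cup C \in \E$ follows from $(A_i \cup C)(A_{i+1}\setminus C)$ being obtained by an $\eqa_1$-step. The forward direction's only subtlety is noticing that suffixes of canonical step sequences are canonical, which is immediate from the definition. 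A remark worth including afterward is that, combined with Corollary~\ref{col:g-can}(1) and the uniqueness of the comtrace canonical step sequence from \cite{JK95}, this shows GMC-form, g-canonical form and canonical form all coincide for comtraces, so each comtrace has a unique GMC representative.
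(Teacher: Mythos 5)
Your proof is correct and follows essentially the same route as the paper's: Lemma~\ref{l2} applied to each (canonical) suffix for the direction canonical $\Rightarrow$ GMC, and Lemma~\ref{FD} to manufacture an equivalent step sequence whose first step is strictly larger for the contrapositive of the converse. The only differences are cosmetic --- you take an arbitrary index with $(A_i,A_{i+1})\in \mathbb{FD}$ rather than the smallest one, and you spell out the disjointness facts that make $|A_i\cup C|>|A_i|$ --- neither of which changes the argument.
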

\begin{proof}
($\Leftarrow$) Suppose that $u=A_1\ldots A_k$ is canonical. By Lemma \ref{l2} we have that for each $B_1y_1\equiv A_1\ldots A_k$, $|B_1|\leq |A_1|$.
Since each $A_i\ldots A_k$ is also canonical, $A_2\ldots A_k$ is canonical so by Lemma \ref{l2} again we have that for each $B_2y_2\equiv A_2\ldots A_k$, $|B_2|\leq |A_2|$. And so on, i.e. $u=A_1\ldots A_k$ is in GMC-form.

($\Rightarrow$) Suppose that $u=A_1\ldots A_k$ is not canonical, and $j$ is the smallest number such that $(A_j,A_{j+1})\in \mathbb{FD}$. Hence
$A_1\ldots A_{j-1}$ is canonical, and, by ($\Leftarrow$) of this theorem, in GMC-form.
By Lemma \ref{FD}, either there is a non empty $C\subset A_{j+1}$ such that $(A_j\cup C)(A_{j+1}\setminus B) \equiv A_jA_{j+1}$, or $A_j\cup A_{j+1}\equiv A_jA_{j+1}$. In the first case since $C\neq\emptyset$, $|A_j\cup C|>|A_j|$; in the second case $|A_j\cup A_{j+1}|>|A_j|$, so
$A_j\ldots A_k$ is not in GMC-form, which means
$u=A_1\ldots A_k$ is not in GMC-form either. \END
\end{proof}

\begin{theorem}[implicit in \cite{JK95}]
For each step sequence $v$ there is a unique canonical step sequence $u$ such that $v\equiv u$.
\label{t4}
\end{theorem}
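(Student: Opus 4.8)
The plan is to establish existence and uniqueness separately, in both cases leaning on the GMC characterization of Theorem \ref{GMC-can} (which identifies canonical with GMC-form) and on Lemma \ref{l2} (which pins down the first step of any canonical sequence). For existence, I would argue by induction on the step sequence weight $\wei(v)$. Given $v$, set $A_1 = \set{a \mid \exists w \in [v].\ w = C_1\ldots C_m \wedge a \in C_1}$, the ``greedy'' first step. One checks $A_1 \in \E$: any two $a,b \in A_1$ are witnessed by step sequences starting with $a$, resp. $b$, in the same comtrace, so projecting onto $\set{a,b}$ via \propref{p3}(6) shows $\set{a}$ and $\set{b}$ can be reordered, forcing $(a,b)\in sim$ (hence $\set{a,b}$ is a clique). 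Actually one shows $a,b$ both occur in the first step of a common representative, so $\pi_{\set{a,b}}$ yields $\set{a,b}$ as a prefix, giving $(a,b)\in sim$. Then there is a representative of the form $A_1 w$ with $A_1 w \equiv v$ (greedily pull $A_1$ to the front using the $\RC$/$\LC$ and $\mathbb{FD}$ machinery of Lemma \ref{FD}); apply the induction hypothesis to $w$ to get a canonical $u' = A_2\ldots A_k$ with $u' \equiv w$, and check $(A_1,A_2)\notin \mathbb{FD}$ — otherwise by Lemma \ref{FD} some elements of $A_2$ could be absorbed into $A_1$, contradicting the maximality of $A_1$ as described by its defining formula. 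Then $u = A_1 A_2 \ldots A_k$ is canonical and $u \equiv v$.

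For uniqueness, suppose $u = A_1\ldots A_k$ and $u' = A'_1\ldots A'_m$ are both canonical with $u \equiv u'$. By Lemma \ref{l2}, $A_1 = \set{a \mid \exists w\in[u].\ w = C_1\ldots C_n \wedge a \in C_1}$ and the same formula computes $A'_1$ from $[u']$; since $[u] = [u']$, we get $A_1 = A'_1$. Now apply left cancellation: by Corollary \ref{c2}(2), $u \LC A_1 \equiv u' \LC A'_1$, i.e. $A_2\ldots A_k \equiv A'_2\ldots A'_m$, and both of these are canonical (a suffix of a canonical step sequence is canonical, directly from Definition \ref{Can}). Inducting on $k$ (using $\wei$ as the induction parameter, which strictly decreases under $\LC A_1$ since $A_1\neq\emptyset$), we conclude $k = m$ and $A_i = A'_i$ for all $i$, hence $u = u'$.

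The main obstacle I anticipate is the existence half, specifically the careful verification that the greedily-defined set $A_1$ is genuinely a step (a $sim$-clique) and that one can actually reconstruct a representative beginning with $A_1$ — this requires combining several of the earlier algebraic lemmas (projection \propref{p3}(6), the cancellation corollaries, and the $\mathbb{FD}$-characterization Lemma \ref{FD}) and is the place where the asymmetry of $ser$ makes the bookkeeping delicate. The uniqueness half, by contrast, is essentially immediate once Lemma \ref{l2} and Corollary \ref{c2} are in hand. It is also worth noting that existence can be obtained more cheaply: Proposition \ref{ExGMC} already furnishes a GMC-form representative, which by Theorem \ref{GMC-can} is canonical; so in a streamlined write-up the entire content of the theorem reduces to the uniqueness argument above, with existence cited from \propref{ExGMC} and \theoref{GMC-can}.
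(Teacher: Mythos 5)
Your proposal is correct and, in its streamlined form, coincides with the paper's own proof: existence is cited from Proposition~\ref{ExGMC} together with Theorem~\ref{GMC-can}, and uniqueness proceeds by induction using Lemma~\ref{l2} to equate the first steps and left cancellation (Corollary~\ref{c2}) to pass to the canonical tails. The longer hand-built existence argument you sketch first is unnecessary, as you yourself observe.
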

\begin{proof}
The existence follows from Proposition \ref{ExGMC} and Theorem \ref{GMC-can}. We only need to show uniqueness. Suppose that $u=A_1\ldots A_k$ and $v=B_1\ldots B_m$ are both canonical step sequences and $u\equiv v$. By induction on $k=|u|$ we will show that $u=v$.  By Lemma \ref{l2}, we have $B_1 = A_1$. If $k=1$, this ends the proof. Otherwise, let $u' = A_2 \ldots A_k$ and $w'= B_2 \ldots B_m$ and $u',v'$ are both canonical step sequences of $[u']$. Since $|u'|<|u|$, by the induction hypothesis, we obtain $A_i=B_i$ for $i=2,\ldots,k$ and $k=m$.\END
\end{proof}

The result of Theorem \ref{t4} was not stated explicitly in \cite{JK95}, but it can be derived from the results of Propositions 3.1, 4.8 and 4.9 of \cite{JK95}. However Propositions 3.1 and 4.8 of \cite{JK95} involve the concepts of partial orders and stratified order structures, while the proof of Theorem \ref{t4} uses only the algebraic properties of step sequences and comtraces. \\


Immediately from Theorems \ref{GMC-can} and \ref{t4} we get the following result.

\begin{corollary}
A step sequence $u$ is canonical if and only if it is g-canonical. \END
\label{can-gcan}
\end{corollary}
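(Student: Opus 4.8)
The statement to prove is Corollary~\ref{can-gcan}: a step sequence $u$ is canonical if and only if it is g-canonical. The plan is to show both implications go through a common intermediate: being the unique canonical element of a comtrace.

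First I would prove the easy direction, that g-canonical implies canonical. By Corollary~\ref{col:g-can}(1), every g-canonical step sequence is in GMC-form, and by Theorem~\ref{GMC-can} every step sequence in GMC-form is canonical. So this direction is immediate by chaining the two already-established results.

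For the converse, suppose $u$ is canonical. By Theorem~\ref{GMC-can}, $u$ is in GMC-form. I want to conclude $u$ is g-canonical, i.e., $u$ is the $\lex{<}$-least element of its congruence class $[u]$. Let $x$ be the unique g-canonical element of $[u]$, which exists by Corollary~\ref{col:g-can}(2). By the easy direction just proved (or directly by Corollary~\ref{col:g-can}(1) and Theorem~\ref{GMC-can}), $x$ is canonical. Now both $u$ and $x$ are canonical step sequences lying in the same comtrace $[u]=[x]$, so by the uniqueness part of Theorem~\ref{t4} we get $u=x$. Hence $u$ is g-canonical.

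The main obstacle here is essentially nonexistent once Theorems~\ref{GMC-can} and \ref{t4} and Corollary~\ref{col:g-can} are in hand — the whole content is the observation that ``canonical'' and ``g-canonical'' each single out a unique representative of the comtrace, and uniqueness forces them to coincide. The only subtlety worth stating explicitly is that the specialization of all g-comtrace notions (GMC-form, $\lex{<}$, g-canonical) to the case $inl=\emptyset$ is exactly the comtrace setting, which was already noted at the start of the subsection on canonical representations of comtraces; so Corollary~\ref{col:g-can}, stated for g-comtraces, applies verbatim to comtraces. I would write the proof in two short paragraphs mirroring the two implications, citing Theorem~\ref{GMC-can}, Theorem~\ref{t4}, and Corollary~\ref{col:g-can} as the only ingredients.

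\begin{proof}
Recall that comtraces are g-comtraces with $inl=\emptyset$, so Corollary~\ref{col:g-can} applies to the comtrace $[u]$.

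($\Leftarrow$) If $u$ is g-canonical, then by Corollary~\ref{col:g-can}(1) it is in GMC-form, and hence by Theorem~\ref{GMC-can} it is canonical.

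($\Rightarrow$) Suppose $u$ is canonical. Let $x$ be the g-canonical step sequence with $x\equiv u$, which exists by Corollary~\ref{col:g-can}(2). By the implication just proved, $x$ is canonical. Since $u$ and $x$ are both canonical and $u\equiv x$, Theorem~\ref{t4} gives $u=x$, so $u$ is g-canonical.
\END
\end{proof}
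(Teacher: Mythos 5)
Your proof is correct and follows essentially the same route the paper intends: the paper derives the corollary ``immediately from Theorems~\ref{GMC-can} and~\ref{t4}'', and your argument is exactly the spelled-out version of that derivation (g-canonical $\Rightarrow$ GMC-form $\Rightarrow$ canonical, and conversely uniqueness of the canonical representative from Theorem~\ref{t4} forces the canonical and g-canonical representatives of $[u]$ to coincide). The explicit appeal to Corollary~\ref{col:g-can} for existence of the g-canonical representative is a reasonable addition, not a departure.
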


It turns out that for comtraces the canonical representation and MC-representation are also equivalent.

\begin{lemma}
If a step sequence $u$ is canonical and $u\equiv v$, then $\mbox{length}(u)\leq\mbox{length}(v)$.
\label{min}
\end{lemma}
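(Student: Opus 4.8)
The plan is to argue by induction on $\mbox{length}(u) = k$, using Lemma~\ref{l2} as the engine that pins down the first step of any step sequence equivalent to a canonical one. First I would handle the base case: if $k = 0$ then $u = \lambda$, and since $\equiv$ is event-preserving and weight-preserving (Proposition~\ref{prop:p3}(1)--(2)), any $v \equiv u$ must also be $\lambda$, so $\mbox{length}(v) = 0$; if $k = 1$, then $u = A_1$ is a single step, and since any $v \equiv u$ is nonempty, $\mbox{length}(v) \geq 1 = \mbox{length}(u)$.

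For the inductive step, suppose $u = A_1\ldots A_k$ is canonical with $k \geq 2$ and $v \equiv u$, say $v = B_1\ldots B_m$. Since $v$ is nonempty, write $v = B_1 x$ where $B_1 \in \E$. By Lemma~\ref{l2}, $A_1 = \{a \mid \exists w \in [u].\ w = C_1\ldots C_p \wedge a \in C_1\}$, so in particular $B_1 \subseteq A_1$. Now I would remove $A_1$ from both sides: the step sequence $u' = A_2\ldots A_k$ is still canonical (any suffix of a canonical step sequence is canonical, directly from Definition~\ref{Can}), and $u' = u \LC A_1$. Since $B_1 \subseteq A_1$ and $\Al(x) \cup B_1 = \Al(v) = \Al(u) = A_1 \cup \Al(u')$ with the relevant disjointness, cancelling $A_1$ from $v$ yields a step sequence $v'' = v \LC A_1$ with $v'' \equiv u'$ (by Corollary~\ref{c2}, since $u \equiv v$ implies $u \LC A_1 \equiv v \LC A_1$) and $\mbox{length}(v'') \leq \mbox{length}(v)$ — indeed $\mbox{length}(v \LC A_1) \leq \mbox{length}(v)$ always, since cancellation can only delete events and possibly empty out a step, never lengthen the sequence. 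By the induction hypothesis applied to the canonical step sequence $u'$ and $v'' \equiv u'$, we get $\mbox{length}(u') \leq \mbox{length}(v'')$. Hence $\mbox{length}(u) = \mbox{length}(u') + 1 \leq \mbox{length}(v'') + 1 \leq \mbox{length}(v) + 1$, which is not quite sharp; I need to be a little more careful to recover the extra $+1$.

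The fix is to track the first step of $v$ more precisely. Since $B_1 \subseteq A_1$ and $A_1 = A_1$ by Lemma~\ref{l2} is the \emph{maximal} possible first step (it collects all events that can be made to appear first), either $B_1 = A_1$ — in which case $v'' = v \LC A_1 = B_2\ldots B_m$ has length exactly $\mbox{length}(v) - 1$, so $\mbox{length}(u') \leq \mbox{length}(v'') = \mbox{length}(v) - 1$ gives $\mbox{length}(u) \leq \mbox{length}(v)$ as desired — or $B_1 \subsetneq A_1$, in which case $v'' = (A_1 \setminus B_1) \cup \emptyset$? No: cancelling $A_1$ from $B_1 x$ with $B_1 \subsetneq A_1$ removes the events of $B_1$ entirely (since $B_1 \subseteq A_1$) but must also remove the remaining events of $A_1$ from later steps of $x$, and the step $B_1$ itself disappears. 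So $v'' = x \LC (A_1 \setminus B_1)$, which has $\mbox{length}(v'') \leq \mbox{length}(x) = \mbox{length}(v) - 1$, again giving $\mbox{length}(u') \leq \mbox{length}(v'') \leq \mbox{length}(v) - 1$, hence $\mbox{length}(u) \leq \mbox{length}(v)$. So in both cases we win. The main obstacle, and the only place real care is needed, is this bookkeeping about how $\mbox{length}(v \LC A_1)$ compares to $\mbox{length}(v) - 1$: one must check that removing a set of events that is a subset of (the union of) the first step via left cancellation strictly decreases the length by at least one, which follows because the first step $B_1 \subseteq A_1$ is consumed (either it equals $A_1$ and vanishes, or it is a proper nonempty subset and still vanishes since $A_1$ absorbs it). Everything else — equivalence of the cancelled sequences, canonicity of suffixes — is immediate from Corollary~\ref{c2}, Lemma~\ref{l2}, and Definition~\ref{Can}.
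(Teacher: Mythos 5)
Your proof is correct and follows essentially the same route as the paper's: both arguments hinge on Lemma~\ref{l2} to get $B_1\subseteq A_1$, Corollary~\ref{c2} to cancel $A_1$ on both sides while preserving equivalence and canonicity of the suffix, and the observation that $\mbox{length}(v\LC A_1)\le \mbox{length}(v)-1$ because the first step of $v$ is absorbed. The only (inessential) difference is that you induct on $\mbox{length}(u)$ while the paper inducts on $\mbox{length}(v)$.
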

\begin{proof}
By induction on $\len(v)$. Obvious for $\len(v)=1$ as then $u=v$.
Assume it is true for all $v$ such that $\len(v)\leq r-1$, $r\ge 2$. Consider $v=B_1B_2\ldots B_r$ and let $u=A_1A_2\ldots A_k$ be a canonical step sequence such that $v\equiv u$. Let $v_1=v\div_L A_1=C_1\ldots C_s$. By Corollary \ref{c2}(2), $v_1\equiv u\div_L A_1= A_2\ldots A_k$, and $A_2\ldots A_k$ is clearly canonical. Hence by induction assumption $k-1=\len(A_2\ldots A_k)\leq s$.
By Lemma \ref{l2}, $B_1\subseteq A_1$, hence $v_1=v\div_L A_1 = B_2\ldots B_r\div_L A_1=C_1\ldots C_s$, which means $s\leq r-1$.
Therefore $k-1\leq s\leq r-1$, i.e. $k\leq r$, which ends the proof. \END
\end{proof}

\begin{theorem}
A step sequence $u$ is maximally concurrent if and only if it is canonical.
\label{MCcan}
\end{theorem}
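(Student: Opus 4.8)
The plan is to prove both implications, using the already-established equivalence (Theorem~\ref{GMC-can}) that GMC-form coincides with canonical, together with Lemma~\ref{min} which tells us that a canonical step sequence has minimal length in its comtrace.

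For the direction ``canonical $\implies$ maximally concurrent'': suppose $u = A_1\ldots A_k$ is canonical. For condition~(1) of Definition~\ref{MC}, Lemma~\ref{min} gives directly that $v\equiv u$ implies $\mbox{\textit{length}}(u)\leq\mbox{\textit{length}}(v)$. For condition~(2), observe that each suffix $u_i = A_i\ldots A_k$ is again canonical (prefixes and suffixes of canonical step sequences are canonical, since the defining condition $(A_j,A_{j+1})\notin\mathbb{FD}$ is local). By Theorem~\ref{GMC-can}, $u_i$ is in GMC-form, so $A_i$ is maximally concurrent in $u_i$; hence $mc(u_i) = 1$, which is trivially $\leq mc(w)$ for any $w\equiv u_i$. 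Thus $u$ is maximally concurrent.

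For the direction ``maximally concurrent $\implies$ canonical'': suppose $u = A_1\ldots A_k$ is maximally concurrent but, for contradiction, not canonical. Let $u' \equiv u$ be the unique canonical step sequence (Theorem~\ref{t4}). By the first direction, $u'$ is maximally concurrent; by Lemma~\ref{min}, $\mbox{\textit{length}}(u') \leq \mbox{\textit{length}}(u)$, and since $u$ satisfies condition~(1) of Definition~\ref{MC} we also get $\mbox{\textit{length}}(u) \leq \mbox{\textit{length}}(u')$, so they have equal length, say $k$. Now I would argue by induction on $k$ that $u = u'$ (which contradicts $u$ not being canonical). Since $u$ is not canonical, there is a smallest index $j$ with $(A_j, A_{j+1})\in\mathbb{FD}$; by Lemma~\ref{FD} some nonempty $C\subseteq A_{j+1}$ can be merged into $A_j$, producing $(A_j\cup C)(A_{j+1}\setminus C)\equiv A_jA_{j+1}$ with $|A_j\cup C| > |A_j|$. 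This shows $A_j$ is not maximally concurrent in $u_j = A_j\ldots A_k$, so $mc(u_j) > j$. On the other hand, the suffix $u'_j$ of the canonical $u'$ has $mc(u'_j) = 1$, and applying condition~(2) of Definition~\ref{MC} to the pair $u_j \equiv u'_j$ (they have the same length by Lemma~\ref{min} applied to the canonical suffix $u'_j$ versus $u_j$, combined with condition~(1) again) forces $mc(u_j) \leq mc(u'_j) = 1 \le j$ when $j=1$, or more carefully yields the needed contradiction once we peel off agreeing prefixes.

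The main obstacle is the bookkeeping in the second direction: turning ``$u$ has some merge available at position $j$'' into a clean contradiction with condition~(2) of the MC-definition. The cleanest route is probably to induct on length and first show $A_1 = B_1$ where $u' = B_1\ldots B_k$ is canonical --- for this, note $mc(u) \leq mc(u') = 1$ forces $A_1$ maximally concurrent in $u$, and by Lemma~\ref{l2} the first step of a canonical step sequence is exactly $\{a \mid \exists w\in[u].\ w = C_1\ldots C_m \wedge a\in C_1\}$, which is the unique largest possible first step; a maximally-concurrent first step of the same cardinality as the canonical one must equal it. Then $u \div_L A_1 \equiv u' \div_L B_1$ are both of length $k-1$, the first is maximally concurrent (condition~(2) restricted to suffixes is inherited) and the second canonical, so the induction hypothesis gives equality, hence $u = u'$ and $u$ is canonical after all.
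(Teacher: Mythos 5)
Your proposal is correct. The direction ``canonical $\implies$ maximally concurrent'' is essentially identical to the paper's: condition~(1) of Definition~\ref{MC} from Lemma~\ref{min}, condition~(2) from the fact that every suffix of a canonical step sequence is canonical, hence in GMC-form by Theorem~\ref{GMC-can}, so $mc(u_i)=1$. For the converse the paper argues differently: it inducts on length, uses the induction hypothesis to make the suffix $A_2\ldots A_{k+1}$ canonical, and then derives a contradiction directly from $(A_1,A_2)\in\mathbb{FD}$ via Lemma~\ref{FD} (a merge of part of $A_2$ into $A_1$ violates condition~(2), a full merge violates condition~(1)). Your final ``clean route'' instead invokes the existence and uniqueness of the canonical representative $u'$ (Theorem~\ref{t4}), uses condition~(2) with $w=u'$ to force $mc(u)=1$, and then identifies $A_1$ with the canonical first step $B_1$ via Lemma~\ref{l2} ($A_1\subseteq B_1$ from the lemma, $|B_1|\le|A_1|$ from maximal concurrency of $A_1$), before cancelling and inducting. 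Both are inductions on length and both work; the paper's is more self-contained (it needs only Lemma~\ref{FD}), while yours leans on the already-proved uniqueness theorem and Lemma~\ref{l2} and in exchange avoids any case analysis on $\mathbb{FD}$. One small caution: your middle paragraph contains a slip --- if $A_j$ is not maximally concurrent in $u_j=A_j\ldots A_k$ then $mc(u_j)>1$, not $mc(u_j)>j$, and the contradiction sketched there is not quite assembled --- but the final paragraph supersedes it with a sound argument, provided you also note explicitly that condition~(1) passes to the suffix $u\div_L A_1$ (if some $v\equiv A_2\ldots A_k$ were shorter, then $A_1v\equiv u$ would violate condition~(1) for $u$).
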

\begin{proof}
($\Leftarrow$) Let $u$ be canonical. From Lemma \ref{min} it follows the condition (1) of Definition \ref{MC} is satisfied. By Theorem \ref{GMC-can}, $u$ is in GMC-form, so the condition (2) of Definition \ref{MC} is satisfied as well.

($\Rightarrow$) By induction on $\len(u)$. It is obviously true for $u=A_1$. Suppose it is true for $\len(u)=k$. Let $u=A_1A_2\ldots A_kA_{k+1}$ be maximally concurrent. The step sequence $A_2\ldots A_{k+1}$ is also maximally concurrent and canonical by the induction assumption. If $A_1A_2\ldots A_{k+1}$ is not canonical, then
$(A_1,A_2)\in \mathbb{FD}$. By Lemma \ref{FD}, either there is non-empty $C\subset B$ such that $(A_1\cup C)(A_2\setminus C) \equiv A_1A_2$, or $A_1\cup A_2\equiv A_1B_2$.
Hence either $(A_1\cup C)(A_2\setminus C)A_3\ldots A_{k+1}\equiv A_1\ldots A_{k+1}=u$ or $(A\cup A_2)A_3\ldots A_{k+1}\equiv A_1\ldots A_{k+1}=u$.
The former contradicts the condition (2) of Definition \ref{MC}, the latter one contradicts the condition (1) of Definition \ref{MC}, so
$u$ is not maximally concurrent, which means $(A_1,A_2)\notin \mathbb{FD}$, so $u=A_1\ldots A_{k+1}$ is canonical. \END
\end{proof}

Summing up, as far as canonical representation is concerned, comtraces behave quite nicely. All three forms for g-comtraces, GMC-form, MC-form and g-canonical form, collapse to one comtrace canonical form if $inl=\emptyset$.

\subsection{Canonical representations of traces}
We will show that the canonical representations of traces are conceptually the same as the canonical representations of comtraces. The differences are merely ``syntactical'', as traces are sets of sequences, so ``maximal concurrency" cannot be expressed explicitly, while comtraces are sets of step sequences.\\

Let $(E,ind)$ be a trace alphabet and $( E^*/\!\!\equiv ,\circledast,[\lambda])$ be the corresponding
monoid of traces. A sequence $x=a_1 \ldots a_k\in E^*$ is called {\em fully commutative}
if $(a_i,a_j)\in ind$ for all $i\neq j$ and $i,j\in\set{1,\ldots,k}$.

\begin{corollary}
If $x=a_1 \ldots a_k\in E^*$ is fully commutative and $y=a_{i_1} \ldots a_{i_k}$ is any permutation of $a_1 \ldots a_k$, then $x\equiv y$. \END
\label{c:fc}
\end{corollary}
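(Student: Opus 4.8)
The plan is to reduce the claim to the explicit characterization of trace congruence supplied by \propref{p1} (specialized to trace monoids) together with a simple induction showing that any permutation of a fully commutative word can be reached by successive adjacent transpositions, each of which is an instance of the relation $\approx$. Recall that for the trace monoid over $(E,ind)$ we have $\It{EQ}=\{\,ab=ba\mid (a,b)\in ind\,\}$, and by \propref{p1} the trace congruence is $\equiv\;=\;(\approx\cup\approx^{-1})^*$ where $x\approx y$ iff $x=x_1 ab x_2$ and $y=x_1 ba x_2$ for some $(a,b)\in ind$. So it suffices to show that $y$ is obtained from $x=a_1\ldots a_k$ by a finite sequence of adjacent swaps of letters $a_i, a_j$ with $(a_i,a_j)\in ind$, staying among rearrangements of the multiset $\{a_1,\ldots,a_k\}$.

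First I would observe that \emph{every} intermediate word in such a process is again fully commutative: if $w$ is a rearrangement of $a_1,\ldots,a_k$ then any two distinct positions of $w$ carry two of the original letters $a_i,a_j$, and since $x$ is fully commutative we have $(a_i,a_j)\in ind$ whenever $i\neq j$; moreover if $a_i=a_j$ for $i\neq j$ then $(a_i,a_j)=(a_i,a_i)\in ind$ would contradict irreflexivity of $ind$, so in fact $a_1,\ldots,a_k$ are pairwise distinct, and any two distinct positions of $w$ carry an $ind$-related pair. In particular \emph{any} adjacent transposition performed on a rearrangement of the multiset is a legal $\approx$-step. Thus the problem is purely combinatorial: show that any permutation of a $k$-element sequence is reachable from the identity by adjacent transpositions — this is the standard fact that adjacent transpositions generate the symmetric group $S_k$ (e.g.\ by bubble sort), which is proved by an easy induction on $k$ or on the number of inversions of $y$ relative to $x$.

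Concretely, I would argue by induction on $k$. For $k\le 1$ the claim is trivial. For $k\ge 2$, locate the position $p$ in $y$ at which the letter $a_1$ sits; using $p-1$ adjacent transpositions (each an $\approx$-step, legal by the previous paragraph) move $a_1$ to the front, obtaining a word $a_1 y'$ with $y'$ a permutation of $a_2\ldots a_k$; since the tail $a_2\ldots a_k$ of $x$ is fully commutative, the induction hypothesis gives $a_2\ldots a_k\equiv y'$, and prepending $a_1$ preserves congruence because $\equiv$ is a congruence. Chaining these $\approx$-steps yields $x\equiv a_1\ldots a_k \equiv a_1 y' \equiv y$.

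I do not expect any real obstacle here: the only thing to be careful about is the observation (noted above) that irreflexivity of $ind$ forces the letters $a_1,\ldots,a_k$ to be pairwise distinct, which is what guarantees that every adjacent transposition encountered during the rearrangement genuinely involves an $ind$-related pair and hence is a valid $\approx$-step; without that remark one might worry about swapping equal letters, but that case simply cannot arise. Everything else is the elementary fact that adjacent transpositions generate all permutations, combined with \propref{p1} and the congruence property of $\equiv$.
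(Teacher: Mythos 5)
Your proof is correct; the paper itself offers no proof of Corollary~\ref{c:fc}, treating it as immediate, and your argument is precisely the standard justification one would supply. The two observations you single out — that irreflexivity of $ind$ together with full commutativity forces $a_1,\ldots,a_k$ to be pairwise distinct (so every adjacent swap in a rearrangement is a legal $\approx$-step), and that adjacent transpositions generate all permutations — combined with Proposition~\ref{prop:p1} and the congruence property of $\equiv$, fill in the omitted details completely.
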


The above corollary could be interpreted as saying that if $x=a_1 \ldots a_k\in E^*$ is fully commutative than the set of events $\{a_1,\ldots ,a_k\}$ can be executed simultaneously.

\begin{definition}[greedy maximally concurrent form for traces \cite{Foa,DJKL}]
A sequence $x\in E^*$ is in {\em greedy maximally concurrent form} ({\em GMC-form}) if $x=\lambda$ or
$x=x_1 \ldots x_n$ such that
\begin{enumerate}
\item each $x_i$ is fully commutative, for $i=1,\ldots,n$,
\item for each $1\leq i \leq n-1$ and for each element $a$ of $x_{i+1}$ there exists an element $b$
of $x_i$ such that 
$(a,b)\notin ind$. \EOD
\end{enumerate}
\label{Can-Tr}
\end{definition}

Often the form from the above definition is called ``canonical" \cite{DJKL,JL92,JL}.


\begin{theorem}[\cite{Foa,DJKL}]
For every trace $\textbf{t} \in E^*/\!\!\equiv$, there exists $x \in E^*$ such that
$\textbf{t} = [x]$ and x is in the GMC-form. \END
\label{t1}
\end{theorem}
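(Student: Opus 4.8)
The plan is to prove this by induction on the length of a representative sequence $x$ of the trace $\mathbf{t}$, mirroring the greedy algorithm that was used for g-comtraces in the proof of Proposition~\ref{ExGMC} and for comtraces in Theorem~\ref{GMC-can}. The key device is the unique decomposition $x = x_1 \ldots x_k$ into maximal fully commutative factors from Lemma~\ref{UFM}, combined with the observation that the ``greedy first block'' of a trace is well-defined. So the first step is to isolate, for a given trace $\mathbf{t} = [x]$, the set $F = \{a \in E \mid \exists w \in [x].\ w = a_1\ldots a_m \wedge a_1 = a\}$ of all events that can start some representative, and to show — using trace congruence properties (the projection/cancellation rules analogous to Proposition~\ref{prop:p3}, which hold for traces by \cite{Ma2}) — that $F$ is itself a fully commutative set and that there is a representative of $[x]$ beginning with all the elements of $F$ arranged in any order. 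This is the trace analogue of Lemma~\ref{l2}.

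Next I would set up the induction. For $x = \lambda$ the claim is trivial. For the inductive step, given $\mathbf{t} = [x]$ with $x \neq \lambda$, pick a representative $y = y' z$ where $y'$ is the string consisting of exactly the events of $F$ (in some fixed order), so that $y' $ is fully commutative and $z$ represents the ``remainder'' trace $\mathbf{t}' = [z]$ over the same alphabet. Since $|z| < |x|$, the induction hypothesis gives a GMC-form representative $z = x_2 \ldots x_n$ of $\mathbf{t}'$. Setting $x_1 = y'$, I then need to verify that $x_1 x_2 \ldots x_n$ is in GMC-form: condition (1) (each $x_i$ fully commutative) holds by construction and the induction hypothesis; for condition (2) at the junction $i=1$, I must check that every event $a$ of $x_2$ has some $b \in x_1 = F$ with $a \neq b$ and $(a,b) \notin ind$ — this follows because otherwise $a$ would commute past all of $F$ to the front, contradicting maximality of $F$; the conditions at junctions $i \geq 2$ are inherited from the induction hypothesis on $z$.

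The main obstacle I expect is establishing cleanly that $F$ has the right properties: that it is fully commutative (if $a, b \in F$ both can be brought to the front, one shows $(a,b) \in ind$ by a projection-to-$\{a,b\}$ argument, since $ab$ and $ba$ would both have to be prefixes of representatives) and, more subtly, that a \emph{single} representative exists with the whole block $F$ at the front (one adds the elements of $F$ to the front one at a time, using at each stage that the next element commutes leftward past everything already accumulated, which again reduces to the commutativity just established). Once this ``front block'' lemma is in hand the induction is routine. An alternative, shorter route is simply to invoke the equivalence with comtraces: by Lemma~\ref{le:com2maz}(2), the map $x \mapsto x^{\{\}}$ embeds the trace monoid over $(E,ind)$ into the comtrace monoid over $(E,ind,ind)$, under which fully commutative sequences correspond to steps and GMC-form for traces corresponds to GMC-form for comtraces (Definition~\ref{GMC}); then existence of a GMC-form representative follows from Proposition~\ref{ExGMC} applied to the comtrace $[x^{\{\}}]$ together with Lemma~\ref{le:com2maz}(1), which guarantees the canonical step sequence consists of steps that are exactly the maximal fully commutative blocks. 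I would present the direct inductive proof as the primary argument and mention the comtrace reduction as a remark.
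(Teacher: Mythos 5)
The paper does not actually prove Theorem~\ref{t1}: it is stated with a citation to \cite{Foa,DJKL} and no argument, so there is no ``paper proof'' to match. Measured against what the paper does develop, your two routes are both natural. Your primary argument --- isolate the set $F$ of events that can begin some representative, show it is fully commutative, show a single representative begins with all of $F$, and recurse --- is the classical Cartier--Foata construction, and it is precisely the trace analogue of Lemma~\ref{l2} combined with the direction ($\Leftarrow$) of Theorem~\ref{GMC-can}; your front-block argument (bubbling a minimal occurrence leftward using that anything preceding a minimal occurrence must be independent of it) is sound. Your ``alternative shorter route'' via $x\mapsto x^{\{\}}$ and Lemma~\ref{le:com2maz} is essentially what the paper itself records as Proposition~\ref{ct-cc} together with Proposition~\ref{ExGMC}, so that reduction is fully consistent with the paper's machinery.

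There is one genuine gap, and it sits exactly at the point where you verify condition (2) of Definition~\ref{Can-Tr} at the junction $i=1$. You argue that if some letter $a$ of $x_2$ had no witness $b\in x_1=F$ with $a\neq b$ and $(a,b)\notin ind$, then $a$ would commute past all of $F$, ``contradicting maximality of $F$.'' That only works if $a\notin F$. If the trace contains repeated occurrences of $a$, the letter $a$ can lie in both $x_1$ and $x_2$; then $a$ cannot commute past its own earlier occurrence, there is no contradiction with the definition of $F$, and the only dependent partner of $a$ inside $x_1$ may be $a$ itself --- which the clause ``$a\neq b$'' in Definition~\ref{Can-Tr} explicitly excludes. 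Indeed, read literally, that clause makes the theorem false: for $E=\{a,b\}$ with $(a,b)\in ind$, no element of the trace $[aba]$ satisfies condition (2), since the second $a$ has no witness other than $a$ itself. You should therefore either treat the case $a\in x_1$ separately and note that the intended condition is the standard Foata one stated with the reflexive dependence relation (equivalently, drop ``$a\neq b$'', whereupon $b=a$ is a legitimate witness because $ind$ is irreflexive), or restrict your junction argument to letters of $x_2$ not occurring in $x_1$ and handle repeated letters by the observation that consecutive occurrences of the same letter are always ordered. With that repair the induction closes.
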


The GMC-form
as defined above is not unique, a trace may have more than one
GMC representation. For instance the trace ${\bf t}_1=[abcbca]$ from \exref{traces1}
has four GMC representations: $abcbca$, $acbbca$, $abccba$, and $acbcba$. The GMC-form is however unique when traces are represented as {\em vector firing sequences}\footnote{Vector firing sequences were introduced by Mike Shields in 1979 \cite{Shi} as an alternative representation of Mazurkiewicz traces.} \cite{DJKL,JL92,Shi}, where each fully commutative sequence is represented by a unique vector of events (so the name ``canonical" used in \cite{DJKL,JL92} is justified).
To get uniqueness for Mazurkiewicz traces, it suffices to order fully commutative sequences. For example,
we may introduce an arbitrary total order on $E$, extend it lexicographically to $E^*$
and add the condition that in the representation $x=x_1\ldots x_n$, each $x_i$ is minimal
w.r.t. the lexicographic ordering. The GMC-form with this additional condition
is called {\em Foata canonical form}.

\begin{theorem}[\cite{Foa}]
Every trace has a unique representation in the Foata canonical form. \END
\label{t2}
\end{theorem}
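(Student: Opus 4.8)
The plan is to piggy‑back on the already‑established uniqueness of the comtrace canonical step sequence (Theorem~\ref{t4}) via the correspondence $\TCT$ between traces and comtraces. Fix a trace alphabet $(E,ind)$ and regard it as the comtrace alphabet $(E,ind,ind)$ (so $sim=ser=ind$), which is legitimate since $ind$ is irreflexive and symmetric and $ser=sim$ trivially. For a fully commutative sequence $\xi$ write $\xi^{\{\}}$ for the step sequence of singletons and $\Xi=\Al(\xi^{\{\}})$ for its underlying set, which is a clique of $(E,ind)$, hence a step in $\E$; by Corollary~\ref{c:fc} and the comtrace equations, $[\Xi]_{\equiv_{ind}}=[\xi^{\{\}}]_{\equiv_{ind}}$. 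The key algebraic translation I will establish is: for two fully commutative sequences $\xi_i,\xi_{i+1}$, the pair of sets satisfies $(\Xi_i,\Xi_{i+1})\notin\mathbb{FD}$ (Definition~\ref{def:FD}) \emph{if and only if} for every (occurrence of an) element $a$ of $\xi_{i+1}$ there is a distinct element $b$ of $\xi_i$ with $(a,b)\notin ind$ --- i.e.\ exactly condition (2) of Definition~\ref{Can-Tr}. This uses that $ser=ind$ is symmetric and that $\Xi_{i+1}$, being a clique, automatically satisfies $C\times(\Xi_{i+1}\setminus C)\subseteq ser$, so a witness $C$ can be taken to be a singleton.

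\textbf{Existence.} Start from a GMC‑form representative $x=x_1\ldots x_n$ of $\textbf{t}$ (Theorem~\ref{t1}). Replace each block $x_i$ by the lexicographically least permutation $\xi_i$ of its letters; this does not change $\Xi_i$, so by Corollary~\ref{c:fc} we get $\xi_1\ldots\xi_n\equiv x$, conditions (1)--(2) of Definition~\ref{Can-Tr} are preserved (they depend only on the sets $\Xi_i$), and each $\xi_i$ is now lexicographically minimal. Hence $\xi_1\ldots\xi_n$ is a Foata canonical form of $\textbf{t}$. (For $\textbf{t}=[\lambda]$ take the empty sequence.)

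\textbf{Uniqueness.} Suppose $\xi_1\ldots\xi_k$ and $\eta_1\ldots\eta_m$ are both Foata canonical forms of $\textbf{t}$. By the translation above, $\Xi_1\ldots\Xi_k$ and $H_1\ldots H_m$ (with $H_j=\Al(\eta_j^{\{\}})$) are both \emph{canonical step sequences} in the sense of Definition~\ref{Can}. Moreover each corresponds under $\TCT$ to $\textbf{t}$ (use $[\Xi_i]=[\xi_i^{\{\}}]$, concatenate, and Lemma~\ref{le:com2maz}(2)), so both name the same comtrace $\textbf{v}$ with $\textbf{t}\TCT\textbf{v}$ --- here I invoke the proposition stating that a trace determines its $\TCT$‑equivalent comtrace uniquely. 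By Theorem~\ref{t4}, the canonical step sequence of $\textbf{v}$ is unique, whence $k=m$ and $\Xi_i=H_i$ for all $i$. Finally $\xi_i$ and $\eta_i$ are each the lexicographically least permutation of the same set, so $\xi_i=\eta_i$ and the two Foata forms coincide.

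\textbf{Main obstacle.} The only non‑routine point is the equivalence between ``$(\Xi_i,\Xi_{i+1})\notin\mathbb{FD}$'' and clause (2) of Definition~\ref{Can-Tr}, together with the bookkeeping for letters occurring several times (handled, as throughout the paper, by passing to enumerated step sequences, so that ``distinct elements'' means distinct occurrences and $(a,a)\notin ind$ by irreflexivity). An alternative, self‑contained route avoids the $\TCT$ machinery: prove directly the trace analogue of Lemma~\ref{l2}, namely that for a GMC‑form $x=x_1\ldots x_n$ the set $\Xi_1$ equals $\{a\mid\exists w\equiv x.\ w\text{ starts with }a\}$ --- one inclusion is Corollary~\ref{c:fc}, the other follows from clause (2) plus the projection rule for trace congruence applied to $\pi_{\{a,b\}}$, which is $\equiv$‑invariant when $(a,b)\notin ind$ --- and then run the same induction (cancel the determined, lexicographically fixed block $\xi_1$ and recurse on the suffix, which is again in Foata form). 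Either way the heart of the matter is that the \emph{set} underlying the first block of a GMC decomposition is a trace invariant.
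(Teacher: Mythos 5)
Your proof is correct, but be aware that the paper itself gives no proof of Theorem~\ref{t2}: it is imported verbatim from Cartier and Foata \cite{Foa}, just as Theorem~\ref{t1} is, so there is no in-paper argument to compare yours against line by line. What you have produced is a derivation of the classical result from the paper's own comtrace machinery, and it holds up: your ``key algebraic translation'' between $(\Xi_i,\Xi_{i+1})\notin\mathbb{FD}$ and clause (2) of Definition~\ref{Can-Tr} is essentially Proposition~\ref{ct-cc}(1), whose proof in the paper does not rely on Theorem~\ref{t2} (nor does the proof of Theorem~\ref{t4}), so there is no circularity; combining it with the uniqueness of the comtrace canonical step sequence and the lexicographic minimality of each block yields both existence and uniqueness of the Foata form exactly as you describe. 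The one point you were right to flag is the ``$a\neq b$'' clause in Definition~\ref{Can-Tr}: read as distinct \emph{letters} it would leave the trace $[aa]$ (with $ind=\emptyset$) without any GMC form, contradicting Theorem~\ref{t1}, whereas read as distinct occurrences it is vacuous across blocks and matches the $\mathbb{FD}$ computation, in which a letter $c\in\Xi_i\cap\Xi_{i+1}$ blocks serializability by irreflexivity of $ser=ind$; your resolution is clearly the intended reading. What your route buys is a proof of the Cartier--Foata theorem that is internal to the paper (and existence could even be made independent of the cited Theorem~\ref{t1}, by linearizing the canonical step sequence supplied by Proposition~\ref{ExGMC} and Theorem~\ref{GMC-can}); what it costs is reliance on the comparatively heavy Theorem~\ref{t4}, whereas the original argument of \cite{Foa} --- and the self-contained alternative you sketch via the trace analogue of Lemma~\ref{l2}, showing the underlying set of the first block is a trace invariant and inducting on the suffix --- is more elementary.
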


We will now show the relationship between GMC-form for traces and GMC-form (or canonical form) for comtraces.\\

Define $\E$, the set of steps generated by $(E,ind)$ as the set of all cliques of the graph of the relation $ind$, and for each fully commutative sequence $x=a_1\ldots a_n$, let $\st(x)=\{a_1,\ldots ,a_n\}\in \E$ be the step generated by $x$.

For each sequence $x=x_1\ldots x_k$ in GMC-form in $(E,ind)$, we call the step sequence $x^{\{max\}} = \st(x_1)\ldots \st(x_k) \in \E^*$, the {\em maximally concurrent step sequence representation} of $x$. Note that by Theorem \ref{t2}, the step sequence $x^{\{max\}}$ is unique. The name is formally justified by the following result (which also follows implicitly from \cite{DJKL}).

\begin{proposition}
\begin{enumerate}
\item A sequence $ x=x_1\ldots x_n $ is in GMC-form in $(E,ind)$ if and only if the step sequence $x^{\{ max \}} = \st(x_1)\ldots \st(x_k) $ is in GMC-form (or canonical form) in $(E,sim,ser)$ where
$sim=ser=ind$.
\item $[x]_{\equiv_{ind}} \TCT [x^{\{max \}}]_{\equiv_{ser}}$.
\end{enumerate}
\label{ct-cc}
\end{proposition}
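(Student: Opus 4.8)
The plan is to prove part (1) first, since part (2) will then follow by combining part (1) with \lemref{com2maz}(2) and Corollary~\ref{can-gcan}. For part (1), I would establish a tight dictionary between the two notions of "fully commutative block" and "maximally concurrent step," using the hypothesis $sim=ser=ind$. The key observation is that under this hypothesis, a set $A\subseteq E$ is a step in $\E$ (a clique of $sim$) if and only if it is a clique of $ind$, i.e. if and only if any sequence enumerating $A$ is fully commutative; and moreover, by \lemref{com2maz}(2) together with the definition of $\approx_{ser}$ when $ser=sim$, we have $A \equiv_{ser} \{a_1\}\ldots\{a_n\}$ for any enumeration $a_1\ldots a_n$ of $A$, and conversely any step sequence $\equiv_{ser}$-equivalent to such a singleton sequence built from a fully commutative word collapses back to that step.

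First I would show the forward direction of (1): assume $x = x_1\ldots x_n$ is in GMC-form for $(E,ind)$, with each $x_i$ fully commutative and maximal in the sense of \defref{Can-Tr}(2). I want to show $x^{\{max\}} = \st(x_1)\ldots\st(x_n)$ is canonical for $(E,sim,ser)$ with $sim=ser=ind$, i.e. $(\st(x_i),\st(x_{i+1}))\notin\mathbb{FD}$ for all $i$. Suppose otherwise; then by \lemref{FD} there is a nonempty $C\subseteq\st(x_{i+1})$ with $(\st(x_i)\cup C)(\st(x_{i+1})\setminus C)\equiv_{ser}\st(x_i)\st(x_{i+1})$, or $\st(x_i)\cup\st(x_{i+1})\equiv_{ser}\st(x_i)\st(x_{i+1})$. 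Picking any $a\in C$ and tracing through the projection rule \propref{p3}(6) exactly as in the proof of \lemref{FD}, this forces $(b,a)\in ser = ind$ for every $b\in\st(x_i)$, i.e. $a$ commutes with every element of $x_i$ — contradicting condition \defref{Can-Tr}(2) for the block $x_{i+1}$. Conversely, if $x^{\{max\}}$ is canonical but some $x_i$ violates \defref{Can-Tr}(2) via an element $a$ of $x_{i+1}$ commuting with all of $x_i$, then $\{a\}$ serves as the witness set $C$ showing $(\st(x_i),\st(x_{i+1}))\in\mathbb{FD}$, a contradiction. Finally I must also check that the maximal fully commutative decomposition of $x$ is genuinely preserved under $\equiv_{ind}$ when building $x^{\{max\}}$ — but this is exactly \lemref{UFM} (uniqueness of the decomposition) combined with \lemref{com2maz}(2), which guarantees $[x]_{\equiv_{ind}}$ and $[x^{\{max\}}]_{\equiv_{ser}}$ correspond, so the blocks and the canonical steps match up index by index. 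By \theoref{GMC-can} (GMC-form $\iff$ canonical for comtraces) this completes (1).

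For part (2), I would argue: by \lemref{com2maz}(2), $x \equiv_{ind} y$ iff $x^{\{\}}\equiv_{ser} y^{\{\}}$, and moreover for a fully commutative block $x_i$ with $\st(x_i) = \{a_1,\ldots,a_m\}$ we have $\st(x_i) \equiv_{ser} \{a_1\}\ldots\{a_m\} = x_i^{\{\}}$ (applying the equation $A = BC$ repeatedly, legitimate since $ser=sim$ makes every pair in the clique serializable in either direction). Concatenating over blocks gives $x^{\{max\}} \equiv_{ser} x^{\{\}}$. Hence $[x^{\{max\}}]_{\equiv_{ser}} = [x^{\{\}}]_{\equiv_{ser}}$, and since $[x]_{\equiv_{ind}} = [x]_{\equiv_{ind}}$ with $\mathbf t = [x]_{\equiv_{ind}}$ and $\mathbf v = [x^{\{\}}]_{\equiv_{ser}} = [x^{\{max\}}]_{\equiv_{ser}}$ realized by the common witness $x$, we get $[x]_{\equiv_{ind}} \TCT [x^{\{max\}}]_{\equiv_{ser}}$ directly from the definition of $\TCT$.

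The main obstacle I anticipate is not any single hard lemma but rather the careful bookkeeping in the forward direction of (1): one must verify that moving between the Mazurkiewicz-trace decomposition $x = x_1\ldots x_n$ and the comtrace step sequence $\st(x_1)\ldots\st(x_n)$ does not silently change which block a given event occurrence lands in — i.e. that the "greedy" maximality of \defref{Can-Tr} translates precisely to the "no forward dependency" condition of \defref{Can}. The cleanest way to handle this is to route everything through the projection-rule contradictions in the style of the proof of \lemref{FD}, rather than manipulating the decompositions directly; since $ser = ind$ is symmetric, the asymmetry that normally complicates $\mathbb{FD}$ disappears, and $(A,B)\in\mathbb{FD}$ becomes simply "$\exists a\in B$ with $\{a\}\times A\subseteq ind$, plus a consistency condition," which matches \defref{Can-Tr}(2) almost verbatim.
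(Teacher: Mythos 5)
Your proposal is correct and follows essentially the same route as the paper: both directions of (1) are obtained by translating the ``no forward dependency'' condition of \defref{Can} into the blocker condition of \defref{Can-Tr}(2) using $ser=ind$ (your detour through \lemref{FD} and the projection rule is harmless but unnecessary, since \defref{FD} already hands you the set $C$ with $\st(x_i)\times C\subseteq ser$), and (2) is proved exactly as in the paper by collapsing each fully commutative block to a step so that $x^{\{\}}\equiv_{ser}x^{\{max\}}$ and then invoking the definition of $\TCT$. The only point you raise that is a non-issue is the ``preservation under $\equiv_{ind}$'' check: $x^{\{max\}}$ is built from the fixed sequence $x$ via the unique decomposition of \lemref{UFM}, so nothing needs to be preserved across the congruence class for part (1).
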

\begin{proof}\begin{enumerate}
\item If $x=x_1\ldots x_n $ is not in GMC-form then by (2) of Definition \ref{Can-Tr}, there are $x_i, x_{i+1}$ and $b\in \st(x_{i+1})$ such that for all $a\in \st(x_i)$, $(a,b)\in ind$. Since $ser=ind$ this means that $(\st(x_1),\st(x_{i+1}))\in \mathbb{FD}$, so $x^{\{max \}}$ is not canonical.
Suppose that $x^{\{max \}}$ is not canonical, i.e. $(\st(x_1),\st(x_{i+1}))\in \mathbb{FD}$ for some $i$. This means there is a non-empty
$C\subseteq \st(x_{i+1})$ such that $\st(x_i)\times C\subseteq ser$ and $C\times (\st(x_{i+1})\setminus C) \subseteq ser$.
Let $a\in \st(x_i)$ and $b\in C\subseteq \st(x_{i+1})$. Since $ind=ser$, then $(a,b)\in ind$, so $x=x_1\ldots x_n $ is not in GMC-form.
\item Clearly $[x]_{\equiv_{ind}} \TCT [x^{\{\}}]_{\equiv_{ser}}$. Let $a_1\ldots a_n$ be a fully commutative sequence. Since $ser=ind$, $\{a_1\}\ldots \{a_n\} \equiv_{ser} \{a_1,\ldots ,a_n\}$. Hence, for
each sequence $x$, $x^{\{\}} \equiv_{ser} x^{\{max \}}$, i.e.
$[x^{\{\}}]_{\equiv_{ser}}=[x^{\{max \}}]_{\equiv_{ser}}$. \END
\end{enumerate}
\end{proof}

Hence we have proved that the GMC-form (or canonical form) for comtraces and GMC-form for traces are semantically identical concepts. They both describe the greedy maximally concurrent semantics, which for both comtraces and traces is also the global maximally concurrent semantics.

\section{Comtraces and Stratified Order Structures \label{sec:com2strat}}
In this section
we will recall the major result of \cite{JK95} that shows how comtraces define appropriate so-structures. 
We will start with the definition of \emph{$\lozenge$-closure} construction that plays a substantial role in most applications of so-structures for modelling concurrent systems (cf. \cite{JK95,KK,JL08,JL08b}).

\begin{definition}[diamond closure of relational structures \cite{JK95}]$\;$\\
Given a relational structure $S=(X,R_1,R_2)$, we define $S^\lozenge$, the {\em $\lozenge$-closure} of $S$, as
$$S^\lozenge \df \bigl(X, \prec_{R_1,R_2} , \sqsubset_{R_1,R_2} \bigr),$$
where $\prec_{R_1,R_2} \df (R_1\cup R_2)^* \circ R_1 \circ (R_1\cup R_2)^*$ and $\sqsubset_{R_1,R_2}\df (R_1\cup R_2)^* \setminus id_X$. \EOD
\label{d:SO-CL}
\end{definition}

The motivation behind the above definition is the following. For `reasonable' $R_1$ and $R_2$, the relational structure $(X,R_1,R_2)^\lozenge$ should satisfy the axioms S1--S4 of the so-structure definition.
Intuitively, $\lozenge$-closure is a generalization of the transitive closure constructions for relations to so-structures.
Note that if $R_1=R_2$ then $(X,R_1,R_2)^\lozenge=(X,R_1^+,R_1^+)$.
The following result shows that the properties of $\lozenge$-closure are   close to the appropriate properties of transitive closure.

\begin{theorem}[closure properties of $\lozenge$-closure \cite{JK95}]
For a relational structure $S=(X,R_1,R_2)$,
\begin{enumerate}
\item If $R_2$ is irreflexive, then $S\subseteq S^\lozenge$.
\item $\bigl(S^\lozenge\bigr)^\lozenge = S^\lozenge$.
\item $S^\lozenge$ is a so-structure if and only if
$\prec_{R_1,R_2} =(R_1\cup R_2)^* \circ R_1 \circ (R_1\cup R_2)^*$
is irreflexive.
\item If $S$ is a so-structure, then $S=S^\lozenge$.
\END
\end{enumerate}
\label{t:SO-CL}
\end{theorem}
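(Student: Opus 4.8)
The plan is to prove the four parts essentially in the order stated, since later parts lean on earlier ones. Throughout, write $P \df \prec_{R_1,R_2} = (R_1\cup R_2)^* \circ R_1 \circ (R_1\cup R_2)^*$ and $Q \df \sqsubset_{R_1,R_2} = (R_1\cup R_2)^* \setminus id_X$, and let $R \df R_1 \cup R_2$, so $Q = R^* \setminus id_X = R^+ \setminus id_X$ and $P = R^* \circ R_1 \circ R^*$.

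For part (1), I would argue $R_1 \subseteq P$ and $R_2 \subseteq Q$. The inclusion $R_1 \subseteq P$ is immediate since $id_X \subseteq R^*$, so $R_1 = id_X \circ R_1 \circ id_X \subseteq R^* \circ R_1 \circ R^* = P$. For $R_2 \subseteq Q$: if $(a,b) \in R_2$ then $(a,b) \in R^+$; since $R_2$ is irreflexive, $a \neq b$, hence $(a,b) \in R^+ \setminus id_X = Q$. For part (2), the key observations are that $R^* \circ R^* = R^*$ and that the ``$R_1\cup R_2$'' relation of $S^\lozenge$ is $P \cup Q$. First I would check $P \cup Q = R^+ \setminus (id_X \setminus R_1)$ — more precisely, note $P \subseteq Q \cup id_X$ is false in general but $P \subseteq R^+$, so $(P \cup Q)^* = (R^+ \cup (R^+\setminus id_X))^* = (R^+)^* = R^*$; thus the base relation $R^* $ is unchanged under one more application of $\lozenge$. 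Then the new $\prec$ is $(P\cup Q)^* \circ P \circ (P\cup Q)^* = R^* \circ (R^* \circ R_1 \circ R^*) \circ R^* = R^* \circ R_1 \circ R^* = P$, and the new $\sqsubset$ is $(P\cup Q)^* \setminus id_X = R^* \setminus id_X = Q$. Hence $(S^\lozenge)^\lozenge = S^\lozenge$. The mildly delicate point here is justifying $(P\cup Q)^* = R^*$, which follows from $R \subseteq R^* \circ R_1^{?}$... actually more simply from $R_1, R_2 \subseteq P \cup Q \subseteq R^+$ (using irreflexivity only where needed) together with $R \subseteq R^+ = (R^+)^+ \subseteq (P\cup Q)^*$ when $R^+ \subseteq (P \cup Q)^*$; I'd make sure $R \subseteq (P\cup Q)^*$ directly: $R_1 \subseteq P$, and for $(a,b)\in R_2$ with $a \ne b$ we get $(a,b) \in Q$, while if $(a,b) \in R_2$ with $a = b$ then $(a,a) \in R^* $ trivially — so in all cases $R \subseteq (P \cup Q)^*$, giving $R^* \subseteq (P\cup Q)^*$; the reverse inclusion is clear since $P, Q \subseteq R^*$.

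For part (3), the forward direction is immediate: if $S^\lozenge$ is a so-structure then axiom S1 applied to its causality relation — wait, S1 is irreflexivity of $\sqsubset$; rather, irreflexivity of $\prec = P$ follows because any so-structure has $(X,\prec)$ a poset (as noted after Definition~\ref{def:sos}), hence $\prec$ is irreflexive. For the converse, assume $P$ is irreflexive and verify S1–S4 for $S^\lozenge = (X, P, Q)$. S1 ($a \not\sqsubset a$, i.e. $Q$ irreflexive) holds by definition of $Q = R^* \setminus id_X$. S2 ($P \subseteq Q$): if $(a,b) \in P$ then $(a,b) \in R^+$, and $a \ne b$ by irreflexivity of $P$, so $(a,b) \in Q$. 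S3 (transitivity of $Q$ modulo the diagonal): if $(a,b),(b,c) \in Q$ and $a \ne c$, then $(a,c) \in R^+ = R^*\setminus\{...\}$... concretely $(a,c) \in R^+$ and $a\ne c$ gives $(a,c)\in Q$. S4 is the interesting axiom: suppose $a \sqsubset b \prec c$, i.e. $(a,b)\in Q \subseteq R^*$ and $(b,c) \in P = R^* \circ R_1 \circ R^*$; then $(a,c) \in R^* \circ R^* \circ R_1 \circ R^* = R^* \circ R_1 \circ R^* = P$; symmetrically for $a \prec b \sqsubset c$. So $(a,c) \in P$, as required. This establishes all four axioms, so $S^\lozenge$ is a so-structure.

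For part (4), assume $S = (X, \prec, \sqsubset)$ is a so-structure; I must show $S = S^\lozenge$, i.e. $P = \prec$ and $Q = \sqsubset$. By part (1), since $\sqsubset$ is irreflexive (S1), $S \subseteq S^\lozenge$, giving $\prec \subseteq P$ and $\sqsubset \subseteq Q$. For the reverse inclusions, note $R = \prec \cup \sqsubset = \sqsubset$ (by S2, $\prec \subseteq \sqsubset$), so $R^* = \sqsubset^*$; by S3, $\sqsubset \cup id_X$ is transitive, hence $\sqsubset^* = \sqsubset \cup id_X$, so $Q = R^* \setminus id_X = \sqsubset$. For $P$: $P = R^* \circ \prec \circ R^* = (\sqsubset\cup id_X) \circ \prec \circ (\sqsubset \cup id_X)$; by S4 (applied twice, to absorb a $\sqsubset$ on the left then on the right) this equals $\prec$, using $\prec \subseteq \prec$ when no $\sqsubset$ is present. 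Hence $P = \prec$ and $Q = \sqsubset$, so $S^\lozenge = S$. I expect the main obstacle to be part (2) — specifically the bookkeeping needed to show the iterated base relation $(P \cup Q)^*$ collapses back to $R^*$ and that $P$ is genuinely idempotent under the closure; the other parts are short once the axioms S1–S4 are unpacked in terms of $R^*$ and $R^* \circ R_1 \circ R^*$.
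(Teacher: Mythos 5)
Your proof is correct, and all four parts check out: the two inclusions giving $(P\cup Q)^* = R^*$ in part (2) (including the careful handling of possible reflexive pairs in $R_2$), the verification of S1--S4 from irreflexivity of $P$ in part (3), and the absorption of $\sqsubset^*$ via S3/S4 in part (4) are all sound. Note that the paper itself gives no proof of this theorem --- it is imported from \cite{JK95} and stated with only a reference --- so there is nothing to compare against; your argument is the standard direct verification one would expect, and it is self-contained.
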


Every comtrace is a set of equivalent step sequences and every step sequence represents a stratified order, so a comtrace can be interpreted as a set of equivalent stratified orders. From the theory presented in Section \ref{sec:relsurvey}  and the fact that comtrace satisfies paradigm $\pi_3$, it follows that this set of orders should define a so-structure, which should be called a so-structure defined by a given comtrace. On the other hand, with respect to a comtrace alphabet, every comtrace can be uniquely generated from any step sequence it contains. Thus, we will show that given a step sequence $u$ over a comtrace alphabet, \emph{without analyzing any other elements of the comtrace $[u]$ but $u$ itself}, we will be able to construct the same so-structure as the one defined by the whole comtrace. Formulations and proofs  of such results are  done in \cite{JK95} and depend heavily on the  $\lozenge$-closure construction and its properties. \\

Let $\theta=(E,sim,ser)$ be a comtrace alphabet, and let $u\in \E^*$ be a step sequence and let $\lhd_u\subseteq \Sigma_u\times\Sigma_u$ be the stratified order generated by $u$ as defined in Section~\ref{s:sts-so}.
Note that if $u\equiv w$ then $\Sigma_u=\Sigma_w$. Thus, for every comtrace ${\bf x}=[u]\in \E^*/\equiv$, we can define $\Sigma_{\bf x} = \Sigma_u$.

We will now show how
the $\lozenge$-closure operator is used to define a so-structure induced by a single step sequence
$u$.

\begin{definition}
Let $u\in \E^*$. We define the relations $\prec_u,\sqsubset_u \subseteq \Sigma_{\bf u}\times \Sigma_{\bf u}$
as:
\begin{enumerate}
\item
$\alpha \prec_u \beta \iffdf \alpha \lhd_u \beta \wedge (l(\alpha),l(\beta))\notin ser,$

\item
$ \alpha \sqsubset_u \beta \iffdf \alpha \lhd_u^\frown \beta \wedge (l(\beta),l(\alpha))\notin ser$. \EOD
\end{enumerate}

\label{all2}
\end{definition}

\begin{lemma}[{\cite[Lemma 4.7]{JK95}}]
For all $u, v\in \E^*$, if $u\equiv v$, then $\prec_u=\prec_v$ and $\sqsubset_u=\sqsubset_v$. \END
\label{l:inv}
\end{lemma}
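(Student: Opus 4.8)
**Proof proposal for Lemma 4.7 (\lemref{inv}): $u \equiv v \implies {\prec_u} = {\prec_v} \wedge {\sqsubset_u} = {\sqsubset_v}$.**

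The plan is to reduce to the one-step case. Since $\equiv$ is the reflexive-transitive closure of $\approx \cup \approx^{-1}$ (Proposition~\ref{prop:comequi}), and the conclusion ``${\prec_u}={\prec_v} \wedge {\sqsubset_u}={\sqsubset_v}$'' is manifestly an equivalence relation on step sequences (symmetric and transitive, and trivially reflexive), it suffices to prove the statement when $u \approx v$. So assume $u = wAz$ and $v = wBCz$ with $A = B \cup C$, $B \cap C = \emptyset$, and $B \times C \subseteq ser$. Note first that $\Sigma_u = \Sigma_v$: the enumerated step sequences $\h{u}$ and $\h{v}$ involve exactly the same event occurrences, because the multiset of labels in each position-block is preserved (the events of $A$ are exactly those of $B$ together with those of $C$), so the enumeration indices computed by the formula for $\h{\cdot}$ agree. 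Hence $\prec_u, \prec_v, \sqsubset_u, \sqsubset_v$ all live on the same carrier $\Sigma_{\mathbf{u}} = \Sigma_{\mathbf{v}}$.

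The heart of the argument is a careful position-bookkeeping comparison of $pos_u$ and $pos_v$ on $\Sigma_u$. Writing $w = W_1 \ldots W_p$ and $z = Z_1 \ldots Z_q$, every occurrence $\alpha$ lying in (the enumerated image of) $w$ has the same position in $u$ and $v$; every occurrence in $z$ has its position shifted uniformly by $+1$ in $v$ (since the single block $A$ is replaced by the two blocks $B,C$), so relative order among $z$-occurrences and between a $w$-occurrence and a $z$-occurrence is unchanged; occurrences of $A$ split into those landing in $B$ (position $p+1$ in $v$, as in $u$) and those landing in $C$ (position $p+2$ in $v$). I would then check the four relations case by case. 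For $\lhd_u$ versus $\lhd_v$: these differ exactly on pairs $(\beta,\gamma)$ with $\beta \in \h{B}$, $\gamma \in \h{C}$ (occurrences), where $\beta \simeq_u \gamma$ but $\beta \lhd_v \gamma$; in all other pairs $\lhd_u$ and $\lhd_v$ agree, and $\lhd_u^\frown$ and $\lhd_v^\frown$ agree everywhere \emph{except} possibly on such $(\beta,\gamma)$ and $(\gamma,\beta)$ pairs — and here one uses $\alpha \lhd_u^\frown \beta \iff \alpha \neq \beta \wedge pos_u(\alpha) \le pos_u(\beta)$: for $\beta \in \h{B}, \gamma \in \h{C}$ we get $\beta \lhd_u^\frown \gamma$ (equal positions) and also $\beta \lhd_v^\frown \gamma$ ($pos_v(\beta) = p+1 < p+2 = pos_v(\gamma)$), while $\gamma \lhd_u^\frown \beta$ holds but $\gamma \lhd_v^\frown \beta$ fails. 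Now bring in the $ser$-condition: if $l(\beta) \in B$ and $l(\gamma) \in C$ then $(l(\beta),l(\gamma)) \in ser$. For $\prec$: on the only pairs where $\lhd$ changed, $\beta \not\prec_u \gamma$ (because $\neg(\beta \lhd_u \gamma)$) and $\beta \not\prec_v \gamma$ (because $(l(\beta),l(\gamma)) \in ser$); on all other pairs $\lhd_u = \lhd_v$ and the $ser$-side-condition is identical, so $\prec_u = \prec_v$. For $\sqsubset$: the only pairs where $\lhd^\frown$ changed are $(\gamma,\beta)$ with $\gamma \in \h{C}, \beta \in \h{B}$; there $\gamma \lhd_u^\frown \beta$ but to have $\gamma \sqsubset_u \beta$ we would need $(l(\beta),l(\gamma)) \notin ser$, which is false; and $\gamma \not\sqsubset_v \beta$ since $\neg(\gamma \lhd_v^\frown \beta)$. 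On every other pair $\lhd_u^\frown = \lhd_v^\frown$ and the side-condition is the same, giving $\sqsubset_u = \sqsubset_v$.

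The main obstacle — really the only place demanding care — is the exhaustive case analysis in the previous paragraph: one must be sure that the \emph{only} pairs of occurrences on which $\lhd$ (resp.\ $\lhd^\frown$) can possibly differ between $u$ and $v$ are the ``$B$--$C$'' pairs, and then that on each such pair the $ser$-hypothesis $B \times C \subseteq ser$ is exactly strong enough to kill the discrepancy in both $\prec$ and $\sqsubset$. It is worth isolating a small sublemma: for $u \approx v$ as above, $pos_u(\alpha) < pos_u(\beta) \iff pos_v(\alpha) < pos_v(\beta)$ unless $\{l(\alpha),l(\beta)\}$ straddles $B$ and $C$, and $pos_u(\alpha) \le pos_u(\beta) \iff pos_v(\alpha) \le pos_v(\beta)$ unless $\alpha \in \h{C}, \beta \in \h{B}$ (occurrence-wise). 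Once that is in hand, Definitions~\ref{all2}(1)--(2) plug in mechanically and the lemma follows; the transitive-closure wrap-up at the start then upgrades $\approx$ to $\equiv$. \qed
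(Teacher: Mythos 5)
Your proof is correct. Note, however, that the paper itself gives no proof of this lemma: it is stated with a citation to Lemma~4.7 of \cite{JK95}, so there is no in-paper argument to compare against. What you have written is a sound, self-contained verification. The reduction to the single-rewrite case is legitimate because the target relation (``same $\prec$ and same $\sqsubset$'') is an equivalence relation and $\equiv_{ser}$ is generated by $\approx_{ser}$; the observation that $\Sigma_u=\Sigma_v$ follows correctly from $A=B\cup C$ with $B\cap C=\emptyset$; and your bookkeeping correctly isolates the only pairs on which $\lhd$ (resp.\ $\lhd^\frown$) can change as the occurrence pairs straddling $\h{B}$ and $\h{C}$, where the hypothesis $B\times C\subseteq ser$ kills the discrepancy in both $\prec$ (via the $(l(\alpha),l(\beta))\notin ser$ clause) and $\sqsubset$ (via the $(l(\beta),l(\alpha))\notin ser$ clause). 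One cosmetic remark: in your closing sublemma you phrase the exceptional set in terms of the labels $l(\alpha),l(\beta)$ lying in $B$ and $C$, whereas the exception is really about the occurrences lying in $\h{B}$ and $\h{C}$ (a label of $B$ may also occur inside $w$ or $z$); your main case analysis already uses the occurrence-level formulation, so this is only a matter of stating the sublemma precisely.
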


Definition~\ref{all2} together with Lemma~\ref{l:inv} describes two basic \emph{local} invariants of the elements of $\Sigma_{\bf u}$. The relation $\prec_u$ captures the situation when $\alpha$ {\em always precedes}
$\beta$, and the relation $\sqsubset_u$ captures the situation when $\alpha$ {\em never follows}
$\beta$.

\begin{definition}
Given a comtrace ${\bf u}=[u]\in \E^*\!/\!\!\equiv$. We define 
\begin{align*}
S^{\{u\}} &\df \bigl( \Sigma_{\bf u}, \prec_u ,\sqsubset_u \bigr)^\lozenge & S_{\bf u} \df \left(\Sigma_{\bf u}, \bigcap_{x\in {\bf u}}\lhd_x, \bigcap_{x\in {\bf u}}\lhd_x^\frown\right)
\end{align*}
\EOD
\label{def:all1}
\end{definition}

The relational structure $S^{\{u\}}$ is the so-structure induced by the single step sequence $u$ and $S_{\bf u}$ is the so-structure defined by the comtrace ${\bf u}$.  The following theorem justifies the names and summarizes some nontrivial results concerning the so-structures generated by comtraces.

\begin{theorem}[\cite{JK95,JK}]
For all $u,v\in \E^*$, we have
\begin{enumerate}
\item $S^{\{u\}}$ and $S_{[u]}$ are so-structures,
\item $u \equiv v \iff S^{\{u\}}=S^{\{v\}}$,
\item $S^{\{u\}}=S_{[u]}$,
\item $ext\bigl(S_{[u]}\bigr) = \bset{ \lhd_x \mid x \in [u] }$. \END
\end{enumerate}
\label{theo:all1}
\end{theorem}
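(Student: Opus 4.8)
The plan is to prove the four claims of Theorem~\ref{theo:all1} in a logical order that lets each part feed the next, relying throughout on the closure properties of $\lozenge$-closure (Theorem~\ref{t:SO-CL}), on Lemma~\ref{l:inv}, and on the classical results about stratified extensions (Theorems~\ref{theo:SzpStrat} and~\ref{pi3}). The cleanest route is: first establish (4) about the extensions of $S_{[u]}$ directly from the definitions, then deduce (1) that $S_{[u]}$ is a so-structure (it is an intersection of so-structures of the form $S_\lhd$ from Proposition~\ref{p:soss}, hence a so-structure by Theorem~\ref{theo:SzpStrat}), then prove (3) that $S^{\{u\}} = S_{[u]}$, which simultaneously gives the remaining half of (1) (that $S^{\{u\}}$ is a so-structure), and finally derive (2) as a corollary of (3) together with Lemma~\ref{l:inv} and the reconstruction theorem.

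First I would handle (4) and the ``$\subseteq$'' part of (3) together. For (4): if $x \in [u]$ then $\lhd_x$ is a stratified order on $\Sigma_{[u]}$, and one checks $\lhd_x \in ext(S_{[u]})$ by noting $\bigcap_{y\in[u]}\lhd_y \subseteq \lhd_x$ and $\bigcap_{y\in[u]}\lhd_y^\frown \subseteq \lhd_x^\frown$ directly from Definition~\ref{def:all1} and Definition~\ref{def:extsos}; conversely, given an arbitrary $\lhd \in ext(S_{[u]})$, one must produce $x \in [u]$ with $\lhd = \lhd_x$. The natural argument: let $\lhd$ extend $S_{[u]}$, take any $x_0 \in [u]$, and use the fact that $\lhd$ and $\lhd_{x_0}$ agree on all the ``forced'' pairs, then push $x_0$ through a sequence of $ser$-splittings and merges (i.e.\ applications of $\eqa_{ser}$ and its inverse) to reach the step sequence $\Omega_\lhd$ representing $\lhd$; showing this is always possible is the part that genuinely uses the algebraic theory of comtraces, and it is essentially the content of \cite{JK95}. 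For the ``$\subseteq$'' direction of (3), every $\lhd_x$ with $x \in [u]$ is a stratified extension of the relational structure $(\Sigma_{[u]}, \prec_u, \sqsubset_u)$: from Definition~\ref{all2}, $\alpha \prec_u \beta$ forces $\alpha \lhd_x \beta$ because $\prec_u \subseteq \lhd_u = \lhd_x$ modulo the $ser$-invariance of Lemma~\ref{l:inv}, and likewise $\alpha \sqsubset_u \beta$ forces $\alpha \lhd_x^\frown \beta$; hence $S^{\{u\}} = (\Sigma_{[u]}, \prec_u, \sqsubset_u)^\lozenge \subseteq S_\lhd$ for each such $\lhd$, so $S^{\{u\}} \subseteq S_{[u]}$, which also shows the $\prec$-component of $S^{\{u\}}$ is irreflexive, so by Theorem~\ref{t:SO-CL}(3) $S^{\{u\}}$ is a so-structure.

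Next comes the reverse inclusion $S_{[u]} \subseteq S^{\{u\}}$, which is where paradigm $\pi_3$ enters. I would argue contrapositively on each coordinate: if $(\alpha,\beta) \notin \prec_{S^{\{u\}}}$ then, because $S^{\{u\}}$ is a so-structure equal to its own $\lozenge$-closure and (by (4) applied to $S^{\{u\}}$, which we now know is a so-structure) its stratified extensions are realized by step sequences in $[u]$, there is some $x \in [u]$ with $\neg(\alpha \lhd_x \beta)$, so $(\alpha,\beta) \notin \bigcap_{x\in[u]}\lhd_x$; similarly for the $\sqsubset$-coordinate one uses Theorem~\ref{pi3} to produce an extension with $\alpha \frown_\lhd \beta$ when neither $\alpha \sqsubset \beta$ nor $\beta \sqsubset \alpha$ is forced. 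This gives $S^{\{u\}} = S_{[u]}$, i.e.\ (3), and with it the full statement of (1). Finally, (2): if $u \equiv v$ then $\prec_u = \prec_v$ and $\sqsubset_u = \sqsubset_v$ by Lemma~\ref{l:inv} and $\Sigma_{[u]} = \Sigma_{[v]}$, so $S^{\{u\}} = S^{\{v\}}$ immediately; conversely if $S^{\{u\}} = S^{\{v\}}$ then $S_{[u]} = S_{[v]}$ by (3), so $[u]$ and $[v]$ have the same set of stratified extensions by (4), in particular $\lhd_u \in ext(S_{[v]})$, whence $u \in [v]$ and $[u] = [v]$.

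The main obstacle is the surjectivity half of (4): showing that \emph{every} stratified extension of $S_{[u]}$ is of the form $\lhd_x$ for some $x \in [u]$. The two coordinate-reconstruction facts (Theorems~\ref{theo:SzpStrat}, \ref{pi3}) tell us the extensions reconstruct the so-structure, but they do not by themselves tell us the extensions of the so-structure $S_{[u]}$ coincide with the step sequences of the comtrace; that coincidence is the substantive bridge between the order-theoretic and the monoid-theoretic pictures and requires the combinatorial argument that any $ser$-consistent stratified order above $S_{[u]}$ can be reached from $u$ by a chain of serialization/desserialization steps. I expect to cite \cite[Lemma~4.7, and the relevant propositions]{JK95} here rather than reprove it, since the excerpt already attributes parts (2)--(4) to \cite{JK95,JK}, and to present the rest of the argument (parts (1), (3), and the routing of (2)) in full as above.
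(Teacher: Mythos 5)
The paper does not actually prove Theorem~\ref{theo:all1}: it is imported wholesale from \cite{JK95} and \cite[Theorems 4.10 and 4.12]{JK}, so there is no in-paper argument to compare yours against. Your overall architecture --- establish the extension characterization, deduce that the intersection structure is a so-structure, show $S^{\{u\}}\subseteq S_{[u]}$ by monotonicity and idempotence of the $\lozenge$-closure, and recover the reverse inclusion via Theorem~\ref{theo:SzpStrat} and Theorem~\ref{pi3} --- is the right shape, and it mirrors how this paper handles the harder g-comtrace analogue (Theorem~\ref{theo:t7} via Lemma~\ref{lem:ext2xi} and the appendices). You also correctly isolate the genuinely hard step: showing that every stratified extension is realized by some step sequence in $[u]$. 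Citing \cite{JK95} for that step is defensible here, but it does mean the proposal is a routing of the theorem through its own source rather than an independent proof.

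There is, however, a concrete circularity in your routing. You prove (4) for $S_{[u]}$ first, and then, in the contrapositive argument for $S_{[u]}\subseteq S^{\{u\}}$, you take $(\alpha,\beta)\notin\;\prec_{S^{\{u\}}}$, extract some $\lhd\in ext(S^{\{u\}})$ with $\neg(\alpha\lhd\beta)$, and claim ``by (4) applied to $S^{\{u\}}$'' that $\lhd=\lhd_x$ for some $x\in[u]$. But (4) as you proved it concerns $ext(S_{[u]})$, and the inclusion $S^{\{u\}}\subseteq S_{[u]}$ you already have gives $ext(S_{[u]})\subseteq ext(S^{\{u\}})$ --- the \emph{wrong} direction: an extension of the smaller structure $S^{\{u\}}$ need not extend $S_{[u]}$, so you cannot conclude it is some $\lhd_x$ with $x\in[u]$. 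The lemma you actually need from \cite{JK95} is $ext(S^{\{u\}})\subseteq\bset{\lhd_x\mid x\in[u]}$, stated for the single-step-sequence structure $S^{\{u\}}$, not for $S_{[u]}$. With that version in hand the order of deductions reverses cleanly: the easy inclusion $\bset{\lhd_x\mid x\in[u]}\subseteq ext(S^{\{u\}})$ plus the hard one give $ext(S^{\{u\}})=\bset{\lhd_x\mid x\in[u]}$, Theorem~\ref{theo:SzpStrat} then yields $S^{\{u\}}=S_{[u]}$ (part (3)), and part (4) for $S_{[u]}$ follows as a corollary rather than serving as the starting point. Your deductions of (1) and (2) are fine once this is repaired.
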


\theoref{all1} states that the so-structures $S^{\{u\}}$ and $S_{[u]}$ from \defref{all1} are identical and their stratified extensions are exactly the elements of the comtrace $[u]$ with step sequences interpreted as stratified orders. However, from an algorithmic point of view, the definition of $S^{\{u\}}$ is   more interesting, since building the relations $\prec_u$ and $\sqsubset_u$ and getting their $\lozenge$-closure, which in turn can be reduced to computing transitive closure of relations, can be done   efficiently. In contrast,  a direct use of the $S_{[u]}$ definition requires precomputing up to exponentially many elements of the comtrace $[u]$.

Figure \ref{f:ct-sos} shows an example of a comtrace and the so-structure it generates.\\

\begin{figure}[h]

\setlength{\unitlength}{5mm}

\begin{picture}(11,8)(0,1)
\setlength{\unitlength}{5mm}

\put(1,4){\circle*{0.2}}
\put(1,7){\circle*{0.2}}
\put(4,4){\circle*{0.2}}
\put(4,7){\circle*{0.2}}
\dottedline[.]{0.25}(4,4)(1,7)(4,7)(1,4)

\put(0.9,7.2){\makebox(0,0)[br]{$a$}}
\put(0.9,3.8){\makebox(0,0)[tr]{$c$}}
\put(4.1,7.2){\makebox(0,0)[bl]{$b$}}
\put(4.1,4){\makebox(0,0)[tl]{$d$}}
\put(1.8,2){\makebox(0,0)[bl]{$sim$}}

\put(7,4){\circle*{0.2}}
\put(7,7){\circle*{0.2}}
\put(10,4){\circle*{0.2}}
\put(10,7){\circle*{0.2}}

\put(10,7){\vector(-1,-1){2.9}}
\put(7,7){\vector(1,0){2.9}}

\qbezier(7,7),(8.5,8),(10,7)
\put(7.4,7.2){\vector(-2,-1){0.2}}

\put(6.9,7.2){\makebox(0,0)[br]{$a$}}
\put(6.9,3.8){\makebox(0,0)[tr]{$c$}}
\put(10.1,7.2){\makebox(0,0)[bl]{$b$}}
\put(10.1,4){\makebox(0,0)[tl]{$d$}}
\put(7.8,2){\makebox(0,0)[bl]{$ser$}}

\end{picture}

\begin{picture}(0,0)(-14,1)

\put(2,3){\circle*{0.2}}
\put(2,6){\circle*{0.2}}
\put(2,9){\circle*{0.2}}
\put(5,3){\circle*{0.2}}
\put(5,9){\circle*{0.2}}

\put(2,9){\vector(0,-1){2.9}}
\put(2,6){\vector(0,-1){2.9}}
\put(5,9){\vector(0,-1){5.9}}
\put(2,9){\vector(1,-2){2.95}}
\put(2,6){\vector(1,-1){2.9}}

\put(5,9){\vector(-1,-2){2.95}}

\qbezier(2,9),(-0.5,6),(2,3)
\put(1.75,3.4){\vector(1,-2){0.2}}

\put(2.4,9.2){\makebox(0,0)[br]{$a^{(1)}$}}
\put(1.9,5.8){\makebox(0,0)[br]{$c^{(1)}$}}
\put(2.2,2.7){\makebox(0,0)[tr]{$a^{(2)}$}}

\put(5.1,9.2){\makebox(0,0)[bl]{$b^{(1)}$}}
\put(5.1,2.7){\makebox(0,0)[tl]{$d^{(1)}$}}

\put(4,1.5){\makebox(0,0)[tr]{$\prec$}}

\end{picture}

\begin{picture}(0,0)(-21,0.1)

\put(2,3){\circle*{0.2}}
\put(2,6){\circle*{0.2}}
\put(2,9){\circle*{0.2}}
\put(5,3){\circle*{0.2}}
\put(5,9){\circle*{0.2}}

\put(2,9){\vector(0,-1){2.9}}
\put(2,6){\vector(0,-1){2.9}}
\put(5,9){\vector(0,-1){5.9}}
\put(2,9){\vector(1,-2){2.95}}

\put(5,9){\vector(-1,-2){2.95}}
\put(5,9){\vector(-1,-1){2.9}}
\put(5,3){\vector(-1,0){2.9}}
\put(2,6){\vector(1,-1){2.9}}

\qbezier(2,9),(-0.5,6),(2,3)
\put(1.7,3.4){\vector(1,-2){0.2}}

\qbezier(2,3),(3.5,2),(5,3)
\put(4.8,2.8){\vector(2,1){0.2}}


\put(2.4,9.2){\makebox(0,0)[br]{$a^{(1)}$}}
\put(1.9,5.8){\makebox(0,0)[br]{$c^{(1)}$}}
\put(2.3,2.7){\makebox(0,0)[tr]{$a^{(2)}$}}

\put(5.1,9.2){\makebox(0,0)[bl]{$b^{(1)}$}}
\put(5.1,2.7){\makebox(0,0)[tl]{$d^{(1)}$}}

\put(4,1.5){\makebox(0,0)[tr]{$\sqsubset$}}

\end{picture}

\caption{An example of the relations $sim$, $ser$ on $E=\{a,b,c,d\}$, and the so-structure $(X,\prec,\sqsubset)$ defined by the comtrace $[ \{a,b\}\{c\}\{a,d\} ]_{\equiv_{ser}}=\bigl\{ \{a,b\}\{c\}\{a,d\}, \{a\}\{b\}\{c\}\{a,d\},\bigr.$ $\bigl.\{a\}\{b,c\}\{a,d\},\{b\}\{a\}\{c\}\{a,d\}\bigr\}$.
}
\label{f:ct-sos}

\end{figure}
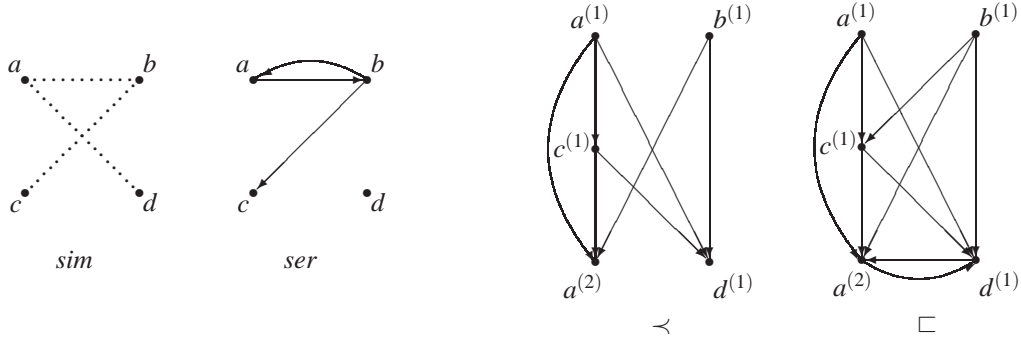

\section{Generalized Stratified Order Structures Generated by Generalized Comtraces\label{sec:gcom2strat}}
The relationship between g-comtraces and gso-structures is in principle the same as the relationship between comtraces and so-structures discussed in the previous section. Each g-comtrace uniquely determines a finite labeled gso-structure. However the formulations and proofs  of these analogue results for g-comtraces are   more complex. The difficulties are mainly due to the following facts:
\begin{itemize}
\item The definition of gso-structure is implicit, it involves using the induced so-structures (see \defref{gsos}), which makes
practically all definitions   more complex (especially the counterpart of $\lozenge$-closure), and the use of \theoref{SzpGStrat} more difficult than the use of \theoref{SzpStrat}.
\item The internal property expressed by Theorem \ref{pi3}, which says that $ext(S)$ conforms to paradigm $\pi_3$ of \cite{J4}, does not hold for gso-structures.
\item Generalized comtraces do not have a `natural' canonical form with a well understood interpretation.
\item The relation $inl$ introduces plenty of irregularities and  increases substantially the number of cases that need to be considered in many proofs.
\end{itemize}

In this section, we will prove the analogue of \theoref{all1} showing that every g-comtrace uniquely determines a finite gso-structure.

\subsection{Commutative closure of relational structures}
We will start with the notion of \emph{commutative closure} of a relational structure. It is an extension of the concept of $\lozenge$-closure (see Definition \ref{d:SO-CL}) which was used in \cite{JK95} and the previous section to construct finite so-structures from single step sequences or stratified orders.

\begin{definition}[commutative closure]\mbox{}\\
Let $G=(X,R_1,R_2)$ be any relational structure, and let $R_{3}= R_1\cap R_2^*$. Using the notation from Definition \ref{d:SO-CL}, the \emph{commutative closure} of the relational structure $G$  is defined as
\begin{center}
\hspace{0.3\textwidth}$G^{\ccl} = \Bigl(X, \sym{(\prec_{R_{3} R_2})} \cup R_1, \sqsubset_{R_{3} R_2}\Bigr).$\EOD               \end{center}
\label{def:ccl}
\end{definition}

The motivation behind the above definition is similar to that for $\lozenge$-closure: for `reasonable' $R_1$ and $R_2$, $(X,R_1,R_2)^{\ccl}$ should be a gso-structure.
Intuitively the $\ccl$-closure is also a generalization of transitive closure for relations.  Note that if $R_1=R_2$ then $(X,R_1,R_2)^{\ccl}=(X,\sym{(R_1^+)},R_1^+)$. Since the definition of gso-structures involves the definition of so-structures (see Definition \ref{def:gsos}), the definition of $\ccl$-closure uses the concept of $\lozenge$-closure.

Note that we do not have an equivalent of Theorem \ref{t:SO-CL} for $\ccl$-closure. The reason is that $\ccl$-closure is tailored to simplify the proofs  in the next section rather than to be a closure operator by itself. Nevertheless,  $\ccl$-closure satisfies some general properties of a closure operator.

The first property is the monotonicity of $\ccl$-closure.

\begin{proposition}
If $G_1=(X,R_1,R_2)$ and $G_2=(X,Q_1,Q_2)$ are two relational structures such that $G_1\subseteq G_2$, then $G_1^{\ccl}\subseteq G_2^{\ccl}$. 
\label{prop:monoccl}
\begin{proof}
Since $R_1\subseteq Q_1$ and $R_2\subseteq Q_2$ then $R_3\subseteq Q_3$, and $(X,R_3,R_2)^\Diamond \subseteq (X,Q_3,Q_2)^\Diamond$, 
i.e. \\$\prec_{R_3 R_2}\;\subseteq\;\prec_{Q_3 Q_2}$ and  
$\sqsubset_{R_3 R_2}\;\subseteq\;\sqsubset_{Q_3 Q_2}$, which immediately implies $G_1^{\ccl}\subseteq G_2^{\ccl}$ \END
\end{proof}

\end{proposition}

Another desirable property of $\ccl$-closure is that gso-structures are fixed points of  $\ccl$.

\begin{proposition}
If $G=(X,\com,\sq)$ is a gso-structure then $G=G^{\ccl}$.
\label{prop:cclgsos}
\end{proposition}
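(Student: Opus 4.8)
The plan is to show the two inclusions $G \subseteq G^{\ccl}$ and $G^{\ccl} \subseteq G$ separately, where $G=(X,\com,\sq)$ is a gso-structure. Write $R_1 = \com$, $R_2 = \sq$ and $R_3 = R_1 \cap R_2^* = \com \cap \sq^*$. By \defref{gsos}, the induced structure $S_G = (X,\prec_G,\sq)$ with $\prec_G = \com \cap \sq$ is a so-structure, so by \theoref{SO-CL}(4) we have $S_G = S_G^\lozenge$, i.e. $(X,\prec_G,\sq)^\lozenge = (X,\prec_G,\sq)$. The first thing I would do is reconcile $R_3$ with $\prec_G$: since $\sq$ is irreflexive and transitive-in-the-so-structure-sense (axioms S1, S3), we have $\sq = \sq^+ = \sq^* \setminus id_X$; moreover $\com \cap \sq \subseteq \com \cap \sq^*$ trivially, and conversely if $a \com b$ and $a\, \sq^*\, b$ with $a \neq b$ then $a \sq b$ (using $\sq = \sq^* \setminus id_X$), so $R_3 = \com \cap \sq^* = (\com \cap \sq) = \prec_G$ on distinct elements, and both are irreflexive. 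Hence $(X,R_3,R_2)^\lozenge = (X,\prec_G,\sq)^\lozenge = (X,\prec_G,\sq)$, which gives $\prec_{R_3 R_2} = \prec_G$ and $\sqsubset_{R_3 R_2} = \sq$.

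With that identification in hand, $G^{\ccl} = (X, \sym{(\prec_G)} \cup \com, \sq)$. So the weak-causality component already matches $G$ exactly, and it remains only to prove $\sym{(\prec_G)} \cup \com = \com$, i.e. $\sym{(\prec_G)} \subseteq \com$. This is where I would invoke \propref{cclsos}: applied to the relational structure $(X,\com,\sq)$ with $R_3 = \prec_G$, and using that $(X,R_3,R_2)^\lozenge = (X,\prec_G,\sq)$ is a so-structure, it yields $\prec_G = (\sym{(\prec_G)} \cup \sq) \cap \sq$. But I actually want to work with $\com$ rather than $\sq$ in the union; the cleaner route is direct: if $a \prec_G b$ then $a \com b$ (since $\prec_G = \com \cap \sq \subseteq \com$), and since $\com$ is symmetric, also $b \com a$; hence $\sym{(\prec_G)} \subseteq \com$. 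Therefore $\sym{(\prec_G)} \cup \com = \com$ and $G^{\ccl} = (X,\com,\sq) = G$.

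Thus the proof essentially reduces to two routine but slightly delicate bookkeeping steps: (i) verifying $R_3 = \com \cap \sq^*$ equals $\prec_G = \com \cap \sq$ (which uses irreflexivity of $\sq$ together with so-structure axiom S3 to collapse $\sq^*\setminus id_X$ to $\sq$), so that the $\lozenge$-closure appearing inside the definition of $\ccl$-closure is exactly $S_G$ and hence a fixed point by \theoref{SO-CL}(4); and (ii) observing $\sym{(\prec_G)} \subseteq \com$ by symmetry of $\com$. The main obstacle — such as it is — is step (i): one must be careful that $R_3$ is defined with $R_2^*$ (the reflexive transitive closure) rather than $R_2^+$, so the argument that $a\,\sq^*\,b \wedge a\neq b \implies a \sq b$ genuinely needs the so-structure axioms on $\sq$ (irreflexivity S1 plus the "transitivity" S3 applied to $S_G$) and not merely transitivity of $\sq$ as a raw relation. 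Everything else is immediate from the definitions and the already-established closure properties of $\lozenge$.
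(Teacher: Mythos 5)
Your proof is correct and takes essentially the same route as the paper's: identify $R_3=\com\cap\sq^*$ with $\prec_G=\com\cap\sq$ (via irreflexivity of $\com$ and $\sq=\sq^*\setminus id_X$), use Theorem~\ref{t:SO-CL}(4) to see that the inner $\lozenge$-closure is the fixed point $(X,\prec_G,\sq)$, and finish with $\sym{(\prec_G)}\subseteq\com$ from the symmetry of $\com$. One small slip: the intermediate equality $\sq\;=\;\sq^+$ is false when $\sq$ has cycles (which so-structures permit, e.g.\ $a\sq b\sq a$); only $\sq\;=\;\sq^+\setminus id_X\;=\;\sq^*\setminus id_X$ holds, but that is all you actually use, so the argument stands.
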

\begin{proof}
Since $G$ is a  gso-structure, by \defref{gsos}, $S_G=(X,\PO_G,\sq)$ is a so-structure. Hence, by Theorem \ref{t:SO-CL}(4), $S_G=S_G^{\Diamond}$, which implies $\sq=(\PO_G\cup\sq)^*\setminus id_X$. But since $S_G$ is a so-structure, $\PO_G\;\subseteq\;\sq$. So $\sq=\sq^*\setminus id_X$. Let $\PO=\com\cap\sq^*$. Then since $\com$ is irreflexive, 
$$\PO=\com\cap\sq^* = \com\cap\left(\sq^*\setminus id_X\right) = \com\cap\sq =\PO_G.$$
Hence, $(X,\PO,\sq)=(X,\PO_G,\sq)$ is a so-structure. By Theorem \ref{t:SO-CL}(4),  we know $(X,\PO,\sq)=\left(X,\PO,\sq\right)^{\Diamond}.$ So from \defref{ccl},  $G^{\ccl}=\left(X,\sym{\PO}\cup\com,\sq\right)$.
Since $\com$ is symmetric and $\PO\;\subseteq\; \com$, we have
$\sym{\PO}\cup\com= \com.$ Thus, $G=G^{\ccl}$.\END
\end{proof}

\subsection{Generalized stratified order structure generated by a step sequence}
We will now introduce a construction that derives a gso-structure from a single step sequence over a given g-comtrace alphabet. The idea of the construction is the same as $S^{\{u\}}$ from the previous section. First we construct some relational invariants and next we will use $\ccl$-closure in the similar manner as $\lozenge$-closure was used for $S^{\{u\}}$. However the construction is more elaborate and requires full use of the notation from Section \ref{s:sts-so} that allows us to define the formal relationship between step sequences and (labeled) stratified orders. We will also need the following two useful operators for relations.

\begin{definition}Let $R$ be a binary relation on $X$. We define the
\begin{itemize}
\item
\emph{symmetric intersection} of $R$ as
$R^{\,\Cap}\df R\cap R^{-1}$, and
\item
the \emph{complement} of $R$ as $\reco{R}\df (X\times X)\setminus R$.\EOD
\end{itemize}
\end{definition}

Let $\Theta=(E,sim,ser,inl)$ be a g-comtrace alphabet. Note that if $u\equiv w$ then $\Sigma_u=\Sigma_w$ so for every g-comtrace
${\bf s}=[s]\in \E^*/\!\equiv$, we can define $\Sigma_{\bf s} = \Sigma_s$.

\begin{definition}
Given a step sequence $s\in \E^*$.
\begin{enumerate}
\item
Let the relations $\com_s,\sq_s,\PO_s\subseteq\Sigma_{\bf s}\times\Sigma_{\bf s}$ be defined as follows:
\begin{align}
\alpha&\com_s\beta \iffdf  (l(\alpha),l(\beta))\in inl \label{eq:ss2gsos.1}\\
\alpha&\sq_s\beta \iffdf  \alpha\lhdf_s\beta \wedge (l(\beta),l(\alpha))\notin ser\cup inl  \label{eq:ss2gsos.2}\\
\alpha&\PO_s\beta \iffdf  \alpha \lhd_s \beta \notag\\
&\wedge \left(\begin{array}{ll}
	&(l(\alpha),l(\beta))\notin ser\cup inl \\
	\vee&(\alpha,\beta)\in\; \com_s\cap\left(\si{\sq_s^*}\circ\reco{\com}_s\circ \si{\sq_s^*}\right)\\
	\vee&\left(\begin{array}{ll}
			& (l(\alpha),l(\beta))\in ser\\
		\wedge	&\exists\delta,\gamma\in\Sigma_s.
		\left(\begin{array}{ll}
		&\delta\lhd_s\gamma\;\wedge\;(l(\delta),l(\gamma))\notin ser\\
		\wedge&\alpha\,\sq_s^*\,\delta\,\sq_s^*\,\beta\wedge\alpha\,\sq_s^*\,\gamma\,\sq_s^*\,\beta
		\end{array}\right)\end{array}\right)
	\end{array}\right) \label{eq:ss2gsos.3}
\end{align}
\item
The triple
$$G^{\{s\}} \df \left(\Sigma_s,\PO_s\cup\com_s,\PO_s\cup\sq_s\right)^{\ccl}$$ is called the {\em relational structure induced by the step sequence $s$}. \EOD
\end{enumerate}
\label{def:ss2gsos}
\end{definition}

The intuition of Definition \ref{def:ss2gsos} is similar to that of Definition \ref{all2}. Given a step sequence $s$ and g-comtrace alphabet $(E,sim,ser,inl)$, without analyzing any other elements of $[s]$ except $s$ itself, we would like to construct the gso-structure that is defined by the whole g-comtrace. So we will define appropriate ``local'' invariants $\com_s$, $\sq_s$ and $\PO_s$ from the sequence $s$.
\begin{enumerate}[(a)]
\item Equation \ref{eq:ss2gsos.1} is used to construct the relationship $\com_s$, where two
event occurrences $\alpha$ and $\beta$ might possibly be commutative because they
are related by the $inl$ relation.

\item Equation \ref{eq:ss2gsos.2} define the not later than relationship and this happens
when $\alpha$ occurs not later than $\beta$ on the step sequence $s$ and $\{\alpha,\beta\}$ cannot be
serialized into $\{\beta\}\{\alpha\}$, and $\alpha$ and $\beta$ are not commutative.

\item Equation \ref{eq:ss2gsos.3} is the most complicated one, since we want to take into
consideration the ``earlier than'' relationships which are not taken care of by
the commutative closure. There are three such cases:
\begin{enumerate}[(i)]
 \item $\alpha$ occurs before $\beta$ on the step sequence $s$, and two event occurrences $\alpha$ and $\beta$ cannot be put together into a single step ($(\alpha,\beta)\notin ser$) and are not commutative ($(\alpha,\beta)\notin inl$).
 \item $\alpha$ and $\beta$ are supposed to be commutative but they cannot be
commuted into $\beta$ and $\alpha$ because $\alpha$ is ``synchronous" with
some $\gamma$ and $\beta$ is ``synchronous'' with some $\delta$, and $(\gamma,\delta)$ is not in $inl$ (``synchronous'' in a sense that they must happen
simultaneously).
 \item $(\alpha, \beta)$ is in $ser$ but they can never be put
together into a single step because there are two distinct event occurrences $\delta$ and $\gamma$
which are ``squeezed''  between $\alpha$ and $\beta$ such that  
$(\delta,\gamma)\notin ser$, and thus  $\delta$ and $\gamma$ can never be put
together into a single step.
\end{enumerate}
\end{enumerate}

After building all of these ``local'' invariants from the step sequence $s$, all other ``global'' invariants which can be inferred from the axioms of the gso-structure definition are fully constructed by the commutative closure. \\

The next lemma will shows that the relations from $G^{\{s\}}$ really correspond to positional invariants of all the step sequences from the g-comtrace $[s]$.

\begin{lemma}
Let $s\in \E^*$,
$G^{\{s\}}=(\Sigma_s,\com,\sq)$, and $\PO=\com\cap\sq$. If $\alpha,\beta\in\Sigma_{{\bf s}}$, then
\begin{enumerate}
\item $\alpha\com\beta \iff  \forall u\in [s].\ pos_{u}(\alpha)\not= pos_{u}(\beta)$
\item $\alpha\sq\beta \iff \alpha\not=\beta\wedge \forall u\in [s].\ pos_{u}(\alpha)\le pos_{u}(\beta)$
\item $\alpha\PO\beta \iff \forall u\in [s].\ pos_{u}(\alpha)< pos_{u}(\beta)$
\item If $l(\alpha)=l(\beta)$ and $pos_s(\alpha)<pos_s(\beta)$, then $\alpha\PO\beta$. \END
\end{enumerate}
\label{lem:gposinv}
\end{lemma}

Eventhough the results of the above lemma are expected and look deceptively simple, the proof is long and highly technical and can be found in Appendix A. \\

Note that  \lemref{gposinv} also implies that we can construct the relational structure induced by the step sequence $G^{\{s\}}$ (we cannot claim that it is a gso-structure right now) if all the step sequences of a g-comtrace  are known. We will first show how to define the gso-structure induced from all the positional invariants of all the step sequences of a g-comtrace.

\begin{definition}
For every ${\bf s}\in \E^*/{\!\equiv}$, we define
$G_{{\bf s}} = \Bigl(\Sigma_{{\bf s}},\bigcap_{u\;\in\; {\bf s}}\sym{\lhd_u},\bigcap_{u\;\in\;{\bf s}}\lhd_u^\frown\Bigr)$. \EOD
\label{def:G-s}
\end{definition}


Note that Theorem \ref{theo:SzpGStrat} {\em does not} immediately imply that $G_{{\bf s}}$ is a gso-structure. It needs to be proved separately.

We will now show that given a step sequence $s$ over a g-comtrace alphabet, the definition of $G^{\{s\}}$ and the definition of $G_{[s]}$ yield exactly the same gso-structure.

\begin{theorem} Let $s \in \E^*$. Then $G^{\{s\}}= G_{[s]}$.
\label{theo:xi}
\end{theorem}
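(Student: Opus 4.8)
The plan is to prove the equality $G^{\{s\}}= G_{[s]}$ of relational structures by unwinding both definitions and invoking \propref{gposinv}, which already gives the essential positional characterisation of the relations in $G^{\{s\}}$. Write $G^{\{s\}}=(\Sigma_s,\com,\sq)$ and $G_{[s]}=(\Sigma_{[s]},\bigcap_{u\in[s]}\sym{\lhd_u},\bigcap_{u\in[s]}\lhd_u^\frown)$. Since $u\equiv s$ implies $\Sigma_u=\Sigma_s$, the carrier sets already agree, so it suffices to show $\com=\bigcap_{u\in[s]}\sym{\lhd_u}$ and $\sq=\bigcap_{u\in[s]}\lhd_u^\frown$.

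First I would observe that, by definition of $\sym{\lhd_u}$ and $\lhd_u^\frown$, for $\alpha\neq\beta$ in $\Sigma_s$ we have $\alpha\sym{\lhd_u}\beta \iff pos_u(\alpha)\neq pos_u(\beta)$ and $\alpha\lhd_u^\frown\beta \iff pos_u(\alpha)\le pos_u(\beta)$ (this last equivalence is exactly the remark made just after the definition of $\lhd_u$, $\simeq_u$ in \secref{background}: $\alpha\lhd_u^\frown\beta \iff \alpha\neq\beta \wedge pos_u(\alpha)\le pos_u(\beta)$). Hence
\[
\Bigl(\alpha,\beta\Bigr)\in\bigcap_{u\in[s]}\sym{\lhd_u} \iff \forall u\in[s].\ pos_u(\alpha)\neq pos_u(\beta),
\]
and similarly $(\alpha,\beta)\in\bigcap_{u\in[s]}\lhd_u^\frown \iff \alpha\neq\beta \wedge \forall u\in[s].\ pos_u(\alpha)\le pos_u(\beta)$. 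But these are precisely the right-hand sides of \propref{gposinv}(1) and \propref{gposinv}(2), whose left-hand sides are $\alpha\com\beta$ and $\alpha\sq\beta$. Therefore $\com=\bigcap_{u\in[s]}\sym{\lhd_u}$ and $\sq=\bigcap_{u\in[s]}\lhd_u^\frown$, which gives $G^{\{s\}}=G_{[s]}$ componentwise. (One should also note in passing that \propref{gposinv}(3) then yields $\PO=\com\cap\sq=\bigcap_{u\in[s]}\lhd_u$, consistent with the induced-so-structure picture, although this is not needed for the statement itself.)

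So in fact the theorem is an almost immediate corollary of \propref{gposinv}: the only content is translating the two intersections-of-extensions on the $G_{[s]}$ side into the ``$\forall u\in[s]$'' positional statements and matching them with the three clauses of \propref{gposinv}. The main obstacle is therefore not in this proof at all but is hidden inside \propref{gposinv}, whose proof (deferred to Appendix A) is the genuinely hard and technical part — one must show both that the locally-defined relations $\com_s,\sq_s,\PO_s$ followed by the $\ccl$-closure capture exactly the global positional invariants (the ``$\Rightarrow$'' directions, requiring that every inferred pair is forced in every $u\in[s]$) and, conversely, that every pair forced in all of $[s]$ is actually generated by the closure (the ``$\Leftarrow$'' directions, which is where the elaborate three-case Equation~\eqref{eq:ss2gsos.3} and the analysis of $inl$-induced irregularities do their work). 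Given \propref{gposinv}, the present proof is a half-page unwinding of definitions.
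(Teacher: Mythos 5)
Your proof is correct and follows essentially the same route as the paper: both reduce the theorem to Proposition~\ref{prop:gposinv}(1,2) by translating the intersections $\bigcap_{u\in[s]}\sym{\lhd_u}$ and $\bigcap_{u\in[s]}\lhd_u^\frown$ into the universally quantified positional conditions. Your additional remarks (the carrier sets agreeing, and the clause-(3) consistency check) are accurate but not needed; the core argument matches the paper's half-page unwinding exactly.
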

\begin{proof}
Let  $G^{\{s\}}=(\Sigma_s,\com,\sq)$ and $\alpha,\beta\in \Sigma_s$. Then by  \lemref{gposinv}(1, 2), we have
\[\begin{array}{ll}
\alpha\com\beta&\iff\forall u\in[s].\; pos_u(\alpha)\not= pos_u(\beta)\iff (\alpha,\beta)\in\bigcap_{u\;\in\;[s]}\sym{\lhd_u}\\
\alpha\;\sq\;\beta&\iff\bigl(\alpha\not=\beta\;\wedge\;\forall u\in[s].\; pos_u(\alpha)\le pos_u(\beta)\bigr)\iff(\alpha,\beta)\in\bigcap_{u\;\in\;[s]}\sym{(\lhd_u^\frown)}
\end{array}\]
Hence, $G^{\{s\}}=(\Sigma_s,\com,\sq)=\left(\Sigma_s,\bigcap_{u\;\in\;
[s]}\sym{\lhd_u},\bigcap_{u\;\in\;[s]}\lhd_u^\frown\right)=G_{[s]}$. \END
\end{proof}

We will next show that $G^{\{s\}}$ is indeed a gso-structure.

\begin{theorem}
Let $s \in \E^*$.
Then $G^{\{s\}}=\left(\Sigma_s,\com,\sq\right)$ is a gso-structure.
\label{theo:xi1}
\end{theorem}
\begin{proof}
Since  $\com=\bigcap_{u\;\in\; [s]}\sym{\lhd_u}$ and $\sym{\lhd_u}$ is irreflexive and symmetric, $\com$ is irreflexive and symmetric. Since $\sq=\bigcap_{u\;\in\;[s]}\lhd_u^\frown$ and $\lhd_u^\frown$ is irreflexive, $\sq$ is irreflexive.

Let $\PO=\com\cap\sq$, it remains to show that $S=\left(\Sigma,\PO,\sq\right)$  satisfies the conditions S1--S4 of \defref{sos}.  Since $\sq$ is irreflexive, S1 is satisfied. Since $\PO\;\subseteq\;\sq$, S2 is satisfied. Assume $\alpha\sq\beta\sq\gamma$ and $\alpha\not=\gamma$. Then
\begin{align*}
	&\alpha\sq\beta\sq\gamma\wedge  \alpha\not=\gamma &\\
\implies\;&{\textstyle (\alpha,\beta)\in\bigcap_{u\;\in\;[s]}\lhd_u^\frown \wedge\; (\beta,\gamma)\in\bigcap_{u\;\in\;[s]}\lhd_u^\frown} \;\wedge  \alpha\not=\gamma& \ttcomment{\theoref{xi}}\\
\implies\;&\forall u\in[s].\; pos_u(\alpha)\le pos_u(\beta)\le pos_u(\gamma) \;\wedge\;  \alpha\not=\gamma\hspace{2cm}& \ttcomment{Definition of $\lhd_u$}\\
\implies\;&\alpha\sq\gamma&\ttcomment{ \lemref{gposinv}(2)}
\end{align*}
Hence, S3 is satisfied. Next we assume that $\alpha\PO\beta\sq_s\gamma$. Then
\begin{align*}\textstyle
	&\alpha\PO\beta\sq\gamma&\\
\implies\;&{\textstyle(\alpha,\beta)\in\bigcap_{u\;\in\;[s]}(\lhd_u^\frown\cap\sym{\lhd_u}) \wedge (\beta,\gamma)\in\bigcap_{u\;\in\;[s]}(\lhd_u^\frown\cap\sym{\lhd_u})}&\ttcomment{\theoref{xi}}\\
\implies\;&\bigl(\forall u\in[s].\; pos_u(\alpha)\le pos_u(\beta)\wedge pos_u(\alpha)\not= pos_u(\beta)\bigr)&\\
	& \quad\quad\;\wedge\; \bigl(\forall u\in[s].\; pos_u(\beta)\le pos_u(\gamma)\wedge pos_u(\beta)\not= pos_u(\gamma)\bigr)&\ttcomment{Definition of $\lhd_u$}\\
\implies\;&\forall u\in[s].\; pos_u(\alpha)< pos_u(\gamma)&\\
\implies\;& \alpha\PO\gamma &\ttcomment{ \lemref{gposinv}(3)}
\end{align*}
Similarly, we can show $\alpha\sq\beta\PO\gamma\implies\alpha\PO\gamma$. Thus, S4 is satisfied. \END
\end{proof}

Theorem \ref{theo:xi1} justifies the following definition.
\begin{definition}
For every step sequence $s$, $G^{\{s\}} =\bigl(\Sigma_s,\PO_s\cup\com_s,\PO_s\cup\sq_s\bigr)^{\ccl} $ is the {\em gso-structure induced by $s$}. \EOD
\end{definition}

At this point it is worth discussing the roles of the two different definitions of the gso-structures generated from a given g-comtrace. \defref{ss2gsos} allows us to build the gso-structure by looking at a single step sequence of the g-comtrace and its g-comtrace alphabet. On the other hand, to build the gso-structure from a g-comtrace using \defref{G-s}, we need to know either all the positional invariants or all elements of the g-comtrace. By \theoref{xi}, these two definitions are equivalent. However, in our proof, \defref{ss2gsos} is more convenient when we want to deduce the properties of the gso-structure defined from a single step sequence over a given g-comtrace alphabet.  On the other hand, \defref{G-s} will be used to reconstruct the gso-structure when positional invariants of a g-comtrace are known.

\subsection{Generalized stratified order structures generated by generalized comtraces}
\label{Gso-gcom}
In this section, we want to show that the construction from \defref{ss2gsos} indeed yields a gso-structure representation of comtraces. But before doing so, we need some preliminary results.

\begin{proposition}
Let $s\in \E^*$.
Then $\lhd_s\in ext\bigl(G^{\{s\}}\bigr)$.
\label{prop:sinext}
\end{proposition}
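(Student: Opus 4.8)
The goal is to show that the stratified order $\lhd_s$ generated by the step sequence $s$ is a stratified extension of the gso-structure $G^{\{s\}}=(\Sigma_s,\com,\sq)$. By \defref{gsosext}, this means I must verify two implications for all $\alpha,\beta\in\Sigma_s$: first, $\alpha\com\beta \implies \alpha\sym{\lhd_s}\beta$; and second, $\alpha\sq\beta \implies \alpha\lhd_s^\frown\beta$.

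The plan is to use the positional characterizations supplied by \propref{gposinv}, together with \theoref{xi} which identifies $G^{\{s\}}$ with $G_{[s]}=\bigl(\Sigma_{[s]},\bigcap_{u\in[s]}\sym{\lhd_u},\bigcap_{u\in[s]}\lhd_u^\frown\bigr)$. First I would note that $s\in[s]$, so $s$ is one of the step sequences $u$ over which the intersections range. For the first implication: if $\alpha\com\beta$, then by \propref{gposinv}(1) we have $pos_u(\alpha)\neq pos_u(\beta)$ for all $u\in[s]$, in particular $pos_s(\alpha)\neq pos_s(\beta)$; hence either $\alpha\lhd_s\beta$ or $\beta\lhd_s\alpha$, i.e.\ $\alpha\sym{\lhd_s}\beta$. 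For the second implication: if $\alpha\sq\beta$, then by \propref{gposinv}(2) we have $\alpha\neq\beta$ and $pos_u(\alpha)\le pos_u(\beta)$ for all $u\in[s]$, in particular $pos_s(\alpha)\le pos_s(\beta)$ with $\alpha\neq\beta$; recalling from \secref{background} that $\alpha\lhd_s^\frown\beta \iff (\alpha\neq\beta \wedge pos_s(\alpha)\le pos_s(\beta))$, this gives $\alpha\lhd_s^\frown\beta$.

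Alternatively — and perhaps more in the spirit of the surrounding development — one could work directly from the definitions without invoking \propref{gposinv}: unfold $G^{\{s\}}$ as a $\ccl$-closure of $(\Sigma_s,\PO_s\cup\com_s,\PO_s\cup\sq_s)$, observe that $\lhd_s$ is clearly a stratified extension of the ``core'' relations $\PO_s$ and $\sq_s$ (since $\PO_s\subseteq\lhd_s$ and $\sq_s\subseteq\lhd_s^\frown$ by \defref{ss2gsos}, and $\com_s\subseteq\sym{\lhd_s}$ because $(l(\alpha),l(\beta))\in inl$ together with $inl\cap sim=\emptyset$ forces $\alpha,\beta$ into distinct steps of $s$), and then argue that the $\ccl$-closure operation preserves the property of having $\lhd_s$ as a stratified extension. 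This second route requires checking that applying $\prec_{R_3R_2}$, $\sqsubset_{R_3R_2}$ and the symmetric closure to relations that $\lhd_s$ extends (in the appropriate sense) yields relations that $\lhd_s$ still extends — essentially because $\lhd_s$ is itself transitive and $\lhd_s^\frown$ enjoys the closure property $\lhd_s^\frown\circ\lhd_s^\frown\subseteq\lhd_s^\frown\cup id$, etc.

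I expect the first route to be essentially immediate once \propref{gposinv} is available, so the only real ``obstacle'' is purely bookkeeping: making sure the statement of stratified extension (\defref{gsosext}) is matched exactly, and that the relevant parts of \propref{gposinv} apply to the specific representative $s\in[s]$. The main subtlety to flag is that $G^{\{s\}}$ is defined via $\ccl$-closure and thus its $\com$ and $\sq$ components are not literally $\com_s$ and $\sq_s$; so I must be careful to use the characterization of the actual closed relations (either \propref{gposinv} or \theoref{xi}) rather than the pre-closure invariants when verifying the two implications. Given the machinery already in place, the proof should be short — a couple of lines for each implication.
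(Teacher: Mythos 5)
Your first route is exactly the paper's own proof: it applies \propref{gposinv}(1) and (2) to the particular representative $s\in[s]$ to obtain $pos_s(\alpha)\neq pos_s(\beta)$ and $\alpha\neq\beta\wedge pos_s(\alpha)\le pos_s(\beta)$, and then reads off the two conditions of \defref{gsosext}; your explicit handling of the $\alpha\neq\beta$ clause for $\lhd_s^\frown$ is if anything slightly more careful than the paper's one-line version. The alternative $\ccl$-closure route you sketch is unnecessary given \propref{gposinv}, and the paper does not pursue it.
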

\begin{proof}
Let $G^{\{s\}}=(\Sigma,\com,\sq)$. By  \lemref{gposinv}, for all $\alpha,\beta\in \Sigma$,
\begin{align*}
\alpha \com \beta &\implies pos_{s}(\alpha)\not= pos_{s}(\beta)\implies \alpha\lhd_s \beta \;\vee\; \beta\lhd_s \alpha \implies \alpha\sym{\lhd_s} \beta\\
\alpha\sq\beta &\implies pos_{s}(\alpha)\le pos_{s}(\beta)\implies \alpha\lhd_s^\frown \beta
\end{align*}
Hence, by \defref{gsosext}, we get $\lhd_s\in ext\bigl(G^{\{s\}}\bigr)$.\END
\end{proof}

\begin{proposition}
Let $s\in \E^*$.
If $\lhd\in ext(G^{\{s\}})$, then there exists $u\in \E^*$  such that $\lhd=\lhd_u$.
\label{prop:gsextss}
\end{proposition}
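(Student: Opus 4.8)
The statement asserts that every stratified extension of the gso-structure $G^{\{s\}}$ arises as $\lhd_u$ for some step sequence $u$. Since $G^{\{s\}}$ is a gso-structure (by the theorem preceding \propref{sinext}), every $\lhd \in ext(G^{\{s\}})$ is by definition a stratified order on $\Sigma_s$. By the correspondence between stratified orders and step sequences recalled in \secref{background} (Propositions \ref{corSSOS} and \ref{OrderOnEC}), the stratified order $\lhd$ can be written as $\Omega_\lhd = A_1 \ldots A_k$, a sequence of the equivalence classes of $\simeq_\lhd$. So the real content is not the existence of \emph{some} sequence representing $\lhd$ as a stratified order, but that this sequence $\Omega_\lhd$ is actually a legal \emph{step sequence over the g-comtrace alphabet} $\Theta$, i.e. that each block $A_i$ is a clique of $(E,sim)$ after stripping the event-occurrence superscripts (equivalently, each $A_i \in \E$ once we map event occurrences to their labels), and moreover that $u := \Omega_\lhd$ lies in $[s]$ (so that $\Sigma_u = \Sigma_s$ is consistent). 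The plan is: set $u = \Omega_\lhd$ and show $u \in \E^*$, from which $\lhd = \lhd_u$ is immediate by construction of $\Omega_\lhd$.

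First I would fix two distinct event occurrences $\alpha, \beta \in \Sigma_s$ lying in the same block of $\Omega_\lhd$, i.e. $\alpha \simeq_\lhd \beta$, which means $\neg(\alpha \lhd \beta)$ and $\neg(\beta \lhd \alpha)$, hence $\alpha \frown_\lhd \beta$. I must show $(l(\alpha), l(\beta)) \in sim$. Suppose not. Since $(l(\alpha),l(\beta)) \notin sim$ and $sim \cap inl = \emptyset$, there are two possibilities depending on whether the pair is in $inl$. If $(l(\alpha),l(\beta)) \in inl$, then by \eref{ss2gsos.1} we have $\alpha \com_s \beta$, hence $\alpha$ and $\beta$ are in the $\com$-relation of $G^{\{s\}}$ (since $\com_s \subseteq \com$ via the $\ccl$-closure and \propref{gposinv}(1)); but $\lhd \in ext(G^{\{s\}})$ forces $\alpha \com \beta \implies \alpha \sym{\lhd} \beta$, contradicting $\alpha \frown_\lhd \beta$. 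If instead $(l(\alpha),l(\beta)) \notin inl \cup sim$, then it is also $\notin ser$ (as $ser \subseteq sim$) and $\notin ser^{-1}$; in particular, whichever of $\alpha, \beta$ comes first in $s$, say $\alpha$ before $\beta$ (or they are in the same block of $s$, handled the same way once we observe $(l(\alpha),l(\beta)) \notin sim$ means they cannot be in a common block of $s$ either — so WLOG $\alpha \lhd_s \beta$), Equation \eref{ss2gsos.3} case (i), namely $\alpha \lhd_s \beta \wedge (l(\alpha),l(\beta)) \notin ser \cup inl$, gives $\alpha \PO_s \beta$, hence $\alpha \PO \beta$ in $G^{\{s\}}$ (again via $\ccl$-closure and \propref{gposinv}(3)); but then $\lhd \in ext(G^{\{s\}})$ forces $\alpha \lhd \beta$, contradicting $\alpha \frown_\lhd \beta$. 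Either way we reach a contradiction, so $(l(\alpha),l(\beta)) \in sim$. Since $\alpha,\beta$ were arbitrary distinct occurrences in the same block, each block of $\Omega_\lhd$ is a $sim$-clique, hence $u = \Omega_\lhd \in \E^*$.

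Having established $u \in \E^*$, by the defining property of $\Omega_\lhd$ we have $\lhd_u = \lhd_{\Omega_\lhd} = \lhd$, which completes the proof. I should also remark that $u \in [s]$: since $\lhd = \lhd_u \in ext(G^{\{s\}}) = ext(G_{[s]})$ by \theoref{xi}, and — anticipating the analogue of \theoref{all1}(4) for g-comtraces — one expects $ext(G_{[s]}) = \{\lhd_x \mid x \in [s]\}$, giving $u \equiv s$; however this last point may not be needed for the present statement and the proof can stop at $\lhd = \lhd_u$. The main obstacle is the clique argument above: one must be careful that the case analysis on $(l(\alpha),l(\beta))$ relative to $sim$, $ser$, $inl$ is exhaustive, and that the passage from the ``local'' relations $\com_s, \PO_s$ to the relations $\com, \PO$ of the $\ccl$-closed structure $G^{\{s\}}$ is legitimate — this is exactly what \propref{gposinv}(1,3) supplies, since it characterizes $\com$ and $\PO$ as positional invariants over all of $[s]$, and $s$ itself witnesses the required inequalities. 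One subtlety worth double-checking is the sub-case where $\alpha, \beta$ lie in the \emph{same} block of $s$ but $(l(\alpha),l(\beta)) \notin sim$: this cannot happen, because $s \in \E^*$ means every block of $s$ is already a $sim$-clique, so $\alpha, \beta$ in a common block of $s$ forces $(l(\alpha),l(\beta)) \in sim$, immediately contradicting our assumption; hence WLOG $\alpha \lhd_s \beta$ or $\beta \lhd_s \alpha$, and the argument above applies.
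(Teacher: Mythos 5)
Your core argument is correct and follows essentially the same route as the paper: take $\Omega_\lhd=B_1\ldots B_k$, show each block is a $sim$-clique by exploiting the fact that $\lhd$ extends $G^{\{s\}}$, and read off $u$ by stripping labels. The only stylistic difference is in how the clique property is verified: you split on whether $(l(\alpha),l(\beta))$ lies in $inl$ or in neither $inl$ nor $sim$, routing the first case through $\com_s\subseteq\,\com$ and the second through $\PO_s\subseteq\,\PO$, whereas the paper gets both cases at once by observing that $(l(\alpha),l(\beta))\notin sim$ forces $pos_v(\alpha)\neq pos_v(\beta)$ for \emph{every} $v\in[s]$, so $\alpha\com\beta$ by \propref{gposinv}(1) and hence $\alpha\sym{\lhd}\beta$. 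Both verifications are sound; the paper's is a one-liner, yours costs an extra case but uses only the local definitions.

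There is, however, one step you leave implicit that the paper treats explicitly and that is genuinely needed. The proposition requires $u\in\E^*$, i.e.\ a sequence of subsets of $E$, and $\lhd_u$ is defined on the \emph{re-enumerated} occurrences $\Sigma_u$. Writing $u=l[B_1]\ldots l[B_k]$ and concluding $\lhd_u=\lhd$ presupposes (a) that $l$ is injective on each block, so $|l[B_i]|=|B_i|$, and (b) that the occurrences $e^{(1)},e^{(2)},\ldots$ of a repeated event $e$ appear in $\Omega_\lhd$ in increasing superscript order, so that re-enumerating $u$ reproduces exactly the original elements of each $B_i$. Point (a) does follow from your clique argument (since $sim$ is irreflexive, two occurrences of the same event cannot share a block), but (b) does not: without it, $\Omega_{\lhd_u}$ could differ from $\Omega_\lhd$ by a permutation of same-labelled occurrences, and then $\lhd_u\neq\lhd$. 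The paper closes this with its equation \eref{gsextss.2}, derived from \propref{samelab}(2) and \propref{gposinv}(1); the cleanest patch for your write-up is \propref{gposinv}(4), which gives $l(\alpha)=l(\beta)\wedge pos_s(\alpha)<pos_s(\beta)\implies\alpha\PO\beta$, hence $\alpha\lhd\beta$, so the blocks of $\Omega_\lhd$ respect the index order. With that one sentence added, your proof is complete.
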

\begin{proof}
Let $G^{\{s\}}=(\Sigma_s,\com,\sq)$ and $\Omega_\lhd=B_1\ldots B_k$. We will show that $u=l[B_1]\ldots l[B_k]$ is a step sequence such that $\lhd=\lhd_u$.

Suppose $\alpha,\beta\in B_i$ are two distinct event occurrences such that $(l(\alpha),l(\beta))\notin sim$. Then $pos_s(\alpha)\not=pos_s(\beta)$, which by  \lemref{gposinv} implies that $\alpha\com\beta$. Since $\lhd\in ext(G^{\{s\}})$, by \defref{gsosext}, $\alpha\lhd\beta$ or $\beta\lhd\alpha$ contradicting that $\alpha,\beta\in B_i$. Thus, we have shown for all $B_i$ ($1\le i\le k$),
\begin{equation}
\alpha,\beta\in B_i \wedge \alpha\not=\beta \implies  (l(\alpha),l(\beta))\in sim
\label{eq:gsextss.1}
\end{equation}
By \propref{samelab}(2) (in Appendix A), if $e^{(i)},e^{(j)}\in \Sigma_s$ and $i\not=j$ then $\forall u\in[s].\;pos_u\bigl(e^{(i)}\bigr)\not=pos_u\bigl( e^{(j)}\bigr)$. So it follows from  \lemref{gposinv}(1) that $e^{(i)}\com e^{(j)}$. Since $\lhd\in ext(G^{\{s\}})$, by \defref{gsosext},
\begin{equation}
\text{if } e^{(k_0)}\in B_k \text{ and } e^{(m_0)}\in B_m \text{, then }  k_0\not=m_0\iff k\not=m.
\label{eq:gsextss.2}
\end{equation}
From \eref{gsextss.1} it follows that $u$ is a step sequence over $\theta$. Also by \eref{gsextss.2}, $pos_u^{-1}[\set{i}]=B_i$  and $|l[B_i]|=|B_i|$ for all $i$. Hence, $\Omega_\lhd=\Omega_{\lhd_u}$, which implies $\lhd=\lhd_u$.\END
\end{proof}

We want to show that two step sequences over the same g-comtrace alphabet induce the same gso-structure iff they belong to the same g-comtrace (\theoref{xieqb} below). The proof of an analogous result for comtraces from \cite{JK95} is simpler because every comtrace has a unique natural canonical representation that is both greedy and maximally concurrent and can be easily constructed. Moreover the canonical representation for comtraces correspond to the unique greedy stratified extension of appropriate causality relation $\prec$ (see \cite{JK95}). Nothing similar holds for g-comtraces.
For g-comtraces both natural representations, GMC and MC, are not unique. The g-canonical representation (Definition \ref{g-can}) is unique but its uniqueness is artificial and induced by some  step sequence lexicographical order $\lex{<}$ (Definition
\ref{def:lexord}).  Nevertheless this  lexicographical order $\lex{<}$ will be the basic tool used in the next lemma. The lack of natural unique representation will make our reasoning a bit harder.

\begin{lemma}
Let $s$ be a step sequence over a g-comtrace alphabet $(E,ser,sim,inl)$ and $<_E$ be any total order on $E$. Let $u=A_1\ldots A_n$  be the g-canonical representation of $[s]$ (i.e.,  $u$ is the least element of the g-comtrace $[s]$ w.r.t.  $\lex{<}$). Let $G^{\{s\}}=(\Sigma,\com,\sq)$ and $\PO=\com\cap\sq$. For each $X\subseteq\Sigma$, let $\mins_{\PO}(X)$ denote the set of all minimal elements of $X$ w.r.t. $\PO$ and define
\begin{align*}
Z(X)\df \Bigl\lbrace Y \subseteq \mins_{\PO}(X) \Bigl\lvert \begin{array}{l}
	 \bigl(\forall \alpha,\beta\in Y.\; \neg(\alpha\com\beta)\bigr)
\wedge  \bigl(\forall \alpha\in Y\;\forall \beta\in X\setminus Y.\; \neg (\beta \sq \alpha)\bigr)
\end{array}\Bigr. \Bigr\rbrace
\end{align*}

Let $\h{u}=\h{A_1}\ldots \h{A_n}$ be the enumerated step sequence of $u$. Then $A_i$ is the least element of the set $\bset{l[Y]\mid Y\in Z(\Sigma\setminus\Al\bigl(\h{A_1}\ldots\h{A_{i-1}})\bigr)}$ w.r.t. the ordering $\stor{<}$. \END
\label{lem:semican}
\end{lemma}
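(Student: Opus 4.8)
The plan is to prove the claim by induction on the index $i$, using \propref{gposinv} to translate the relational conditions defining $Z(X)$ into positional statements about step sequences in $[s]$, and then exploiting the fact that $u$ is \emph{g-canonical} (hence in GMC-form by \propref{col:g-can}(1), and lexicographically least) to pin down $A_i$ exactly. The key observation is that, writing $s_i \df u \LC A_1\ldots A_{i-1} = A_i\ldots A_n$, we have $[s_i]$ is a g-comtrace over the same alphabet with set of event occurrences $\Sigma_{s_i} = \Sigma\setminus\Al(\h{A_1}\ldots\h{A_{i-1}})$, and by \propref{invsubs} (applied with the left-cancellation operator) the gso-structure $G^{\{s_i\}}$ is exactly the restriction of $G^{\{s\}}=(\Sigma,\com,\sq)$ to $\Sigma_{s_i}$. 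Moreover $A_i\ldots A_n$ is the g-canonical representation of $[s_i]$, since left-cancellation preserves both the GMC property and (by \propref{p3}-style arguments together with the definition of $\lex{<}$) lexicographic minimality. Thus the problem reduces to the base case $i=1$: show that $A_1$ is the least element, w.r.t.\ $\stor{<}$, of $\{l[Y]\mid Y\in Z(\Sigma)\}$.

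For the base case I would argue in two directions. First, $\h{A_1}\in Z(\Sigma)$: every $\alpha\in\h{A_1}$ is $\PO$-minimal because no event occurrence can positionally precede it in $u$ (so by \propref{gposinv}(3) nothing is $\PO$-below it); distinct $\alpha,\beta\in\h{A_1}$ satisfy $pos_u(\alpha)=pos_u(\beta)$, hence $\neg(\alpha\com\beta)$ by \propref{gposinv}(1); and if $\beta\in\Sigma\setminus\h{A_1}$ with $\beta\sq\alpha$ for some $\alpha\in\h{A_1}$, then $pos_u(\beta)\le pos_u(\alpha)=1$ by \propref{gposinv}(2), forcing $\beta\in\h{A_1}$, a contradiction. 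Second, $A_1$ is $\stor{<}$-least among all $l[Y]$ with $Y\in Z(\Sigma)$: given any such $Y$, I claim there is a step sequence $v\equiv u$ whose first step is $l[Y]$. To build it, one checks that the conditions defining $Z(\Sigma)$ — $Y$ consists of pairwise non-$\com$, $\PO$-minimal occurrences with nothing outside $Y$ being $\sq$ below any element of $Y$ — are exactly what is needed to peel off $l[Y]$ as an initial step: pairwise non-$\com$ means they can sit in one step (no $inl$ obstruction, and being mutually $\frown$ in some extension they are $sim$-compatible by \propref{gposinv}), $\PO$-minimality plus the $\sq$-closure condition means no occurrence is forced to precede or occur-not-later-than an element of $Y$, so by \propref{gsextss} / \propref{sinext} there is a stratified extension of $G^{\{s\}}$ with $Y$ as its first equivalence class, and it corresponds to a step sequence $v\in[s]$. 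Since $u$ is lexicographically least, $A_1\stor{<}l[Y]$ or $A_1=l[Y]$; and since $u$ is GMC, taking $Y=\h{A_1}$ shows $|A_1|$ is maximal, so in fact $A_1$ is the $\stor{<}$-minimum.

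The inductive step then follows formally: assuming the statement for indices $<i$ on arbitrary g-comtraces, apply the base-case result to the g-canonical representation $A_i\ldots A_n$ of $[s_i]$, whose induced gso-structure is the restriction of $(\Sigma,\com,\sq)$ to $\Sigma\setminus\Al(\h{A_1}\ldots\h{A_{i-1}})$, to conclude that $A_i$ is $\stor{<}$-least in $\{l[Y]\mid Y\in Z(\Sigma\setminus\Al(\h{A_1}\ldots\h{A_{i-1}}))\}$.

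The main obstacle I anticipate is the construction, in the base case, of a genuine step sequence $v\in[s]$ beginning with $l[Y]$ for an \emph{arbitrary} $Y\in Z(\Sigma)$ — that is, verifying that the purely relational conditions in the definition of $Z$ really do guarantee the existence of such a $v$. This requires carefully converting "$Y$ is a set of pairwise non-$\com$, $\PO$-minimal occurrences with no external $\sq$-predecessor" into "there is a stratified extension of $G^{\{s\}}$ whose least equivalence class is $Y$", which in turn needs \propref{gsextss} together with a check that extending $G^{\{s\}}$ by ordering $Y$ before everything else (and keeping the rest of some fixed extension, e.g. $\lhd_s$, unchanged) still yields a stratified order compatible with $\com$ and $\sq$. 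Handling the $inl$-induced commutativity correctly here — ensuring $Y$ really can be realized as a single simultaneous step rather than only as an interleaving — is the delicate point, and is where the condition $\forall\alpha,\beta\in Y.\;\neg(\alpha\com\beta)$ (as opposed to merely $\neg(\alpha\PO\beta)\wedge\neg(\beta\PO\alpha)$) is essential.
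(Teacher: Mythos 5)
Your overall architecture matches the paper's: show $\h{A_1}\in Z(\Sigma)$ via \propref{gposinv}, show $A_1$ is $\stor{<}$-least by producing, for a competitor $Y\in Z(\Sigma)$, a step sequence $v\in[s]$ whose first step is $l[Y]$ and invoking lexicographic minimality of $u$, then recurse on the tail. The first direction and the recursion are fine. The problem is the step you yourself flag as the main obstacle: your route to the competitor $v$ is circular. You propose to build a stratified order $\lhd$ with $Y$ as its least equivalence class, check $\lhd\in ext(G^{\{s\}})$, and then conclude that $\lhd$ ``corresponds to a step sequence $v\in[s]$.'' But \propref{gsextss} only yields a step sequence $v$ over the alphabet with $\lhd_v=\lhd$; it says nothing about $v\eqb s$. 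The statement you actually need --- that every stratified extension of $G^{\{s\}}$ arises from an element of $[s]$ --- is Theorem \ref{theo:t7}, which is proved \emph{after} and \emph{from} this lemma: Theorem \ref{theo:xieqb}, whose proof is precisely an application of Lemma \ref{lem:semican}, is an ingredient of Theorem \ref{theo:t7}. So at this point in the development you have no licence to pass from ``stratified extension of the gso-structure'' back to ``member of the g-comtrace.''

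The paper avoids this by staying entirely inside $[s]$ and working combinatorially with the rewriting relation. It takes $Y_0$ with $l[Y_0]$ the $\stor{<}$-least label set, assumes $l[Y_0]\stor{<}A_1$ for contradiction, and proves by induction on $|W|$ that every nonempty $W\subseteq Y_0$ can be gathered into an initial step $W_0$ with $W\subseteq W_0\subseteq Y_0$ of some representative of $[u]$. The base case chooses a representative minimizing the weight of the prefix ending at the step containing a fixed $\alpha_0\in W$, uses \propref{invsubs} and \propref{gposinv}(2) to show that every occurrence in that prefix is $\sq$-below $\alpha_0$ and hence (by the second condition defining $Z$) lies in $Y_0$, and then uses the non-$\com$ condition together with \propref{invsubs}(3) to show the prefix's steps are pairwise $ser$-related and can be merged into one. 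Taking $W=Y_0$ yields $v=l[Y_0]\,v_0\eqb u$, contradicting lexicographic minimality. This explicit gathering argument is the real combinatorial content of the lemma; without it, or some other non-circular proof that any $Y\in Z(\Sigma)$ can be realized as an initial step of a member of $[s]$, the second half of your base case does not go through.
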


Before presenting the proof, we will explain the intuition behind the definition of the set $Z(X)$. Let us consider $Z(\Sigma)$ first. Then $A_1$ in this lemma is the least element of the set $\set{l[Y]\mid Y\in Z(\Sigma)}$ w.r.t. the ordering $\stor{<}$. Our goal is to construct $A_1$ by looking only at the gso-structure $G$ without having to construct up to exponentially many stratified extensions of $G$. The most technical part of this proof is to show that $\h{A_1}$ actually belongs to the set $Z(\Sigma)$. Recall that to show that 
$Y\in Z(\Sigma)$ satisfies, we want to show that $Y$ satisfies the following conditions:
\begin{enumerate}[i.]
\item
 no two elements in $Y$ are commutative,
\item
 for an element $\alpha \in Y$ and $\beta \in \Sigma\setminus Y$, it is not
the case that $\beta$ is not later than $\alpha$.
\end{enumerate}

Note that we actually define $Z(X)$ instead of $Z(\Sigma)$, because we want to apply
it successively to build {\em all} the steps $A_{i}$ of the g-canonical representation $u$ of $G^{\{s\}}$. This lemma can be seen
as an algorithm to build the g-canonical representation of $[s]$ by looking only at $G^{\{s\}}$.

\begin{proof}[of Lemma \ref{lem:semican}]
First notice that by \lemref{gposinv}(3), for every nonempty $X\subseteq\Sigma$, since $\Sigma$ is finite, we know that  $\mins_{\PO}(X)$ is nonempty and finite. Furthermore by  \lemref{gposinv}(4), if $e^{(i)},e^{(j)}\in \Sigma$ and $i<j$, then  $e^{(i)}\PO e^{(j)}$. Hence, for all $\alpha,\beta\in \mins_{\PO}(X)$, where $X\subseteq \Sigma$, we have $l(\alpha)\not=l(\beta)$. This ensures that if $Y\in Z(X)$ and $X\subseteq \Sigma$, then $|Y|=|l[Y]|$.

For all $\alpha\in\h{A_1}$ and $\beta\in\Sigma$, $pos_s(\beta)\ge pos_s(\alpha)$. Hence, by  \lemref{gposinv}(3), $\neg(\beta\PO\alpha)$. Thus,
\begin{equation}
\h{A_1}\subseteq \mins_{\PO}(\Sigma)
\label{eq:semican.1}
\end{equation}
For all $\alpha,\beta\in\h{A_1}$, since $pos_s(\beta)= pos_s(\alpha)$, by  \lemref{gposinv}(1), we have
\begin{equation}
\neg(\alpha\com\beta)
\label{eq:semican.2}
\end{equation}
For any $\alpha\in\h{A_1}$ and $\beta\in \Sigma\setminus\h{A_1}$, since $pos_s(\beta)< pos_s(\alpha)$, by  \lemref{gposinv}(2),
\begin{equation}
\neg (\beta \sq \alpha)
\label{eq:semican.3}
\end{equation}
From \eref{semican.1}, \eref{semican.2} and \eref{semican.3}, we know that $\h{A_1}\in Z(\Sigma)$. Hence, $Z(\Sigma)\not=\emptyset$. This ensures the  least element of $\set{l[Y]\mid Y\in Z(\Sigma)}$ w.r.t. $\stor{<}$ is well defined.

Let $Y_0\in Z(\Sigma)$ such that $B_0=l[Y_0]$ is the least element of $\set{l[Y]\mid Y\in Z(\Sigma)}$ w.r.t.  $\stor{<}$. We want to show that $A_1=B_0$. Since $\stor{<}$ is a total order, we know that $A_1\stor{<}B_0$ or $B_0\stor{<}A_1$ or $A_1=B_0$. But since $\h{A_1}\in Z(\Sigma)$ and $B_0$ be the least element of the set $\set{l[B]\mid B\in Z(\Sigma)}$, $\neg(A_1\stor{<}B_0)$. Hence, to show that $A_1=B_0$, it suffices to show $\neg(B_0\stor{<}A_1)$.

Suppose that $B_0\stor{<}A_1$. We first want to show that for every nonempty $W\subseteq Y_0$ there is an enumerated step sequence $v$ such that
\begin{equation}
\h{v}=W_0\h{v_0}\eqb \h{A_1}\ldots \h{A_n}\text{ and } W\subseteq W_0\subseteq Y_0
\label{eq:semican.g1}
\end{equation}
We will prove this by induction on $|W|$. 

\paragraph{Base case} When $|W|=1$, we let $\set{\alpha_0}=W$. We choose $\h{v_1}=\h{E_0}\ldots\h{E_k}\h{y_1}\eqb \h{A_1}\ldots \h{A_n}$ and $\alpha_0\in\h{E_k}$ $(k\ge 0)$ such that  for all $\h{v'}=\h{E'_0}\ldots\h{E'_{k'}}\;\h{y'_1}\eqb \h{A_1}\ldots \h{A_n}$ and $\alpha_0\in\h{E'_{k'}}$, we have
\begin{enumerate}[(i)]
\item $\wei(\h{E_0}\ldots\h{E_k})\le \wei(\h{E'_0}\ldots\h{E'_{k'}})$, and
\item $\wei(\h{E_{k-1}}\;\h{E_k})\le \wei(\h{E'_{k'-1}}\;\h{E'_{k'}})$.
\end{enumerate}
We then consider only $\h{w}=\h{E_0}\ldots\h{E_k}$. We observe by the way we chose $\h{v_1}$, we have
$\forall \beta\in \Al(\h{w}).\bigl(\beta\not=\alpha_0 \implies \forall t\in[w].\; pos_t(\beta)\le pos_t(\alpha_0)\bigr).$
Hence, since $\h{w}=\h{u}\RC\h{v_0}$, it follows from \propref{invsubs}(1,2) that
\[\forall \beta\in \Al(\h{w}).\Bigl(\beta\not=\alpha_0 \implies\forall t\in[A_1\ldots A_n].\; pos_t(\beta)\le pos_t(\alpha_0)\Bigr)
\]
Then it follows from  \lemref{gposinv}(2) that $\forall \beta\in \Al(\h{w}).\; (\beta\not=\alpha_0 \implies \beta\sq\alpha_0)$. But by the way $Y_0$ was chosen, we know that $\forall \alpha\in Y_0.\;\forall \beta\in \Sigma\setminus Y_0.\; \neg (\beta \sq \alpha)$. Hence,
\begin{equation}
\Al(\h{w})=(\h{E_0}\cup\ldots\cup\h{E_k})\subseteq Y_0
\label{eq:semican.5}
\end{equation}

We next want to show
\begin{equation}
\forall\alpha\in \h{E_i}.\forall\beta\in\h{E_j}.\;\set{\alpha}\set{\beta}\eqb\set{\alpha,\beta} \quad\quad (0\le i<j\le k)
\label{eq:semican.6}
\end{equation}
Suppose not. Then either $[\set{\alpha}\set{\beta}]=\set{\set{\alpha}\set{\beta}}$ or $[\set{\alpha}\set{\beta}]=\set{\set{\alpha}\set{\beta},\set{\beta}\set{\alpha}}$. In either case,  we have $\forall t\in[\set{l(\alpha)}\set{l(\beta)}].\;pos_t(\alpha)\not=pos_t(\beta)$. Since $\set{\alpha}\set{\beta}\eqb\pi_{\set{\alpha,\beta}}(\h{u})$, by \propref{invsubs}(3), $\forall t\in[u].\;pos_t(\alpha)\not=pos_t(\beta)$. So by  \lemref{gposinv}, $\alpha\com\beta$. This contradicts that $Y_0\in Z(\Sigma)$ and $\alpha,\beta\in \Sigma(\h{w})\subseteq Y_0$. Thus, we have shown \eref{semican.6}, which implies  that for all  $\alpha\in \h{E_i}$ and $\beta\in\h{E_j}$ ($0\le i<j\le k$), $(l(\alpha),l(\beta))\in ser$. Then $\h{E_0}\ldots\h{E_k}\eqb\bigcup_{i=0}^{k}\h{E_i}$. Hence, by \eref{semican.5} and \eref{semican.6}, there exists a step sequence $v''_1$ such that $\h{v''_1}=\left(\bigcup_{i=0}^{k}\h{E_i}\right)\h{y_1}\eqb \h{A_1}\ldots \h{A_n}$
and $\set{\alpha_0}\subseteq \bigcup_{i=0}^{k}\h{E_i}\subseteq Y_0$.

\paragraph{Inductive step} When $|W|>1$, we pick an element $\beta_0\in W$. By applying the induction hypothesis on $W\setminus \set{\beta_0}$, we get a step sequence $v_2$ such that $\h{v_2}=\h{F_0}\h{y_2}\eqb \h{A_1}\ldots \h{A_n}$ where $W\setminus \set{\beta_0}\subseteq\h{F_0}\subseteq Y_0$. If $W\subseteq \h{F_0}$, we are done. Otherwise, proceeding like the base case, we construct a step sequence $v_3$ such that $\h{v_3}=\h{F_0}\;\h{F_1}\h{y_3}\eqb \h{A_1}\ldots \h{A_n}$ and $\set{\beta_0}\subseteq\h{F_1}\subseteq Y_0$. Since  $\h{F_0}\subseteq Y_0$, we have $W\subseteq \h{F_0}\cup\h{F_1}\subseteq Y_0$. Then similarly to how we proved \eref{semican.6}, we can show that
$\forall\alpha\in \h{F_0}.\;\forall \beta\in\h{F_1}.\;\set{\alpha}\set{\beta}\eqb\set{\alpha,\beta}$.
This means that for all $\alpha\in \h{F_0}$ and $\beta\in\h{F_1}$, $(l(\alpha),l(\beta))\in ser$. Hence, $\h{F_0}\h{F_1}\eqb \h{F_0}\cup\h{F_1}$. Hence, there is a step sequence $v_4$ such that $\h{v_4}=(\h{F_0}\cup\h{F_1})\;\h{y_4}\eqb \h{A_1}\ldots \h{A_n}$ and $W\subseteq (\h{F_0}\cup\h{F_1}) \subseteq Y_0$.\\

Thus, we have shown \eref{semican.g1}. So by choosing $W=Y_0$, we get a step sequence $v$ such that
$\h{v}=W_0\h{v_0}\eqb \h{A_1}\ldots \h{A_n}$ and $Y_0\subseteq W_0\subseteq Y_0$. Hence, $\h{v}=W_0\h{v_0}\eqb \h{A_1}\ldots \h{A_n}$. Thus, $v=B_0 v_0\eqb A_1\ldots A_n$. But since $B_0\stor{<}A_1$, this contradicts the fact that $A_1\ldots A_n$ is the least element of $[s]$ w.r.t. $\lex{<}$. Hence, $A_1$ is the least element of $\set{l[Y]\mid Y\in Z(\Sigma)}$ w.r.t.  $\stor{<}$.\\

We now prove that $A_i$ is the least element of
the set $\bset{l[Y]\mid Y\in Z(\Sigma\setminus\Al\bigl(\h{A_1}\ldots\h{A_{i-1}})\bigr)}$
w.r.t. $\stor{<}$ by induction on $n$, the number of steps of the g-canonical step sequence $u=A_1\ldots A_n$. If $n=0$, we are done. If $n>0$, then we have just shown that $A_1$ is the least element of $\bigl\{l[Y]\mid Y\in Z(\Sigma)\bigr\}$ w.r.t.  $\stor{<}$. By applying the induction hypothesis on $p=\h{A_2}\ldots\h{A_n}$, $\Sigma_{p}=\Sigma\setminus\h{A_1}$, and its gso-structure $(\Sigma_{p},\com\cap (\Sigma_{p}\times \Sigma_{p}),\sq\cap(\Sigma_{p}\times \Sigma_{p}))$, we get  $A_i$ is the least element of the set
$\bset{l[Y]\mid Y\in Z(\Sigma\setminus\Al\bigl(\h{A_1}\ldots\h{A_{i-1}})\bigr)}$
 w.r.t. $\stor{<}$ for all $i\ge 2$.
\END
\end{proof}

\begin{theorem}
Let $s$ and $t$ be step sequences over a g-comtrace alphabet $(E,sim,ser,inl)$. Then $s\eqb t$ iff $G^{\{s\}}=G^{\{t\}}$.
\label{theo:xieqb}
\end{theorem}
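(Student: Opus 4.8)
The plan is to prove both implications using the two descriptions of the gso-structure generated by a step sequence, namely \defref{ss2gsos} and the equivalent formulation $G^{\{s\}}=G_{[s]}$ from \theoref{xi}, together with the g-canonical representation and Lemma \ref{lem:semican}.

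For the forward implication ($s\eqb t \implies G^{\{s\}}=G^{\{t\}}$), I would argue directly from \theoref{xi}: if $s\eqb t$ then $[s]=[t]$, so $\Sigma_{[s]}=\Sigma_{[t]}$ and the relations $\bigcap_{u\in[s]}\sym{\lhd_u}$ and $\bigcap_{u\in[s]}\lhd_u^\frown$ are literally the same as the corresponding intersections over $[t]$. Hence $G_{[s]}=G_{[t]}$, and by \theoref{xi}, $G^{\{s\}}=G_{[s]}=G_{[t]}=G^{\{t\}}$. This direction is essentially immediate once \theoref{xi} is in hand.

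For the reverse implication ($G^{\{s\}}=G^{\{t\}} \implies s\eqb t$), I would fix an arbitrary total order $<_E$ on $E$ and compare the g-canonical representations of $[s]$ and $[t]$. Let $u=A_1\ldots A_n$ be the g-canonical representation of $[s]$ and $u'=A'_1\ldots A'_m$ be the g-canonical representation of $[t]$; these exist and are unique by Corollary \ref{col:g-can}. The key point is that Lemma \ref{lem:semican} describes the steps $A_1,A_2,\ldots$ purely in terms of the gso-structure $G^{\{s\}}$: $A_i$ is the $\stor{<}$-least element of $\bset{l[Y]\mid Y\in Z(\Sigma\setminus\Al(\h{A_1}\ldots\h{A_{i-1}}))}$, where $Z$ is built from the relations $\com$, $\sq$, $\PO$ of $G^{\{s\}}$. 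Since $G^{\{s\}}=G^{\{t\}}$, in particular $\Sigma_s=\Sigma_t$ (same underlying set) and the relations $\com,\sq,\PO$ coincide. An induction on $i$ then shows $A_i=A'_i$ for all $i$ and $n=m$: at each stage the set $\Sigma\setminus\Al(\h{A_1}\ldots\h{A_{i-1}})$ is the same for both (by the inductive hypothesis and the fact that $\Sigma$ and the enumeration coincide), so the set $Z(\cdot)$ and hence its $\stor{<}$-least element are the same. Therefore $u=u'$, so $[s]=[u]=[u']=[t]$, i.e.\ $s\eqb t$.

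The main obstacle is making the induction in the reverse direction airtight, specifically verifying that $\Sigma_s=\Sigma_t$ can indeed be extracted from the hypothesis $G^{\{s\}}=G^{\{t\}}$ (this is part of what it means for two relational structures to be equal) and that the enumerated-step-sequence bookkeeping in Lemma \ref{lem:semican} — which uses $\Al(\h{A_1}\ldots\h{A_{i-1}})$ and the labelling $l[\cdot]$ — is genuinely determined by the gso-structure alone. One must check that passing from $A_i=A'_i$ to equality of the ``remaining'' ground sets $\Sigma\setminus\Al(\h{A_1}\ldots\h{A_{i-1}})$ and of the restricted gso-structures is legitimate, which follows because $\Al(\h{A_1}\ldots\h{A_{i-1}})$ is a function of $A_1,\ldots,A_{i-1}$ and the enumeration convention, both of which are now shared. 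Beyond this, the heavy lifting has already been done in Lemma \ref{lem:semican} and \theoref{xi}, so the remainder is a routine induction.
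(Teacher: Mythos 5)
Your proposal is correct and follows essentially the same route as the paper: the forward direction via \theoref{xi} (since $[s]=[t]$ makes $G_{[s]}=G_{[t]}$ literally identical), and the reverse direction by observing that Lemma \ref{lem:semican} reconstructs the unique $\lex{<}$-least (g-canonical) representative of the g-comtrace purely from the gso-structure, so $G^{\{s\}}=G^{\{t\}}$ forces the two canonical representatives to coincide and hence $s\eqb t$. The paper states this more tersely, but your added bookkeeping about $\Sigma_s=\Sigma_t$ and the step-by-step induction is just an explicit unwinding of the same argument.
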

\begin{proof} ($\Rightarrow$) If $s\eqb t$, then $[s]=[t]$. Hence, by \theoref{xi},
$G^{\{s\}}=G^{\{t\}}$.

($\Leftarrow$) By \lemref{semican},  we can use $G^{\{s\}}$ to construct a unique element $w_1$ such that $w_1$ is the least element of $[s]$ w.r.t. $\lex{<}$, and then use $G^{\{t\}}$ to construct a unique element $w_2$ that is the least element of $[t]$ w.r.t. $\lex{<}$. But since $G^{\{s\}}=G^{\{t\}}$, we get $w_1=w_2$. Hence, $s\eqb t$. \END
\end{proof}

Theorem \ref{theo:xieqb} justifies the following definition:
\begin{definition}
For every g-comtrace $[s]$, $G_{[s]}=G^{\{s\}} =\left(\Sigma_s,\PO_s\cup\com_s,\PO_s\cup\sq_s\right)^{\ccl} $ is the {\em gso-structure induced by the g-comtrace $[s]$}. \EOD
\end{definition}

To end this section, we prove two major results. \theoref{t7} says that the stratified extensions of the gso-structure induced by a g-comtrace $[t]$ are exactly those generated by the step sequences in $[t]$. \theoref{t7extra} says that the gso-structure induced by a g-comtrace is uniquely identified by any of its stratified extensions.

\begin{lemma}
Let $s,t \in \E^*$
and $\lhd_s\in ext(G^{\{t\}})$. Then $G^{\{s\}}=G^{\{t\}}$. \END
\label{lem:ext2xi}
\end{lemma}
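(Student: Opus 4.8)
The plan is to leverage the two previous results \propref{gsextss} and \theoref{xieqb} together with the characterization of the induced gso-structure as the ``positional invariant'' structure $G_{[\cdot]}$ given by \theoref{xi}. Recall that $\lhd_s\in ext(G^{\{t\}})$ is the hypothesis. The key observation is that by \propref{gsextss}, since $\lhd_s$ is a stratified extension of $G^{\{t\}}$, there is a step sequence $u\in\E^*$ with $\lhd_s=\lhd_u$; and since $\lhd_u=\lhd_s$ forces $\Omega_{\lhd_u}=\Omega_{\lhd_s}$, we may in fact take $u$ to have the same enumerated step sequence as $s$ up to relabelling, so that $\lhd_s=\lhd_u$ as stratified orders on $\Sigma_s=\Sigma_u$. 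The real content therefore reduces to showing that $s\eqb t$, because then \theoref{xieqb} immediately gives $G^{\{s\}}=G^{\{t\}}$.

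First I would show that $ext(G^{\{t\}})=\set{\lhd_u\mid u\in[t]}$; this is essentially \theoref{t7} announced right after the statement, but one direction — that every stratified extension of $G^{\{t\}}$ is of the form $\lhd_u$ for some $u\in[t]$ — is exactly what needs the work here, and it is where \lemref{semican} does the heavy lifting. Concretely: given $\lhd\in ext(G^{\{t\}})$, \propref{gsextss} produces $u$ with $\lhd=\lhd_u$; then I would argue $G^{\{u\}}=G^{\{t\}}$ by observing that $G^{\{u\}}=G_{[u]}$ and $G^{\{t\}}=G_{[t]}$ (\theoref{xi}), and that the g-canonical representative of $[u]$ is reconstructible from $G^{\{u\}}$ alone via \lemref{semican} — but one must check $\lhd_u\in ext(G^{\{u\}})$ (true by \propref{sinext}) and $\lhd_u\in ext(G^{\{t\}})$ (the hypothesis), and then use the fact that the reconstruction in \lemref{semican} depends only on the gso-structure. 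Since $G^{\{u\}}$ and $G^{\{t\}}$ have the same underlying set $\Sigma_t=\Sigma_u$ and $\lhd_u$ is an extension of both, and \lemref{semican} builds the g-canonical step sequence step-by-step purely from the minimal-element / commutativity / weak-causality data of the structure, it would suffice to show that $G^{\{u\}}\subseteq G^{\{t\}}$ or that they share enough structure to force the same canonical form; the cleanest route is to show directly $ext(G^{\{t\}})\subseteq ext(G^{\{u\}})$ and conclude via \theoref{theo:SzpGStrat} that $G^{\{t\}}\subseteq G^{\{u\}}$, then symmetrically, hence equality.

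So the skeleton is: (1) from $\lhd_s\in ext(G^{\{t\}})$ and \propref{gsextss}, get $u\in\E^*$ with $\lhd_u=\lhd_s$; (2) show $u\in[t]$, equivalently $G^{\{u\}}=G^{\{t\}}$, using \lemref{semican} to reconstruct the g-canonical forms of $[u]$ and $[t]$ from the (claimed equal) structures and \propref{sinext} to know $\lhd_u$ extends $G^{\{u\}}$; (3) observe $\lhd_u=\lhd_s$ implies $\Sigma_u=\Sigma_s$ and that $s$ and $u$ generate the same stratified order, and since a step sequence is determined by the stratified order it generates, combined with $u\in[t]$, deduce $s\eqb u\eqb t$ — wait, more carefully: $\lhd_s=\lhd_u$ does \emph{not} immediately give $s\eqb u$ unless $s=u$, so instead I would note $\lhd_s\in ext(G^{\{t\}})$ and separately that $\lhd_s\in ext(G^{\{s\}})$ by \propref{sinext}; applying \lemref{semican} to both $G^{\{s\}}$ and $G^{\{t\}}$, the canonical representative of $[s]$ and of $[t]$ are each determined by their respective gso-structures, and the equality of the \emph{sets of stratified extensions} (which follows once we know both equal $\set{\lhd_v\mid v\in[s]}=\set{\lhd_v\mid v\in[t]}$, using step (2)'s argument applied symmetrically) forces the canonical forms to coincide, whence $[s]=[t]$ and finally $G^{\{s\}}=G^{\{t\}}$ by \theoref{xieqb}.

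The main obstacle I anticipate is step (2): carefully establishing that $G^{\{u\}}=G^{\{t\}}$ when we only know $\lhd_u\in ext(G^{\{t\}})$. The subtlety is that \lemref{semican} reconstructs the g-canonical representative of $[s]$ from $G^{\{s\}}$, but to apply it to conclude something about $[t]$ from $\lhd_u$ being one of its extensions, I need that $ext(G^{\{t\}})$ really is $\set{\lhd_v\mid v\in[t]}$ — and proving this containment $ext(G^{\{t\}})\subseteq\set{\lhd_v\mid v\in[t]}$ is precisely the delicate part, requiring \propref{gsextss} to produce a candidate step sequence $u$ and then \lemref{semican} (plus \propref{sinext} and \theoref{xi}) to verify $u\in[t]$. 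I expect this to be the bulk of the proof, with the rest being bookkeeping via \theoref{xieqb} and \theoref{theo:SzpGStrat}.
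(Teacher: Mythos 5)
Your proposal is circular at its core. The containment $ext(G^{\{t\}})\subseteq\set{\lhd_v\mid v\in[t]}$ that you correctly identify as ``the delicate part'' is, in the paper, the hard direction of \theoref{t7} -- and the paper proves that direction \emph{by invoking} \lemref{ext2xi}, the very statement you are trying to prove. Tracing your step (2): you want $G^{\{u\}}=G^{\{t\}}$ for the $u$ produced by \propref{gsextss} (note that since $\lhd_u=\lhd_s$ forces $\Omega_{\lhd_u}=\Omega_{\lhd_s}$, this $u$ is just $s$ again, so the goal is literally the lemma's conclusion). The only leverage you have is that $\lhd_s$ is a \emph{single} common stratified extension of both $G^{\{s\}}$ (by \propref{sinext}) and $G^{\{t\}}$ (by hypothesis). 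Nothing you cite converts one shared extension into equality of the two structures: \lemref{semican} reconstructs the g-canonical representative of $[s]$ from the full relations $\com$, $\sq$, $\PO$ of $G^{\{s\}}$, not from one of its extensions, so ``the reconstruction depends only on the gso-structure'' gives you two a priori different canonical forms, one per structure, with no reason for them to coincide. The claim that sharing an extension forces equality is exactly \theoref{t7extra}, which the paper again derives \emph{from} \lemref{ext2xi}. Your fallback -- ``show directly $ext(G^{\{t\}})\subseteq ext(G^{\{u\}})$'' and apply \theoref{SzpGStrat} -- is never carried out, and (besides having the resulting inclusion of structures backwards: more extensions means a smaller intersection) establishing that containment of extension sets is at least as hard as the lemma itself.

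The paper's proof takes a completely different, non-circular route: it compares the \emph{local} relations of \defref{ss2gsos} head-on, showing $\com_t=\com_s$ (immediate, since both are defined by $inl$ on labels), then $\sq_t=\sq_s$ and $\PO_t=\PO_s$ by a case analysis that repeatedly uses the hypothesis $\lhd_s\in ext(G^{\{t\}})$ to transfer ordering facts from $G^{\{t\}}$ into positional facts about $s$ (e.g.\ $\alpha\sq_t\beta$ implies $\alpha\lhd_s^\frown\beta$, hence $pos_s(\alpha)\le pos_s(\beta)$, which combined with the label conditions yields $\alpha\sq_s\beta$), together with \propref{gposinv}. Equality of the three local relations then gives equality of the $\ccl$-closures. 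If you want to rescue your plan, you must supply an independent argument for one of: (a) $ext(G^{\{t\}})\subseteq\set{\lhd_v\mid v\in[t]}$, or (b) that two induced gso-structures sharing a stratified extension have equal g-canonical reconstructions under \lemref{semican}; as written, both are assumed rather than proved.
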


The proof of the above lemma uses \defref{ss2gsos} heavily and thus requires a separate analysis of many cases and was moved to Appendix B.

\begin{theorem}
Let $s,t \in \E^*$.
Then $ext(G^{\{s\}})=\set{\lhd_u\mid u\in[s]}$.
\label{theo:t7}
\end{theorem}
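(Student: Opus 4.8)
The plan is to prove the two set inclusions separately, using the preliminary results already established in this subsection. For the inclusion $\set{\lhd_u\mid u\in[s]}\subseteq ext(G^{\{s\}})$, I would take an arbitrary $u\in[s]$. By \theoref{xieqb} (since $u\eqb s$), we have $G^{\{u\}}=G^{\{s\}}$. By \propref{sinext} applied to $u$, we know $\lhd_u\in ext(G^{\{u\}})=ext(G^{\{s\}})$, which gives one direction immediately.

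For the reverse inclusion $ext(G^{\{s\}})\subseteq\set{\lhd_u\mid u\in[s]}$, I would take an arbitrary $\lhd\in ext(G^{\{s\}})$. First, by \propref{gsextss}, there is a step sequence $u\in \E^*$ such that $\lhd=\lhd_u$. It remains to show $u\in[s]$, i.e., $u\eqb s$. Since $\lhd_u=\lhd\in ext(G^{\{s\}})$, \lemref{ext2xi} (with the roles: $\lhd_u\in ext(G^{\{s\}})$) yields $G^{\{u\}}=G^{\{s\}}$. Then by \theoref{xieqb}, $G^{\{u\}}=G^{\{s\}}$ implies $u\eqb s$, so $u\in[s]$ and hence $\lhd=\lhd_u\in\set{\lhd_v\mid v\in[s]}$.

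Combining the two inclusions gives $ext(G^{\{s\}})=\set{\lhd_u\mid u\in[s]}$, and the appearance of $t$ in the statement is just a renaming artifact (the identity holds for every step sequence $s$, hence also for $t$; more precisely, if $s\eqb t$ then $[s]=[t]$ and $G^{\{s\}}=G^{\{t\}}$, so the two sides coincide regardless). The only genuine work is bundled into the two lemmas invoked: \propref{gsextss}, which reconstructs a step sequence realizing a given stratified extension, and especially \lemref{ext2xi}, whose proof (deferred to Appendix B) is the real obstacle since it requires showing that a stratified extension of $G^{\{t\}}$ induces a step sequence $s$ with exactly the same commutativity and weak-causality relations, via a careful case analysis based on \defref{ss2gsos}. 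Given those two lemmas together with \theoref{xieqb} and \propref{sinext}, the present theorem is a short formal consequence, so I would not expect any difficulty beyond correctly bookkeeping which of $s$, $t$, $u$ plays which role in each invocation.
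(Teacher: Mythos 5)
Your proposal is correct and follows essentially the same route as the paper's own proof: both directions are obtained by exactly the same invocations of \propref{gsextss}, \lemref{ext2xi}, \theoref{xieqb} and \propref{sinext}. Your side remark that $t$ is a vestigial variable in the statement is also accurate.
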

\begin{proof} ($\subseteq$) Suppose $\lhd\in ext(G^{\{s\}})$. By  \propref{gsextss}, there is a step sequence $u$ such that $\lhd_u=\lhd$. Hence, by \lemref{ext2xi}, we have $G^{\{u\}}=G^{\{s\}}$, which by \theoref{xieqb} implies that $u\eqb s$. Hence, $ext(G^{\{s\}})\subseteq\set{\lhd_u\mid u\in[s]}$.

($\supseteq$) If $u\in[s]$, then it follows from \theoref{xieqb} that $G^{\{u\}}=G^{\{s\}}$. This and \propref{sinext} imply $\lhd_u\in ext(G^{\{s\}})$. Hence, $ext(G^{\{s\}})\supseteq\set{\lhd_u\mid u\in[s]}$.\END
\end{proof}

\begin{theorem}
Let $s,t \in \E^*$
and $ext(G^{\{s\}})\cap ext(G^{\{t\}})\not=\emptyset$. Then $s\eqb t$.
\label{theo:t7extra}
\end{theorem}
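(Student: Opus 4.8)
The plan is to reduce the statement to \theoref{t7} together with \theoref{xieqb}. Suppose $\lhd \in ext(G^{\{s\}})\cap ext(G^{\{t\}})$. By \propref{gsextss} applied to the step sequence $s$, since $\lhd \in ext(G^{\{s\}})$, there is a step sequence $u\in \E^*$ such that $\lhd = \lhd_u$. Now $\lhd_u = \lhd \in ext(G^{\{s\}})$, so by \lemref{ext2xi} (with the roles: $\lhd_u\in ext(G^{\{s\}})$) we get $G^{\{u\}}=G^{\{s\}}$. Likewise $\lhd_u = \lhd \in ext(G^{\{t\}})$, so again by \lemref{ext2xi} we get $G^{\{u\}}=G^{\{t\}}$. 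Combining, $G^{\{s\}}=G^{\{u\}}=G^{\{t\}}$, and then \theoref{xieqb} immediately gives $s\eqb t$.

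An essentially equivalent route, which I would actually prefer for cleanliness, is to invoke \theoref{t7} directly: $ext(G^{\{s\}})=\set{\lhd_u\mid u\in[s]}$ and $ext(G^{\{t\}})=\set{\lhd_u\mid u\in[t]}$. If these two sets share a common element $\lhd$, then $\lhd=\lhd_u=\lhd_v$ for some $u\in[s]$ and $v\in[t]$. Since a stratified order determines its step sequence uniquely (the step sequence is just the sequence of equivalence classes $\Omega_\lhd$, cf. \propref{OrderOnEC} and the discussion of $\Omega_\lhd$), $\lhd_u=\lhd_v$ forces $u=v$ as step sequences. Hence $u\in[s]\cap[t]$, so $[s]=[t]$, i.e. $s\eqb t$.

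There is no real obstacle here: the theorem is a short corollary, and essentially all the work has been front-loaded into \theoref{t7} (which rests on \lemref{ext2xi} and \lemref{semican}) and \theoref{xieqb}. The only point requiring a line of justification is the observation that $\lhd_u=\lhd_v \implies u=v$, which is just the uniqueness half of the correspondence between step sequences and stratified orders established in \secref{background}. I would therefore present the proof in the two-line form: extract a common $u$ with $\lhd=\lhd_u$ via \propref{gsextss}, apply \lemref{ext2xi} twice to conclude $G^{\{s\}}=G^{\{u\}}=G^{\{t\}}$, and finish with \theoref{xieqb}.

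\begin{proof}
Let $\lhd\in ext(G^{\{s\}})\cap ext(G^{\{t\}})$. Since $\lhd\in ext(G^{\{s\}})$, by \propref{gsextss} there is a step sequence $u\in \E^*$ such that $\lhd=\lhd_u$. Then $\lhd_u\in ext(G^{\{s\}})$, so by \lemref{ext2xi} we have $G^{\{s\}}=G^{\{u\}}$. Similarly $\lhd_u=\lhd\in ext(G^{\{t\}})$, so by \lemref{ext2xi} again, $G^{\{t\}}=G^{\{u\}}$. Therefore $G^{\{s\}}=G^{\{u\}}=G^{\{t\}}$, and by \theoref{xieqb} we conclude $s\eqb t$. \END
\end{proof}
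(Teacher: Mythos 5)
Your proof is correct and follows essentially the same route as the paper's own argument: extract $u$ with $\lhd=\lhd_u$ via Proposition~\ref{prop:gsextss}, apply Lemma~\ref{lem:ext2xi} twice to get $G^{\{s\}}=G^{\{u\}}=G^{\{t\}}$, and conclude with Theorem~\ref{theo:xieqb}. The alternative route via Theorem~\ref{theo:t7} that you sketch is also valid, but the version you present is exactly the paper's proof.
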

\begin{proof}
Let $\lhd\in ext(G^{\{s\}})\cap ext(G^{\{t\}})$. By \propref{gsextss}, there is a step sequence $u$ such that $\lhd_u=\lhd$. By \lemref{ext2xi}, we have $G^{\{s\}}=G^{\{u\}}=G^{\{t\}}$. This and \theoref{xieqb} yields $s\eqb t$.\END
\end{proof}

Summing up, we have proved the analogue of \theoref{all1} for g-comtraces. In fact, \theoref{all1} is a straightforward consequence of this section for $inl=\emptyset$.

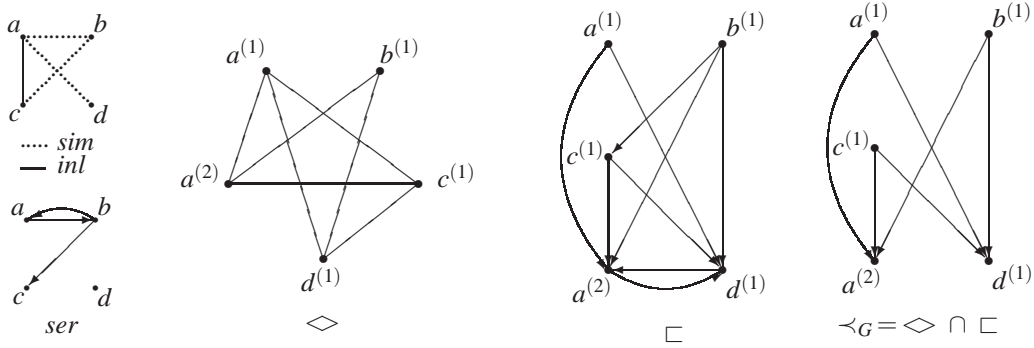
\begin{figure}

\setlength{\unitlength}{5mm}

\begin{picture}(11,7)(0,-1)
\setlength{\unitlength}{3mm}

\put(1,4){\circle*{0.2}}
\put(1,7){\circle*{0.2}}
\put(4,4){\circle*{0.2}}
\put(4,7){\circle*{0.2}}
\dottedline[.]{0.25}(4,4)(1,7)(4,7)(1,4)
\drawline(1,4)(1,7)

\put(0.9,7.2){\makebox(0,0)[br]{$a$}}
\put(0.9,3.8){\makebox(0,0)[tr]{$c$}}
\put(4.1,7.2){\makebox(0,0)[bl]{$b$}}
\put(4.1,4){\makebox(0,0)[tl]{$d$}}

\dottedline[.]{0.25}(1.0,2.2)(2.0,2.2)
\put(2.5,2){\makebox(0,0)[bl]{$sim$}}

\drawline(0.9,1.2)(2.0,1.2)
\put(2.5,1){\makebox(0,0)[bl]{$inl$}}

\end{picture}

\begin{picture}(0,0)(3.5,3)
\setlength{\unitlength}{3mm}

\put(7,4){\circle*{0.2}}
\put(7,7){\circle*{0.2}}
\put(10,4){\circle*{0.2}}
\put(10,7){\circle*{0.2}}

\put(10,7){\vector(-1,-1){2.9}}
\put(7,7){\vector(1,0){2.9}}

\qbezier(7,7),(8.5,8),(10,7)
\put(7.4,7.2){\vector(-2,-1){0.2}}

\put(6.9,7.2){\makebox(0,0)[br]{$a$}}
\put(6.9,3.8){\makebox(0,0)[tr]{$c$}}
\put(10.1,7.2){\makebox(0,0)[bl]{$b$}}
\put(10.1,4){\makebox(0,0)[tl]{$d$}}
\put(7.8,2){\makebox(0,0)[bl]{$ser$}}

\end{picture}

\begin{picture}(0,0)(-5,2)

\put(1,5){\circle*{0.2}}
\put(2,8){\circle*{0.2}}
\put(3.5,3){\circle*{0.2}}
\put(5,8){\circle*{0.2}}
\put(6,5){\circle*{0.2}}
\drawline(2,8)(6,5)
\drawline(5,8)(1,5)(2,8)(3.5,3)
\drawline(1,5)(6,5)(3.5,3)(5,8)

\put(2,8.2){\makebox(0,0)[br]{$a^{(1)}$}}
\put(5,8.2){\makebox(0,0)[bl]{$b^{(1)}$}}
\put(0.8,4.9){\makebox(0,0)[br]{$a^{(2)}$}}

\put(6.5,4.9){\makebox(0,0)[bl]{$c^{(1)}$}}

\put(4,2.8){\makebox(0,0)[tr]{$d^{(1)}$}}

\put(3,1.5){\makebox(0,0)[tl]{$\com$}}

\end{picture}

\begin{picture}(0,0)(-21,1.2)
\setlength{\unitlength}{5mm}

\put(2,3){\circle*{0.2}}
\put(2,6){\circle*{0.2}}
\put(2,9){\circle*{0.2}}
\put(5,3){\circle*{0.2}}
\put(5,9){\circle*{0.2}}

\put(2,6){\vector(0,-1){2.9}}
\put(5,9){\vector(0,-1){5.9}}
\put(2,9){\vector(1,-2){2.95}}
\put(2,6){\vector(1,-1){2.9}}

\put(5,9){\vector(-1,-2){2.95}}

\qbezier(2,9),(-0.5,6),(2,3)
\put(1.75,3.4){\vector(1,-2){0.2}}

\put(2.4,9.2){\makebox(0,0)[br]{$a^{(1)}$}}
\put(1.9,5.8){\makebox(0,0)[br]{$c^{(1)}$}}
\put(2.2,2.8){\makebox(0,0)[tr]{$a^{(2)}$}}

\put(5.1,9.2){\makebox(0,0)[bl]{$b^{(1)}$}}
\put(5.1,2.9){\makebox(0,0)[tl]{$d^{(1)}$}}

\put(1,1.5){\makebox(0,0)[tl]{$\prec_G\:=\:\com\;\cap\;\sqsubset$}}

\end{picture}

\begin{picture}(0,0)(-14,0.6)

\put(2,3){\circle*{0.2}}
\put(2,6){\circle*{0.2}}
\put(2,9){\circle*{0.2}}
\put(5,3){\circle*{0.2}}
\put(5,9){\circle*{0.2}}

\put(2,6){\vector(0,-1){2.9}}
\put(5,9){\vector(0,-1){5.9}}
\put(2,9){\vector(1,-2){2.95}}

\put(5,9){\vector(-1,-2){2.95}}
\put(5,9){\vector(-1,-1){2.9}}
\put(5,3){\vector(-1,0){2.9}}
\put(2,6){\vector(1,-1){2.9}}

\qbezier(2,9),(-0.5,6),(2,3)
\put(1.7,3.4){\vector(1,-2){0.2}}

\qbezier(2,3),(3.5,2),(5,3)
\put(4.8,2.8){\vector(2,1){0.2}}


\put(2.4,9.2){\makebox(0,0)[br]{$a^{(1)}$}}
\put(1.9,5.8){\makebox(0,0)[br]{$c^{(1)}$}}
\put(2.1,2.8){\makebox(0,0)[tr]{$a^{(2)}$}}

\put(5.1,9.2){\makebox(0,0)[bl]{$b^{(1)}$}}
\put(5.1,2.9){\makebox(0,0)[tl]{$d^{(1)}$}}

\put(4,1.5){\makebox(0,0)[tr]{$\sqsubset$}}

\end{picture}

\caption{A g-comtrace alphabet $(E,sim,ser,inl)$, where $E=\{a,b,c,d\}$, the gso-structure $G=(X,\com,\sqsubset)$ and $\prec_G=\com\cap\sqsubset$ defined by the g-comtrace
$[ \{a,b\}\{c\}\{a,d\} ]=\bset{ \{a,b\}\{c\}\{a,d\},$ $ \{a\}\{b\}\{c\}\{a,d\},
\{a\}\{b,c\}\{a,d\}, \{b\}\{a\}\{c\}\{a,d\},
\{b\}\{c\}\{a\}\{a,d\},$ $\{b,c\},\{a\}\{a,d\}}$.}
\label{f:gct-gos}

\end{figure}

Figure \ref{f:gct-gos} shows an example of a g-comtrace and the gso-structure it generates.

\section{Conclusion and Future Work \label{sec:conclusion}}
The comtrace concept is revisited and its extension, the g-comtrace notion, is introduced. Comtraces and g-comtraces are generalizations of Mazurkiewicz traces, where the  concepts of simultaneity, serializability and interleaving are used to define the quotient monoids instead of the usual independency relation in the case of traces. We analyzed some  algebraic properties of comtraces and g-comtraces, where an interesting application  is the proof of the uniqueness of comtrace canonical representation. We study the canonical representations of traces, comtraces and g-comtraces and their mutual relationships in a more unified framework. We observe that comtraces have a natural unique canonical form which corresponds to  their maximal concurrent representation\footnote{This is also true for traces when they are represented as {\em vector firing sequences} \cite{DJKL}}, while the unique canonical representation of g-comtrace can only be obtained by choosing the lexicographically least element. 

The most important contribution of this paper, \theoref{t7}, shows that every g-comtrace uniquely determines a labeled gso-structure. We believe the reason why the proof of \theoref{t7} is   more technical than the similar theorem for comtraces  is that both comtraces and so-structures satisfy paradigm $\pi_3$ while g-comtraces and gso-structures do not. Intuitively, what paradigm $\pi_3$ really says is that the underlying structure consists of partial orders. For comtraces and so-structures, we did augment some more priority relationships into the incomparable elements with respect to the standard causal partial order to produce the not later than relation, and this process might introduce cycles into the graph of the ``not later than'' relation. However, it is important to observe that any two distinct elements lying on a cycle of the ``not later than'' relation must belong to the same synchronous set since the ``not later than'' relation is a \emph{strict preorder}. Thus, if  we  collapse each synchronous set into a single vertex, then the resulting ``quotient'' graph of the ``not later than'' relation is a partial order. The reader is referred to the second author's recent work \cite{Le10} for more detailed discussion on the preorder property of the ``not later than'' relation and how this property manifests itself in the comtrace notion. When paradigm $\pi_3$ is not satisfied, as with g-comtraces or gso-structures, we have more than a partial order structure, and hence the usual techniques that depend too   on the underlying partial order structure of comtraces and so-structures are often not applicable.

Despite some obvious advantages, for instance,  handy composition and no need to use labels, quotient monoids
(perhaps with some exception of  traces) are   less popular for analyzing issues of concurrency than their relational counterparts such as partial orders, so-structures, occurrence graphs, etc. We believe that in many cases, more sophisticated quotient monoids, e.g., comtraces and g-comtraces, can provide simpler and more adequate models of concurrent histories than their relational equivalences.

Much harder future tasks are in the area of comtrace and g-comtrace languages where major problems like recognizability \cite{Och}, acceptability \cite{Zie}, etc. are still open.

\section*{Acknowledgments} We are grateful to the anonymous referees, who pointed out many typos and suggested us a   better way to organize the materials of this paper. We are indebted to many others for their feedbacks and encouragements. These include Steve Cook, Grzegorz Herman, Marek Janicki, Galina Jir\'askov\'a, Gregory H. Moore, Michael Soltys, Yuli Ye, Marek Zaionc, Nadya Zubkova, and Jeffery Zucker. This work is financially  supported by the Ontario Graduate Scholarship and the Natural Sciences and Engineering Research Council of Canada.

\appendix

\section{Proof of \lemref{gposinv}}

\begin{proposition}
Let $u$ be a step sequence over a g-comtrace alphabet $(E,sim,ser,inl)$ and $\alpha,\beta\in\Sigma_u$ such that $l(\alpha)=l(\beta)$ and $\alpha\neq \beta$. Then
\begin{enumerate}
 \item $pos_u(\alpha)\not= pos_u(\beta)$
 \item If $pos_u(\alpha)<pos_u(\beta)$ and $v$ is a step sequence satisfying $v\eqb u$, then $pos_v(\alpha)<pos_v(\beta)$.
\end{enumerate}
\label{prop:samelab}
\end{proposition}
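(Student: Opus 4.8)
The plan is to prove both parts by the same strategy: reduce to the structure of the equation-generating relations $\eqa_1$ and $\eqa_2$ from \propref{gcomequi}, and show that neither elementary rewriting step can ever affect the relative position of two event occurrences $\alpha,\beta$ that share the same label. The key observation is that if $l(\alpha)=l(\beta)$, then by the definition of enumerated step sequences we must have $\alpha=e^{(i)}$ and $\beta=e^{(j)}$ with $i\neq j$, and since every step in any enumerated step sequence is a \emph{set} (so $e^{(i)}$ and $e^{(j)}$ with $i\ne j$ can never lie in the same step), $\alpha$ and $\beta$ never occur in a common step of any step sequence $v\eqb u$.

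For part (1), I would argue directly: in the enumerated step sequence $\h u$, the occurrences $e^{(i)}$ are numbered by $|A_1\ldots A_{i-1}|_e+1$, so distinct occurrences of the same event $e$ necessarily land in distinct steps; hence $pos_u(\alpha)\neq pos_u(\beta)$ whenever $l(\alpha)=l(\beta)$ and $\alpha\neq\beta$. This is immediate from the construction of $\h t$ in \secref{sts-so}.

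For part (2), I would show first that it suffices to treat the case $u\eqa v$ (a single elementary step), since $\eqb=(\eqa\cup\eqa^{-1})^*$ and the claimed property ``$pos(\alpha)<pos(\beta)$'' is preserved under both $\eqa$-steps and their inverses once we show it is preserved under $\eqa$-steps in both directions — indeed, I will prove the stronger symmetric statement: for $u\eqa v$ of either type, $pos_u(\alpha)<pos_u(\beta)\iff pos_v(\alpha)<pos_v(\beta)$. Then chaining along a $\eqb$-derivation gives the result. For the single-step analysis, write $u=wAz$, $v=wBCz$ (type $\eqa_1$) or $u=wABz$, $v=wBAz$ (type $\eqa_2$). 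Using \propref{p3}(2)/\propref{gcprop1}(2) (event-preserving), the occurrences $\alpha,\beta$ are in bijective correspondence across $u$ and $v$, and I track their positions: (i) if both $\alpha,\beta$ lie entirely in $w$ or entirely in $z$, their positions are unchanged up to a global shift that is the same for both, so the order is preserved; (ii) the crucial point — since $l(\alpha)=l(\beta)$, they cannot both lie in the single step $A$ being split (in type $\eqa_1$), nor can one lie in $A$ and one in $B$ with $A\times B\subseteq inl$ forcing them into a common-label clash (in type $\eqa_2$, $A\cap B=\emptyset$ but more importantly if $\alpha\in A,\beta\in B$ then $l(\alpha)=l(\beta)$ would put two occurrences of the same event into steps $A,B$ with $A\cup B$ not a step, which is fine, but then swapping $A$ and $B$ reverses their order — so I must rule this out); (iii) handle the mixed cases ($\alpha$ in $w$ or the modified region, $\beta$ in $z$, etc.) by the monotonicity of positions.

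The main obstacle is case (ii) for the $\eqa_2$ rule: I need to rule out the possibility that $\alpha\in A$ and $\beta\in B$ (or vice versa) where $u=wABz\eqa wBAz=v$, since swapping would reverse their relative order. This is where part (1) is essential: if $\alpha\in A$ and $\beta\in B$ then in $u$ they occupy two \emph{different} steps, but $l(\alpha)=l(\beta)$ means both are occurrences $e^{(i)},e^{(j)}$ of the same event $e$; now by part (1) applied to the step sequence $v=wBAz$, since $l(\alpha)=l(\beta)$ we get $pos_v(\alpha)\ne pos_v(\beta)$ — which does not yet give a contradiction. The genuine contradiction comes instead from counting occurrences: the enumeration superscripts are globally determined by the prefix, and $A\times B\subseteq inl$ with $l(\alpha)=l(\beta)$ forces $(e,e)\in inl$, contradicting irreflexivity of $inl$. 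That is the clean resolution. With (ii) disposed of, the remaining cases are a routine bookkeeping of how $pos$ shifts, and the induction along the $\eqb$-derivation closes the argument.
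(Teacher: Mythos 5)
Your proposal is correct and takes essentially the same route as the paper, whose entire argument for part (2) is the one-line remark that the claim ``is clear from Proposition~\ref{prop:gcomequi} and the fact that $ser$ and $inl$ are irreflexive'': you have simply unfolded that remark, reducing to a single $\eqa_1$- or $\eqa_2$-step and using the fact that a step is a set (so two occurrences of the same event never share a step) together with irreflexivity of $inl$ (and, for the merge direction of $\eqa_1$, of $ser$) to exclude the only configurations in which the relative order of $\alpha$ and $\beta$ could change. The brief detour in your case (ii) before arriving at the $(e,e)\in inl$ contradiction is harmless; the final resolution is the right one.
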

\begin{proof}
\begin{enumerate}
 \item Follows from the fact that $sim$ is irreflexive.
 \item Follows from \propref{gcomequi} and that $ser$ and $inl$ are irreflexive.\END
\end{enumerate}
\end{proof}

From the definition of g-comtrace $\eqa_{\{ser,inl\}}$ (Definition \ref{def:eqa}), we can easily show the following proposition, which aims to describe the intuition that if an event $\alpha$ occurs before (or simultaneously with) $\beta$ in the first step sequence and $\alpha$ occurs later than $\beta$ on the second step sequence congruent with the first one, then there must be two ``immediately congruent'' step sequences, i.e., related by the relation $\eqa_{\{ser,inl\}}$ (writtten as just $\approx$), where this commutation (or serialization) of $\alpha$ and $\beta$ occurs.    
\begin{proposition}
Let $u,w$ be step sequences over  a g-comtrace alphabet $(E,sim,ser,inl)$ such that $u(\eqa\cup\eqa^{-1})w$. Then
\begin{enumerate}
\item If $pos_u(\alpha)<pos_u(\beta)$ and $pos_w(\beta)<pos_w(\alpha)$, then there are $x,y,A,B$ such that $\h{u}=\h{x}\h{A}\;\h{B}\h{y}(\eqa\cup\eqa^{-1})\h{x}\h{B}\;\h{A}\h{y}=\h{w}$ and $\alpha\in \h{A},\beta\in\h{B}$. We also have $(l(\alpha),l(\beta))\in inl$.
\item If $pos_u(\alpha)=pos_u(\beta)$ and $pos_w(\beta)<pos_w(\alpha)$, then there are $x,y,A,B,C$ such that $\h{u}=\h{x}\h{A}\h{y}\eqa\h{x}\h{B}\;\h{C}\h{y}=\h{w}$ and $\beta\in \h{B}$ and $\alpha\in\h{C}$. This also means $(l(\beta),l(\alpha))\in ser$.
\qed
\end{enumerate}
\label{prop:eqapos1}
\end{proposition}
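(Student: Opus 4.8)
The plan is to use the explicit description of the one-step congruence from \propref{gcomequi}: $u(\eqa\cup\eqa^{-1})w$ means that either $u\eqa w$ or $w\eqa u$, and in either case the relation $\eqa = \eqa_1\cup\eqa_2$ tells us exactly what local change happened. By symmetry (swapping the roles of $u$ and $w$ only changes which of $\eqa$ or $\eqa^{-1}$ we are looking at, and the statements are symmetric in an appropriate sense), it suffices to analyze the case $u\eqa w$, i.e. $\h{u}=\h{x}\h{D}\h{y}$ and $\h{w}=\h{x}\h{B}\h{C}\h{y}$ with $D=B\cup C$, $B\times C\subseteq ser$ (an $\eqa_1$-rewrite), or $\h{u}=\h{x}\h{A}\h{B}\h{y}$ and $\h{w}=\h{x}\h{B}\h{A}\h{y}$ with $A\times B\subseteq inl$ (an $\eqa_2$-rewrite), together with the mirror cases where $w\eqa u$. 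The key observation is that an $\eqa_1$-rewrite never changes the relative order of two event occurrences from strictly-less-than to strictly-greater-than: splitting $D$ into $BC$ (or merging $BC$ into $D$) can only turn an equality of positions into an inequality or vice versa, never reverse a strict inequality. Hence any genuine position-reversal, or any move from equality to a strict inequality, must be caused by an $\eqa_2$-rewrite.

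For part (1): since $pos_u(\alpha)<pos_u(\beta)$ but $pos_w(\alpha)>pos_w(\beta)$, the rewrite relating $u$ and $w$ cannot be an $\eqa_1$-rewrite (by the observation above, and because $\alpha,\beta$ are then in the same step or their order is preserved). So it is an $\eqa_2$-rewrite: $\h{u}=\h{x}\h{A}\h{B}\h{y}$, $\h{w}=\h{x}\h{B}\h{A}\h{y}$ with $A\times B\subseteq inl$ (or the mirror case $\h{w}=\h{x}\h{A}\h{B}\h{y}$, $\h{u}=\h{x}\h{B}\h{A}\h{y}$, which after relabeling gives the same form). I then need to check that $\alpha\in\h{A}$ and $\beta\in\h{B}$: if both were in $\h{x}$, $\h{y}$, $\h{A}$, or $\h{B}$, their relative position would be unchanged, contradicting the reversal; if $\alpha\in\h{B}$ and $\beta\in\h{A}$ then $pos_u(\alpha)>pos_u(\beta)$, again a contradiction. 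So the only possibility consistent with $pos_u(\alpha)<pos_u(\beta)$ is $\alpha\in\h{A}$, $\beta\in\h{B}$, which is exactly the claim (with the displayed form $\h{u}=\h{x}\h{A}\h{B}\h{y}(\eqa\cup\eqa^{-1})\h{x}\h{B}\h{A}\h{y}=\h{w}$).

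For part (2): here $pos_u(\alpha)=pos_u(\beta)$, so $\alpha$ and $\beta$ lie in the same step $\h{D}$ of $\h{u}$. A position reversal in $w$ forces a rewrite that separates $\alpha$ from $\beta$, and since $\eqa_2$-rewrites act on whole adjacent steps they cannot separate two occurrences sitting inside one step; therefore the rewrite must be an $\eqa_1$-rewrite applied to the step $\h{D}$, i.e. $\h{u}=\h{x}\h{D}\h{y}$ with $D=B\cup C$, $B\times C\subseteq ser$, and $\h{w}=\h{x}\h{B}\h{C}\h{y}$ (the mirror case $w\eqa u$ would \emph{merge} a step, which cannot turn an equality in $u$ into a strict inequality in $w$ — it is the other direction — so it is excluded). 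Since in $w$ we need $pos_w(\alpha)>pos_w(\beta)$, $\alpha$ must land in the later piece and $\beta$ in the earlier piece, i.e. $\beta\in\h{B}$ and $\alpha\in\h{C}$, which is the claim.

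The main obstacle, and the only place requiring care, is the bookkeeping in the case analysis: one must be systematic about which of the four possible rewrite types ($\eqa_1$ forward, $\eqa_1$ backward, $\eqa_2$ forward, $\eqa_2$ backward) can produce each of the two position-configuration changes, and then within the surviving case verify that $\alpha$ and $\beta$ must occupy the specific blocks claimed (rather than sitting in $\h{x}$ or $\h{y}$, or in the wrong block). Everything rests on the elementary monotonicity facts that a single $\eqa_1$-rewrite changes the position-comparison of a pair only between ``equal'' and ``one specified strict inequality'' (never reversing a strict inequality) while a single $\eqa_2$-rewrite changes it only between the two strict inequalities (never touching an equality) — these should be stated and checked first, directly from \propref{gcomequi} and the definitions of $pos$, and then the rest is immediate.
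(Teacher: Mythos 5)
Your proposal is correct and follows essentially the same route as the paper: classify the single rewrite step via \propref{gcomequi}, observe that an $\eqa_1$-rewrite can never reverse a strict position inequality (a split preserves $<$, a merge only weakens $<$ to $\le$) while an $\eqa_2$-rewrite preserves position equality, and then locate $\alpha$ and $\beta$ in the specific swapped or split blocks by eliminating the other placements. The monotonicity facts you isolate at the end are exactly the two displayed observations in the paper's proof of part (1), and your explicit treatment of part (2) fills in what the paper dismisses as ``similar.''
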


\begin{proposition}
Let $s$ be a step sequence over a g-comtrace alphabet $(E,sim,ser,inl)$. If $\alpha,\beta\in\Sigma_s$, then
\begin{enumerate}
\item $\alpha\com_{s}\beta \implies  \forall u\in [s].\ pos_{u}(\alpha)\not= pos_{u}(\beta)$,
\item $\alpha\sq_{s}\beta \implies\forall u\in [s].\ pos_{u}(\alpha)\le pos_{u}(\beta)$,
\item $\alpha\PO_{s}\beta\implies\forall u\in [s].\ pos_{u}(\alpha)< pos_{u}(\beta)$.
\end{enumerate}
and $\alpha\neq \beta$ in all three cases.
\label{prop:gsmicro}
\end{proposition}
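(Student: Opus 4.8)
The plan is to establish the three implications in the stated order, using parts (1) and (2) as building blocks for (3), and relying throughout on two facts valid for every $u\in[s]$: that $\Sigma_u=\Sigma_s$ (event-preservation, \propref{gcprop1}, together with \propref{samelab}), and that $u$ is a step sequence over $\Theta$, so each of its steps is a clique of $sim$. I would also use \propref{gcomequi}, which lets me view $s\eqb u$ as a finite chain $s=w_0,w_1,\dots,w_m=u$ with consecutive terms related by $\eqa_1,\eqa_1^{-1},\eqa_2$ or $\eqa_2^{-1}$, and \propref{eqapos1}, which pins down the shape of the rewrite step whenever a pair of positions flips. Part (1) is then immediate: by \eref{ss2gsos.1}, $\alpha\com_s\beta$ means $(l(\alpha),l(\beta))\in inl$, and since $sim\cap inl=\emptyset$ the labels $l(\alpha),l(\beta)$ are never $sim$-related, so $\alpha$ and $\beta$ cannot share a step in any step sequence over $\Theta$; hence $pos_u(\alpha)\neq pos_u(\beta)$ for all $u\in[s]$.

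Part (2) is the key lemma. Assume $\alpha\sq_s\beta$, i.e. (by \eref{ss2gsos.2}) $pos_s(\alpha)\le pos_s(\beta)$, $\alpha\ne\beta$ and $(l(\beta),l(\alpha))\notin ser\cup inl$. If some $u\in[s]$ had $pos_u(\alpha)>pos_u(\beta)$, I would take the least $i$ along the chain with $pos_{w_i}(\alpha)\le pos_{w_i}(\beta)$ but $pos_{w_{i+1}}(\alpha)>pos_{w_{i+1}}(\beta)$. If $pos_{w_i}(\alpha)<pos_{w_i}(\beta)$, then \propref{eqapos1}(1) forces an $\eqa_2$-swap (in either direction) of two adjacent steps with $\alpha$ in the first and $\beta$ in the second, whose step contents are contained in $inl$, giving $(l(\alpha),l(\beta))\in inl$ and hence $(l(\beta),l(\alpha))\in inl$ by symmetry, a contradiction. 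If $pos_{w_i}(\alpha)=pos_{w_i}(\beta)$, then \propref{eqapos1}(2) forces an $\eqa_1$-split with $\beta$ in the first block and $\alpha$ in the second, whose blocks are contained in $ser$, giving $(l(\beta),l(\alpha))\in ser$, again a contradiction. Thus part (2) holds, and by transitivity/symmetry it yields the two consequences I will need below: $\alpha\,\sq_s^*\,\beta$ implies $pos_u(\alpha)\le pos_u(\beta)$ for all $u\in[s]$, and $\alpha\,\si{\sq_s^*}\,\beta$ implies $pos_u(\alpha)=pos_u(\beta)$ for all $u\in[s]$.

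For part (3), assume $\alpha\PO_s\beta$; by \eref{ss2gsos.3} we have $pos_s(\alpha)<pos_s(\beta)$ and one of three disjuncts holds. For the first, $(l(\alpha),l(\beta))\notin ser\cup inl$: if some $u\in[s]$ had $pos_u(\alpha)\ge pos_u(\beta)$, take the first rewrite step along the chain at which the strict order $pos(\alpha)<pos(\beta)$ ceases to hold; a splitting move cannot undo a strict order, a merge would put $\alpha$ and $\beta$ into the two merged blocks (forcing $(l(\alpha),l(\beta))\in ser$), and a swap would put them into the two swapped steps (forcing $(l(\alpha),l(\beta))\in inl$), a contradiction either way. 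For the second disjunct, $\alpha\com_s\beta$ together with $\alpha\,\si{\sq_s^*}\,\delta$, $\neg(\delta\com_s\gamma)$, $\gamma\,\si{\sq_s^*}\,\beta$ for some $\delta,\gamma$: part (1) gives $pos_u(\alpha)\ne pos_u(\beta)$ and part (2) gives $pos_u(\alpha)=pos_u(\delta)$, $pos_u(\gamma)=pos_u(\beta)$ for all $u$; if $pos_u(\alpha)>pos_u(\beta)$ for some $u$, then the pair $\delta,\gamma$ flips from $pos_s(\delta)<pos_s(\gamma)$ to $pos_u(\delta)>pos_u(\gamma)$, and \propref{eqapos1}(1) applied at the first such flip forces $(l(\delta),l(\gamma))\in inl$, contradicting $\neg(\delta\com_s\gamma)$; hence $pos_u(\alpha)<pos_u(\beta)$ for all $u$. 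For the third disjunct, $(l(\alpha),l(\beta))\in ser$ with $\delta,\gamma$ satisfying $pos_s(\delta)<pos_s(\gamma)$, $(l(\delta),l(\gamma))\notin ser$, $\alpha\,\sq_s^*\,\delta\,\sq_s^*\,\beta$ and $\alpha\,\sq_s^*\,\gamma\,\sq_s^*\,\beta$: part (2) gives $pos_u(\alpha)\le pos_u(\delta),pos_u(\gamma)\le pos_u(\beta)$ for all $u$, so if some $u$ had $pos_u(\alpha)\ge pos_u(\beta)$ then $pos_u(\alpha)=pos_u(\delta)=pos_u(\gamma)=pos_u(\beta)$; thus $\delta,\gamma$ lie in a common step of $u$, whence $(l(\delta),l(\gamma))\in sim$ and therefore $(l(\delta),l(\gamma))\notin inl$; then the first rewrite step at which $pos(\delta)<pos(\gamma)$ fails cannot be a swap (that would require $(l(\delta),l(\gamma))\in inl$) and must be a merge, forcing $(l(\delta),l(\gamma))\in ser$, a contradiction.

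The main obstacle is the third disjunct of part (3): it is the only place where parts (1), (2) and the case analysis of the elementary rewrites must all be combined, and where one crucially uses $sim\cap inl=\emptyset$ to rule out the interleaving move once $\delta$ and $\gamma$ have been forced into a common step. Everything else reduces to routine ``first-flip'' bookkeeping once \propref{eqapos1} and part (2) are in hand, and part (1) is essentially free from the structure of the g-comtrace alphabet.
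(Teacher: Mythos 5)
Your proof is correct and follows essentially the same route as the paper's (Appendix~A): part (1) from $sim\cap inl=\emptyset$, and parts (2) and (3) by locating the first elementary rewrite along a chain witnessing $s\equiv u$ at which the relevant positional relation fails, then invoking \propref{eqapos1} to force a $ser$- or $inl$-membership contradicting the definitions of $\sqsubset_s$ and $\prec_s$. The only difference is organizational: in part (3) you case-split on which disjunct of \eref{ss2gsos.3} holds and then analyze the first flip, whereas the paper cases on the type of the first flip and then deduces which disjunct must apply — the two are logically equivalent and rest on the same facts.
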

\begin{proof}1. Follows from the fact that $inl\cap sim=\emptyset$.

2. Assume that $\alpha\sq_{s}\beta$. Suppose that $\exists u\in [s].\ pos_{u}(\alpha)> pos_{u}(\beta)$. Then there must be some $u_1,u_1\in [s]$ such that $u_1(\eqa\cup\eqa^{-1}) u_2$ and $pos_{u_1}(\alpha)\le pos_{u_1}(\beta)$ and $pos_{u_2}(\alpha)> pos_{u_2}(\beta)$. There are two cases:
\begin{enumerate}[(i)]
\item If $pos_{u_1}(\alpha)< pos_{u_1}(\beta)$ and $pos_{u_2}(\alpha)> pos_{u_2}(\beta)$, then by \propref{eqapos1}(1), $(l(\alpha),l(\beta))\in inl$, contradicting that $\alpha\sq_{s}\beta$.

\item If $pos_{u_1}(\alpha)= pos_{u_1}(\beta)$ and $pos_{u_2}(\alpha)> pos_{u_2}(\beta)$, then it follows from \propref{eqapos1}(2),  $(l(\beta),l(\alpha))\in ser$, contradicting that $\alpha\sq_{s}\beta$.
\end{enumerate}

3. Assume that $\alpha\PO_{s}\beta$. Suppose that $\exists u\in [s].\ pos_{u}(\alpha)\ge pos_{u}(\beta)$. Then there must be some $u_1,u_1\in [s]$ such that $u_1(\eqa\cup\eqa^{-1}) u_2$ and $pos_{u_1}(\alpha)< pos_{u_1}(\beta)$ and $pos_{u_2}(\alpha)\ge pos_{u_2}(\beta)$. There are two cases:
\begin{enumerate}[(i)]
\item
\begin{sloppypar}
If $pos_{u_1}(\alpha)< pos_{u_1}(\beta)$ and $pos_{u_2}(\alpha)= pos_{u_2}(\beta)$, then it follows from \propref{eqapos1}(2) that  $(l(\alpha),l(\beta))\in ser$ and $\neg(\alpha\com_s\beta)$. Hence, it follows from \eref{ss2gsos.3} that
\[\exists\delta,\gamma\in\Sigma_s.
		\left(\begin{array}{ll}
		&pos_s(\delta)<pos_s(\gamma)\wedge(l(\delta),l(\gamma))\notin ser\\
		\wedge&\alpha\,\sq_s^*\,\delta\,\sq_s^*\,\beta\wedge\alpha\,\sq_s^*\,\gamma\,\sq_s^*\,\beta
		\end{array}\right).\]
By (2) and transitivity of $\le$, we have
		\[\left(\begin{array}{ll}
		&\gamma\not=\delta\;\wedge\;(l(\delta),l(\gamma))\notin ser\\
		\wedge&(\forall u\in [s].\;pos_u(\alpha)\le pos_u(\delta) \le pos_u(\beta))\\
		\wedge&(\forall u\in [s].\;pos_u(\alpha)\le pos_u(\gamma) \le pos_u(\beta)
		\end{array}\right).\]
But since $pos_{u_2}(\alpha)= pos_{u_2}(\beta)$, we get  $pos_{u_2}(\gamma)=pos_{u_2}(\delta)$. Since we assumed $pos_s(\delta)<pos_s(\gamma)$, it follows from \propref{eqapos1}(2) that  $(l(\delta),l(\gamma))\in ser$, a contradiction.
\end{sloppypar}

\item If $pos_{u_1}(\alpha)< pos_{u_1}(\beta)$ and $pos_{u_2}(\alpha)> pos_{u_2}(\beta)$, then by \propref{eqapos1}(1), $(l(\alpha),l(\beta))\in inl$. Since we already assumed $\alpha\PO_{s}\beta$, by \eref{ss2gsos.3}, $(\alpha,\beta)\in \com_s\cap\left(\si{\sq_s^*}\circ\reco{\com}_s\circ \si{\sq_s^*}\right)$. So there are $\gamma,\delta$ such that  $\alpha\; \si{\sq_s^*}\;\gamma\;\reco{\com}_s\;\delta\;\si{\sq_s^*}\;\beta$. Observe that
\begin{align*}
  	&\alpha\; \si{\sq_s^*}\;\gamma&\\
\implies&\alpha\;(\sq_s^*)\;\gamma\;\wedge\; \gamma\;(\sq_s^*)\;\alpha&\\
\implies&\forall u\in [s].\ pos_{u}(\alpha)\le pos_{u}(\gamma)\;\wedge\;\forall u\in [s].\ pos_{u}(\gamma)\le pos_{u}(\alpha)&\ttcomment{by (2)}\\
\implies&\forall u\in [s].\ pos_{u}(\alpha)= pos_{u}(\gamma)&\\
\implies&\set{\alpha,\gamma}\subseteq \h{A} & \ttcomment{since $\alpha\in \h{A}$}
\end{align*}

Similarly, since $\delta\; \si{\sq_s^*}\;\beta$, we can show that $\set{\delta,\beta}\subseteq \h{B}$. Since $\h{x}\h{A}\;\h{B}\h{y}\left(\eqa\cup\eqa^{-1}\right)\h{x}\h{B}\;\h{A}\h{y}$, we get $A\times B\subseteq inl$. So $(l(\gamma),l(\delta))\in inl$. But $\gamma\;\reco{\com}_s\;\delta$ implies that $(l(\gamma),l(\delta))\notin inl$, a contradiction.\END
\end{enumerate}
\end{proof}

Immediately from \propref{gsmicro}, we get the following proposition.
\begin{proposition}
Let $s$ be a step sequence over a g-comtrace alphabet $(E,sim,ser,inl)$ and   $G^{\{s\}}=(\Sigma_s,\com,\sq)$. If $\alpha,\beta\in\Sigma_s$, then
\begin{enumerate}
\item $\alpha\com\beta \implies  \forall u\in [s].\ pos_{u}(\alpha)\not= pos_{u}(\beta)$
\item $\alpha\sq\beta \implies\left(\alpha\not=\beta\;\wedge\;\forall u\in [s].\ pos_{u}(\alpha)\le pos_{u}(\beta)\right)$ \qed
\end{enumerate}
\label{prop:gposinv2}
\end{proposition}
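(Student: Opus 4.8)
The plan is to reduce the statement to \propref{gsmicro}, which already establishes the analogous positional facts for the \emph{local} relations $\com_s$, $\sq_s$, $\PO_s$, and then to propagate these facts through the commutative closure. First I would spell out the relations of $G^{\{s\}}=\left(\Sigma_s,\PO_s\cup\com_s,\PO_s\cup\sq_s\right)^{\ccl}$ explicitly: writing $R_1=\PO_s\cup\com_s$, $R_2=\PO_s\cup\sq_s$ and $R_3=R_1\cap R_2^*$, \defref{ccl} gives
$$\com=\sym{(\prec_{R_3 R_2})}\cup R_1,\qquad \sq=\sqsubset_{R_3 R_2}=(R_3\cup R_2)^*\setminus id_{\Sigma_s},$$
where $\prec_{R_3 R_2}=(R_3\cup R_2)^*\circ R_3\circ(R_3\cup R_2)^*$.

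The key auxiliary claim is that for every $(\gamma,\delta)\in R_3$ and every $u\in[s]$ we have $pos_u(\gamma)<pos_u(\delta)$. To prove it I would use both halves of the intersection $R_3=R_1\cap R_2^*$: membership in $R_2^*$ gives a finite chain through $\PO_s\cup\sq_s$, and iterating \propref{gsmicro}(2,3) along that chain yields $pos_u(\gamma)\le pos_u(\delta)$; membership in $R_1=\PO_s\cup\com_s$ together with \propref{gsmicro}(1,3) yields $pos_u(\gamma)\ne pos_u(\delta)$; combining the two gives the strict inequality. This is the step I expect to be the main obstacle, modest as it is: a bare $\com_s$-edge only forces distinctness of positions, and it is precisely the extra condition $R_3\subseteq R_2^*$ that upgrades it to a weakly position-monotone edge which can be chained without loss.

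With this claim in hand, both parts are bookkeeping. A single $R_2$-edge is weakly position-monotone for every $u\in[s]$ by \propref{gsmicro}(2,3), and a single $R_3$-edge is strictly so by the claim; hence every path in $(R_3\cup R_2)^*$ is weakly position-monotone for every $u\in[s]$. For (2): $\alpha\sq\beta$ means $(\alpha,\beta)\in(R_3\cup R_2)^*\setminus id_{\Sigma_s}$, so $\alpha\ne\beta$ and $pos_u(\alpha)\le pos_u(\beta)$ for all $u\in[s]$. For (1): if $(\alpha,\beta)\in\prec_{R_3 R_2}$, split the witnessing path into a weakly monotone prefix in $(R_3\cup R_2)^*$, a strictly monotone $R_3$-edge, and a weakly monotone suffix in $(R_3\cup R_2)^*$, obtaining $pos_u(\alpha)<pos_u(\beta)$ for all $u\in[s]$; the symmetric case gives $pos_u(\beta)<pos_u(\alpha)$; and a pair in $R_1=\PO_s\cup\com_s$ is handled directly by \propref{gsmicro}(1,3). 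In each case $pos_u(\alpha)\ne pos_u(\beta)$ for all $u\in[s]$, which is (1). Since comtraces are the case $inl=\emptyset$, the same argument yields the corresponding statement for comtraces.
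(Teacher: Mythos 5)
Your proof is correct and follows essentially the same route as the paper, whose own proof of this proposition is just the one-line remark that it ``follows from Definitions \ref{def:ss2gsos} and \ref{def:ccl} and \propref{gsmicro}.'' You have simply made explicit the chain-propagation argument through $R_3=R_1\cap R_2^*$ and the $\lozenge$-closure components that the paper leaves implicit, and your key observation --- that the intersection with $R_2^*$ is what upgrades a mere position-distinctness edge to a strictly monotone one --- is exactly the point on which the reduction to \propref{gsmicro} hinges.
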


\begin{definition}[serializable and non-serializable steps]
Let $A$ be a step over a g-comtrace alphabet $(E,sim,ser,inl)$ and let $a\in A$ then:
\begin{enumerate}
\item Step $A$ is called \emph{serializable} iff
$$\exists B,C\in \PSB{A}.\;B\cup C = A \;\wedge\; B\times C \subseteq ser.$$
Step $A$ is called \emph{non-serializable} iff $A$ is not serializable.
Every non-serializable step is a synchronous step as defined in \defref{synsteps}.
\item Step $A$ is called \emph{serializable to the left of $a$} iff
$$\exists B,C\in \PSB{A}.\;B\cup C = A\;\wedge\; a\in B\;\wedge\; B\times C \subseteq ser.$$
Step $A$ is called \emph{non-serializable to the left of $a$} iff $A$ is not serializable to the left of $a$, i.e.,
$\forall B,C\in \PSB{A}.\left(B\cup C = A\;\wedge\; a\in B\right)\implies B\times C \not\subseteq ser.$
\item Step $A$ is called \emph{serializable to the right of $a$} iff
$$\exists B,C\in \PSB{A}.\;B\cup C = A\;\wedge\; a\in C\;\wedge\; B\times C \subseteq ser.$$
Step $A$ is called \emph{non-serializable to the right of $a$} iff $A$ is not serializable to the right of $a$, i.e., $\forall B,C\in \PSB{A}.\left(B\cup C = A\;\wedge\; a\in C\right)\implies B\times C \not\subseteq ser.$ \EOD
\end{enumerate}
\end{definition}

\begin{proposition}
Let $A$ be a step over a g-comtrace alphabet $(E,sim,ser,inl)$. Then
\begin{enumerate}
\item If $A$ is non-serializable to the left of $l(\alpha)$ for some $\alpha\in\h{A}$, then $\alpha\sqsubset_A^*\beta$ for all $\beta\in \h{A}$.
\item If $A$ is non-serializable to the right of $l(\beta)$ for some $\beta\in\h{A}$, then $\alpha\sqsubset_A^*\beta$ for all $\alpha\in \h{A}$.
\item If $A$ is non-serializable, then $\forall\alpha,\beta\in \h{A}.\ \alpha\sqsubset_A^*\beta$.
\end{enumerate}
\label{prop:gnonser}
\end{proposition}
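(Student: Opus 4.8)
The plan is to reduce everything to part (1), which is the key case. For part (1), suppose $A$ is non-serializable to the left of $l(\alpha)$ for some $\alpha \in \h{A}$. Fix an arbitrary $\beta \in \h{A}$; if $\beta = \alpha$ then $\alpha \sqsubset_A^* \beta$ trivially (via the reflexive closure implicit in $\sqsubset_A^*$, or we handle it as the length-zero path), so assume $\beta \neq \alpha$. The plan is to argue by contradiction: suppose $\neg(\alpha \sqsubset_A^* \beta)$. Consider the relation $\sqsubset_A$ restricted to $\h{A}$; since all of $\h{A}$ sits in a single step of $A$, we have $\gamma \lhdf_A \delta$ for \emph{all} distinct $\gamma,\delta \in \h{A}$ (they share the same position), so by the definition of $\sqsubset_A$ (Equation \eqref{eq:ss2gsos.2}), for distinct $\gamma,\delta \in \h{A}$ we get $\gamma \sqsubset_A \delta \iff (l(\delta),l(\gamma)) \notin ser \cup inl$. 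But since $A$ is a step, no two of its elements are in $inl$ (as $sim \cap inl = \emptyset$ and elements of $A$ are pairwise in $sim$). Hence on $\h{A}$, $\gamma \sqsubset_A \delta \iff (l(\delta),l(\gamma)) \notin ser$.

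Now define $B = \{ l(\gamma) : \gamma \in \h{A},\ \alpha \sqsubset_A^* \gamma \}$ and $C = \{l(\gamma) : \gamma \in \h{A}\} \setminus B = l[\h{A}] \setminus B$. The plan is to show this pair $(B,C)$ witnesses serializability of $A$ to the left of $l(\alpha)$, contradicting the hypothesis. We have $l(\alpha) \in B$ (via the empty/reflexive path), and $C \neq \emptyset$ because $l(\beta) \in C$ by our assumption $\neg(\alpha \sqsubset_A^* \beta)$, and $B \cup C = A$ by construction, with $B \cap C = \emptyset$. It remains to check $B \times C \subseteq ser$. Take $b \in B$, $c \in C$; pick $\gamma,\delta \in \h{A}$ with $l(\gamma) = b$, $l(\delta) = c$. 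Note $\gamma \neq \delta$ since $b \neq c$. Suppose for contradiction $(b,c) \notin ser$; by the characterization above this means $\delta \sqsubset_A \gamma$. But $\alpha \sqsubset_A^* \gamma$ (since $b \in B$), so... wait — I need $\alpha \sqsubset_A^* \delta$, i.e., I want $c \in B$, a contradiction. The direction to use is: $(c,b) \notin ser$ would give $\gamma \sqsubset_A \delta$, hence $\alpha \sqsubset_A^* \delta$, so $c \in B$, contradicting $c \in C$. So in fact $(c,b) \in ser$ — that shows $C \times B \subseteq ser$, not $B \times C$. This sign issue is the crux; I will need to recheck the orientation in the definition of ``serializable to the left'': there $B$ is the part that can be executed first, and $\sqsubset$ means ``not later than'', so $\alpha \sqsubset_A^* \gamma$ should mean $\gamma$ is pushed to the right, i.e., $\gamma$'s label belongs to $C$, not $B$. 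So the correct definitions are $C = \{l(\gamma): \alpha \sqsubset_A^* \gamma,\ \gamma \neq \alpha\} \cup (\text{possibly more})$ and $B$ its complement — and then the contradiction goes through cleanly with $B \times C \subseteq ser$. I expect the main obstacle to be getting these orientations exactly right and verifying the closure/complement argument produces a genuine serialization; everything else is bookkeeping.

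For parts (2) and (3): part (2) is entirely dual — one runs the same argument building $B = \{l(\gamma) : \gamma \sqsubset_A^* \beta\}$ (with the dual orientation) and derives a serialization to the right of $l(\beta)$. For part (3), if $A$ is non-serializable then for \emph{every} $\alpha \in \h{A}$, $A$ is non-serializable to the left of $l(\alpha)$ (a serialization to the left of $l(\alpha)$ is in particular a serialization); applying part (1) with this $\alpha$ gives $\alpha \sqsubset_A^* \beta$ for all $\beta \in \h{A}$, and since $\alpha$ was arbitrary we get $\forall \alpha,\beta \in \h{A}.\ \alpha \sqsubset_A^* \beta$. So once part (1) is proved, parts (2) and (3) follow immediately, with (2) requiring only the obvious dualization of the argument for (1).
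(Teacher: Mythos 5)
Your final argument is correct and is essentially the paper's: the paper builds the $\sqsubset_A$-reachable set of $\alpha$ iteratively as a ``right closure'' $RC^n(\alpha)$ and invokes non-serializability to the left of $l(\alpha)$ to show this set must keep growing until it equals $\h{A}$, which is exactly your reachable-set-versus-complement contradiction run one step at a time, and both treatments of parts (2) and (3) coincide with yours. The orientation problem you flagged as the crux is real, but it is a typo in the displayed definition of ``serializable to the left of $a$'' (it reads $a\in B$, whereas every use of the notion --- including the paper's own proofs of this proposition and of \propref{gnonsercon} --- requires $a\in C$, i.e., that a nonempty part can be serialized to occur \emph{before} $a$), and your corrected configuration (reachable set $=C\ni l(\alpha)$, complement $=B$, $B\times C\subseteq ser$) is precisely the witness the paper's argument relies on, so your proof is sound once that reading is fixed.
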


Before we proceed with the proof, since for all $\alpha,\beta\in\h{A}$, $(l(\alpha),l(\beta))\notin inl$, observe that
\begin{align*}
\alpha\sq_A\beta \iff  pos_A(\alpha)\le pos_A(\beta) \wedge (l(\beta),l(\alpha))\notin ser.
\end{align*}
\begin{proof}
1. For any $\beta\in\h{A}$, we have to show that $\alpha\sqsubset_A^*\beta$. We define the $\sqsubset_A$-\textit{right closure set} of $\alpha$ inductively as follows:
\begin{align*}
RC^0(\alpha)&\df\set{\alpha}&
RC^n(\alpha)&\df\bset{\delta\in \h{A}\mid\exists\gamma\in RC^{n-1}(\alpha)\;\wedge\;\gamma\sqsubset_A\delta}
\end{align*}

Then by induction on $n$, we can show that $|RC^{n+1}(\alpha)|> |RC^n(\alpha)|$ or $RC^n(\alpha)=\h{A}$. So if $A$ is finite, then for some $n<|A|$, we must have $RC^n(\alpha)=\h{A}$ and $\beta\in RC^n(\alpha)$. It follows that $\alpha\sqsubset_A^*\beta$.

2. Dually to (1).

3. Since $A$ is non-serializable, it follows that $A$ is non-serializable to the left of $l(\alpha)$ for \emph{every} $\alpha\in\h{A}$. Hence, the assertion follows.\END
\end{proof}

The existence of a non-serializable sub-step of a step $A$ to the left/right of an element $a\in A$ can be explained by the following proposition.

\begin{proposition}
Let $A$ be a step over an alphabet $\Theta=(E,sim,ser,inl)$ and $a \in A$. Then
\begin{enumerate}
 \item There exists a unique $B\subseteq A$ such that $a\in B$, $B$ is non-serializable to the left of $a$, and  $A\not=B \implies A\eqb (A\setminus B)B.$
 \item There exists a unique $C\subseteq A$ such that $a\in C$, $C$ is non-serializable to the right of $a$, and $A\not=C \implies A\eqb C(A\setminus C).$
 \item There exists a unique $D\subseteq A$ such that $a\in D$, $D$ is non-serializable, and $A\eqb xDy$, where $x$ and $y$ are step sequences over $\Theta$.
\end{enumerate}
\label{prop:gnonsercon}
\end{proposition}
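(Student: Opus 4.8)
# Proof Proposal for Proposition \ref{prop:gnonsercon}

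The plan is to prove all three parts by the same mechanism: a ``greedy peeling'' argument that repeatedly strips off a serializable layer from the appropriate side until no further serialization toward $a$ is possible. For part (1), I would define $B$ to be the set obtained as follows. Consider all decompositions $A = P \cup Q$ with $P \cap Q = \emptyset$, $a \in Q$, and $P \times Q \subseteq ser$; among these pick the one with $|Q|$ minimal (such decompositions form a nonempty family since $P = \emptyset$, $Q = A$ always works, with $P \times Q = \emptyset \subseteq ser$ vacuously), and set $B = Q$. By minimality $B$ is non-serializable to the left of $a$: if it were, we could split $B = B' \cup B''$ with $a \in B'$, $B' \times B'' \subseteq ser$, and then $(A \setminus B \cup B'') \times B' \subseteq ser$ would follow from transitivity-type reasoning on $ser$, contradicting minimality of $|B|$. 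The equivalence $A \equiv (A \setminus B)B$ when $A \neq B$ is immediate from the defining equation $\It{EQ}_1$: since $(A\setminus B) \times B \subseteq ser$ and $(A \setminus B) \cup B = A$, we have $A \approx (A\setminus B)B$, hence $A \equiv (A\setminus B)B$. Part (2) is entirely dual, peeling to the right, and part (3) is obtained by combining: take $B$ from part (1) applied to $A$ and $a$, and if $B$ is still serializable (but not to the left of $a$), apply part (2) to $B$ and $a$ to obtain $D \subseteq B$ non-serializable; then $A \equiv (A\setminus B)B \equiv (A\setminus B)\bigl(D(B \setminus D)\bigr)$ gives the required form with $x = (A\setminus B)D^{\,\mathrm{outer?}}$—more carefully, $A \equiv (A \setminus B) D (B \setminus D)$, so $x = (A\setminus B)$ and $y = (B \setminus D)$ (allowing empty steps to be dropped).

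The first key step is establishing a transitivity-like propagation lemma for $ser$ within a step: if $P \times Q \subseteq ser$ and $Q = Q_1 \cup Q_2$ with $Q_1 \times Q_2 \subseteq ser$, then $(P \cup Q_1) \times Q_2 \subseteq ser$ follows because for $p \in P$, $q_2 \in Q_2 \subseteq Q$ we have $(p, q_2) \in ser$ directly, and for $q_1 \in Q_1$, $q_2 \in Q_2$ we have $(q_1,q_2) \in ser$ by hypothesis. This is essentially bookkeeping but it is what makes the greedy choice well-defined and is used repeatedly. I would also invoke Proposition \ref{prop:p3}(6) (projection rule) or direct computation with $\It{EQ}_1$ to pass between step-level statements and congruence statements, exactly as done in the proof of Lemma \ref{FD}.

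The main obstacle will be \emph{uniqueness}. For part (1), suppose $B_1$ and $B_2$ are both non-serializable to the left of $a$ with $A \equiv (A\setminus B_i)B_i$. I would show $B_1 = B_2$ by a projection argument in the spirit of Lemma \ref{l2}: for any $b \in B_1 \setminus B_2$, consider $\pi_{\{a,b\}}$ applied to both witnesses. Since $b \in B_2 \setminus$ wait—since $b \notin B_2$, $b$ appears in $A \setminus B_2$, so in $(A\setminus B_2)B_2$ the occurrence of $b$ precedes that of $a$; projecting gives $\{b\}\{a\}$ (or a variant with $\{a,b\}$ if $(b,a)$-relations allow), whereas in $(A \setminus B_1)B_1$ both $a, b \in B_1$ so the projection is $\{a,b\}$. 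Comparing these via Proposition \ref{prop:p3}(6) forces $(b,a) \in ser$, and then $b$ could be serialized to the left of $a$ within $B_1$, contradicting non-serializability of $B_1$ to the left of $a$. The symmetric argument on $B_2 \setminus B_1$ closes the case. I expect the bookkeeping here—enumerating which projection patterns $\{a,b\}$, $\{a\}\{b\}$, $\{b\}\{a\}$ can arise under the $ser$/$inl$ constraints and ruling out the bad ones—to be the most delicate part, analogous to the case analysis in the proof of Lemma \ref{l2}. Parts (2) and (3) inherit uniqueness from part (1) and its dual by the same technique, with part (3) additionally using that a non-serializable step admits no nontrivial serialization on either side, so the witness $D$ is pinned down once its position relative to $a$ is fixed.
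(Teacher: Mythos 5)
Your existence argument is essentially the paper's: the paper likewise takes a minimal element (with respect to $\subset$ rather than cardinality, which makes no difference) of the family of decompositions $A\eqb CD$ with $a\in D$, and shows that minimality forces non-serializability by exactly your propagation lemma. One caution there: since $ser$ is not symmetric, the propagation only works when the hypothetical further split of $B$ keeps $a$ in the factor that stays in place. Your specific claim that $B'\times B''\subseteq ser$ yields $\bigl((A\setminus B)\cup B''\bigr)\times B'\subseteq ser$ is false as written — you would need $B''\times B'\subseteq ser$ — and the split must be taken with $a$ in the \emph{second} factor for your own lemma $(P\cup Q_1)\times Q_2\subseteq ser$ to apply and to shrink the factor containing $a$. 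This is a fixable orientation slip (one the paper's own phrasing of ``serializable to the left'' also invites), not a conceptual error.

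The genuine gap is in uniqueness. Projecting $(A\setminus B_1)B_1$ and $(A\setminus B_2)B_2$ onto $\{a,b\}$ for $b\in B_1\setminus B_2$ gives you only $\{a,b\}\eqb\{b\}\{a\}$, i.e.\ the single membership $(b,a)\in ser$, and that does \emph{not} contradict non-serializability of $B_1$: non-serializability quantifies over entire cross-products $E\times F\subseteq ser$ ranging over all splits of $B_1$, and a step can be non-serializable with respect to $a$ while containing some $b$ with $(b,a)\in ser$ (take $B_1=\{a,b,c\}$ with $(b,a)\in ser$ but $(b,c)\notin ser$ and $(c,a)\notin ser$; no split of $B_1$ works, yet your projection yields no contradiction). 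The real obstruction to moving $b$ before $a$ is a chain through intermediate elements of $B_1$, and a two-element projection discards exactly those intermediaries. This is why the paper's uniqueness proof does not use projections at all: it invokes \propref{gnonser}(1) (the right-closure construction $RC^n$) to show that non-serializability of $B$ forces $a^{(1)}\,\sq_A^*\,b^{(1)}$, and then uses the positional invariance of $\sq$ under the congruence (\propref{gsmicro}(2)) to conclude $pos_u\bigl(a^{(1)}\bigr)\le pos_u\bigl(b^{(1)}\bigr)$ for every $u\in[A]$, which is flatly contradicted by the witness $(A\setminus B_2)B_2$ in which $b$ occupies the first step. To repair your argument you would have to project onto the whole chain, at which point you are re-proving \propref{gnonser} and \propref{gsmicro}; it is cleaner to use those results directly as the paper does. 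Parts (2) and (3) then follow by duality and composition in both treatments.
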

\begin{proof}1. If $A$ is non-serializable to the left of $a$, then $B=A$. If $A$ is serializable to the left of $a$, then the following set is not empty:
$$\zeta \df \bset{D\in\PSB{A}\mid \exists C\in\PSB{A}.\left(C\cup D = A\;\wedge\;a\in D\;\wedge\;C\times D \subseteq ser\right)}$$

Let $B\in\zeta$ such that $B$ is a minimal element of the poset $(\zeta,\subset)$. 
Let $B\in\zeta$ such that $B$ is a minimal element of the poset $(\zeta,\subset)$. We claim that $B$ is non-serialisable to the left of $a$. Suppose for a contradiction that $B$ is serialisable to the left of $a$, then there are some sets $E,F\in \PSB(B)$ such that $E\cup F = B\;\wedge\;a\in F\;\wedge\;E\times F \subseteq ser.$ Since $B\in\chi$, there is some set $G\in\PSB(A)$ such that $G\cup B=A\;\wedge\;a\in B\;\wedge\;G\times B\subseteq ser$. Because  $G\times B\subseteq ser$ and $F \subset B$, it follows that $G\times F\subseteq ser$. But since $E\times F \subseteq ser$, we have $(G\cup E)\times F \subseteq ser$. Hence, $(G\cup E)\cup F = A\;\wedge\;a\in F\;\wedge\;(G\cup E)\times F \subseteq ser$. So $E\in\zeta$ and $E\subset B$. This contradicts that $B$ is minimal. Hence, $B$ is non-serialisable to the left of $a$.

By the way the set $\zeta$ is defined, $A\eqb(A\setminus B)B$. 
It remains to prove the uniqueness of $B$. Let $B'\in\zeta$ such that $B'$ is a minimal element of the poset $(\zeta,\subset)$. We want to show that $B=B'$.

We first show that $B\subseteq B'$. Suppose that there is some $b\in B$ such that  $b\not= a$ and $b\notin B'$. Let $\alpha$ and $\beta$ denote the event occurrences $a^{(1)}$ and $b^{(1)}$ in $\Sigma_{A}$ respectively. Since $a\in B$ and $b$ is non-serializable to the left of $a$ and $a\not=b$, it follows from \propref{gnonser}(1) that  $\alpha\sq_{[A]}\beta$. Hence, by \propref{gsmicro}(2), we have
\begin{equation}
\forall u\in [A].\ pos_{u}(\alpha)\le pos_{u}(\beta) \label{eq:gnonsercon.1}
\end{equation}
By the way $B'$ is chosen, we know $A\eqb (A\setminus B')B'$ and $b\notin B'$. So it follows that $b\in (A\setminus B')$. Hence, we have $(A\setminus B')B'\in[A]$ and $pos_{(A\setminus B')B'}(\beta)< pos_{(A\setminus B')B'}(\alpha)$, which contradicts \eref{gnonsercon.1}. Thus, $B\subseteq B'$.
By reversing the roles of $B$ and $B'$, we can prove that $B\supseteq B'$. Hence, $B=B'$.

2. Dually to (1).

3. By (1) and (2), we choose $D$ to be non-serializable to the left and to the right of $a$.
\END
\end{proof}

\begin{lemma}
Let $s$ be a step sequence over a g-comtrace alphabet $(E,sim,ser,inl)$ and   $G^{\{s\}}=(\Sigma_s,\com,\sq)$. Let $\PO=\sq\cup\com$. If $\alpha,\beta\in\Sigma_s$, then
\begin{enumerate}
\item $\left(\begin{array}{ll}
		&(\forall u\in [s].\ pos_{u}(\alpha)\not= pos_{u}(\beta))\\
		\wedge&(\exists u\in [s].\ pos_{u}(\alpha)< pos_{u}(\beta))\\
		\wedge&(\exists u\in [s].\ pos_{u}(\alpha)> pos_{u}(\beta))
		\end{array}\right)\implies  \alpha\com\beta$
\item $(\forall u\in [s].\ pos_{u}(\alpha) < pos_{u}(\beta))\implies  \alpha\PO\beta$
\item $(\alpha\not=\beta \;\wedge\; \forall u\in [s].\ pos_{u}(\alpha)\le pos_{u}(\beta))\implies \alpha\sq\beta $
\end{enumerate}
\label{lem:pos2gs}
\end{lemma}
\begin{proof}1. Assume the left-hand side of the implication
Then by \propref{eqapos1}(1),  $(l(\alpha),l(\beta))\in inl$, which by \eref{ss2gsos.1} implies that $\alpha\com_s\beta$. By Definitions \ref{def:ccl} and \ref{def:ss2gsos}, it follows that $\alpha\com\beta$.\\

2, 3. Since statements (2) and (3) are mutually related due to the fact that $\PO\;\subseteq\; \sq$, we cannot prove each statement seperately. The main technical insight is that, to have a stronger induction hypothesis, we need prove both  statements simultaneously.

Assume $\forall u\in [s].\ pos_{u}(\alpha)\le pos_{u}(\beta)$ and $\alpha\not=\beta$. Hence, we can choose $u_0\in [s]$ where $\h{u_0}=\h{x_0}\;\h{E_1}\ldots\h{E_k}\;\h{y_0}$ ($k\ge 1$), $E_1,E_k$ are non-serializable, $\alpha\in\h{E_1}$, $\beta\in\h{E_k}$, and
\begin{align}
\forall u'_0\in [s].\left(\begin{array}{lc}
		&\left(\h{u'_0}=\h{x'_0}\;\h{E'_1}\ldots\h{E'_{k'}}\;\h{y'_0}\;\wedge\; \alpha\in\h{E'_1}\;\wedge\; \beta\in\h{E'_{k'}}\right)\\
		\implies&\wei(\h{E_1}\ldots\h{E_k})\le \wei(\h{E'_1}\ldots\h{E'_{k'}})
		\end{array}\right)
\label{eq:pos2gs.0}
\end{align}
We will prove by induction on $\wei(\h{E_1}\ldots\h{E_k})$ that
\begin{align}
(\forall u\in [s].\ pos_{u}(\alpha) < pos_{u}(\beta))\implies  \alpha\PO\beta \label{eq:pos2gs.1}\\
(\alpha\not=\beta \;\wedge\; \forall u\in [s].\ pos_{u}(\alpha)\le pos_{u}(\beta))\implies \alpha\sq\beta \label{eq:pos2gs.2}
\end{align}

\paragraph{Base case} When $\wei(\h{E_1}\ldots\h{E_k})=2$, then we consider two cases:
\begin{itemize}
\item If $\alpha\not=\beta$, $\forall u\in [s].\ pos_{u}(\alpha) \le pos_{u}(\beta)$ and $\exists u\in [s].\ pos_{u}(\alpha) = pos_{u}(\beta)$, then 
	\begin{itemize}
	 \item $\h{u_0}=\h{x_0}\set{\alpha,\beta}\h{y_0}$, or
	 \item $\h{u_0}=\h{x_0}\set{\alpha}\set{\beta}\h{y_0}\eqb \h{x_0}\set{\alpha,\beta}\h{y_0}$
	\end{itemize}
But since $\forall u\in [s].\ pos_{u}(\alpha) \le pos_{u}(\beta)$, in either case, we must have $\set{l(\alpha),l(\beta)}$ is not serializable to the right of $l(\beta)$. Hence, by \propref{gnonser}(2), $\alpha\;(\sq_s)^*\beta$. This by Definitions \ref{def:ccl} and \ref{def:ss2gsos} implies that $\alpha\sq\beta$.

\item If $\forall u\in [s].\ pos_{u}(\alpha) < pos_{u}(\beta)$, then it follows $\h{u_0}=\h{x_0}\set{\alpha}\set{\beta}\h{y_0}$ and $(l(\alpha),l(\beta))\notin ser\cup inl$. This, by \eref{ss2gsos.3}, implies that $\alpha\PO_s\beta$. Hence, from Definitions \ref{def:ccl} and \ref{def:ss2gsos}, we get $\alpha\PO\beta$.
\end{itemize}
Since $\PO\;\subseteq\;\sq$, it follows from these two cases that \eref{pos2gs.1} and \eref{pos2gs.2} hold.

\paragraph{Inductive step} When $\wei(\h{E_1}\ldots\h{E_k})>2$, then $\h{u_0}=\h{x_0}\;\h{E_1}\ldots\h{E_k}\;\h{y_0}$ where $k\ge 1$. We need to consider two cases:

\emph{Case (i):} If $\alpha\not=\beta$ and $\forall u\in [s].\ pos_{u}(\alpha) \le pos_{u}(\beta)$ and $\exists u\in [s].\ pos_{u}(\alpha) = pos_{u}(\beta)$, then there is some $v_0$  $\h{v_0}=\h{w_0}\;\h{E}\;\h{z_0}$ and $\alpha,\beta\in \h{E}$. Either $E$ is non-serializable to the right of $l(\beta)$, or by \propref{gnonsercon}(2) $\h{v_0}=\h{w_0}\;\h{E}\;\h{z_0}\eqb \h{w'_0}\;\h{E'}\;\h{z'_0}$ where $E'$ is non-serializable to the right of $l(\beta)$. In either case, by \propref{gnonser}(2), we have $\alpha\sq_s^*\beta$. So by Definitions \ref{def:ccl} and \ref{def:ss2gsos},  $\alpha\sq\beta$.

\emph{Case (ii):}  If $\forall u\in [s].\ pos_{u}(\alpha) < pos_{u}(\beta)$, then it follows $\h{u_0}=\h{x_0}\;\h{E_1}\ldots\h{E_k}\;\h{y_0}$ where $k\ge 2$ and $\alpha\in\h{E_1},\beta\in\h{E_k}$. If $(l(\alpha),l(\beta))\notin ser\cup inl$, then by \eref{ss2gsos.3}, $\alpha\PO_s\beta$. Hence, from Definitions \ref{def:ccl} and \ref{def:ss2gsos}, we get  $\alpha\PO\beta$.  So we need to consider only when $(l(\alpha),l(\beta))\in ser$ or $(l(\alpha),l(\beta))\in inl$. There are three cases to consider:
\begin{enumerate}[(a)]
\item If $\h{u_0}=\h{x_0}\;\h{E_1}\;\h{E_2}\;\h{y_0}$ where $E_1$ and $E_2$ are non-serializable, then since we assume $\forall u\in [s].\ pos_{u}(\alpha) < pos_{u}(\beta)$, it follows that $E_1\times E_2\not\subseteq ser$ and $E_1\times E_2\not\subseteq inl$. Hence, there are $\alpha_1,\alpha_2\in \h{E_1}$ and $\beta_1,\beta_2\in \h{E_2}$ such that $(l(\alpha_1),l(\beta_1))\notin inl$ and $(l(\alpha_2),l(\beta_2))\notin ser$. Since $E_1$ and $E_2$ are non-serializable, by \propref{gnonser}(3), $\alpha_1\sq_s^*\alpha_2$ and $\beta_2\sq_s^*\beta_1$. Also by \defref{ss2gsos}, we know that $\alpha_1\com_s\beta_2$ and $\alpha_2\reco{\com}_s\beta_1$. Thus, by  \defref{ss2gsos}, we have  $\alpha_1\PO_s\beta_2$. Since $E_1$ and $E_2$ are non-serializable, by \propref{gnonser}(3), $\alpha\sq_s^*\alpha_1\PO_s\beta_2\sq_s^*\beta$. Hence, by Definitions \ref{def:ccl} and \ref{def:ss2gsos}, $\alpha\PO\beta$.

\item If $\h{u_0}=\h{x_0}\;\h{E_1}\ldots\h{E_k}\;\h{y_0}$ where $k\ge 3$ and $(l(\alpha),l(\beta))\in inl$, then let $\gamma\in \h{E_2}$. Observe that we must have  \[\h{u_0}=\h{x_0}\;\h{E_1}\ldots\h{E_k}\;\h{y_0}\eqb \h{x_1}\;\h{E_1}\,\h{w_1}\,\h{F}\,\h{z_1}\,\h{E_k}\;\h{y_1}\eqb \h{x_2}\;\h{E_1}\,\h{w_2}\,\h{F}\,\h{z_2}\,\h{E_k}\;\h{y_2}\]
such that $\gamma\in \h{F}$, $F$ is a non-serializable, and $\wei(\h{E_1}\,\h{w_1}\,\h{F}),\wei(\h{F}\,\h{z_2}\,\h{E_k})$ satisfy the minimal condition similarly to \eref{pos2gs.0}. Since from the way $u_0$ is chosen, we know that $\forall u\in [s].\ pos_{u}(\alpha) \le pos_{u}(\gamma)$ and $\forall u\in [s].\ pos_{u}(\gamma) \le pos_{u}(\beta)$, by applying the induction hypothesis, we get
\begin{align}
\alpha\sq \gamma\sq \beta \label{eq:pos2gs.3}
\end{align}
So by transitivity of $\sq$, we get $\alpha\sq\beta$. But since  we assume $(l(\alpha),l(\beta))\in inl$, it follows that $\alpha\com\beta$. Hence, $(\alpha,\beta)\in\; \sq\cap\com \;=\; \PO$.

\item If $\h{u_0}=\h{x_0}\;\h{E_1}\ldots\h{E_k}\;\h{y_0}$ where $k\ge 3$ and $(l(\alpha),l(\beta))\in ser$, then we observe from how $u_0$ is chosen that
\begin{align*}
\forall \gamma\in\Al(\h{E_1}\ldots\h{E_k}).\left(\forall u\in [s].\ pos_{u_0}(\alpha) \le pos_{u_0}(\gamma)\le pos_{u_0}(\beta)\right)
\end{align*}
Similarly to how we show \eref{pos2gs.3}, we can prove that
\begin{align}
\forall \gamma\in\Al(\h{E_1}\ldots\h{E_k})\setminus\set{\alpha,\beta}.\;\alpha\sq\gamma\sq\beta \label{eq:pos2gs.4}
\end{align}

We next want to show that
\begin{align}
\exists \delta,\gamma\in\Al(\h{E_1}\ldots\h{E_k}).\bigl(pos_{u_0}(\delta)<pos_{u_0}(\gamma) \wedge (l(\delta),l(\gamma))\notin ser\bigr) \label{eq:pos2gs.5}
\end{align}
Suppose that \eref{pos2gs.5} does not hold, then
\begin{align*}
\forall \delta,\gamma\in\Al(\h{E_1}\ldots\h{E_k}).\bigl(pos_{u_0}(\delta)<pos_{u_0}(\gamma) \implies (l(\delta),l(\gamma))\in ser\bigr)
\end{align*}
It follows that $\h{u_0}=\h{x_0}\;\h{E_1}\ldots\h{E_k}\;\h{y_0}\eqb\h{x_0}\;\h{E}\;\h{y_0}$, which contradicts that $\forall u\in [s].\ pos_{u}(\alpha) < pos_{u}(\beta).$
Hence, we have shown \eref{pos2gs.5}.

Let $\delta,\gamma\in\Al(\h{E_1}\ldots\h{E_k})$ be event occurrences such that  $pos_{u_0}(\delta)<pos_{u_0}(\gamma)$ and $(l(\delta),l(\gamma))\notin ser$. By \eref{pos2gs.4},
$\alpha(\sq\cup\; id_{\Sigma_s})\delta(\sq\cup\; id_{\Sigma_s})\beta$ and $\alpha(\sq\cup\; id_{\Sigma_s})\gamma(\sq\cup \;id_{\Sigma_s})\beta$. If $\alpha\PO\delta$ or $\delta\PO\beta$ or $\alpha\PO\gamma$ or $\gamma\PO\beta$, then by (S4) of \defref{sos}, $\alpha\PO\beta$. Otherwise, by Definitions \ref{def:ccl} and \ref{def:ss2gsos}, we have $\alpha\sq_s^*\delta\sq_s^*\beta$ and $\alpha\sq_s^*\gamma\sq_s^*\beta$. But since  $pos_{u_0}(\delta)<pos_{u_0}(\gamma)$ and $(l(\delta),l(\gamma))\notin ser$, by \defref{ss2gsos}, $\alpha\PO_s\beta$. So by  Definitions \ref{def:ccl} and \ref{def:ss2gsos}, we have $\alpha\PO\beta$.
\end{enumerate}
Thus, we have shown \eref{pos2gs.1} and \eref{pos2gs.2} as desired.\END
\end{proof}

\noindent{\bf \lemref{gposinv}.}
 {\em Let $s$ be a step sequence over a g-comtrace alphabet $(E,sim,ser,inl)$. Let $G^{\{s\}}=(\Sigma_s,\com,\sq)$, and let $\PO=\com\cap\sq$. Then for every $\alpha,\beta\in\Sigma_s$, we have
\begin{enumerate}
\item $\alpha\com\beta \iff  \forall u\in [s].\ pos_{u}(\alpha)\not= pos_{u}(\beta)$
\item $\alpha\sq\beta \iff\alpha\not=\beta \wedge \forall u\in [s].\ pos_{u}(\alpha)\le pos_{u}(\beta) $
\item $\alpha\PO\beta \iff \forall u\in [s].\ pos_{u}(\alpha)< pos_{u}(\beta)$
\item If $l(\alpha)=l(\beta)$ and $pos_s(\alpha)<pos_s(\beta)$, then $\alpha\PO\beta$.
\end{enumerate}}
\begin{proof}
\begin{enumerate}
\item Follows from \propref{gposinv2}(1) and \lemref{pos2gs}(1, 2).
\item Follows from \propref{gposinv2}(2) and \lemref{pos2gs}(3).
\item Follows from (1) and (2).
\item Follows from \propref{samelab}(2). \END
\end{enumerate}
\end{proof}

\section{Proof of Lemma \ref{lem:ext2xi}.}
\noindent{\bf Lemma \ref{lem:ext2xi}.}
\emph{Let $s,t \in \E^*$
and $\lhd_s\in ext(G^{\{t\}})$. Then $G^{\{s\}}=G^{\{t\}}$.}

\begin{proof} To show $G^{\{s\}}=G^{\{t\}}$, it suffices to show that $\com_t\;=\;\com_s$, $\PO_t\;=\;\PO_s$ and $\sq_t\;=\;\sq_s$ since this will imply  that 
\[G^{\{t\}}=(\Sigma,\com_t\cup\PO_t,\sq_t\cup\PO_t)^{\ccl}= (\Sigma,\com_s\cup\PO_s,\sq_s\cup\PO_s)^{\ccl}=G^{\{s\}}.\]

($\com_t\;=\;\com_s$) Trivially follows from \defref{ss2gsos}.\\

($\sq_t\;=\;\sq_s$) If $\alpha\sq_t\beta$, then by Definitions \ref{def:ccl} and \ref{def:ss2gsos}, $\alpha\sq\beta$. But since $\lhd_s\in ext(G^{\{t\}})$, we have $\alpha\lhd_s^\frown\beta$, which implies $pos_s(\alpha)\le pos_s(\beta)$.
But since $\alpha\sq_t\beta$, it follows  by \defref{ss2gsos} that  $(l(\beta),l(\alpha))\notin ser\cup inl$. Hence, by \defref{ss2gsos}, $\alpha\sq_s\beta$. Thus,
\begin{align}
\sq_t\;\subseteq\;\sq_s \label{eq:ext2xi.sg2}
\end{align}

It remains to show that $\sq_s\;\subseteq\;\sq_t$. Let $\alpha\sq_s\beta$, and we suppose that $\neg(\alpha\sq_t\beta)$. Since $\alpha\sq_s\beta$, by \defref{ss2gsos}, $pos_s(\alpha)\le pos_s(\beta)$ and $(l(\beta),l(\alpha))\notin ser\cup inl$. Since we assume $\neg(\alpha\sq_t\beta)$, by \defref{ss2gsos}, we must have $pos_t(\beta)<pos_t(\alpha)$. Hence, by Definitions \ref{def:ccl} and \ref{def:ss2gsos},  $\beta\PO_t\alpha$ and  $\beta\PO\alpha$. But since $\lhd_s\in ext(G^{\{t\}})$, we have $\beta\lhd_s\alpha$. So $pos_s(\beta)<pos_s(\alpha)$, a contradiction. Thus, $\sq_s\subseteq\sq_t$. Together with \eref{ext2xi.sg2}, we get $\sq_t\;=\;\sq_s$\\

($\PO_t\;=\;\PO_s$) If $\alpha\PO_t\beta$, then by Definitions \ref{def:ccl} and \ref{def:ss2gsos}, $\alpha\PO\beta$ (of $G^{\{t\}}$). But since $\lhd_s\in ext(G^{\{t\}})$, we have $\alpha\lhd_s\beta$, which implies
\begin{align}
pos_s(\alpha)< pos_s(\beta) \label{eq:ext2xi.2}
\end{align}
Since $\alpha\PO_t\beta$, by \defref{ss2gsos}, we have
\begin{align*}
\begin{array}{ll}
	&(l(\alpha),l(\beta))\notin ser\cup inl \\
	\vee& (\alpha,\beta)\in\; \com_t\cap\left(\si{\sq_t^*}\circ\reco{\com}_t\circ \si{\sq_t^*}\right)\\
	\vee&\left(\begin{array}{ll}
			& (l(\alpha),l(\beta))\in ser\\
		\wedge	&\exists\delta,\gamma\in\Sigma_t.
		\left(\begin{array}{ll}
		&pos_t(\delta)<pos_t(\gamma) \wedge(l(\delta),l(\gamma))\notin ser\\
		\wedge&\alpha\,\sq_t^*\,\delta\,\sq_t^*\,\beta\wedge\alpha\,\sq_t^*\,\gamma\,\sq_t^*\,\beta
		\end{array}\right)\end{array}\right).
	\end{array}
\end{align*}
We want to show that $\alpha\PO_s\beta$. There are three cases to consider:
	\begin{enumerate}[(a)]
	\item When $(l(\alpha),l(\beta))\notin ser\cup inl$, it follows from \eref{ext2xi.2} and \defref{ss2gsos} that $\alpha\PO_s\beta$.
	\item When $(\alpha,\beta)\in\; \com_t\cap\left(\si{\sq_t^*}\circ\reco{\com}_t\circ \si{\sq_t^*}\right)$, then $\alpha\com_t\beta$ and there are $\delta,\gamma \in \Sigma$ such that  $\alpha\; \si{\sq_t^*}\;\delta\;\reco{\com}_t\;\gamma\; \si{\sq_t^*}\;\beta$. Since $\sq_t\;=\;\sq_s$ and $\com_t\;=\;\com_s$, we have $\alpha\com_s\beta$ and $\alpha\; \si{\sq_s^*}\;\delta\;\reco{\com}_s\;\gamma\; \si{\sq_s^*}\;\beta$. Thus, it follows from \eref{ext2xi.2} and \defref{ss2gsos} that $\alpha\PO_s\beta$.
	\item There remains only the case when $(l(\alpha),l(\beta))\in ser$ and there are $\delta,\gamma\in\Sigma_t$ such that
		\[\left(\begin{array}{ll}
		&pos_t(\delta)<pos_t(\gamma) \wedge(l(\delta),l(\gamma))\notin ser\\
		\wedge&\alpha\,\sq_t^*\,\delta\,\sq_t^*\,\beta\wedge\alpha\,\sq_t^*\,\gamma\,\sq_t^*\,\beta
		\end{array}\right).\] Since $\sq_t\;=\;\sq_s$, we also have $\alpha\,\sq_s^*\,\delta\,\sq_s^*\,\beta\wedge\alpha\,\sq_s^*\,\gamma\,\sq_s^*\,\beta$. Since $(l(\delta),l(\gamma))\notin ser$, we either have $(l(\delta),l(\gamma))\in inl$ or $(l(\delta),l(\gamma))\notin ser\cup inl$.
		\begin{itemize}
		\item If $(l(\delta),l(\gamma))\in inl$, then $pos_s(\delta)\not= pos_s(\gamma)$. Thus, $(pos_s(\delta)<pos_s(\gamma) \wedge(l(\delta),l(\gamma))\notin ser)$ or $(pos_s(\gamma)<pos_s(\delta) \wedge(l(\gamma),l(\delta))\notin ser)$. So it follows from \eref{ext2xi.2} and \defref{ss2gsos} that $\alpha\PO_s\beta$.
		\item If $(l(\delta),l(\gamma))\notin inl$, then $(l(\delta),l(\gamma))\notin ser\cup inl$. Hence, by \defref{ss2gsos}, $\delta\PO_t\gamma$, which by Definitions \ref{def:ccl} and \ref{def:ss2gsos}, $\delta\PO\gamma$. But since $\lhd_s\in ext(G^{\{t\}})$, we have $\delta\lhd_s\gamma$, which implies $pos_s(\delta)< pos_s(\gamma)$. Since $pos_s(\delta)< pos_s(\gamma)$ and $(l(\delta),l(\gamma))\notin ser$, it follows from \eref{ext2xi.2} and \defref{ss2gsos} that $\alpha\PO_s\beta$.
		\end{itemize}
	\end{enumerate}

Thus, we have shown that $\alpha\PO_s\beta$. Hence,
\begin{align}
\PO_t\;\subseteq\;\PO_s \label{eq:ext2xi.sg3}
\end{align}

It remains to show that $\PO_s\;\subseteq\;\PO_t$. Let $\alpha\PO_s\beta$. Suppose that $\neg(\alpha\PO_t\beta)$. Since $\alpha\PO_s\beta$, by \defref{ss2gsos}, we need to consider three cases:
	\begin{enumerate}[(a)]
	\item When $(l(\alpha),l(\beta))\notin ser\cup inl$, we suppose that $\neg(\alpha\PO_t\beta)$. This by \defref{ss2gsos} implies that $pos_t(\beta)\le pos_t(\alpha)$. By Definitions \ref{def:ccl} and \ref{def:ss2gsos}, it follows that  $\beta\sq_t\alpha$ and $\beta\sq\alpha$. But since $\lhd_s\in ext(G^{\{t\}})$, we have $\beta\lhd_s^\frown\alpha$, which implies $pos_s(\beta)\le pos_s(\alpha)$, a contradiction.
	\item If $(\alpha,\beta)\in\; \com_s\cap\left(\si{\sq_s^*}\circ\reco{\com}_s\circ \si{\sq_s^*}\right)$, then since $\com_s=\com_t$ and $\sq_s=\sq_t$, we have $(\alpha,\beta)$ $\in\; \com_t\cap\left(\si{\sq_t^*}\circ\reco{\com}_t\circ \si{\sq_t^*}\right)$. Since $\alpha\com_t\beta$, we have $pos_t(\alpha)<pos_t(\beta)$ or $pos_t(\beta)<pos_t(\alpha)$. We claim that $pos_t(\alpha)<pos_t(\beta)$. Suppose for a contradict that $pos_t(\beta)<pos_t(\alpha)$. 
Since $(\alpha,\beta)\in\; \com_t\cap\left(\si{\sq_t^*}\circ\reco{\com}_t\circ \si{\sq_t^*}\right)$ and $\com_t$ is symmetric, we have $(\beta,\alpha)\in\; \com_t\cap\left(\si{\sq_t^*}\circ\reco{\com}_t\circ \si{\sq_t^*}\right)$. Hence, it follows from Definitions \ref{def:ccl} and \ref{def:ss2gsos} that $\beta\PO_t\alpha$ and $\beta\PO\alpha$. But since $\lhd_s\in ext(G^{\{t\}})$, we have $\beta\lhd_s\alpha$, which implies $pos_s(\beta)< pos_s(\alpha)$, a contradiction. Thus, $pos_t(\alpha)<pos_t(\beta)$. \\ Since $(\alpha,\beta)\in\; \com_t\cap\left(\si{\sq_t^*}\circ\reco{\com}_t\circ \si{\sq_t^*}\right)$, we get $\alpha\PO_t\beta$.
	\item There remains only the case when $(l(\alpha),l(\beta))\in ser$ and there are $\delta,\gamma\in\Sigma_s$ such that 
	\[\left(\begin{array}{ll}
		&pos_s(\delta)<pos_s(\gamma) \wedge(l(\delta),l(\gamma))\notin ser\\
		\wedge&\alpha\,\sq_s^*\,\delta\,\sq_s^*\,\beta\wedge\alpha\,\sq_s^*\,\gamma\,\sq_s^*\,\beta
		\end{array}\right).\]
	Since $\sq_s=\sq_t$, we have $\alpha\,\sq_t^*\,\delta\,\sq_t^*\,\beta$ and $\alpha\,\sq_t^*\,\gamma\,\sq_t^*\,\beta$, which by \defref{ss2gsos} and transitivity of $\le$ implies that $pos_t(\alpha)\le pos_t(\delta)\le pos_t(\beta)$ and $pos_t(\alpha)\le pos_t(\gamma)\le pos_t(\beta)$. Since $(l(\delta),l(\gamma))\notin ser$, we either have $(l(\delta),l(\gamma))\in inl$ or $(l(\delta),l(\gamma))\notin ser\cup inl$.
		\begin{enumerate}[(i)]
		\item If $(l(\delta),l(\gamma))\in inl$, then $pos_t(\delta)\not= pos_t(\gamma)$. This implies that $(pos_t(\delta)<pos_t(\gamma) \wedge(l(\delta),l(\gamma))\notin ser)$ or $(pos_t(\gamma)<pos_t(\delta) \wedge(l(\gamma),l(\delta))\notin ser)$. Since $pos_t(\delta)\not= pos_t(\gamma)$ and $pos_t(\alpha)\le pos_t(\delta)\le pos_t(\beta)$ and $pos_t(\alpha)\le pos_t(\gamma)\le pos_t(\beta)$, we also have $pos_t(\alpha)< pos_t(\beta)$.  So it follows from \defref{ss2gsos} that $\alpha\PO_t\beta$.
		\item If $(l(\delta),l(\gamma))\notin inl$, then $(l(\delta),l(\gamma))\notin ser\cup inl$. We want to show that $pos_t(\delta)<pos_t(\gamma)$. Suppose that $pos_s(\delta)\ge pos_s(\gamma)$. Since $(l(\delta),l(\gamma))\notin ser\cup inl$, by Definitions \ref{def:ccl} and \ref{def:ss2gsos}, we have $\gamma\sq_t\delta$ and $\gamma\sq\delta$. But since $\lhd_s\in ext(G{\{t\}})$, we have $\gamma\lhd_s^\frown\delta$, which implies $pos_s(\gamma)\le pos_s(\delta)$, a contradiction. Since $pos_t(\delta)<pos_t(\gamma)$ and  $pos_t(\alpha)\le pos_t(\delta)\le pos_t(\beta)$ and $pos_t(\alpha)\le pos_t(\gamma)\le pos_t(\beta)$, we have $pos_t(\alpha)<pos_t(\beta)$. Hence, we have $pos_t(\alpha)<pos_t(\beta)$ and 
		\[\left(\begin{array}{ll}
		&pos_t(\delta)<pos_t(\gamma) \wedge(l(\delta),l(\gamma))\notin ser\cup inl\\
		\wedge&\alpha\,\sq_t^*\,\delta\,\sq_t^*\,\beta\wedge\alpha\,\sq_t^*\,\gamma\,\sq_t^*\,\beta
		\end{array}\right).\] 
		So it follows that $\alpha\PO_t\beta$ by \defref{ss2gsos}.
		\end{enumerate}
	\end{enumerate}

Thus, we have shown $\PO_s\;\subseteq\;\PO_t$. This and \eref{ext2xi.sg3} imply $\PO_t\;=\;\PO_s$.\qed
\end{proof}

\label{body end}
\end{document}